\newtheorem{theorem}{Theorem}  % [section] 表示定理编号与章节相关
\newtheorem{lemma}{Lemma}      % 引理与定理共享编号
\newtheorem{proposition}{Proposition}
\newtheorem{corollary}{Corollary}
\newtheorem{definition}{Definition}
\newtheorem{remark}{Remark}     % 备注
\newtheorem{example}{Example}    % 例子
\title{Optimal Contest Design with Entry Restriction}
\date{} 					% Or removing it
\author{ Hanbing Liu\\
	Gaoling School of Artificial Intelligence\\
	Renmin University of China\\
	Beijing, China\\
	\texttt{liuhanbing@ruc.edu.cn} \\
	%% examples of more authors
	\And
	Ningyuan Li \\
	CFCS, School of Computer Science\\
	Peking University\\
	Beijing, China \\
	\texttt{liningyuan@pku.edu.cn}
    \AND
    Weian Li \\
	School of Software \\
	Shandong University \\
	Shandong, China \\
	\texttt{weian.li@sdu.edu.cn}
    \And
    Qi Qi\thanks{Corresponding author.} \\
	Gaoling School of Artificial Intelligence \\
	Renmin University of China\\
	Beijing, China \\
	\texttt{qi.qi@ruc.edu.cn} \\
    \And
    Changyuan Yu \\
    %\phantom{h}    \\
	Baidu Inc. \\
	Beijing, China \\
	\texttt{yuchangyuan@baidu.com}
}
\begin{document}
\maketitle

% \begin{abstract}
% 	%\lipsum[1]
% This paper examines the optimal design of contests with $n$ contestants, where the designer determines both the number of admitted contestants in a shortlist and the prize structure under a fixed budget. We fully characterize the contestant's symmetric Bayesian Nash equilibrium and design the optimal contests for two objectives: the maximum individual effort and the total effort.
% For the maximum individual effort objective, the optimal contest is remarkably simple: a two-contestant winner-take-all contest, where only two contestants are shortlisted, and a single prize is awarded. For the total effort objective, the optimal contest admits $m=kn$ contestants and distributes the budget equally among the top $m-1$ contestants. 
% We also present a general algorithm to determine the optimal $k$ for any ability distribution (e.g., $k=0.09$ for exponential distribution and $k=0.15$ for the uniform distribution), which has practical applications. Additionally, we establish a tight upper bound of $\bar{k}=0.3162$ for any distribution.
% Finally, we compare total effort across different contest configurations: the optimal contest without a shortlist achieves $\Theta(1)$, a two-contestant winner-take-all contest yields $\Theta(\log n)$, and the optimal contest with a shortlist reaches $\Theta(n)$. This demonstrates the significant advantage of incorporating a shortlist.
% \end{abstract}

\begin{abstract}
    This paper explores the design of contests involving \( n \) contestants, focusing on how the designer decides on the number of contestants allowed and the prize structure with a fixed budget. We characterize the unique symmetric Bayesian Nash equilibrium of contestants and find the optimal contests design for the maximum individual effort objective and the total effort objective.
\end{abstract}

%\keywords{Contest Design \and Shortlist \and Contestant Equilibrium \and Tight Bounds}
\section{Introduction}
\label{sec:intro}

% Contest is a classic topic of mechanism design in economics, where one contest designer sets prize structure and winning rule to achieve some desirable objectives and several contestants invest irreversible and costly efforts to compete for these prizes under this contest configuration. Since contest is commonly used to depict the competitive scenario, it has a broad application in reality, like sports, programming contest, resource allocation and so on. 
% Contest theory refers to a series of theories for the better understanding and informed design of contests, mainly focusing on the insights into guidelines of how to allocate prizes, what contestant behavior may arise in equilibrium and comparison of performance among different contests. 

% In reality, there always exits shortlists for the competitive scenarios. In detail, the holders do not admit all registered candidates into the final contest, and usually only pick up a part of candidates into a shortlist from the registered list and eliminate the others. For example, in the Olympic Games, only top-ranked athletes qualify for the final competition due to scheduling constraints. Similarly, in the International Collegiate Programming Contest, the offline finals have limited venue capacity, so only teams meeting specific requirements advance. A comparable scenario occurs in corporate recruitment, where only applicants who surpass a certain threshold in their resumes receive interview opportunities. Therefore, does a shortlist really benefit for the holder? or what does a shortlist affect the the competitive setting?

In real-world competitive scenarios, shortlists are commonly used. Specifically, organizers do not admit all registered candidates into the final contest but instead select a subset of candidates to form a shortlist, eliminating others. For example, in the Olympic Games, only top-ranked athletes qualify for the final competition due to scheduling constraints. Similarly, in the International Collegiate Programming Contest, the offline finals have limited venue capacity, so only teams meeting specific requirements advance. A similar scenario is observed in corporate recruitment, where only applicants who surpass a certain threshold in their resumes are invited for interviews.
This raises an important question: does a shortlist benefit the organizer, and how does it affect the competitive results?

To answer this question, we use the setting of a ``contest'' as a basic competitive (or game) model and examine the impact of a shortlist on contest design. Contests are a fundamental topic in mechanism design within economics, where a contest designer determines the prize structure and winning rules to achieve specific objectives. Contestants, in turn, invest irreversible and costly efforts to compete under these rules. Since contests effectively model competitive scenarios, they have widespread real-world applications, including sports, programming competitions, resource allocation and so on. A rank-order contest is a common contest format in which prizes are allocated strictly based on contestants' performance\footnote{The term “performance” is an abstract concept that varies by context. For example, in some cases, it may refer to measurable metrics such as race completion time in the Olympic 100-meter sprint or bid prices in an auction.}. Specifically, the contestant with the best performance wins the first prize, the second-best wins the second prize, and so on.
%Contest theory comprises a set of frameworks aimed at improving the understanding and design of contests. It primarily focuses on optimal prize allocation, predicting contestant behavior in equilibrium, and comparing the performance of different contest structures.

%From a theoretical perspective, research on rank-order contests primarily focuses on characterizing contestant equilibrium in performance and optimizing contest design to achieve various objectives, such as maximizing total effort \cite{MS01}, maximizing individual effort \cite{CHS19}, or other criteria \cite{AS09,EGG21}. %However, these studies typically assume that all contestants have an opportunity to compete. In practice, this assumption cannot be guaranteed, since contest organizers often need to filter part of participants and only admit the rest to compete for prizes, due to constraints like time and space limitations. 
%For example, in the Olympic Games, only top-ranked athletes qualify for the final competition due to scheduling constraints. Similarly, in the International Collegiate Programming Contest, the offline finals have limited venue capacity, so only teams meeting specific requirements advance. A comparable scenario occurs in corporate recruitment, where only applicants who surpass a certain threshold in their resumes receive interview opportunities. Given these real-world constraints, contest designers must often pre-screen contestants, allowing only a select group to compete for prizes.

From a theoretical perspective, research on rank-order contests primarily focuses on characterizing contestant equilibrium in performance and optimizing contest design to achieve various objectives, such as the total effort \cite{MS01}, the maximum individual effort \cite{CHS19}, or other criteria \cite{AS09,EGG21}.
Broadly speaking, the existence of a shortlist can incentivize admitted contestants to exert more effort, as they anticipate facing stronger competition. However, eliminating too many contestants may reduce motivation, as fewer opponents make it easier to secure a desirable prize. Thus, determining the optimal shortlist size is a crucial problem.
To address this, several key questions must be considered sequentially. First, we need to understand how the shortlist size affects contestants' performance under any given prize structure. Once this relationship is established, the next step is to jointly determine the optimal shortlist size and prize structure to achieve various objectives. This is particularly challenging for the total effort objective, as there is a tradeoff between the number of admitted contestants and the incentive effect. Intuitively, reducing the number of contestants increases individual effort, but total effort may not follow a simple monotonic pattern. Finally, it is essential to analyze how the presence of a shortlist impacts the designer’s objective by comparing the optimal contests with and without a shortlist.

\subsection{Main Results}

\noindent \textbf{Model.} 
%\paragraph{Model}
To address the aforementioned questions, we first construct a model of a contest with a shortlist. Consider a rank-order contest with $n \geq 2$ initially registered contestants. Each contestant’s ability, $x_i$, is private information drawn from a publicly known distribution $F(x)$. We assume that the contest designer is aware of all contestants' abilities upon registration. This assumption is reasonable, as designers can assess abilities based on submitted registration materials, such as competition records, CVs, or asset certificates.
The contest proceeds in two stages. First, the designer selects the number of shortlisted contestants, $2 \leq m\leq n$, and determines the prize structure  $\vec{V}=(V_1,V_2, \cdots, V_m) \in \mathbb{R}_{1\times m}$, specifying the rewards for each rank, within a budget $B$. The designer then ranks all $n$ contestants by ability in non-increasing order, admitting the top $m$ contestants to the next round while eliminating the remaining $n-m$ contestants. In the second stage, the shortlisted contestants observe the prize structure and recognize their eligibility. They then exert costly effort to compete for the prizes, with their effort levels influenced by their abilities. Each admitted contestant aims to maximize the utility (which is equal to the obtained prize minus the cost incurred by effort) by choosing the level of effort exerted.
The contest designer's decision variables consist of two components: the number of shortlisted contestants $m$ and the prize structure $\vec{V}$, to optimize two kinds of objectives, the maximum individual effort and the total effort.

%\noindent \textbf{Main Results.} 
Based on this contest-theoretical model, our main results are divided into three parts:
\subsubsection{Posterior distribution and equilibrium effort}
Our first result characterizes contestant equilibrium under any shortlist and prize structure. We use the symmetric Bayesian Nash equilibrium (sBNE) to define the contest’s solution, and can fully characterize the equilibrium effort function. %, as stated in Theorem \ref{thm:contestantSBNE}.

\begin{theorem}[Unique sBNE of Contestants (Sketch)]
    For any ability distribution $F$, any size of shortlist $m$ and any prize structure $\vec{V}$, the unique symmetric Bayesian Nash equilibrium exists and can be expressed in a closed-form.
\end{theorem}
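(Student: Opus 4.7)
My plan is to reduce the analysis to a symmetric-auction-style problem whose equilibrium can be pinned down by the envelope / first-order approach, with the central twist being a careful treatment of the posterior belief induced by shortlisting.

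First, I would fix a shortlisted contestant $i$ of ability $x$ and identify the distribution of $i$'s rank within the shortlist. The key observation is that, conditional on $i$ being in the top $m$ of $n$, every other contestant $j$ with $x_j > x$ is automatically in the top $m$ as well; hence $i$'s rank among the shortlisted equals $1+Y$ where $Y = |\{j\neq i : x_j > x\}|$. Marginally, $Y$ is $\mathrm{Binomial}(n-1, 1-F(x))$, and the event ``$i$ is shortlisted'' coincides with $\{Y\le m-1\}$. Bayes' rule then delivers closed-form rank probabilities $p_k(x)$ for $k=0,\dots,m-1$, together with a closed-form shortlisting probability.

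Second, I would posit a strictly increasing symmetric effort function $b(\cdot)$. By monotonicity, rank by effort matches rank by ability within the shortlist, so the expected prize conditional on ability $x$ and being shortlisted equals $\Pi(x)=\sum_{k=0}^{m-1} V_{k+1}\,p_k(x)$. I would then evaluate the expected payoff of a deviation in which type $x$ exerts the effort $b(z)$ earmarked for type $z$, differentiate in $z$, and impose the first-order condition at $z=x$. This produces an ODE in $b$, which I would integrate with the boundary condition $b(\underline{x})=0$ at the infimum type of $F$ (the lowest type has no incentive to exert positive effort conditional on being shortlisted) to obtain an explicit closed-form expression for $b(x)$ in terms of $F$, $n$, $m$ and $\vec V$.

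Finally, I would verify that the candidate $b$ is indeed strictly increasing, so that the monotonicity hypothesis used to compute $\Pi$ is self-consistent, and that a single-crossing / second-order argument rules out profitable non-local deviations; this establishes that $b$ is an sBNE. Uniqueness among symmetric equilibria then follows because the ODE together with the boundary condition pins $b$ down completely, and monotonicity is forced in this single-crossing setting. The main obstacle I anticipate is the first step: correctly tracking the posterior that shortlisting induces. Unlike a plain order-statistic argument, the truncation to $\{Y\le m-1\}$ couples the distribution of competitors' abilities nontrivially to the contestant's own $x$, and this coupling propagates into $\Pi(x)$ and hence into the ODE, so preserving a genuine closed form through the integration step requires careful bookkeeping.
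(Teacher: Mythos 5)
Your overall route coincides with the paper's: update beliefs by Bayes' rule, write the mimicking payoff of type $x$ exerting the effort of type $z$, impose the first-order condition at $z=x$, integrate with the zero boundary condition at the lowest type, and verify monotonicity/non-negativity so that the necessary condition pins down the unique sBNE (this is exactly the structure of the proof of Theorem~\ref{thm:contestantSBNE}). Your binomial observation — on path, rank equals $1+Y$ with $Y\sim\mathrm{Binomial}(n-1,1-F(x))$ conditioned on the admission event $\{Y\le m-1\}$ — is correct and reproduces the paper's Lemma~\ref{lem:normal} and Remark~\ref{rmk:subjectiveProb} more directly than the paper's own derivation via the joint posterior.

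The genuine gap is in the deviation step. Your conditional-binomial device delivers only the on-path probabilities $p_k(x)$, but the first-order condition requires the two-variable probabilities $P_{(i,l)}(z\mid x)$ for $z\ne x$, and these cannot be obtained by substituting $z$ into $p_k(\cdot)$ nor by treating opponents as i.i.d.\ draws from the marginal posterior: conditional on admission the opponents' abilities are correlated (the paper stresses $\beta_1(\mathbf{x})\neq\prod_{i=2}^m\beta_1(x_i)$), and when mimicking $z\neq x$ the effort rank counts admitted opponents above $z$ while admission is governed by true abilities including $x$, so the event structure genuinely differs from the on-path case. The paper therefore computes $P_{(i,l)}(z\mid x)$ from the joint posterior separately in the two regimes $z\le x$ and $z>x$ (Propositions~\ref{prop:posteriorBeliefs} and~\ref{prop:winProb}), and then must verify that the left and right derivatives agree at the kink $z=x$ (Corollary~\ref{coro:DerivProb}) — a nontrivial cancellation without which "differentiate in $z$ at $z=x$" is not licensed. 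The distinction also matters materially, not just for rigor: in the correct FOC the normalizer $J(F,n,m,x)$ is held fixed when differentiating in $z$, because it depends on the true type only, whereas a one-variable computation along the lines of your step one would also differentiate $J$ and yield a different, wrong ODE. You correctly flag this coupling as the main obstacle, but the proposal as written contains no mechanism to resolve it; supplying the joint posterior and the kink-differentiability check is precisely the content of the paper's Section~\ref{sec:playerSBNE}.
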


The sBNE is typically derived using the first-order condition of a contestant's expected utility function. A key component of this function is the probability of ranking in each position, which, in traditional rank-order contests, is determined by the prior distribution of abilities. However, in our model, the elimination stage alters contestants' beliefs, i.e., admitted contestants receive a signal of eligibility and update their beliefs about opponents from the prior to a posterior distribution.
Therefore, the main challenge in deriving this equilibrium is accurately computing an admitted contestant’s posterior belief about opponents (see Propositions~\ref{prop:posteriorBeliefs} and Corollary~\ref{prop:marginalBelief} for details). On the other hand, since contestants with different abilities form distinct posterior beliefs, calculating the sBNE in our model is significantly more complex due to the non-identical posterior distributions, compared to the traditional settings.

\subsubsection{Optimal Contest Design}
With a detailed characterization of contestant equilibrium, we next address the problem of the optimal contest design under two common objectives: the maximum individual effort and the total effort. In the traditional rank-order contest design problem, determining the optimal prize structure is already complex due to its high-dimension. In our work, we not only determine the optimal size of shortlist but also design the optimal prize structure within a fixed budget, making the problem more complex. However, we provide a fundamentally different characterization of the optimal contest, as detailed below.
% Having the detailed characterization on contestant equilibrium, we consider the problem of the optimal contest design under two popular kinds of objectives, the maximum individual effort and the total effort. In the traditional rank-order contest design problem, only designing  prize structure is very complicated, since it is a high-dimension variable. In our work, we not only decide on the optimal number of eligible contestants, but also design the optimal prize structure within the budget, which makes our problems much complex. 

%In \cite{SSYJ24}, the authors found that the presence of a shortlist discourages all contestants, leading to lower effort exertion. This suggests that a winner-take-all contest without a shortlist is optimal for both the maximum individual effort and total effort, as supported by \cite{MS01,CHS19}. However, their conclusion is based on an incorrect representation of equilibrium effort. In our work, we provide a fundamentally different characterization of the optimal contest, as detailed below.

Before presenting the detailed description of the optimal contests, we introduce a technical guideline to simplify our analysis. Specifically, we focus on a special type of contest, referred to as a ``simple contest'', where the designer allocates the budget equally across the prizes. Interestingly, we identify two sufficient conditions (see as Proposition \ref{prop:DesignGuideline}) under which the optimal contest is exactly a simple contest. These conditions allow us to reduce the prize structure's dimensionality from $m-1$ to one. First, when the objective is to maximize the effort of a certain rank, the optimal contest is always a simple contest, regardless of the ability distribution or cost function. Second, a simple contest is optimal for any linear combination of contestants' efforts only when the cost function is linear\footnote{A linear cost function is commonly used in both theoretical and practical contexts to characterize direct output in contests. For example, in an all-pay auction—often considered a type of contest—the payment follows a linear cost structure. In a political election setting, the cost can be interpreted as the investment in competition. Additionally, several studies \cite{MS01, AS09, DV09} have examined contests with linear cost functions.}.

% In \cite{SSYJ24}, they found that the existence of shortlist discourage every contestant, i.e., exert less effort, implying that a winner-take-all contest without a shortlist is the optimal \cite{MS01,CHS19} for the objectives of both the maximum individual effort and the total effort. However, this is obtained based on a wrong representation of equilibrium effort. In our work, we get a totally different characterization of the optimal contest, shown as follows.

% Before giving the detailed description of the optimal contests, we propose a technical guideline to simplify our analysis. Generally speaking, we focus on a special kind of contests, call simple contest, where designer divides the budget equally to set prizes. Interestingly, we find two sufficient conditions such that the optimal contests are exactly a simple contest. It helps us decrease the dimension of prize structure from $m-1$ to one. One is that when the objective is the maximum individual effort, for any ability distribution and any cost function, the optimal contest is a simple contest. The other one is that only if the cost function is a linear function, a simple contest is optimal for any objective of linear combination of contestants' efforts. 

\noindent \textbf{Maximum Individual Effort.}
%Unlike the characterization in \cite{SSYJ24}, our findings are entirely opposite. 
Based on our equilibrium effort function, we determine that the optimal contest for maximizing the maximum individual effort admits only two contestants and allocates the entire budget to a single prize, i.e., a two-contestant, winner-take-all contest.
%Different from the characterization in \cite{SSYJ24}, our result is totally opposite. We find that, based on our form of equilibrium effort function, the optimal contest to maximize the maximum individual effort is to admit only two contestants enter the second stage and  set one prize with all budget, i.e., two-contestants the winner-take-all contest. 

\begin{theorem}[Optimal Contest for the Maximum Individual Effort Objective]
    The optimal contest that maximizes the maximum individual effort is a two-contestant winner-take-all contest, where $m=2$ and $V_1=B$.
\end{theorem}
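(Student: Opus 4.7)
My plan is to reduce the design problem to a one-dimensional search via the simple contest guideline, then optimize within that class. Since the maximum individual effort corresponds to the effort of a single rank (the top-ability contestant among the admitted), the first sufficient condition of Proposition \ref{prop:DesignGuideline} applies directly, so the optimal contest must be a simple contest. That is, up to reindexing, the budget is split equally among the top $k$ positions with $V_1 = \cdots = V_k = B/k$ and $V_{k+1} = \cdots = V_m = 0$, for some $1 \le k \le m-1$. This collapses the design problem from the high-dimensional prize vector $\vec{V}$ to the discrete pair $(m, k)$ with $2 \le m \le n$.

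Next, I would invoke the closed-form sBNE from Theorem 1 to write down the equilibrium effort function $b_{m,k}(x)$ explicitly. Because $b_{m,k}$ is strictly increasing in $x$ (which follows from the first-order characterization of the equilibrium together with the monotonicity of the posterior belief), the contestant realizing the maximum individual effort is always the one with the highest ability, $x_{(1:n)}$. Hence the objective equals $\mathbb{E}[b_{m,k}(x_{(1:n)})]$, and it suffices to show that the map $(m,k) \mapsto b_{m,k}(x)$ is pointwise maximized at $(m, k) = (2, 1)$.

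I would then establish two monotonicity statements on the closed-form effort. First, fixing $m$, show $b_{m,k}(x)$ is (weakly) decreasing in $k$: splitting the budget across more equal prizes shrinks the marginal gain from crossing the cutoff, so $k=1$ dominates. Second, fixing $k=1$, compare $b_{m,1}(x)$ across $m$. Although enlarging the shortlist raises the ``expected strength'' of every opponent through the posterior update of Proposition \ref{prop:posteriorBeliefs} and Corollary \ref{prop:marginalBelief}, the dilution of the winning probability (now split across more competitors, each sampled from a stronger posterior) dominates, so $b_{m,1}(x)$ is decreasing in $m$. Combining the two yields $(m, k) = (2, 1)$, i.e.\ $m = 2$ and $V_1 = B$.

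The main obstacle will be the monotonicity in $m$, since changing $m$ simultaneously changes both the number of competitors and the posterior distribution governing their abilities, so a direct pointwise comparison is delicate. I would handle this by writing $b_{m,1}(x)$ explicitly as an integral against the order-statistic density implied by the posterior in Corollary \ref{prop:marginalBelief}, and then either differentiate in $m$ (treating it via a telescoping comparison between consecutive integer values) or couple the posteriors across different $m$ to obtain the domination inequality. This comparison is the technical heart of the argument, while the reduction to simple contests and the monotonicity in $k$ should follow more directly from the guideline and the closed-form.
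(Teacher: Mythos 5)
Your reduction to simple contests via the first condition of Proposition~\ref{prop:DesignGuideline} and your reduction of the objective to the pointwise behavior of $b_{m,k}(x)$ both match the paper. But your step (a) --- that for \emph{fixed} $m$, the equilibrium effort is pointwise decreasing in the number of equal prizes $k$, so that $k=1$ dominates before you ever vary $m$ --- is false, and it is the step your whole path through the $(m,k)$ space rests on. Concretely, take $n=m=3$ (no elimination, so $J\equiv 1$), $F$ uniform on $[0,1]$, linear cost $g(e)=e$. Theorem~\ref{thm:contestantSBNE} gives
\begin{equation*}
b_{3,1}(x)=\tfrac{2B}{3}x^{3},
\qquad
b_{3,2}(x)=B\Bigl(\tfrac{x^{2}}{2}-\tfrac{x^{3}}{3}\Bigr),
\end{equation*}
and at $x=0.4$ one gets $b_{3,2}(0.4)\approx 0.0587B > 0.0427B \approx b_{3,1}(0.4)$. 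Indeed, comparing integrands, $b_{3,2}(x)>b_{3,1}(x)$ whenever $F(x)\le 1/3$, so for \emph{every} continuous $F$ the claim fails on the lower third of quantiles: a moderately strong top contestant is incentivized more by the gap near her expected rank than by the first-place gap, so winner-take-all does not pointwise dominate within a fixed shortlist size. Your step (b) is fine --- with $k=1$ the numerator of the effort integrand is independent of $m$ while $J(F,n,m,t)$ increases in $m$ (Lemma~\ref{lem:normal}), which is exactly the special case of Corollary~\ref{coro:EmptyPrize} --- but (a) breaks the decomposition, and no amount of coupling posteriors across $m$ will rescue a false inequality.

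The repair is to traverse the $(m,l)$ grid in the opposite order, which is what the paper does. First, holding the prize structure fixed, use Corollary~\ref{coro:EmptyPrize} to shrink the shortlist: any simple contest with $l$ prizes and $m>l+1$ admitted contestants is pointwise dominated by the contest $(l+1,l)$, so the optimum lies on the diagonal $m=l+1$. Then compare along the diagonal: after substituting $m=l+1$, the effort integrand's distribution-dependent factor becomes $\Pr(X=m-1\mid X\le m-1)$ for $X\sim \mathrm{Binomial}(n-1,\,1-F(t))$, and Lemma~\ref{lem:binomDesc} shows this conditional probability is decreasing in $m$ for every $t$, hence the integrand is pointwise maximized at $m=2$, $l=1$. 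This yields the two-contestant winner-take-all contest ex post, for every realization of the top ability, and the ex-ante statement follows by monotonicity of expectation --- no telescoping or differentiation in $m$ of order-statistic densities is needed, since the awkward simultaneous change of ``number of opponents'' and ``posterior strength'' that you flagged as the technical heart is absorbed into the single normalizer $J(F,n,m,t)$ via Lemma~\ref{lem:normal}.
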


The intuition behind this result is twofold. First, for a given number of contestants, a larger first prize incentivizes greater effort. Second, for a fixed prize structure, admitting an additional contestant surprisingly reduces the expected maximum individual effort. This occurs because with more competitors, the expected prize for each contestant decreases, leading to lower effort to minimize costs.
Combining these insights, we conclude that a two-contestant, winner-take-all contest is optimal.

% The intuition behind it is that, one the one hand, for the same number of admitted contestants, the greater the first prize is, the more effort contestants will exert. On the other hand, for the same prize structure, admitting one more contestant will weaken the expected maximum individual effort, which is a little counter-intuitive. However, it is true since when the prize structure is fixed, more competitors will decrease the expected prize one can get. It leads to that everyone will exert a less effort to decrease the cost.
% Combining these two observations, we can conclude that a two-contestant winner-take-all contest is the optimal.

\noindent \textbf{Total effort.}
%Similar to \cite{SSYJ24}, 
We examine the optimal contest design with a linear cost function to maximize total effort. %However, our result is also different.

\begin{theorem}[Optimal Contest for the Total Effort Objective]
    For any ability distribution $F$, the optimal contest for maximizing the total effort can be described as:
    \begin{enumerate}
        \item The number of prize $l^*$ is equal to the shortlist size $m^*$ minus one, i.e., $l^*=m^*(n)-1$.
        \item The budget is equally divided into these prizes, i.e, $V_1=V_2=\cdots= V_{l^*}=B/l^*$.
        \item The optimal shortlist is proportional to $n$, i.e., $m^*(n)=kn$, where $k$ is the solution to 
        \[
            \int_k^1 F^{-1}(1-q)(\frac{1}{q}-(2k-k^2)\frac{1}{q^2})dq=0.
        \]
    \end{enumerate}
    %is a complete simple contest, where the number of prizes equals the shortlist size minus one, and the budget is equally divided into these prizes.
\end{theorem}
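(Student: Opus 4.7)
The plan is to attack the three claims in the order in which they cleanly decouple. First, I would invoke Proposition~\ref{prop:DesignGuideline}: total effort is a linear combination (with unit coefficients) of the admitted contestants' efforts, and the cost function is linear, so by the second sufficient condition of that proposition the optimum among contests with exactly $l$ positive prizes puts all of them equal to $B/l$. This gives claim 2 for free and reduces the design problem to a joint optimization over the two integer variables $m$ and $l \in \{1,\ldots,m-1\}$.

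Next, I would substitute the equal-prize structure $V_1 = \cdots = V_l = B/l$ into the closed-form sBNE from Theorem~1, producing an explicit expression for the equilibrium effort of each admitted contestant and hence for the total effort $T(m,l)$. The posterior belief formulas from Proposition~\ref{prop:posteriorBeliefs} and Corollary~\ref{prop:marginalBelief} are the key ingredients here: an admitted contestant, knowing she is among the top $m$ out of $n$, integrates against a truncated order-statistic distribution, which after a change of variables $q = 1 - F(x)$ naturally produces quantile integrals of $F^{-1}(1-q)$ on the interval $[m/n, 1]$. I expect $T(m,l)$ to factor as $(B/l)$ times a combinatorial-quantile term depending on $m, l, n$, which is the form that makes both subsequent optimizations tractable.

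With $T(m,l)$ in hand, claim 1 reduces to showing that for every fixed $m$ the map $l \mapsto T(m,l)$ is maximized at $l = m-1$. I would do this by studying the forward difference $T(m,l+1) - T(m,l)$: adding one more prize spreads the budget over one additional bracket, and under a linear cost each bracket contributes a strictly positive marginal effort as long as another non-trivial competitive tier is created. Showing that this difference is positive for all $l \le m-2$ and becomes zero (or changes sign) at $l = m-1$ gives the result; the computation is routine once the quantile expression for $T$ is written down.

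The genuinely hard step is claim 3. After plugging $l^* = m-1$ into $T(m, m-1)$, I am left with a one-dimensional problem in $m$, but $m$ appears in three places at once: in the prize level $B/(m-1)$, in the truncation endpoint of the quantile integral, and inside the combinatorial weight coming from the posterior belief. I would pass to the scaled variable $k = m/n$ (valid in the large-$n$ relaxation, which is how a scaling law of the form $m^* = kn$ can arise), apply Leibniz's rule to differentiate in $k$, and then collect the boundary term from the moving endpoint with the two interior terms coming from $1/q$ and $1/q^2$ weightings. I expect the algebra to collapse exactly into $\int_k^1 F^{-1}(1-q)\bigl(\tfrac{1}{q} - (2k - k^2)\tfrac{1}{q^2}\bigr)\,dq = 0$, but this collapse is the delicate calculation of the proof; the main obstacle is ensuring that all $m$-dependent pieces line up so that the first-order condition is independent of $n$, which is what forces the proportional scaling $m^*(n) = kn$. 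A final monotonicity or second-order check will confirm that the stationary $k$ is a maximum rather than a saddle.
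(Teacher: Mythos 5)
Your plan for claim 2 is fine and matches the paper (Proposition~\ref{prop:DesignGuideline} together with Corollary~\ref{coro:Consolation}), but your reduction of claim 1 runs the monotonicity in the wrong direction, and the monotonicity you assert is false. You fix $m$ and claim $l \mapsto T(m,l)$ increases up to $l=m-1$ because ``each additional bracket contributes strictly positive marginal effort.'' Take $m=n$ (no shortlist): with linear cost the total-effort-optimal prize structure is winner-take-all \cite{MS01}, so $l=1$ strictly beats $l=n-1$ and your forward difference $T(m,l+1)-T(m,l)$ cannot be positive throughout. Intuitively, for fixed $m$ adding a prize spreads the budget and shrinks the gaps $V_l-V_{l+1}$, which are what actually incentivize effort (Remark~\ref{rmk:PrizeGap}); the gain from the diagonal structure comes not from adding prizes but from \emph{removing contestants}. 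Accordingly, the paper's Proposition~\ref{thm:ConpleteSimpleContest} fixes the number of prizes $l$ and proves $S(m,n,l)$ is strictly decreasing in $m$ on $l+1\leq m\leq n$ (via the ratio $h(m,t)=\int_t^{\infty}A(m,n,x)f(x)\,dx\,/\,A(m,n,t)$ and an induction showing $h(m,t)<h(m-1,t)$, in the spirit of Corollary~\ref{coro:EmptyPrize}), which forces any optimum onto the diagonal $m=l+1$. That is the step your argument, as written, cannot deliver.

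For claim 3 your sketch is directionally the paper's (pass to $k=m/n$, solve a first-order condition, and indeed the FOC collapses to the stated integral equation, as in Proposition~\ref{thm:OptAsmLinear}), but it omits the analytic core that justifies the passage to the limit. The paper first rewrites the objective via the Beta density (Lemma~\ref{lem:betaRepTotalEffort}), then uses concentration of $\beta(\cdot,n-m,m)$ around $1-k$ and the limit $\beta(q,n-m,m)/\bigl(n\int_0^q\beta(x,n-m,m)\,dx\bigr)\to \frac{1}{q}-\frac{k}{q(1-q)}$ (Lemmas~\ref{lem:1} and~\ref{lem:3}) to prove $S(m,n)/n\to\phi(k)$ with a convergence rate \emph{independent of $k$} (Lemma~\ref{lem:AsyRep}). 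This uniformity is precisely what lets one conclude that the finite-$n$ maximizers satisfy $m^*(n)/n\to k^*$ with $\phi'(k^*)=0$; differentiating a ``large-$n$ relaxation'' by Leibniz's rule, as you propose, does not by itself show that the argmax of $S(m,n)/n$ converges to the stationary point of $\phi$. So the proposal needs both a corrected monotonicity argument for claim 1 and a uniform-convergence (concentration) argument for claim 3 before it constitutes a proof.
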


As stated in the guidelines, when the cost function is linear, the optimal contest for any linear combination of contestants' efforts is a simple contest, particularly for the total effort objective. This reduces the design process to determining only the number of prizes and the shortlist size.
Furthermore, we find that the optimal number of prizes equals the shortlist size minus one. The reasoning is twofold. First, a single zero-prize is necessary to incentivize higher effort (see Corollary \ref{coro:Consolation}), as without it, some contestants may exert no effort while still receiving a non-negative prize. Second, having more than one zero-prize is unnecessary (see Corollary \ref{coro:EmptyPrize}). If the prize structure is fixed, admitting an additional contestant increases competition but lowers the probability of any contestant winning a prize. Since effort is costly, contestants will adjust by exerting less effort to balance the reduced expected rewards.
Ultimately, the optimal contest is a complete simple contest. This further simplifies the designer's decision-making from two variables to one, requiring only the selection of the optimal shortlist size.

% As mentioned in guideline, when the cost function is linear, the optimal contest for any linear combination of contestants' efforts is a simple contest, of course for the objective of total effort. It makes us only need to decide the number of prizes and the size of shortlist. Furthermore, we find that the number of prizes is exactly equal to the number of admitted contestants minus one. The main reasons are: on the one hand, one zero-prize is needed to spur more exerted effort (refer to Corollary \ref{coro:EmptyPrize}), because, if not, some contestants may exert zero effort, but always can get a non-negative prize. 
% On the other hand, setting more than one zero-prizes is not necessary.  Intuitively, if we fix a prize structure, admitting one more contestant into shortlist improves the competitiveness and decreases the probabilities that one admitted contestant obtains a certain prize. Since every contestant exerts a costly effort, she will also decrease the effort to balance the loss on the expected prizes. All in all, the optimal contest is a complete simple contest.
% It means that we further decrease the dimension of designer's decision variable from two to one, and only consider the optimal size of shortlist. 

However, the optimal shortlist size heavily depends on the ability distribution. Due to the tradeoff between shortlist size and exerted effort, total effort does not vary monotonically with this size. To derive this result, we first express the total effort objective using a Beta distribution (see Lemma \ref{lem:betaRepTotalEffort}). This representation allows us to obtain an asymptotic form of the objective, enabling the analysis of the optimal shortlist size. In other words, we can design an efficient algorithm to find the optimal shortlist size for any ability distribution. 

In general, we also prove a uniformly upper bound on the optimal size, that is, for any ability distribution, the optimal size does not exceed $0.3162n$, as $n$ goes to infinity and this upper bound is tight achieved by a specific distribution.
\begin{theorem}[Tight Upper Bound for the Optimal Shortlist Size]%\label{thm:UniversalBound}
    For an arbitrary ability distribution $F$, when $n\rightarrow +\infty$, the optimal shortlist size that maximizes ex-ante total effort, denoted by $m^{*}$, has the following linear upper bound with respect to $n$:
    $$ \lim_{n \rightarrow \infty} \frac{m^*(n)}{n} \leq \bar{k},$$
    where $\bar{k} \approx 31.62\%$ is the solution to the equation $\ln k=(2-k)(k-1)$.

    Moreover, there exists a distribution such that $m^*(n)/n=\bar{k}$, meaning the bound is tight. 
\end{theorem}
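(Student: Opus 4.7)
The plan is to extract an explicit upper bound on $k := \lim_{n\to\infty} m^{*}(n)/n$ directly from the first-order characterization in Theorem~3 and then to saturate it with a (nearly) degenerate distribution. Writing $\phi(q) := F^{-1}(1-q)$ (a non-increasing, non-negative function on $(0,1)$) and $q^{*} := k(2-k)$, the stationarity condition reads
\[
\int_k^1 \phi(q)\,g(q)\,dq = 0, \qquad g(q) := \frac{1}{q} - \frac{2k-k^{2}}{q^{2}} = \frac{q - q^{*}}{q^{2}}.
\]
A short check gives $q^{*} \in (k,1)$ for all $k \in (0,1)$, so $g$ is negative on $(k,q^{*})$ and positive on $(q^{*},1)$, while $\phi$ is non-increasing.

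First, I will convert the stationarity equation into a distribution-free inequality by a monotone-anchor rearrangement. Split the integral as
\[
0 \;=\; \phi(q^{*}) \int_k^1 g(q)\,dq \;+\; \int_k^1 \bigl[\phi(q) - \phi(q^{*})\bigr]\, g(q)\,dq.
\]
On $(k, q^{*})$ one has $\phi(q) \geq \phi(q^{*})$ and $g(q) \leq 0$; on $(q^{*}, 1)$ one has $\phi(q) \leq \phi(q^{*})$ and $g(q) \geq 0$. In both regions the integrand of the second integral is pointwise non-positive, so $\phi(q^{*}) \int_k^1 g(q)\,dq \geq 0$, which forces $\int_k^1 g(q)\,dq \geq 0$ (using $\phi(q^{*}) > 0$).

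Next, I evaluate the unweighted integral in closed form:
\[
\int_k^1 g(q)\,dq \;=\; \Bigl[\ln q + \frac{2k-k^{2}}{q}\Bigr]_k^1 \;=\; (2k - k^{2}) - \ln k - (2 - k) \;=\; (2-k)(k-1) - \ln k.
\]
Set $h(k) := (2-k)(k-1) - \ln k$. Then $h(1) = 0$, $h(\bar{k}) = 0$ by the definition of $\bar{k}$, and a short derivative analysis on $(0,1)$ shows $h > 0$ on $(0, \bar{k})$ and $h < 0$ on $(\bar{k}, 1)$. Combining with the previous step gives $h(k) \geq 0$, hence $k \leq \bar{k}$, which is the claimed asymptotic bound.

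For tightness, take $F$ to be a point mass at some $c > 0$ (or the weak limit of $U[c-\varepsilon, c+\varepsilon]$ as $\varepsilon \to 0$ if atomless laws are required). Then $\phi \equiv c$, the cross-term vanishes identically, and the stationarity equation reduces to $c \int_k^1 g(q)\,dq = 0$, whose unique root in $(0,1)$ is $\bar{k}$. I expect the main obstacle to be the rearrangement step: because the sign-change point $q^{*} = 2k - k^{2}$ is not the midpoint of $(k,1)$, one cannot invoke a symmetrization argument, and the inequality must be obtained by explicitly anchoring $\phi$ at $q^{*}$ as above. A secondary subtlety is the tightness direction, where one may need a continuity-in-$F$ argument to confirm that the root of the stationarity equation depends continuously on the quantile function, so that degenerate or near-degenerate distributions genuinely push $k$ up to $\bar{k}$.
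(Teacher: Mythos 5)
Your proof is correct and follows essentially the same route as the paper: your monotone-anchor rearrangement at $q^{*}=2k-k^{2}$ is exactly the paper's comparison $\int_k^1 F^{-1}(1-q)\,r(q)\,dq \leq F^{-1}(1-q_0)\int_k^1 r(q)\,dq$ with $q_0=2k-k^2$, followed by the same closed-form evaluation $(2-k)(k-1)-\ln k$ and the same sign analysis showing this function is negative precisely on $(\bar{k},1)$. The only cosmetic difference is the tightness witness: the paper concentrates a Beta distribution $f(v;\alpha,\beta)$ at a point as $\alpha+\beta\to\infty$ rather than using your point mass (or shrinking uniform), but both arguments make $F^{-1}(1-q)$ asymptotically constant so that the stationarity equation degenerates to $h(k)=0$, whose unique root in $(0,1)$ is $\bar{k}$.
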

% \begin{theorem}[Tight Upper Bound for the Optimal Shortlist Size ]
%     For the optimal shortlist size, let $k_n = \sup_{F} \frac{m^*(n)}{n}$ denote the upper bound of $k$ across all ability distributions. We establish that a uniformly tight upper bound of $k^*=0.3162$. Specifically, it has $k^*=\lim_{n\rightarrow +\infty} k_n$, where $k^*$ is the solution to the equation $\ln k=(2-k)(k-1)$.
    
%     % , when $n\rightarrow \infty$, the optimal shortlist size $m^{*}$ that maximizes the total effort satisfies the following linear upper bound with respect to $n$, :
%     % $$ \lim_{n \rightarrow \infty} \frac{m^*(n)}{n} \leq k^*,$$
%     % where $k^* \approx 31.62\%$ is the solution to the equation $\ln k=(2-k)(k-1)$.

%     % Furthermore, there exists a distribution for which $m^*(n)/n=k^*$, proving that this bound is tight. 
% \end{theorem}

\subsubsection{Comparison with the optimal contest without shortlist.}
In addition to the equilibrium analysis and the optimal contest design for different objectives, we also compare the performance of the optimal contest with a shortlist to that without one 

In previous results, when the cost function is linear, the optimal contest without a shortlist is a winner-take-all contest for both the maximum effort objective \cite{CHS19} and the total effort objective \cite{MS01}. Denote by $\mathcal{C}^{(1,n)}$ the $n$-contestant winner-take-all contest. For the maximum individual effort objective, we have shown that the optimal contest with a shortlist is a two-player winner-take-all contest, denoted as $\mathcal{C}^{(1,2)}$. For the total effort objective, although we know the optimal contest is a complete simple contest, the optimal size cannot be uniformly determined and highly depends on the ability distribution. For any ability distribution $F$, let $\mathcal{C}^{(m^*(F,n)-1, m^*(F,n))}$ denote the optimal complete simple contest with a shortlist. 

To compare these contests, we analyze the outputs across three types of contests for any ability distribution: (i) the optimal contest without a shortlist, $\mathcal{C}^{(1,n)}$, achieves $\Theta(1)$ of the maximum individual and total effort; (ii) a two-contestant winner-take-all contest, $\mathcal{C}^{(1,2)}$,  yields $\Theta(\log n)$ of the maximum individual and total effort;  and (iii) the optimal contest with a shortlist, $\mathcal{C}^{(m^*(F,n)-1, m^*(F,n))}$, reaches $\Theta(n)$ in total effort. These effort bounds not only establish the theorem but also highlight the impact of shortlist on the total effort.

% derive this result, we analyze the total effort across three types of contests for any ability distribution: (i) the optimal contest without a shortlist, which achieves $\Theta(1)$, (ii) a two-contestant winner-take-all contest, which yield $\Theta(\log n)$, and (iii) the optimal contest with a shortlist, which reaches $\Theta(n)$. These effort bounds not only establish the theorem but also highlight the impact of shortlist on the total effort.

For the maximum individual effort objective, we define the ratio of maximum individual effort as the gap between $\mathcal{C}^{(1,2)}$ and $\mathcal{C}^{(1,n)}$, and show that this ratio is $\Theta(\log n)$ for any distribution.

\begin{theorem}
    For any ability distribution $F$, under the maximum individual effort objective, $\mathcal{C}^{(1,2)}$ results in $\Theta(\log n)$ times the maximum individual effort of $\mathcal{C}^{(1,n)}$. Specifically,  
    $$
        \frac{\text{ME}(\mathcal{C}^{(1,2)},F)}{\text{ME}(\mathcal{C}^{(1,n)},F)}
        = \Theta(\log n),
    $$
    where $\text{ME}(\cdot)$ represents the maximum individual effort.
    %Moreover, there exists a distribution where the gap is exactly $\Omega(\log n)$, implying that this ratio is tight.
\end{theorem}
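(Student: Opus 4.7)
The plan is to establish two matching asymptotic bounds on the ex-ante expected maximum effort---$\text{ME}(\mathcal{C}^{(1,n)}, F) = \Theta(1)$ and $\text{ME}(\mathcal{C}^{(1,2)}, F) = \Theta(\log n)$, each uniformly in $F$---and then combine them to obtain the claimed ratio. Both quantities are maxima over contestants of equilibrium efforts, and by the uniqueness sBNE characterization in the first theorem, the effort functions are available in closed form in both contests.

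For $\mathcal{C}^{(1,n)}$, no shortlist is used, so each contestant's belief about opponents is the prior $F$ and the setting reduces to the classical $n$-player winner-take-all rank-order contest. I would substitute $V_1 = B$ into the closed-form sBNE to obtain the effort function $b_n(x)$; the maximum over the population is attained at the top order statistic $X_{(n)}$, so $\text{ME}(\mathcal{C}^{(1,n)}, F) = E[b_n(X_{(n)})]$. Using integration by parts against the density $nF(x)^{n-1} f(x)$ of $X_{(n)}$, this expectation simplifies to an $F$-only quantity with leading order independent of $n$, yielding the $\Theta(1)$ bound.

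For $\mathcal{C}^{(1,2)}$, the two admitted contestants have abilities $X_{(n-1)}$ and $X_{(n)}$ and, crucially, form posterior beliefs about each other that can be read off Corollary~\ref{prop:marginalBelief}. Plugging this posterior into the closed-form sBNE yields an effort function $\tilde{b}_n(x)$, and the maximum effort is $E[\tilde{b}_n(X_{(n)})]$. The change of variable $u = F(x)$ converts the integral into a universal form, reducing the computation to standard Beta order statistics. The key amplification is that, because an admitted contestant knows both players are in the top two of $n$ registered contestants, the posterior density of the opponent near the top is larger than the prior by a factor that scales with $n$, which enlarges $\tilde{b}_n$ accordingly. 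Asymptotic analysis of the resulting integral---by splitting the domain into a bulk region and a narrow tail near the top---produces a logarithmic factor of the form $\int_{1/n}^1 du/u = \log n + O(1)$ and hence the $\Theta(\log n)$ bound.

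The main obstacle is the asymptotic in step three: extracting the leading $\log n$ term cleanly and uniformly in $F$. The change of variable $u = F(x)$ is essential, since it eliminates any dependence on the tail behavior of $F$ and reduces the analysis to the universal uniform case, but one must still handle the discontinuity in the posterior density at the contestant's own ability $x$, which arises from the mixture over whether that contestant is the higher or the lower of the top two. Once these technicalities are handled, combining the $\Theta(\log n)$ and $\Theta(1)$ bounds immediately yields the claimed $\Theta(\log n)$ ratio.
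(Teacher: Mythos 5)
Your proposal is correct and follows essentially the same route as the paper: the paper also proves the two separate bounds $S^{(1)}(n,n,1)=\Theta(1)$ and $S^{(1)}(2,n,1)=\Theta(\log n)$ (Lemmas~\ref{lem:bound on n,1} and~\ref{lem:bound on 2,1}), using the quantile change of variables (Lemma~\ref{lem:QuantileRep}) to make the analysis distribution-free, splitting the integration domain at $q=1/n$, and extracting the logarithm from an integral of the form $\int_{1/n}^{q} t^{-1}\,dt$ (Lemmas~\ref{lem:H21} and~\ref{lem:logInt}), exactly as you sketch; your identification of the posterior normalizer $J$ (which shrinks from $1$ to roughly $F(t)+(n-1)(1-F(t))$) as the source of the $n$-fold amplification near the top quantile is also the correct mechanism. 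Your worry about the discontinuity of the posterior density at the contestant's own ability is unnecessary at this stage, since the closed-form equilibrium of Theorem~\ref{thm:contestantSBNE} already absorbs it into the continuous factor $J(F,n,m,t)$ along the diagonal.
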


%{\color{red} Add intuition on the worst instance?}

%However, for the total effort objective, although we know the optimal contest is a complete simple contest, the optimal size cannot be uniformly determined and highly depends on the ability distribution. For any ability distribution $F$, let $\mathcal{SC}^{(m^*(F,n)-1, m^*(F,n))}$ denote the optimal complete simple contest with a shortlist. 
For the total effort objective, since the optimal contests may vary under different ability distributions, we first fix a specific distribution and then compare the optimal contests under it. Our findings can be summarized as follows.

\begin{theorem}
    Fixed any ability distribution $F$, under the total effort objective, the optimal contest $\mathcal{C}^{(m^*(F,n)-1, m^*(F,n))},F)$ can achieve $\Theta(n)$ times the total effort compared to $\mathcal{C}^{(1,n)}$. Specifically,
    $$
        \frac{\text{TE}(\mathcal{C}^{(m^*(F,n)-1, m^*(F,n))},F)}{\text{TE}(\mathcal{C}^{(1,n)},F)}
         = \Theta(n),
    $$
    where $\text{TE}(\cdot)$ is the total effort and  $m^*(F,n)$ is the optimal shortlist size for the ability distribution $F$. 
    %Moreover, there exists a distribution such that the gap is exactly $\Omega(n)$, implying that this ratio is tight.
\end{theorem}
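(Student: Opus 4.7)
The plan is to compute the asymptotic order of the total effort separately for each of the two contests and then take the ratio. Both quantities admit closed-form integral expressions via the equilibrium theorem, so the argument reduces to an asymptotic comparison in $n$.

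For the denominator, $\mathcal{C}^{(1,n)}$ is the $n$-player winner-take-all contest under a linear cost. Applying the sBNE expression from the earlier theorem and summing the equilibrium efforts (equivalently, using a revenue-equivalence / virtual-value argument, since under linear cost the symmetric BNE aggregates the prize $B$ weighted by the probability of winning), one shows that the expected total effort equals a bounded functional of $F$, at most $B$, and converges to a positive constant as $n \to \infty$. Hence $\text{TE}(\mathcal{C}^{(1,n)},F) = \Theta(1)$ uniformly in $n$.

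For the numerator, recall from the earlier optimal-contest theorem that the optimal shortlist size is \emph{linear} in $n$, namely $m^{*}(F,n) = k(F)\cdot n$ for a constant $k(F) \in (0,1)$ determined by the equation $\int_{k}^{1} F^{-1}(1-q)\bigl(\tfrac{1}{q} - (2k - k^{2})\tfrac{1}{q^{2}}\bigr)\,dq = 0$, and the optimal prize structure is the complete simple contest with per-rank prize $B/(m^{*}(F,n)-1)$. Substituting this configuration into the Beta-distribution representation of total effort (Lemma \ref{lem:betaRepTotalEffort}) and letting $n\to\infty$, I would argue that the expression asymptotically factors as $n$ times a strictly positive integral of $F^{-1}$ weighted by $k(F)$. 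This establishes $\text{TE}(\mathcal{C}^{(m^{*}(F,n)-1,m^{*}(F,n))},F) = \Theta(n)$, and dividing by the $\Theta(1)$ denominator yields the claimed ratio.

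The main obstacle is the lower bound: verifying that this limiting multiplier is strictly positive for \emph{every} admissible distribution $F$, rather than merely furnishing an upper bound of $O(n)$. This will follow from the first-order condition satisfied by $k(F)$, which forces $k(F) \in (0,1)$ and keeps the integrand $F^{-1}(1-q)$ bounded away from zero on a non-degenerate sub-range of $[k(F),1]$; a complementary technical point is controlling heavy tails, where the shortlist truncation at $q = k(F)$ keeps the integral finite and lets the asymptotic estimate go through. Combining this uniform positivity with the $O(n)$ upper bound implied by the budget constraint $B$ distributed over $\Theta(n)$ prizes completes the $\Theta(n)$ ratio.
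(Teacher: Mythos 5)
Your proposal follows essentially the same route as the paper, which proves this theorem in one line by combining two lemmas: $S(n,n,1)=\Theta(1)$ for the $n$-contestant winner-take-all contest (Lemma~\ref{lem:bound on n,1}) and $S(m^*,n,m^*-1)=\Theta(n)$ for the optimal complete simple contest (Lemma~\ref{lem:bound on m,m-1}). Your numerator analysis---substituting $m^*(F,n)=k(F)\,n$ into the Beta representation of Lemma~\ref{lem:betaRepTotalEffort} and extracting an $n$-linear limit---is exactly the content of Lemma~\ref{lem:AsyRep} and Lemma~\ref{lem:bound on m,m-1}. Two points of comparison. First, for the denominator the paper does not use a revenue-equivalence argument: it works through the quantile representation (Lemma~\ref{lem:QuantileRep}) and the two-sided estimate $H_{(n,1)}(q)=\Theta(\min(nq^2,\tfrac{1}{n}))$ (Lemma~\ref{lem:Hn1}), together with the standing assumption $L'\leq |v'(q)|\leq L$. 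Your appeal to revenue equivalence delivers the $O(1)$ upper bound cheaply, but the matching lower bound (``converges to a positive constant'') is asserted rather than proved, and that is precisely where the paper needs $|v'(q)|$ bounded away from zero. Second, you correctly identify the step the paper leaves implicit: strict positivity of the limit multiplier $\phi(k^*)=\lim_{n\to\infty}S(m^*,n)/n$. The paper's Lemma~\ref{lem:bound on m,m-1} only says $S$ becomes ``a linear function of $n$''; positivity follows, as you suggest, from the first-order analysis in Proposition~\ref{thm:OptAsmLinear}: since $\phi(0)=0$ and $\phi'(0)=\int_0^1 F^{-1}(1-q)\,q^{-1}\,dq>0$, the maximizer $k^*$ is interior and $\phi(k^*)>0$, so your resolution matches the paper's (implicit) one.

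One concrete caution: your closing claim that the $O(n)$ upper bound is ``implied by the budget constraint $B$ distributed over $\Theta(n)$ prizes'' is not how the paper obtains it, and as an argument it would not go through. A budget/participation-style argument bounds the aggregate expected \emph{cost} by $B$, and with bounded abilities this pushes toward an $O(1)$ bound on aggregate effort rather than an $O(n)$ one---it points in the opposite direction from what you need. In the paper, the upper bound comes instead from the convergence $S(m,n)/n\to\phi(k)$ with $\phi$ finite, which rests on boundedness of $F^{-1}$ and the Beta-concentration estimates (Lemmas~\ref{lem:1}, \ref{lem:2}, and~\ref{lem:3}), with a convergence rate independent of $k$. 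Replace your budget step with that convergence argument and your outline coincides with the paper's proof.
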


\subsection{Related Literature}

\subsubsection{Further related works}
Our paper belongs to the field of single contest design, specifically focusing on rank-order contests (also known as all-pay contests). Most papers in this field concentrate on characterizing contestant equilibrium and designing optimal prize structures for various objectives.
In the early period, \cite{GH88} identify the optimal contests for both identical and non-identical contestants based on their abilities, aiming to maximize total effort. \cite{MS01} extend this work by considering cases where contestants' abilities are drawn from a publicly known distribution, and they design the optimal contests for linear, concave, and convex cost functions.
\cite{KG03} study procurement contests in both symmetric and asymmetric settings and propose the corresponding optimal contest designs. \cite{AS09} analyze crowdsourcing contests with a large number of participants, characterizing the asymptotically optimal prize structure to maximize the effort of the top contestants.
\cite{GK16} introduce contests with simple contestants who can strategically choose whether to participate, designing the optimal prize structure to maximize the total ability of participants. \cite{LLWZ18} and \cite{LL23} explore the optimal rank-order contest designs allowing negative prizes.
\cite{CHS19} propose an optimal crowdsourcing contest catering for the maximum effort objective, and \cite{EGG21} study the designer's threshold objective, deriving the optimal contest structures. \cite{G23} investigate the impact of prize structure and ability distributions on contestants' equilibrium efforts. 
%\cite{SSYJ24} consider winner-take-all contest in the setting of contest with a shortlist, but it is not the optimal contest.
%\cite{SSYJ24} consider incorporating entry restriction into contests, and find that it lowers equilibrium effort of all contestants.
\cite{SSYJ24} consider incorporating entry restriction into contests, but do not find the contestant equilibrium and the optimal contest design. 
Additionally, several studies \cite{BKV96, BK98, BKV12} analyze the equilibria of all-pay auctions.
Our work considers a rank-order contest with a shortlist, introducing an elimination stage before effort is exerted, and investigates the optimal contest design for maximizing both the maximum individual effort and total effort, which confirms the positive effect of shortlist for contest designers.

In addition, due to the presence of a shortlist, our paper is related to the concepts of elimination and signaling in contests. The seminal paper by \cite{MS06} divides contestants into two sub-contests, where the winners of each sub-contest compete again in the final contest. \cite{FL12} investigate the optimal design of multi-stage Tullock contests \cite{T08}. \cite{LMZ18} introduce information disclosure policies in all-pay contests and compare different disclosure strategies. \cite{LSA19} summarize the intersection between contest design and information disclosure. \cite{MPS21} study all-pay sequential elimination contests in a complete information setting. \cite{FW22} incorporate information disclosure and bias in a two-stage Tullock contest. \cite{R24} examines a multi-stage all-pay contest, where each stage eliminates the contestant with the lowest effort, and analyzes contestants' strategies at each stage.
Additionally, some studies \cite{CKZ17, C24, KZZ24} apply the framework of Bayesian persuasion to contests and investigate optimal strategies for information disclosure.
In contrast to these papers, our contest model introduces an elimination stage via a shortlist. After elimination, contestants receive a signal indicating whether they are admitted, which influences their beliefs about the ability distributions of their opponents. Moreover, while several papers aim to improve contest performance through an elimination stage, the methods discussed in those works tend to be quite sophisticated. In contrast, our model is much simpler—requiring only the addition of a shortlist initially—yet it achieves significantly better results in terms of both maximum individual effort and total effort compared to traditional optimal contests.

\section{Model and Preliminaries}
\label{sec:pre}

In this section, we formally introduce the contest model with a shortlist. %, which is according with the model in \cite{SSYJ24}. 
Consider a single-contest setting with $n$ registered contestants. Each contestant $i \in [n]=\{1,2,\cdots, n\}$ has a private bounded ability level, denoted by $0<x_i<+\infty$, which is independently and identically drawn from a publicly known distribution  $F(\cdot)$ with a continuous probability density function $f(\cdot)>0$. For convenience, we also represent ability using the quantile $q_i := 1-F(x_i)$ which follows a uniform distribution on $[0, 1]$, i.e., $U[0,1]$. The inverse function, $v(q_i):=F^{-1}(1-q_i)=x_i$,  is strictly decreasing, meaning that a lower quantile corresponds to a higher ability level.

As mentioned in Section \ref{sec:intro}, we consider a contest with a shortlist. Initially, the contest designer determines the size of the shortlist, i.e., the number of admitted contestants, $2\leq m \leq n$, and establishes the prize structure $\vec{V}=(V_1, V_2, \cdots, V_m)$ within the budget $B$, where  $V_1 \geq V_2\geq  \cdots \geq V_m$ and $\sum_{i=1}^p V_i \leq B$. Denote by $\mathcal{C}=(m,\vec{V})$ the designer's decision variable, i.e., the contest configuration. Then, the designer then discerns the abilities of the registered contestants\footnote{In practice, ``ability'' can be reflected in various ways, such as applicants' resumes, athletes' historical scores, or other relevant data, prior to the contest. We assume the designer is able to assess and discern the contestants' abilities.},
selecting the top $m$ contestants based on ability into the shortlist, while eliminating the others. Note that the designer determines the shortlist size and prize structure before knowing the exact abilities of the contestants. In other words, these decisions are based solely on the number of registered contestants $n$ and the prior ability distribution $F$. 

Next, each admitted contestant $i$, i.e., whose ability $x_i$ ranks in the top $m$, is informed and then strategically chooses an effort level $e_i$ to compete for prizes. The cost of exerting effort is given by $g(e_i)/x_i$, where $g(\cdot): \mathbb{R}_{\geq 0} \rightarrow \mathbb{R}_{\geq 0}$ is a strictly increasing, continuous and differentiable function with $g(0)=0$. Intuitively, for the same effort level $e$, a contestant with higher ability incurs a lower cost than one with lower ability. After effort levels are chosen, the contest designer allocates prizes based on a rank-order rule: the contestant with the highest effort receives the first prize $V_1$, the second-highest effort earns $V_2$, and so on. 

In summary, our model follows these sequential steps:
\begin{enumerate}
    \item The contest designer determines the number of admitted contestants, $m$, and sets the prize structure $\vec{V}$ within the budget $B$.
    \item The top $m$ contestants based on ability are selected, while the others are eliminated.
    \item The remaining $m$ contestants exert effort to compete for prizes.
    \item Prizes are awarded based on effort, in non-descending order.
\end{enumerate}

Given our contest model, we now define a contestant's utility. Since eliminated contestants exert no effort and receive no prize, their utility is zero.
Let $\mathbf{e}= (e_1, e_2,\cdots, e_m)$ denote the effort profile of all admitted contestants. For an admitted contestant $i$, the utility is given by the prize earned minus the incurred cost:
$$
    u_i(\mathbf{e}) = V_{Rank(i,\mathbf{e})}-\frac{g(e_i)}{x_i},
$$
where $Rank(i,\mathbf{e})$ represents contestant $i$'s rank based on the effort profile $e$. Each contestant aims to maximize utility by choosing an effort level.

For an admitted contestant $i$, the exerted effort $e_i$ depends only on her ability $x_i$ and her strategy function $b_i(x): \mathbb{R} \rightarrow \mathbb{R}$, i.e., $e_i=b_i(x_i)$. We assume that $b_i(x)$ is monotone non-decreasing, meaning that a contestant with higher ability will not exert lower effort.
In this paper, suppose all admitted contestants follow the same strategy function, i.e., the strategy is symmetric: $b_i(x)=b_j(x)$ for any $x>0$ and any two admitted contestants $i$ and $j$. Since our model is under the incomplete information setting, we adopt the symmetric Bayesian Nash equilibrium as the solution concept. 
\begin{definition}\label{def:sBNE}
    An effort function constitutes a symmetric Bayesian Nash equilibrium (sBNE) if and only if, for any admitted contestant, choosing this function maximizes expected utility when all others do the same. Specifically, an effort function $b^*(x)$ is an sBNE if and only if, for any admitted contestant $i$,
$$
    b^*(x_i) \in \arg\max_{e_i} \sum_{j=1}^m V_j \cdot \text{Pr}_{ij}(e_i) -\frac{g(e_i)}{x_i},
$$
    where $\text{Pr}_{ij}(e_i)$ represents the probability that contestant $i$ is ranked at $j$ in the contest, given that all other admitted contestants also adopt $b^*(x)$. 
\end{definition}

Lastly, the contest designer aims to optimize an objective function by establishing the contest configuration $\mathcal{C}=(m, \vec{w})$, including the shortlist size and prize structure. In this work, we focus on two objective functions: maximum individual effort and total effort. Specifically, given a contest configuration $\mathcal{C}$, the maximum individual effort objective is defined as
$$
    \text{ME}(\mathcal{C},F)= \mathbb{E}_{x_1, x_2,\cdots, x_n \sim F}\big[e_{(1)}\big],
$$
where $e_{(1)}$ is the highest effort among all contestants under the sBNE. The total effort objective is defined as
$$
    \text{TE}(\mathcal{C},F)= \mathbb{E}_{x_1, x_2,\cdots, x_n \sim F}\big[\sum_{i=1}^n e_{i}\big],
$$
where $e_i$ is contestant $i$'s exerted effort under the sBNE, assuming that eliminated contestants exert zero effort. 

% In this paper, we focus on comparing two contests and adopt the concept of the approximation ratio to quantify the gap in their objectives. Consider two contests, $\mathcal{C}$ and $\mathcal{C}'$, where the performance of contest $\mathcal{C}$ is superior to that of $\mathcal{C}'$. Formally, for any ability distribution $F$, we define the approximation ratio between contest $\mathcal{C}$ and $\mathcal{C}'$ with respect to the maximum individual effort objective (and similarly for the total effort objective) as:
% $$
%     \text{APR}=\sup_{F\in\mathcal{F}} \Big\{\frac{\text{ME}(\mathcal{C},F)}{\text{ME}(\mathcal{C}',F)}
%      \Big\}.
% $$
%In the following, we consider how to design the optimal contest configuration to optimize these two objectives under the sBNE, respectively.

\section{Contestant Equilibrium}
\label{sec:playerSBNE}
In this section, we fully characterize the unique symmetric Bayesian Nash equilibrium of admitted contestants for any shortlist size $m$ and any prize structure $\vec{V}$. A key step in achieving this characterization is to explicitly represent each contestant's utility, allowing us to derive and solve the first-order condition.
Unlike traditional contest settings without a shortlist, where the probability of contestant $i$ is ranked at $j$ can be directly computed using the prior ability distribution, our setting requires calculating these probabilities based on posterior beliefs about the abilities of other admitted contestants. This necessitates a detailed understanding of the posterior ability distribution. Therefore, we first describe the closed form of posterior beliefs in Subsection \ref{subsec:Posterior Beliefs}. Then, in Subsection \ref{subsec:Equilibrium Efforts}, we provide a complete characterization of contestant equilibrium efforts.

\subsection{Posterior Beliefs}
\label{subsec:Posterior Beliefs}

Given a contest with a shortlist size of $m$ and a prize structure $\vec{V}$, we first update an admitted contestant's posterior belief about the ability levels of other admitted contestants. This posterior belief, represented as the joint posterior distribution, is derived using Bayes' rule, as stated in the following proposition.

% To find the subjective probability of getting each rank for any players, we need to first derive their posterior beliefs on other contestants' ability level after updating according to Bayesian Rule. 

\begin{proposition}[Posterior Beliefs]\label{prop:posteriorBeliefs}
    For any admitted contestant (w.l.o.g., labeled as Contestant 1), the joint posterior probability density function of the ability levels of the other admitted contestants is given by:
    % For any selected observer (her player index is re-indexed as $1$), the joint posterior probability density function of other players' ability level perceived by her is:
    \[
        \beta_1(\mathbf{x}) =   
        \begin{cases} 
        \frac{\binom{n-1}{m-1}F^{n-m}(x^{(1)})\prod_{i=2}^{m}f(x_i)}{J(F,n,m,x_1)} & \text{if } x^{(1)} \leq x_1, \\
        \frac{\binom{n-1}{m-1}F^{n-m}(x_1)\prod_{i=2}^{m}f(x_i)}{J(F,n,m,x_1)} & \text{if } x^{(1)} > x_1,
        \end{cases}
    \]
    where $x_1$ is the contestant 1's ability level and $x^{(1)}:=\min_{j\in [m]\setminus \{1\}}x_i$ is the lowest ability level of other admitted contestants. The normalization denominator is defined as $J(F(\cdot),n,m,x) := \binom{n-1}{m-1}F^{n-m}(x)(1-F(x))^{m-1}+\binom{n-1}{m-1}(m-1)\int_0^{F(x)}t^{n-m}(1-t)^{m-2} \, dt$.
\end{proposition}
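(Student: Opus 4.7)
The plan is to derive $\beta_1(\mathbf{x})$ from Bayes' rule. I would write the posterior density as the joint density of the event ``Contestant~1 has ability $x_1$, the other admitted contestants have abilities $x_2,\dots,x_m$, and Contestant~1 is admitted'' divided by the marginal density of ``Contestant~1 has ability $x_1$ and is admitted.'' The factor $f(x_1)$ cancels from both, so it suffices to compute two quantities: the conditional joint density of the admitted others' abilities given Contestant~1's ability and admission, and the conditional admission probability $P(\text{Contestant~1 admitted}\mid x_1)$, which will serve as the normalization constant.

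For the numerator I would symmetrize over the $\binom{n-1}{m-1}$ possible identities of the admitted others drawn from the $n-1$ other registered contestants; for any fixed such choice, the density factors as $\prod_{i=2}^{m} f(x_i)$ times the probability that the remaining $n-m$ (to-be-eliminated) contestants all have ability strictly below the lowest admitted ability. That lowest admitted ability equals $x^{(1)}$ when $x^{(1)}\le x_1$ and equals $x_1$ when $x^{(1)}>x_1$, producing the factor $F(x^{(1)})^{n-m}$ or $F(x_1)^{n-m}$ respectively and generating the two cases of $\beta_1(\mathbf{x})$.

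For the denominator I would observe that Contestant~1 is admitted given $x_1$ exactly when the $m$-th largest of the remaining $n-1$ abilities is at most $x_1$. Its probability is the standard Beta-type CDF of an order statistic, $\tfrac{(n-1)!}{(m-1)!\,(n-m-1)!}\int_0^{F(x_1)} t^{n-m-1}(1-t)^{m-1}\,dt$, and a single integration by parts (taking $u=(1-t)^{m-1}$ and $dv=t^{n-m-1}\,dt$) rewrites it as a boundary term plus a residual integral, matching $J(F,n,m,x_1)$ exactly. The main subtlety is the case-split at $x^{(1)}=x_1$, which tracks whether Contestant~1 is or is not the weakest among the admitted contestants and therefore which ability plays the role of the elimination threshold; once this dichotomy is handled cleanly in the numerator, combining it with the rewritten denominator yields both cases of the proposition.
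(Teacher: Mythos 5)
Your proposal is correct, and it follows the paper's Bayes-rule skeleton with the same numerator analysis (symmetrizing over which $m-1$ of the $n-1$ others are admitted, factoring out $\prod_{i=2}^m f(x_i)$, and case-splitting on whether the elimination threshold is $x^{(1)}$ or $x_1$), but your denominator computation is genuinely different from the paper's. The paper applies the law of total probability and integrates the joint density over all configurations of $(x_2,\dots,x_m)$, splitting into the same two cases (observer weakest vs.\ not), which directly produces the two-term form $\binom{n-1}{m-1}^{-1}J(F,n,m,x_1)f(x_1) = \bigl[F^{n-m}(x_1)(1-F(x_1))^{m-1} + (m-1)\int_0^{F(x_1)}t^{n-m}(1-t)^{m-2}\,dt\bigr]f(x_1)$; the interpretation of $J$ as the prior admission probability is then established separately (Lemma~\ref{lem:normal}, via the incomplete-beta/binomial identity of Lemma~\ref{lem:betaRep}). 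You instead recognize the denominator at the outset as the order-statistic CDF $\frac{(n-1)!}{(m-1)!(n-m-1)!}\int_0^{F(x_1)}t^{n-m-1}(1-t)^{m-1}\,dt$ and convert it by one integration by parts, which indeed matches $J$ exactly: the boundary term gives $\binom{n-1}{m-1}F^{n-m}(x_1)(1-F(x_1))^{m-1}$ and the residual integral gives $\binom{n-1}{m-1}(m-1)\int_0^{F(x_1)}t^{n-m}(1-t)^{m-2}\,dt$ (I verified the constants: $\frac{(n-1)!}{(m-1)!(n-m-1)!}\cdot\frac{1}{n-m}=\binom{n-1}{m-1}$). Your route is shorter and delivers the probabilistic meaning of the normalizer for free, effectively fusing the paper's denominator computation with Lemma~\ref{lem:normal}; the paper's multivariate-integration route is heavier but its intermediate case-by-case integrals are reused nearly verbatim in the proofs of Corollary~\ref{prop:marginalBelief} and Proposition~\ref{prop:winProb}, which is presumably why the authors chose it. One trivial caveat: your order-statistic formula has the singular integrand $t^{n-m-1}$ when $m=n$, a case the model permits; there admission is certain and $J(F,n,n,x_1)=1$, so you should dispose of it by a one-line remark rather than by the beta-CDF formula.
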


We further marginalize this belief to obtain the posterior ability distribution of a specific admitted contestant.

% To see the how shortlist affects the belief of admitted players, we marginalize this belief, thereby to see the posterior ability level distribution the observer holds for a specific rival.

\begin{corollary}[Marginal Posterior Beliefs]\label{prop:marginalBelief}
For any admitted contestant (labeled as Contestant 1), the posterior probability density function of another admitted contestant's ability (labeled as Contestant 2) is:

% For any selected observer (her player index is re-indexed as $1$), the posterior probability density function of another player's ability level (this player's index is re-indexed as $2$) perceived by her is:
\[
\beta_1(z) =   
\begin{cases} 
\frac{\binom{n-1}{m-1}\left [ F^{n-m}(z)(1-F(z))^{m-2}+(m-2)\int_0^{F(z)}t^{n-m}(1-t)^{m-3}\, dt \right ]f(z)}{J(F,n,m,x_1)} & \text{if } z \leq x_1, \\
\frac{\binom{n-1}{m-1}\left [  F^{n-m}(x_1)(1-F(x_1))^{m-2} +(m-2)\int_0^{F(x_1)} t^{n-m}(1-t)^{m-3}\, dt \right ] f(z)}{J(F,n,m,x_1)} & \text{if } z > x_1.
\end{cases}
\]

The posterior cumulative distribution function is then expressed as:
\[
\Pr_{\beta_1}(X_2\leq z) =   
\begin{cases} 
\frac{\binom{n-1}{m-1} \left [ B_{F(z)}(n-m+1,m-1) + (m-2)\int_{0}^{z} B_{F(t)}(n-m+1, m-2) \, dF(t) \right ]}{J(F,n,m,x_1)} & \text{if } z \leq x_1, \\
\Pr_{\beta_1}(X_2\leq x_1) + \\ \quad\frac{\binom{n-1}{m-1}(F(z)-F(x_1))\left [ F^{n-m}(x_1)(1-F(x_1))^{m-2}+(m-2)B_{F(x_1)}(n-m+1, m-2)\right ]}{J(F,n,m,x_1)} & \text{if } z > x_1.
\end{cases}
\]
where $B_{x}(a,b):=\int_{0}^{x} t^{a-1}(1-t)^{b-1} \, dt$ denotes an incomplete beta function.

Note that both $\beta_1(\cdot)$ and $\Pr_{\beta_1}(\cdot)$ are continuous.
%Also, for the selected observer, her marginal posterior belief for different opponents are identical but not independent.  
\end{corollary}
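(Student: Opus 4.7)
The plan is to obtain the marginal density $\beta_1(z)$ by integrating the joint posterior from Proposition~\ref{prop:posteriorBeliefs} over the $m-2$ remaining opponents' abilities $(x_3,\ldots,x_m)$, with $x_2=z$ held fixed. The crucial observation is that the joint posterior depends on $(x_3,\ldots,x_m)$ only through the minimum $x^{(1)}=\min(x_2,x_3,\ldots,x_m)$, so marginalization reduces to computing the expectation of a piecewise function of a single order statistic. Passing to quantiles $Y_i:=F(x_i)$, the variables $Y_3,\ldots,Y_m$ are i.i.d.\ Uniform on $[0,1]$ and their minimum $Y_{(1)}$ has density $(m-2)(1-u)^{m-3}$ on $[0,1]$; everything then reduces to one-dimensional integrals in $Y_{(1)}$.

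I would then split the argument into the two regimes indicated by the statement. In the regime $z\leq x_1$, the inequality $x_2=z\leq x_1$ automatically gives $x^{(1)}\leq x_1$, so only the top branch $F^{n-m}(x^{(1)})$ of the joint posterior is triggered. Splitting further on whether $Y_{(1)}\geq F(z)$ (in which case $x^{(1)}=z$) or $Y_{(1)}<F(z)$ (in which case $x^{(1)}$ equals the minimum of the other $x_i$'s) produces exactly the two summands $F^{n-m}(z)(1-F(z))^{m-2}$ and $(m-2)\int_0^{F(z)} s^{n-m}(1-s)^{m-3}\,ds$ appearing in the stated density. In the regime $z>x_1$, both branches of the joint posterior are active: if $Y_{(1)}\geq F(x_1)$ then $x^{(1)}>x_1$ and the bottom branch contributes $F^{n-m}(x_1)(1-F(x_1))^{m-2}$, while if $Y_{(1)}<F(x_1)$ the top branch contributes the incomplete beta integral $(m-2)B_{F(x_1)}(n-m+1,m-2)$. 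Summing these reproduces the second case exactly.

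For the CDF, I would integrate $\beta_1(t)$ from $0$ to $z$. For $z\leq x_1$, the substitution $u=F(t)$ turns the first summand into $\int_0^{F(z)} u^{n-m}(1-u)^{m-2}\,du=B_{F(z)}(n-m+1,m-1)$ and turns the second summand into the iterated integral $(m-2)\int_0^z B_{F(t)}(n-m+1,m-2)\,dF(t)$. For $z>x_1$, the bracketed factor in $\beta_1(t)$ is constant in $t$ on $(x_1,z)$, so the contribution on that interval is that constant times $\int_{x_1}^z f(t)\,dt=F(z)-F(x_1)$, which when added to $\Pr_{\beta_1}(X_2\leq x_1)$ yields the claimed formula. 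Continuity at $z=x_1$ follows by letting $z\downarrow x_1$ in the first-regime bracket, which collapses to $F^{n-m}(x_1)(1-F(x_1))^{m-2}+(m-2)B_{F(x_1)}(n-m+1,m-2)$, matching the second-regime bracket.

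The main obstacle will be the case split in the regime $z>x_1$: the minimum $x^{(1)}$ may lie on either side of the threshold $x_1$, forcing the inner expectation into two disjoint events with different integrands and different normalizations of the joint posterior. Once these events are parametrized cleanly through $Y_{(1)}$ and its density $(m-2)(1-u)^{m-3}$, the remaining algebra is a routine rewriting in terms of incomplete beta functions, and the normalization $J(F,n,m,x_1)$ inherited from Proposition~\ref{prop:posteriorBeliefs} carries through unchanged.
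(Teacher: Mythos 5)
Your proposal is correct and follows essentially the same route as the paper's proof: marginalize the joint posterior of Proposition~\ref{prop:posteriorBeliefs} over the remaining $m-2$ opponents, splitting on whether $z \leq x_1$ or $z > x_1$ and on where the minimum opponent ability falls relative to $z$ and $x_1$, then integrate the density for the CDF and verify continuity at $z = x_1$. Your only departure is cosmetic: you package the inner $(m-2)$-fold integrals as a one-dimensional expectation against the density $(m-2)(1-u)^{m-3}$ of the minimum of $m-2$ i.i.d.\ uniform quantiles, which reproduces exactly the combinatorial factor the paper obtains by re-indexing the weakest opponent and iterating the integrals (and even spares you the paper's extra sub-split of the region $x^{(1)} > x_1$, since the integrand is constant there).
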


Using the marginal posterior belief, we derive two types of stochastic dominance to illustrate the effect of a shortlist.
% We then derive two types of stochastic dominance using this marginal posterior belief.

\begin{proposition}[Stochastic Dominance of Posterior over Prior]\label{prop:stoDomPos}
    Based on the prior p.d.f. \(f(x)\) and posterior p.d.f. \(\beta_i(x)\), let \(\Pr_{f}(X_j \leq z)\) and \(\Pr_{\beta_i}(X_j \leq z)\) denote the probabilities that contestant \(i\) believes that the ability level of contestant \(j \neq i\) is at most $z$ before and after receiving the admitted signal, respectively. Then, for all $z$, we have:
    %Let \(\Pr_{f}(X_j \leq z)\) denote the probability that player \(i\), according to the prior distribution \(F(x)\), believes the ability level of player \(j \neq i\) lower than \(z\) before the promotion signal is received. Similarly, let \(\Pr_{\beta_i}(X_j \leq z)\) denote the probability that player \(i\), according to the posterior belief \(\beta_i(x)\), believes the ability level of player \(j \neq i\) is lower than \(z\) after the promotion signal is received. Then, for all \(z \geq 0\), we have:
    \[
    \Pr_{\beta_i}(X_j \leq z) \leq \Pr_{f}(X_j \leq z).
    \]
\end{proposition}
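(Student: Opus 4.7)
The plan is to prove the stronger monotone likelihood ratio (MLR) property of the posterior $\beta_1$ relative to the prior $f$, and then deduce first-order stochastic dominance (which is exactly the stated inequality on CDFs) by a standard single-crossing argument. Concretely, setting $L(z) := \beta_1(z)/f(z)$ using the explicit density from Corollary~\ref{prop:marginalBelief}, I would show that $L$ is non-decreasing in $z$.

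For the MLR check I would split on the two branches of Corollary~\ref{prop:marginalBelief}. On $z > x_1$ the ratio $L(z)$ does not depend on $z$, so monotonicity is immediate. On $z \leq x_1$, after stripping the $z$-independent normalization $\binom{n-1}{m-1}/J(F,n,m,x_1)$, the bracketed piece becomes a function of $u := F(z)$:
\[
    R(u) = u^{n-m}(1-u)^{m-2} + (m-2)\int_0^u t^{n-m}(1-t)^{m-3}\,dt.
\]
The product rule on the first summand contributes $-(m-2)\,u^{n-m}(1-u)^{m-3}$, which is \emph{exactly} canceled by the Leibniz-rule contribution $(m-2)\,u^{n-m}(1-u)^{m-3}$ from differentiating the integral, leaving $R'(u) = (n-m)\,u^{n-m-1}(1-u)^{m-2} \geq 0$. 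Since $F$ is increasing, $L$ is non-decreasing on $[0, x_1]$, and the continuity of $\beta_1$ at $z = x_1$ noted in Corollary~\ref{prop:marginalBelief} glues the two branches into a globally non-decreasing likelihood ratio.

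To pass from MLR to FOSD, I would consider $G(z) := \int_0^z (f(t) - \beta_1(t))\,dt = \Pr_f(X_j \leq z) - \Pr_{\beta_1}(X_j \leq z)$. Because $L$ is non-decreasing and both densities integrate to one, the sign of $f - \beta_1$ switches at most once, from positive to negative, at some threshold $z^*$; hence $G$ is non-decreasing on $[0, z^*]$ and non-increasing on $[z^*, \infty)$, and combined with $G(0) = G(\infty) = 0$ this forces $G \geq 0$ throughout, which is the desired inequality. The step I expect to be most delicate is the derivative computation on the $z \leq x_1$ branch: one must spot the cancellation between the product-rule and Leibniz-rule terms, without which the sign of $R'(u)$ would not clearly simplify. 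The rest is routine bookkeeping (continuity of $L$ at $x_1$) together with the textbook MLR $\Rightarrow$ FOSD implication.
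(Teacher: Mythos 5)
Your proposal is correct and takes essentially the same route as the paper: the paper defines the ratio $q_{x_i}(z)=\beta_i(z)/f(z)$, proves it is non-decreasing via exactly the cancellation you identify (the $(m-2)$ product-rule and Leibniz terms cancel, leaving $(n-m)F^{n-m-1}(z)(1-F(z))^{m-2}f(z)\geq 0$, with the $z>x_i$ branch constant), and then concludes by a single-crossing argument (its Lemma~\ref{lem:StoDom}). The one small divergence is in that last step: the paper's lemma carries the extra hypothesis $\int_0^{+\infty}q_{x_i}(z)\,dz=1$, which it verifies by a separate, fairly long differentiation-in-$x_i$ argument, whereas your textbook MLR$\Rightarrow$FOSD passage needs only that $f$ and $\beta_1$ are both probability densities (so $\int(1-L)f=0$ forces the single sign change), a modest but genuine streamlining of the same proof.
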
 

\begin{proposition}[Posterior of Higher Ability Stochastically Dominates Lower Ability]\label{prop:StoDomAbi}
    For two admitted contestants, say $i$ and $j$, if contestant $i$ has a higher ability than contestant $j$ (i.e., $x_i > x_j$), then contestant $i$'s posterior belief about the ability of any other admitted contestant  $k$ first-order stochastically dominates that of contestant $j$. Formally, for all $z\geq 0$, we have:
    % For two promoted players, say $i,j$, if player $i$'s ability level is higher than player $j$'s, i.e., $x_i > x_j$, then player $i$'s posterior belief of the ability level of any other promoted player $k$ first-order stochastically dominates  player $j$'s, i.e., for all $z\geq 0$, we have:
    \[
    \Pr_{\beta_i}(X_k \leq z) \leq \Pr_{\beta_j}(X_k \leq z).
    \]
\end{proposition}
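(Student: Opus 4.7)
The plan is to establish first-order stochastic dominance by proving the stronger Monotone Likelihood Ratio Property (MLRP), namely that the ratio $\beta_{x_i}(z)/\beta_{x_j}(z)$ is non-decreasing in $z$ whenever $x_i>x_j$. It is a standard fact that MLRP implies FOSD, so this reduces the proposition to a monotonicity computation on a one-dimensional likelihood ratio.

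First I would rewrite the marginal posterior density from Corollary~\ref{prop:marginalBelief} in a compact form that unifies both cases. Define
\[
H(x)\;:=\;\binom{n-1}{m-1}\Bigl[F^{n-m}(x)(1-F(x))^{m-2}+(m-2)\!\int_0^{F(x)}\!t^{n-m}(1-t)^{m-3}dt\Bigr],
\]
so that for every admitted ability $x_1$ and every $z\geq 0$, the density can be written uniformly as $\beta_{x_1}(z)=f(z)\,H(\min(z,x_1))/J(F,n,m,x_1)$. The two branches of Corollary~\ref{prop:marginalBelief} correspond exactly to whether $\min(z,x_1)=z$ or $\min(z,x_1)=x_1$, so this is just a rewriting; I would verify it explicitly and handle the degenerate case $m=2$ separately (where the integral term vanishes and $H$ simplifies to $(n-1)F^{n-2}(x)$, still monotone).

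Next I would compute the likelihood ratio on the three natural intervals. For $z\leq x_j$, the ratio equals the constant $J(F,n,m,x_j)/J(F,n,m,x_i)$; for $x_j<z\leq x_i$, it equals $H(z)J(F,n,m,x_j)/[H(x_j)J(F,n,m,x_i)]$; and for $z>x_i$, it equals $H(x_i)J(F,n,m,x_j)/[H(x_j)J(F,n,m,x_i)]$. The ratio is thus non-decreasing in $z$ provided $H$ is non-decreasing, because the three pieces match continuously at $z=x_j$ and $z=x_i$. To establish $H'\geq 0$ I would differentiate directly: the term $-(m-2)F^{n-m}(x)(1-F(x))^{m-3}f(x)$ coming from the first summand exactly cancels the derivative of the integral via Leibniz's rule, leaving the clean expression
\[
H'(x)=\binom{n-1}{m-1}(n-m)\,F^{n-m-1}(x)\,f(x)\,(1-F(x))^{m-2}\geq 0,
\]
with equality only in the uninteresting full-admission case $m=n$.

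The main obstacle will be the bookkeeping around this cancellation and convincing the reader that the piecewise likelihood ratio is indeed globally monotone rather than merely monotone on each piece; I would emphasize the continuity of $H$, of $f$, and of the min-formulation at the kinks $z=x_j$ and $z=x_i$ to rule out any hidden downward jump. Once MLRP is established, FOSD, i.e.\ $\Pr_{\beta_i}(X_k\leq z)\leq \Pr_{\beta_j}(X_k\leq z)$ for every $z\geq 0$, follows from the standard argument (integrate the ratio inequality against any non-decreasing indicator), completing the proof.
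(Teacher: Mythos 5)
Your proof is correct, but it takes a genuinely different route from the paper's. The paper fixes $z$ and shows directly that the posterior CDF $\Pr_{\beta_j}(X_k\le z)$ is non-increasing in the observer's ability $x_j$: for $z\le x_j$ it notes that only the normalizer $J(F,n,m,x_j)$ depends on $x_j$ and that $J$ is increasing in $x_j$ (your constant first piece $J_j/J_i$ is exactly this observation), while for $z>x_j$ it differentiates $\int_0^z q_{x_j}(p)f(p)\,dp$ with respect to $x_j$ via the Leibniz rule, splits the integral at $x_j$, and closes with a chain of inequalities bounding the bracketed incomplete-beta terms --- by far the heaviest part of the paper's argument. You instead prove the stronger monotone likelihood ratio property: your unified representation $\beta_{x_1}(z)=f(z)\,H(\min(z,x_1))/J(F,n,m,x_1)$ is a faithful rewriting of Corollary~\ref{prop:marginalBelief}, the three-piece ratio matches continuously at the kinks $z=x_j$ and $z=x_i$ and is non-decreasing once $H'\ge 0$, and the cancellation yielding $H'(x)=\binom{n-1}{m-1}(n-m)F^{n-m-1}(x)(1-F(x))^{m-2}f(x)\ge 0$ is exactly the computation the paper already performs (as $q'_{x_i}(z)$, up to normalization) in the proof of Proposition~\ref{prop:stoDomPos}. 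What your route buys: it replaces the paper's Case-2 bookkeeping with a single clean derivative plus the standard MLR-implies-FOSD argument, and it delivers likelihood-ratio dominance, which is strictly stronger than the stated FOSD. Two small points to make explicit in a write-up: the crossing-point step of the MLR-implies-FOSD argument needs both $\beta_{x_i}$ and $\beta_{x_j}$ to integrate to one, which the paper verifies in the proof of Proposition~\ref{prop:stoDomPos}; and on the first piece the common factor $H(z)$ cancels before dividing, so no care is needed where $H$ vanishes (only as $z\to 0$, given $f>0$). The paper's direct approach does yield the explicit sign of the derivative of the posterior CDF in the observer's ability, but nothing downstream of Proposition~\ref{prop:StoDomAbi} uses more than the FOSD conclusion, so your argument would serve equally well.
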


A direct implication of stochastic dominance is higher expectation\footnote{Formally, if random variables $X$ an $Y$ satisfy $\Pr(X\leq z) \leq \Pr(Y\leq z)$ for all $z$, i.e., $Y$ is first-order stochastically dominated by $X$, then $\mathbb{E}[X] \geq \mathbb{E}[Y]$ holds.}. Therefore, the above two propositions imply that after the shortlist is applied, each admitted contestant perceives their opponents as stronger. Moreover, the stronger a contestant is, the more competitive they perceive their opponents to be.

% A direct implication of stochastic dominance is higher expectation\footnote{Formally, if random variables $X,Y$ satisfies $\Pr(X\leq z) \leq \Pr(Y\leq z)$ for all $z$, i.e., $Y$ is first-order stochastically dominated by $X$, then $\mathbb{E}[X] \geq \mathbb{E}[Y]$ holds.}. Therefore the above two propositions shows that, after shortlist, the observer perceives every admitted opponent as stronger, also, the stronger herself, the more competitive she thinks of her rivals.

\begin{example}
    Figure~\ref{fig:beliefs} shows the belief change of contestant $x_1$ resulting from shortlist. We rewrite the posterior belief as $\beta_1(z)=q_{x_1}(z)f(z)$, then the factor $q_{x_1}(z)$ contains the information brings by the admission signal. When a contestant gets admitted, her belief for low ability is discounted, and the factor is also smaller for lower ability, then the belief for ability higher than hers is adjusted by a constant factor accordingly, as show in Figure~\ref{fig:beliefs-a},~\ref{fig:beliefs-b}. The cumulative probability functions are plotted in Figure~\ref{fig:beliefs-c}. An opponent is viewed as stronger after shortlist (Proposition~\ref{prop:stoDomPos}), and stronger contestant also sees her opponent as stronger (Proposition~\ref{prop:StoDomAbi}).
        \begin{figure}[h]
        \centering
    \begin{subfigure}[ht]{0.30\textwidth}
        \centering
        \includegraphics[width=\textwidth]{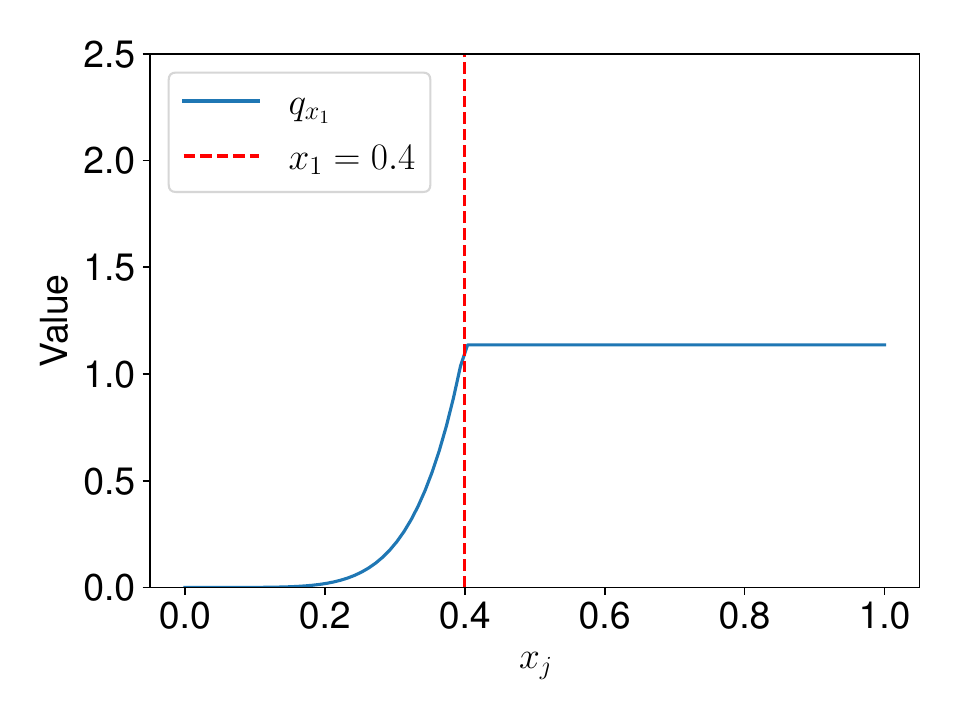}
        \subcaption{Factor $q_{x_i}$}
        \label{fig:beliefs-a}
        \end{subfigure}%\hfill
    \begin{subfigure}[ht]{0.30\textwidth}
        \centering
        \includegraphics[width=\textwidth]{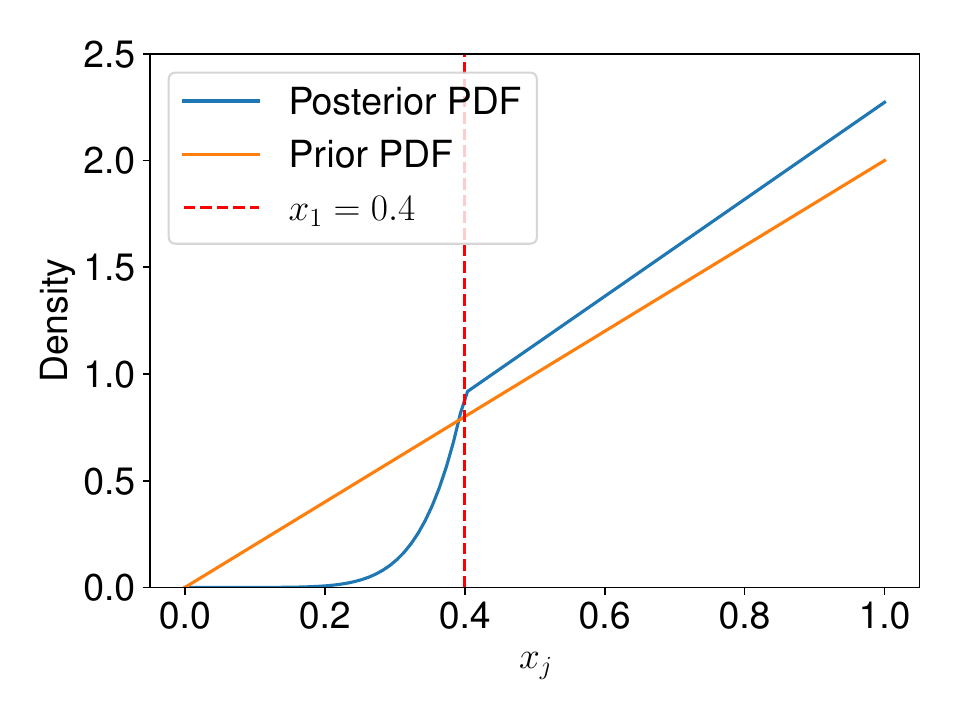}
        \subcaption{PDF}
        \label{fig:beliefs-b}
        \end{subfigure}%\hfill
    \begin{subfigure}[ht]{0.30\textwidth}
        \centering
        \includegraphics[width=\textwidth]{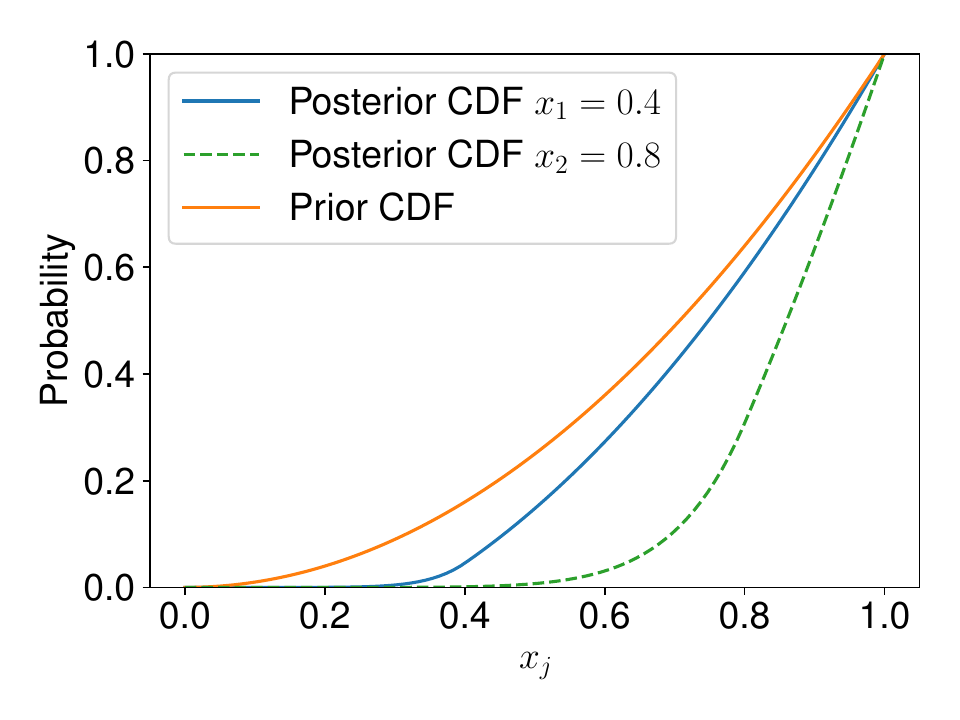}
        \subcaption{CDF}
        \label{fig:beliefs-c}
        \end{subfigure}
    \caption{Posterior beliefs of player $x_1$ ($n = 5,m=2,F(x)=x^2$).}
    \label{fig:beliefs}
    \end{figure}
\end{example}
Although each contestant believes that every opponent becomes stronger after the shortlist, a counterintuitive result emerges: if we consider all opponents together, the contest environment actually becomes less competitive. This is because the number of opponents is the primary factor influencing competitiveness, and its reduction makes the contestant believe that her probability of achieving a certain rank is higher. Formally, we present the following proposition:
% However, this is only part of the story. In fact, the opponents as a whole is perceived as less competitive after shortlist. Although every admitted player is assumed as stronger, the decline of rival numbers is more compelling, making the observer believe that the ability level of each ranking in her rivals has a lower chance to surpass hers. Formally, we have the following proposition:

\begin{proposition}[Threatens of Opponents Decrease after Shortlist]\label{prop:ThreatenDesc}
For any contestant (labeled as 1) and any $l < m$, she perceives the opponent with the $l^{\text{th}}$ highest ability as weaker after the shortlist, in comparison to herself:
% For any selected observer (her player index is re-indexed as $1$), she sees the $l^{\text{th}}$ strongest opponent weaker after shortlist (in comparison to herself), for any $l < m$, i.e.:
\[
\Pr_{\beta_1}(X_{(l)} > x_1) \leq \Pr_{f}(X_{(l)}>x_1),
\]
where $X_{(l)}$ is the $l^{\text{th}}$ highest ability among her opponents, $[m]\backslash\{1\}$. 

Moreover, the fewer opponents that are admitted, the weaker they are perceived by the this contestant compared to herself. In other words, $\Pr_{\beta_1}(X_{(l)} > x_1)$ increase with $m$. 
% Moreover, the less rivals get promoted, the weaker they are deemed by the observer compared to herself, i.e., $\Pr_{\beta_1}(X_{(l)} > x_1)$ increase with $m$. 
\end{proposition}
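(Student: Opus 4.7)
The plan is to reduce both sides of the inequality to probabilities of a single binomial random variable, after which the claim becomes elementary. Define $K$ as the number among contestant~1's $n-1$ originally registered opponents whose ability strictly exceeds $x_1$; conditional on $x_1$, one has $K \sim \mathrm{Binomial}(n-1, 1-F(x_1))$. The first step I would take is to observe that the event $\{X_{(l)}>x_1\}$ is controlled entirely by $K$ under both belief systems. Under the prior $\Pr_f$, $X_{(l)}$ is the $l^{\text{th}}$-highest ability among the $n-1$ original opponents, so $\{X_{(l)}>x_1\}=\{K\geq l\}$ and therefore $\Pr_f(X_{(l)}>x_1)=\Pr(K\geq l)$. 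Under the posterior $\Pr_{\beta_1}$, the admission signal is precisely the event $\{K\leq m-1\}$; moreover every one of the $K$ opponents above $x_1$ must be admitted together with contestant~1, so the $l^{\text{th}}$-highest ability among her $m-1$ admitted opponents exceeds $x_1$ iff $K\geq l$, giving $\Pr_{\beta_1}(X_{(l)}>x_1)=\Pr(K\geq l \mid K\leq m-1)$.

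With this reformulation, the first inequality becomes $\Pr(K\geq l \mid K\leq m-1)\leq \Pr(K\geq l)$, i.e. that the right-tail event $\{K\geq l\}$ and the left-tail event $\{K\leq m-1\}$ are non-positively correlated. Writing $q:=\Pr(K<l)$, $r:=\Pr(l\leq K\leq m-1)$, $p:=\Pr(K\geq m)$ with $p+q+r=1$, the gap $\Pr(K\geq l)\Pr(K\leq m-1)-\Pr(l\leq K\leq m-1)$ expands to $(r+p)(r+q)-r=pq\geq 0$, settling the inequality. For the monotonicity claim, I would write
\[
\Pr(K\geq l\mid K\leq m-1)=\frac{\Pr(l\leq K\leq m-1)}{\Pr(K\leq l-1)+\Pr(l\leq K\leq m-1)}=1-\frac{A}{A+B(m)},
\]
where $A:=\Pr(K\leq l-1)$ is independent of $m$ and $B(m):=\Pr(l\leq K\leq m-1)$ is weakly increasing in $m$; hence $A/(A+B(m))$ is weakly decreasing and the conditional probability is weakly increasing in $m$, as required.

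I do not anticipate a genuine obstacle here. The only conceptual point that merits care is recognizing that the proposition can be proved without invoking the full joint posterior density from Proposition~\ref{prop:posteriorBeliefs}: once one notices that every opponent with ability above $x_1$ is automatically admitted together with contestant~1, the informational content of the admission signal collapses to the one-sided constraint $\{K\leq m-1\}$ on the single binomial variable $K$, and both probabilities in the proposition become transparent functions of that one distribution.
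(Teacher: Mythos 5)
Your proof is correct, and it takes a genuinely different --- and more elementary --- route than the paper's. The paper proves this proposition by direct integration of the joint posterior density from Proposition~\ref{prop:posteriorBeliefs}: it re-indexes the opponents, evaluates the nested integrals with Lemma~\ref{lem:betaUpper}, and arrives at the closed form $\Pr_{\beta_1}(X_{(l)}\leq x_1) = (n-1)\binom{n-2}{l-1}\int_0^{F(x_1)}t^{n-l-1}(1-t)^{l-1}\,dt \,/\, J(F,n,m,x_1)$; both claims then follow from Lemma~\ref{lem:normal} ($J$ increases with $m$) together with the endpoint $m=n$, where $J=1$ and the expression reduces to the prior order-statistic, giving the chain $\Pr_{\beta_1(m)}(X_{(l)}>x_1)\leq\Pr_{\beta_1(m')}(X_{(l)}>x_1)\leq\Pr_f(X_{(l)}>x_1)$. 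You instead collapse everything onto the single binomial count $K$ of opponents above $x_1$: admission is exactly $\{K\leq m-1\}$, and since every opponent above $x_1$ is admitted whenever contestant $1$ is, $\Pr_{\beta_1}(X_{(l)}>x_1)=\Pr(K\geq l\mid K\leq m-1)$. Your formula agrees with the paper's closed form --- by Lemma~\ref{lem:betaRep} the paper's numerator equals $\Pr(K\leq l-1)$ and by Lemma~\ref{lem:normal} its denominator $J$ equals $\Pr(K\leq m-1)$ --- which also confirms your tacit identification of $\beta_1$ with Bayesian conditioning on the admission event (that is precisely how Proposition~\ref{prop:posteriorBeliefs} constructs it). Your $pq\geq 0$ identity then proves the first inequality directly, without passing through the $m=n$ endpoint, and in fact for an arbitrary random variable $K$, not just a binomial; your monotonicity step ($B(m)$ nondecreasing in $m$ with numerator $A$ fixed) is the probabilistic shadow of the paper's ``$J$ increases with $m$.'' What the paper's computation buys is the explicit closed-form expression and the machinery around $J$ that is reused elsewhere (e.g., Remark~\ref{rmk:subjectiveProb} and Corollary~\ref{coro:EmptyPrize}); what yours buys is brevity, transparency, and a prior-versus-posterior comparison that needs essentially no distributional structure.
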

% remarks and examples
% \begin{example}
%     In Figure~\ref{fig:threatens}, when more contestants are allowed into the contest, the contestant $x_1$ believes her opponents as a whole get stronger, thus her chances of getting high rank decreases dramatically.
%     \begin{figure}[h]
%     \begin{subfigure}[ht]{0.30\textwidth}
%         \centering
%         \includegraphics[width=\textwidth]{}
%         \subcaption{$F(x)=x$}
%         \label{fig:threatens-a}
%         \end{subfigure}
%     %\hfill
%               % 子图 (b)
%     \begin{subfigure}[ht]{0.30\textwidth}
%         \centering
%         \includegraphics[width=\textwidth]{}
%         \subcaption{$F(x)=x^2$}
%         \label{fig:threatens-b}
%         \end{subfigure}
%     %\hfill
%     \begin{subfigure}[ht]{0.30\textwidth}
%         \centering
%         \includegraphics[width=\textwidth]{}
%         \subcaption{$F(x)=1-e^{-x}$}
%         \label{fig:threatens-c}
%         \end{subfigure}
%     \caption{Player $x_1$'s beliefs on surpassing $l^{\text{th}}$ strongest opponent ($n = 8, x_1=0.4$).}
%     \label{fig:threatens}
%     \end{figure}
% \end{example}

It is important to note that, for a given contestant, although the marginal posterior belief is identical for all her opponents, they are not independent. The intuition is straightforward: knowing that a contestant weaker than herself has been admitted to the game will lower her expectations for another admitted contestant. Therefore, in general, $\beta_1(\mathbf{x}) \neq \prod_{i=2}^{m} \beta_1(x_i)$, and the joint probability distribution is required to fully characterize the posterior beliefs.

% It is noteworthy that, for a given observer, although this marginal posterior belief is identical for all of her rivals, they are actually dependent. The intuition is clear, since knowing a player weaker than herself also admitted into the game lowers her expectation for a third player. Thus, $\beta_1(\mathbf{x}) \neq \prod_{i=2}^{m} \beta_1(x_i)$ in general, the joint probability is needed to fully characterize posterior beliefs. 

\subsection{Equilibrium Efforts}
\label{subsec:Equilibrium Efforts}
%[Definition of the symmetric BNE]
With the detailed characterization of posterior beliefs in place, we can now calculate the symmetric Bayesian Nash equilibrium. To facilitate the analysis, we first introduce some symbols. Let \( b: x_i \mapsto e_i \) be a strictly increasing strategy function, and let its inverse function be \( \gamma(\cdot) \). For simplicity, we use \( \gamma_i \) For simplicity, we use \( \gamma(e_i) \), which represents the corresponding ability \( x_i \) for an effort  \( e_i \) under the function $b(\cdot)$. Thus, the decision problem for contestant \(i\) to exert effort  \(e_i\) is equivalent to reporting an ability \(\gamma_i\).

%Under a symmetric Bayesian Nash Equilibrium, player \(i\) assumes that all advancing opponents determine their effort levels based on a strictly increasing function \( b: x_j \mapsto e_j \). Let the inverse of this function be denoted as \( \gamma(\cdot) \). For simplicity, we use \( \gamma_i \) as shorthand for \( \gamma(e_i) \), which represents the corresponding ability level \( x_i \) under such function $b(\cdot)$ when the effort \( e_i \) is exerted. The decision problem of player \(i\) to exert an effort level \(e_i\) then is equivalent to report a ability level \(\gamma_i\).
Now, we can calculate the probabilities that the admitted contestant $i$ exerts $e_i$ (which may not necessarily be $b(x_i)$) and is ranked at position $l$, based on her posterior beliefs.

%The subjective probability of getting each rank depends on the posterior beliefs on opponents' ability level, then, we can derive the subjective probabilities from Proposition~\ref{prop:posteriorBeliefs}.

\begin{proposition}[Subjective Probability]\label{prop:winProb}
% When promoted player $i$ decides to exert effort $e_i$ given her ability $x_i$, her perceived probability of getting rank $l$, denoted as $P_{(i,l)}(\gamma_i \mid x_i)$, is:
When admitted contestant $i$ exerts effort $e_i$ given her ability $x_i$, her perceived probability of obtaining rank $l$, denoted as $P_{(i,l)}(\gamma_i \mid x_i)$, is:

\begin{enumerate}
    \item When $\gamma_i \leq x_i$,
\[
P_{(i,l)} =   
\begin{cases} 
\frac{\binom{n-1}{l-1}F(\gamma_i)^{n-l}(1-F(\gamma_i))^{l-1}}{J(F,n,m,x_i)} & \text{if } l < m, \\
\frac{\binom{n-1}{m-1}\left [ F^{n-m}(x_i)(1-F(x_i))^{m-1}+(m-1)\int_{F(\gamma_i)}^{F(x_i)}t^{n-m}(1-t)^{m-2} \, dt \right ]}{J(F,n,m,x_i)} & \text{if } l = m.
\end{cases}
\]
    \item When $\gamma_i > x_i$,
\[
P_{(i,l)} =   
\begin{cases} 
\frac{\binom{n-1}{m-1}\binom{m-1}{l-1}(1-F(\gamma_i))^{l-1}\left [F^{n-m}(x_i)(F(\gamma_i)-F(x_i))^{m-l}+(m-l)\int_{0}^{F(x_i)}t^{n-m}(F(\gamma_i)-t)^{m-l-1}\, dt\right]}{J(F,n,m,x_i)} & \text{if } l < m, \\
\frac{\binom{n-1}{m-1}F^{n-m}(x_i)(1-F(\gamma_i))^{m-1}}{J(F,n,m,x_i)} & \text{if } l = m,
\end{cases}
\]
\end{enumerate} 
where $P_{(i,l)}$ is $P_{(i,l)}(\gamma_i \mid x_i)$ in short and $\gamma_i = b^{-1}(e_i)$ and $P_{(i,l)}(\gamma_i \mid x_i)$ is a continuous function of $\gamma_i$. 
\end{proposition}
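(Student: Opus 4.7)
The plan is to translate the rank-$l$ event into an ability comparison using the strict monotonicity of $b$, then integrate the closed-form posterior $\beta_i$ from Proposition~\ref{prop:posteriorBeliefs} (with contestant $i$ playing the role of contestant~$1$) over the resulting region. Since each opponent $j$ exerts $b(x_j)$, contestant $i$ reporting $\gamma_i = b^{-1}(e_i)$ attains rank $l$ exactly when $l-1$ of the other $m-1$ admitted contestants have ability above $\gamma_i$. By symmetry across opponents,
$$P_{(i,l)}(\gamma_i \mid x_i) = \binom{m-1}{l-1} \int_{\mathcal A_l} \beta_i(\mathbf{x})\, d\mathbf{x},$$
where on $\mathcal A_l$ a designated block of $l-1$ opponents lies in $(\gamma_i,\infty)$ and the remaining $m-l$ in $(0,\gamma_i)$. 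Inserting the formula for $\beta_i$ reduces the task to integrating $\prod_{j\neq i} f(x_j)$ weighted by $F^{n-m}(\min(x^{(1)},x_i))$ and normalized by $J(F,n,m,x_i)$.

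The four sub-cases reflect where this weight function changes form, which is governed by (i) the sign of $\gamma_i-x_i$ and (ii) whether any opponent lies below $\gamma_i$ (i.e., whether $l<m$). In every case I would factor out the $l-1$ opponents in $(\gamma_i,\infty)$, contributing $(1-F(\gamma_i))^{l-1}$, and reduce the integral over the $m-l$ ``below'' opponents to a one-dimensional integral against the density of their minimum, then evaluate it using the Beta identity $\int_0^A u^a(A-u)^b\,du = \frac{a!\,b!}{(a+b+1)!}A^{a+b+1}$. For the base case $\gamma_i\leq x_i$, $l<m$, the minimum of the ``below'' group lies in $(0,\gamma_i]\subseteq (0,x_i]$, so $x^{(1)}\leq x_i$ throughout $\mathcal A_l$ and the Beta identity triggers the striking cancellation $\binom{n-1}{m-1}\binom{m-1}{l-1}\frac{(n-m)!(m-l)!}{(n-l)!} = \binom{n-1}{l-1}$, explaining the simplification in the statement. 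For $\gamma_i\leq x_i$, $l=m$ the ``below'' group is empty but the minimum of all $m-1$ opponents now ranges over $(\gamma_i,\infty)$, so I would split the one-dimensional integral at $x_i$ to separate the weight regimes $F^{n-m}(x^{(1)})$ and $F^{n-m}(x_i)$. For $\gamma_i>x_i$, $l<m$, the minimum of the $m-l$ opponents below $\gamma_i$ can either sit entirely above $x_i$ (all in $(x_i,\gamma_i)$, weight $F^{n-m}(x_i)$) or have at least one dropping into $(0,x_i]$ (weight $F^{n-m}(x^{(1)})$), producing the two bracketed terms. Finally, for $\gamma_i>x_i$, $l=m$, every opponent is above $\gamma_i>x_i$ so the weight is identically $F^{n-m}(x_i)$ and the integral is immediate.

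The main obstacle is organizing these case splits cleanly and, in each case, tracking when $\min(x^{(1)},x_i)$ is pinned down by $x_i$ versus by an opponent; once this is settled, the factorial bookkeeping that converts the Beta identity into the precise binomial coefficients is routine. Continuity of $P_{(i,l)}(\gamma_i\mid x_i)$ in $\gamma_i$ then follows by matching the Case~1 and Case~2 expressions at the boundary $\gamma_i=x_i$: for $l<m$, the first bracketed term of Case~2 vanishes and the Beta identity collapses the second into $\binom{n-1}{l-1}F^{n-l}(x_i)(1-F(x_i))^{l-1}$, matching Case~1; for $l=m$, the integral portion of Case~1 vanishes at $\gamma_i=x_i$ and only the $F^{n-m}(x_i)(1-F(x_i))^{m-1}$ summand remains, which coincides with Case~2 at the boundary.
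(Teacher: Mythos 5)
Your proposal is correct and takes essentially the same route as the paper's proof: you represent the rank-$l$ probability as $\binom{m-1}{l-1}$ times an integral of the joint posterior from Proposition~\ref{prop:posteriorBeliefs}, split on the sign of $\gamma_i - x_i$ and on whether the opponents' minimum falls below $x_i$, and reduce the below-$\gamma_i$ block to a one-dimensional integral evaluated by the Beta identity (the paper's Lemma~\ref{lem:betaUpper}), with the same factorial cancellation producing $\binom{n-1}{l-1}$. Your continuity argument, matching the Case~1 and Case~2 expressions at $\gamma_i = x_i$, likewise mirrors the paper's verification of the right-hand limit at that point.
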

\begin{remark}\label{rmk:subjectiveProb}
    Under the sBNE (i.e., when the observer truthfully reports her ability level, $\gamma_i=x_i$), her subjective probability of obtaining rank $l<m$ is given by $\frac{\binom{n-1}{l-1}F(x_i)^{n-l}(1-F(x_i))^{l-1}}{\sum_{j=1}^{m}\binom{n-1}{j-1}F^{n-j}(x_i)(1-F(x_i))^{j-1}}$ (since $J(F, n, m, x_i) = \sum_{j=1}^{m}\binom{n-1}{j-1}F^{n-j}(x_i)(1-F(x_i))^{j-1}$, as shown in Lemma~\ref{lem:normal}), it is exactly the prior probability of achieving rank $l$, given that the contestant is admitted (i.e., ranks among the top $m$). Notably, this subjective probability decreases as the shortlist size $m$ increases. 

    This further highlights the signaling effect of shortlisting. As the shortlist size decreases, admission becomes more difficult, leading admitted contestants to perceive themselves as relatively more competitive (as established in Proposition~\ref{prop:ThreatenDesc}). Consequently, they believe they have a higher probability of securing a better rank.
    
    %Under the sBNE, (i.e., the observer truthfully reports her ability level, $\gamma_i=x_i$), her subjective probability of getting each rank $l<m$ becomes $\frac{\binom{n-1}{l-1}F(x_i)^{n-l}(1-F(x_i))^{l-1}}{\sum_{j=1}^{m}\binom{n-1}{j-1}F^{n-j}(x_i)(1-F(x_i))^{j-1}}$ (Since $J(F, n, m, x_i) = \sum_{j=1}^{m}\binom{n-1}{j-1}F^{n-j}(x_i)(1-F(x_i))^{j-1}$, as shown in Lemma~\ref{lem:normal}), which is exactly the prior probability of getting rank $l$ conditioned on her getting admitted into the contest (i.e.,  to rank among top $m$ players). The subjective probability decreases as shortlist capacity $m$ grows.
    
    % This further shows the signaling effect of shortlisting. If the capacity becomes smaller, it is harder to qualify, then the advanced players are convinced that they are relatively more competitive (as seen from Proposition~\ref{prop:ThreatenDesc}), and therefore they have a better chance of obtaining a good rank. 
\end{remark}

Having detailed probabilities for each rank, we can now express the expected utility for each admitted contestant. Next, we move to derive the symmetric strategy function under equilibrium, thereby characterizing the sBNE of contestants. First, we integrate the first-order condition of the utility function to establish a necessary condition for the equilibrium strategy function $b(\cdot)$. Then, by verifying its non-negativity and monotonicity, we confirm the validity of the derived symmetric strategy. Consequently, we conclude that the sBNE exists and is unique, given as follows.

%Next, we move to derive the symmetric strategy under equilibrium, thereby revealing the sBNE of contestants. First, we integrate the first-order condition of equilibrium to recover $b(\cdot)$, therefore getting the expression that the equilibrium strategy must have. Then, by checking the non-negativity and monotonicity, we verify that the derived symmetric strategy is valid. Thus, we conclude that the sBNE exists and it is unique, the symmetric stategy of which states as follows:

\begin{theorem}[Unique sBNE of Contestants]\label{thm:contestantSBNE}
    For any ability distribution $F$, any size of shortlist $m$ and any prize structure $\vec{V}$, the unique symmetric Bayesian Nash equilibrium exists and can be expresses as:
    %Under symmetric Bayesian Nash Equilibrium, each promoted player $i \in [m]$ exerts effort according to a shared strictly increasing non-negative function $b(\cdot)$ of her ability level $x_i$, which uniquely expresses as:
    \[
        b^*(x) = g^{-1}\left(\int_{0}^{x}\frac{\sum_{l=1}^{m-1}\binom{n-1}{l-1}(n-l)(V_l-V_{l+1})F^{n-l-1}(t)(1-F(t))^{l-1}f(t)}{J(F,n,m,t)} t\, dt \right), 
    \]
    where $g(\cdot)$ is the cost function.
\end{theorem}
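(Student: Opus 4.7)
The plan is to follow the standard mechanism-design recipe: write the expected utility of a type-$x_i$ contestant who ``reports'' $\gamma_i$ by exerting $b(\gamma_i)$, take the first-order condition with respect to $\gamma_i$, impose symmetry $\gamma_i = x_i$, and integrate. Concretely, using Proposition~\ref{prop:winProb}, the expected utility in the region $\gamma_i \le x_i$ is
\[
U(\gamma_i \mid x_i) \;=\; \sum_{l=1}^{m} V_l\,P_{(i,l)}(\gamma_i \mid x_i) \;-\; \frac{g(b(\gamma_i))}{x_i}.
\]
Since $J(F,n,m,x_i)$ does not depend on $\gamma_i$, differentiating $J \cdot \sum_l V_l P_{(i,l)}$ with respect to $\gamma_i$ reduces, for $l<m$, to $V_l\binom{n-1}{l-1}\bigl[(n-l)F^{n-l-1}(\gamma_i)(1-F(\gamma_i))^{l-1} - (l-1)F^{n-l}(\gamma_i)(1-F(\gamma_i))^{l-2}\bigr]f(\gamma_i)$, while the $l=m$ term contributes $-V_m\binom{n-1}{m-1}(m-1)F^{n-m}(\gamma_i)(1-F(\gamma_i))^{m-2}f(\gamma_i)$. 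The key algebraic move is to reindex the ``$l-1$'' pieces using the identity $\binom{n-1}{l-1}(l-1) = \binom{n-1}{l-2}(n-l+1)$ and pair consecutive ranks; after this Abel-type telescoping the whole expression collapses to $\sum_{l=1}^{m-1}(V_l-V_{l+1})\binom{n-1}{l-1}(n-l)F^{n-l-1}(\gamma_i)(1-F(\gamma_i))^{l-1}f(\gamma_i)$.

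Imposing the first-order condition $\partial U/\partial \gamma_i = 0$ at $\gamma_i = x_i$ and multiplying by $x_i$ (to absorb the $1/x_i$ from the cost term) yields the identity
\[
\frac{d}{dx}g(b(x)) \;=\; \frac{\sum_{l=1}^{m-1}(V_l-V_{l+1})\binom{n-1}{l-1}(n-l)F^{n-l-1}(x)(1-F(x))^{l-1}f(x)}{J(F,n,m,x)}\,x.
\]
The natural boundary condition is $b(0)=0$: a contestant with ability arbitrarily close to $0$ cannot rationally exert positive effort, since her cost $g(e)/x$ explodes. Integrating from $0$ to $x$ and inverting $g$ produces exactly the closed form in the theorem.

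Next I would verify that the candidate $b^*$ is a well-defined, non-decreasing strategy: the integrand is pointwise non-negative because $V_l \ge V_{l+1}$ and all other factors are non-negative, so $g(b^*(x))$ is non-decreasing and $b^*(x)\ge 0$; since $g$ is strictly increasing this gives monotonicity of $b^*$. I then need to upgrade the first-order condition to a genuine best response, i.e., check that no deviation $\gamma_i \ne x_i$ strictly improves utility. For $\gamma_i \le x_i$ this follows from the single-crossing structure of the derivative computed above (its sign at $\gamma_i = x_i$ flips as we change $x_i$ monotonically). The delicate part is $\gamma_i > x_i$, where the form of $P_{(i,l)}$ from Proposition~\ref{prop:winProb} changes; here I would use the continuity of $P_{(i,l)}(\gamma_i\mid x_i)$ at $\gamma_i = x_i$ (noted after the proposition) to glue the two regions, and compute the right-derivative to confirm it is non-positive at $\gamma_i = x_i^+$, so overreporting is also dominated.

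Finally, uniqueness follows because any symmetric, strictly increasing equilibrium strategy $b$ must satisfy the same first-order condition pointwise on $(0,\infty)$, and with boundary $b(0)=0$ the resulting ODE has a unique solution, namely $b^*$. I expect the main obstacle to be the simplification step that turns the messy derivative into the clean $(V_l-V_{l+1})$-telescoped form; keeping track of the indices of $\binom{n-1}{l-1}$ and the boundary ($l=m$) term, and absorbing the $V_m$ contribution into the last summand via $\binom{n-1}{m-1}(m-1) = \binom{n-1}{m-2}(n-m+1)$, is where the calculation is easiest to get wrong. The second subtle point is handling $\gamma_i > x_i$ in the global-optimality check, because the posterior belief formula splits at $\gamma_i = x_i$, but the continuity supplied by Corollary~\ref{prop:marginalBelief} and Proposition~\ref{prop:winProb} should make this a boundary argument rather than a genuinely new computation.
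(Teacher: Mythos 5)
Your proposal follows essentially the same route as the paper's proof: interpret effort choice as reporting $\gamma_i$, differentiate the expected utility using the rank probabilities $P_{(i,l)}$ (whose derivatives at $\gamma_i=x_i$ the paper computes in Corollary~\ref{coro:DerivProb}), apply the same binomial reindexing $\binom{n-1}{l}l=\binom{n-1}{l-1}(n-l)$ to telescope into the $(V_l-V_{l+1})$ form, integrate the first-order condition from $0$ with $b(0)=0$, and confirm non-negativity and monotonicity from $V_l\geq V_{l+1}$. Your additional sketch of the global-deviation check (single crossing for $\gamma_i\leq x_i$ and the right-derivative gluing at $\gamma_i=x_i^{+}$) is in fact more explicit than the paper, which stops after verifying monotonicity of the candidate, but it does not constitute a different method.
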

\begin{remark}\label{rmk:PrizeGap}
    From the expression of equilibrium effort, it becomes clear that contestants are incentivized by the gap between consecutive prizes, $(V_{l}-V_{l+1})$,  rather than the absolute value of the prizes themselves. Furthermore, if the cost function $g(\cdot)$ is linear, the contributions of these gaps to effort exertion remain independent of one another.
    % From the expression of equilibrium effort, we can see that it is not the absolute value of prizes but the gap between consecutive prizes $(V_{l}-V_{l+1})$s that incentive contestants to exert more effort. Moreover, if the cost function $g(\cdot)$ is linear, contribution of the gaps are independent.  
\end{remark}

\section{Optimal Contest Design}
\label{sec:optimal design}

In this section, we discuss the optimal contest design under different objectives. In Subsection~\ref{subsec:GeneralGuideline}, we identify a guideline (i.e., two sufficient conditions) such that the optimal contest is a form of simple contest.  In Subsection~\ref{subsec:HighestEffort}, we focus on the maximum individual effort objective, showing that the two-contestant winner-take-all contest is optimal. In Subsection~\ref{subsec:TotalEffort}, we analyze the total effort objective in detail and find that: (a) the optimal contest is a complete simple contest, i.e, the number of prizes is exactly the shortlist size minus one; (b) the optimal shortlist size for a given distribution grows asymptotically linearly with $n$; (c) the optimal shortlist size is no more than $31.62\%\,n$ for any distribution asymptotically.

%In this section, we discuss the optimal contest design under different objectives. In Subsection~\ref{subsec:GeneralGuideline}, we identify that the optimal contest under linear objectives must be a simple contest. Then, we focus on the two most common linear objectives. Section~\ref{subsec:HighestEffort} discusses the maximum individual effort objective, where we find the 2-contestant winner-take-all contest is optimal. In Section~\ref{subsec:TotalEffort}, we analyze the total effort objective in depth, and find out that: (a) The optimal contest is a complete simple contest; (b) The optimal shortlist size of a given distribution grows asymptotically linear with $n$; (c) The optimal size is no more than $31.62\%\,n$ for any distribution asymptotically.

\subsection{General Guideline}\label{subsec:GeneralGuideline}

For a given set of contestants $[n]$ and an ability distribution $F$, the designer aims to determine the shortlist size $2\leq m \leq n$ and and a prize structure $\vec{V}=\{V_1,\ldots,V_m\}$ that satisfies the rank-order property $V_1\geq V_2 \geq\ldots\geq V_m \geq 0$ and the budget constraint $\sum_{l=1}^mV_l \leq B$, with the goal of maximizing the ex-ante effort-based objective under equilibrium. Formally, 

% For a given list of contestants $[n]$ and the common ability distribution $F(x)$, the designer's problem is to decide a shortlist size $2\leq m \leq n$ and a prize structure $\vec{V}=\{V_1,\ldots,V_m\}$ that satisfies rank-order property $V_1\geq V_2 \geq\ldots\geq V_m \geq 0$ and budget constraint $\sum_{l=1}^mV_l \leq B$, such that her ex-ante effort objective is maximized under the equilibrium, formally: 
\[
    \begin{aligned}
        \mathop{\arg \max}_{m,\vec{V}} \quad & \mathbb{E}_{\vec{x} \sim F^n} [ \text{OBJ}(b(x_{(1)}), \ldots,b(x_{(m)}))] \\
        s.t. \quad & \sum_{l=1}^{m} V_l \leq B \\
            & V_l  \geq V_{l+1} \geq 0,
    \end{aligned}
\]
where $x_{(i)}$ denotes the $i^\text{th}$ highest realized ability level, and $\text{OBJ}(\cdot)$ represents the component-wise non-decreasing objective function of the efforts, arranged in decreasing order. %from the $m$ admitted contestants.

To solve this problem, we first establish two key properties of the equilibrium effort from the designer's perspective, which will help reduce the space of decision variables.

% To solve this problem, we begin by showing two useful properties of the equilibrium effort from the perspective of designer, which will help us to reduce the design space.

\begin{corollary}[No Consolation Prize Should be Set]\label{coro:Consolation}
    For any contest that allocates a non-zero prize for the last place, i.e., \(V_m > 0\), setting \( V_m = 0\) increases the equilibrium effort of every contestant.
\end{corollary}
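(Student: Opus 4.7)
The approach is to read off the claim directly from the closed-form equilibrium in Theorem~\ref{thm:contestantSBNE} and isolate the one place where $V_m$ enters. Concretely, I would write
\[
    b^*(x) \;=\; g^{-1}\!\left(\int_{0}^{x}\frac{\sum_{l=1}^{m-1}\binom{n-1}{l-1}(n-l)(V_l-V_{l+1})F^{n-l-1}(t)(1-F(t))^{l-1}f(t)}{J(F,n,m,t)}\, t\, dt \right)
\]
and observe that the denominator $J(F,n,m,t)$ does not depend on $\vec V$ at all, while in the numerator the prize $V_m$ appears only through the single gap $(V_{m-1}-V_m)$ corresponding to $l=m-1$. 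Every other gap $(V_l-V_{l+1})$ with $l<m-1$ is untouched when we change only $V_m$.

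The key step is then the following monotone comparison. Fix a contest with $V_m>0$, and let $\tilde{\vec V}=(V_1,\dots,V_{m-1},0)$ denote the modified structure (the budget constraint is still satisfied since $\sum \tilde V_l \leq \sum V_l \leq B$). The rank-order constraint gives $V_{m-1}\geq V_m>0$, so replacing $V_m$ by $0$ strictly increases the coefficient $(V_{m-1}-V_m)$ by $V_m>0$, while all other coefficients $(V_l-V_{l+1})$ remain fixed. Because every factor $\binom{n-1}{l-1}(n-l)F^{n-l-1}(t)(1-F(t))^{l-1}f(t)/J(F,n,m,t)$ is non-negative, and the $l=m-1$ factor is strictly positive on the interior of the support of $F$ (where $f(t)>0$ and $F(t),1-F(t)\in(0,1)$), the integrand increases pointwise and strictly so on a set of positive Lebesgue measure inside $(0,x)$ for any $x$ in the support with $x>0$.

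Integrating preserves this strict inequality, and applying the strictly increasing map $g^{-1}$ preserves it again, yielding $b^*_{\tilde{\vec V}}(x)>b^*_{\vec V}(x)$ for every positive ability $x$; in particular the equilibrium effort of every admitted contestant strictly increases. There is really no hard step here: the main thing to verify carefully is that $J(F,n,m,t)$ is independent of the prize vector (immediate from its definition in Proposition~\ref{prop:posteriorBeliefs}) so that changing $V_m$ shifts only the numerator, and that the $l=m-1$ summand is strictly positive almost everywhere on the support so that the inequality is strict rather than weak. Once these two observations are made the corollary follows in one line from the equilibrium formula, which is why it is phrased as a corollary of Theorem~\ref{thm:contestantSBNE}.
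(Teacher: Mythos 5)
Your proposal is correct and is in substance the same argument as the paper's: the paper likewise reads the claim directly off the equilibrium characterization, writing effort as $g^{-1}\left(\sum_{l=1}^{m} V_l \int_{0}^{x_i} t\, P'_{(i,l)}(t \mid X_i = t)\, dt\right)$ and observing that $P'_{(i,m)}(t \mid X_i = t) = -\binom{n-1}{m-1}(m-1)F^{n-m}(t)(1-F(t))^{m-2}f(t)/J(F,n,m,t) \leq 0$ (strictly negative on a positive-measure set), so that deleting the $V_m$ term raises the argument of the strictly increasing $g^{-1}$. Your gap-based formulation, in which $V_m$ enters only through $(V_{m-1}-V_m)$, is just the summation-by-parts rewriting of that same integrand — an identity the paper itself establishes inside the proof of Theorem~\ref{thm:contestantSBNE} — so the two proofs are equivalent, and your verification that $J(F,n,m,t)$ is prize-independent and that the $l=m-1$ kernel is strictly positive on the interior of the support covers exactly the points the paper's proof relies on.
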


Recall from Remark~\ref{rmk:PrizeGap} that the gap between consecutive prizes motivates contestants. Since a consolation prize makes prizes for higher ranks less attractive, it negatively impacts the effort objectives.
% \begin{remark}
%     Recall from Remark~\ref{rmk:PrizeGap} that the gap between consecutive prizes motivate contestants, since consolation prize make prizes for higher ranks less attractive, it is bad for effort objectives.
% \end{remark}

\begin{corollary}[Empty Prizes Discourage contestants]\label{coro:EmptyPrize}
For a given list of contestant ability $\vec{x}$ nd a prize structure with $k$ non-zero prize, i.e., \( V_1 \geq V_2 \geq ... \geq V_k > V_{k+1} = 0 \), extending the shortlist from $k+1$ to $m$ contestants decreases the equilibrium efforts of all contestants in the former shortlist.
\end{corollary}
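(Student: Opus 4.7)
The plan is to apply the closed-form sBNE from Theorem~\ref{thm:contestantSBNE} and reduce the comparison to a pointwise inequality between two integrands. Write $b^*_{k+1}(\cdot)$ and $b^*_{m}(\cdot)$ for the equilibrium strategies in the original contest (shortlist size $k+1$, prizes $(V_1,\dots,V_k,0)$) and the extended contest (shortlist size $m>k+1$, prizes $(V_1,\dots,V_k,0,\dots,0)$), respectively. Since every contestant originally shortlisted is also in the extended shortlist---enlarging the shortlist only admits more contestants---it suffices to prove $b^*_m(x)<b^*_{k+1}(x)$ for every $x$ in the interior of the support of $F$.

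First, I would examine the numerator of the integrand in Theorem~\ref{thm:contestantSBNE}. In both prize vectors the only positive entries are $V_1,\dots,V_k$, so the consecutive prize gaps $V_l-V_{l+1}$ coincide for $l=1,\dots,k$ and vanish for $l\ge k+1$. Consequently, the weighted sum
\[
N(t)=\sum_{l=1}^{s-1}\binom{n-1}{l-1}(n-l)(V_l-V_{l+1})F^{n-l-1}(t)(1-F(t))^{l-1}f(t)
\]
reduces to the same function of $t$ regardless of whether $s=k+1$ or $s=m$, and it is strictly positive in the interior of the support of $F$ because the gap $V_k-V_{k+1}=V_k>0$ contributes a strictly positive summand there.

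Second, I would compare the denominators using the identity $J(F,n,s,t)=\sum_{j=1}^{s}\binom{n-1}{j-1}F^{n-j}(t)(1-F(t))^{j-1}$ from Lemma~\ref{lem:normal}. Every summand is strictly positive for $t$ in the interior of the support of $F$, so $J(F,n,m,t)>J(F,n,k+1,t)$ on that set. With identical numerators and a strictly larger denominator, the integrand $t\cdot N(t)/J(F,n,s,t)$ is strictly smaller for $s=m$ than for $s=k+1$ at every such $t$; integrating over $[0,x]$ preserves the strict inequality, and the monotonicity of $g^{-1}$ yields $b^*_m(x)<b^*_{k+1}(x)$, which is exactly the claim for the ability of any contestant in the original shortlist.

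There is essentially no technical obstacle here: once the closed-form equilibrium of Theorem~\ref{thm:contestantSBNE} is in hand, the whole argument is a transparent pointwise comparison. The only subtlety worth flagging is verifying that the additional terms appearing in $J(F,n,m,\cdot)$ beyond those in $J(F,n,k+1,\cdot)$ contribute a strictly positive quantity, which is immediate from $f>0$ and $0<F(t)<1$ in the interior of the support; if $V_k=0$ the statement becomes vacuous since both equilibria coincide.
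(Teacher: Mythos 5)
Your proof is correct and takes essentially the same route as the paper's: both reduce the claim to a pointwise comparison of the equilibrium integrand, using that the numerator (which depends only on the prize gaps $V_l-V_{l+1}$, all of which vanish for $l\geq k+1$) is identical across the two contests while the normalizer $J(F,n,\cdot,t)$ strictly increases with the shortlist size by Lemma~\ref{lem:normal}. The paper phrases this as rescaling $\sum_{l} V_l P'_{(i,l)}$ by the ratio $J(F,n,m,t)/J(F,n,k+1,t)>1$, which is the same computation in its pre-simplified form; your explicit verification of strict positivity via the gap $V_k-V_{k+1}=V_k>0$ is, if anything, slightly more careful than the paper's.
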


From an equilibrium perspective, a contestant perceives herself as less competitive as the contest becomes more crowded (Proposition~\ref{prop:ThreatenDesc}). Consequently, her subjective probability of winning any non-empty prizes decreases when more contestants are admitted (Remark~\ref{rmk:subjectiveProb}). This results in a lower expected payoff, which discourages contestants from exerting more effort.
% \begin{remark}
%     From the perspective of equilibrium, a contestant feels herself less competitive as the contest gets more crowded (Proposition~\ref{prop:ThreatenDesc}), therefore her subjective probability of getting every non-empty prizes decreases when allowing more contestants into the contest (Remark~\ref{rmk:subjectiveProb}). Thus expected payoff also decreases, which discourages contestants from exerting more effort. 
% \end{remark}

Before further discussion, we extend the definition of a Simple Contest \cite{EGG21} to the shortlist setting, a special type of contest that is crucial for effort-based objectives.

\begin{definition}[Simple Contest]
    A contest with $n$ participants and a shortlist size $m$ is called a simple contest if all of its non-zero prizes are equal. Furthermore, if all the prizes in the contest are equal, meaning either $V_i = 0$ for all $i$ or $V_1 = \ldots = V_m \neq 0$, then the contest is referred to as a trivial simple contest.
    %if a simple contest has at lease one non-zero prize and allocates zero prize for the last place, i.e., $\exists l < m, l \in \mathbb{N}$ such that $V_1 = \ldots = V_l > V_{l+1} = 0$, it is a non-trivial simple contest.  [trivial] 
\end{definition}

Note that, in a trivial simple contest, every contestant will exert zero effort in the contest. Additionally, winner-take-all contest, i.e., $V_1=B,$, is a special case of non-trivial simple contest.
% \begin{remark}\label{rmk:TrivialZero}
%     Every contestant exerts zero effort in a trivial simple contest. 
% \end{remark}
% \begin{example}
%     Winner-take-all contests, which is the optimal contest under [Paper] \cite{}, i.e., $V_1=B,2\leq m \leq n$, are a special type of non-trivial simple contest. \textcolor{red}{[Citation]}
% \end{example}

In fact, in contest settings where the designer has the option to create a shortlist, the optimal contest design remains a simple contest for the linear combination of efforts objective, as long as the cost function is linear or only the effort of one certain rank is concerned. %This is formally stated as follows:

% In fact, in the contest setting where the designer has a shortlist option, the optimal contest design is still a simple contest for linear effort objectives, as long as the cost function is linear, or the effort of exactly one position is concerned, which is formally stated as follows:

\begin{proposition}[General Design Guideline]\label{prop:DesignGuideline}
    If the designer's objective is a non-negative, non-zero linear combination of the contestants' effort under equilibrium, denoted by $\vec{e}=(b(x_{(1)}), \ldots,b(x_{(m)}))$ (with contestants re-indexed according to their rankings), either ex-ante (in expectation) or ex-post (for a specific ability level profile of registered contestants), i.e., $u_d=\mathbb{E}_{\vec{x}}[\vec{c}\cdot e(\vec{x})]$ or $u_d(\vec{x})=\vec{c}\cdot e(\vec{x})$, then the optimal contest that maximizes the designer's objective will be a non-trivial simple contest if any of the following conditions are satisfied:
    % If the designer's objective is a non-negative and non-zero linear combination of qualified contestants' effort under the equilibrium, $\vec{e}=(b(x_{(1)}, \ldots,b(x_{(m)}))$ (where contestants are re-indexed according to their rankings), either ex-ante (in expectation) or ex-post (for a concrete ability level profile of registered contestants), i.e., $u_d=\mathbb{E}_{\vec{x}}[\vec{c}\cdot e(\vec{x})]$ or $u_d(\vec{x})=\vec{c}\cdot e(\vec{x})$,  then if any one of the following conditions are satisfied: 
    \begin{enumerate}
        \item The designer only cares about the effort of exactly one ranking, i.e., $c_i>0$ and $\forall \,l\neq i$, $c_l=0$.
        \item The cost function is linear, i.e., $g(e)=ke$ for some $k>0$.
    \end{enumerate}
    % then the optimal contest that maximize the designer's utility must be a non-trivial simple contest.
\end{proposition}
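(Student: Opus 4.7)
My plan is to reparameterize the prize vector in terms of consecutive prize gaps, observe that the feasible region becomes a simplex whose vertices are exactly the non-trivial simple contests, and then exploit the closed-form of Theorem~\ref{thm:contestantSBNE} to reduce each sufficient condition to a (quasi-)linear optimization over this simplex.

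First, I would invoke Corollary~\ref{coro:Consolation} to set $V_m = 0$ without loss of optimality. Introducing the gap variables $\Delta_l := V_l - V_{l+1}$ for $l = 1, \ldots, m-1$, the rank-order constraint $V_l \ge V_{l+1}$ becomes $\Delta_l \ge 0$, and the budget $\sum_{l=1}^{m} V_l \le B$ becomes $\sum_{l=1}^{m-1} l\,\Delta_l \le B$. The resulting feasible set $\mathcal{F} \subset \mathbb{R}^{m-1}$ is a simplex whose extreme points are $\vec{0}$ (the trivial simple contest, inducing zero equilibrium effort) and $v_l := (B/l)\,e_l$ for $l = 1, \ldots, m-1$; the vertex $v_l$ encodes the non-trivial simple contest with $V_1 = \cdots = V_l = B/l$ and $V_{l+1} = \cdots = V_m = 0$. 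Plugging back into Theorem~\ref{thm:contestantSBNE} yields the representation $b^*(x) = g^{-1}(\phi(x;\Delta))$ with $\phi(x;\Delta) := \sum_{l=1}^{m-1} \Delta_l\, H_l(x)$ linear in $\Delta$ for every $x$, the coefficients $H_l(x) \ge 0$ depending only on $F$, $n$, and $m$. This pointwise linearity of $\phi$ in $\Delta$ is the crux of the argument.

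With this setup, condition~2 is immediate: when $g$ is linear, so is $g^{-1}$, hence $b^*(x;\Delta)$ is linear in $\Delta$, and so is the designer's objective $\vec{c}\cdot \vec{e}$ (ex-post) or its expectation (ex-ante). A linear function on the compact simplex $\mathcal{F}$ attains its maximum at a vertex, and since $\vec{c}$ is non-negative and non-zero, the optimum cannot be the trivial vertex $\vec{0}$, so it must be a non-trivial simple contest. For condition~1 in the ex-post case, the objective is $c_i\, g^{-1}(\phi(x_{(i)};\Delta))$; since $g^{-1}$ is strictly increasing and $c_i>0$, maximizing this is equivalent to maximizing the linear functional $\phi(x_{(i)};\Delta)$ over $\mathcal{F}$, whose maximum is again a vertex (whose identity may depend on $x_{(i)}$).

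The main obstacle is the ex-ante form of condition~1, where the objective $c_i\,\mathbb{E}[g^{-1}(\phi(x_{(i)};\Delta))]$ need not be linear in $\Delta$. My plan is still to leverage the pointwise linearity of $\phi$: for each $x$, the map $\Delta \mapsto g^{-1}(\phi(x;\Delta))$ is a monotone composition with a linear function, hence quasi-linear. When $g$ is linear or concave, $g^{-1}$ is convex, so the integrand is convex in $\Delta$ and expectation preserves convexity; the maximum of a convex function over a compact convex set is attained at an extreme point, giving a simple contest. The delicate case is strictly convex $g$, where $g^{-1}$ is concave; here I would argue vertex optimality by a coordinate-exchange step that, starting from any interior $\Delta$, routes the entire remaining budget into the single axis $l^\star$ maximizing the ex-ante directional gain $\mathbb{E}[(g^{-1})'(\phi(x_{(i)};\Delta))\, H_{l^\star}(x_{(i)})]/l^\star$, and then iteratively verify that each such swap weakly improves the expected objective. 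Showing that this greedy substitution is never expectation-decreasing---which is not automatic, because concavity of $g^{-1}$ means pointwise gains in some realizations may be offset by pointwise losses in others---is the technical core of this case, and I expect this is where the bulk of the work will lie.
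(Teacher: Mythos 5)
Your reduction coincides with the paper's own proof on everything the paper actually establishes: eliminate the consolation prize by Corollary~\ref{coro:Consolation}, pass to gap variables (the paper substitutes $Z_l = l(V_l - V_{l+1})$, which turns your weighted constraint $\sum_l l\Delta_l \le B$ into the standard simplex $\sum_l Z_l \le B$, whose nonzero vertices $Z_{l^*}=B$ are exactly your $v_{l^*}$), use Theorem~\ref{thm:contestantSBNE} to write the argument of $g^{-1}$ as a linear functional of the gaps with nonnegative coefficients, and conclude vertex optimality --- through monotonicity of $g^{-1}$ for the single-rank ex-post objective (with the maximizing vertex allowed to depend on the realization, exactly as in the paper), and through full linearity of the effort in the gaps when $g$ is linear, the ex-ante version of condition~2 then following by linearity of expectation. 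Your non-triviality argument (the zero vertex yields zero payoff while the coefficients are strictly positive) also matches the paper's.

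The genuine gap is the one you flag yourself: the ex-ante form of condition~1 with strictly convex cost. Your greedy budget-routing sketch is not a proof, and the obstruction is real: nonnegativity of the coefficient functions alone cannot yield vertex optimality, since for $g^{-1}(y)=\sqrt{y}$ and coefficients behaving like indicators of disjoint events with probabilities $p_1,p_2$ the objective reduces to $p_1\sqrt{Z_1}+p_2\sqrt{Z_2}$ on the simplex, which is maximized in the interior; any completion must exploit the specific shape of the $h_l$ (nondecreasing, overlapping integrals of the rank densities), not just their sign. You should know, however, that the paper does not close this sub-case by analyzing the concave composition either: its ex-ante step extracts the gaps using ``the linearity of expectation operator and cost function,'' which is precisely the condition~2 argument, and in the one place condition~1 is used ex-ante downstream (Corollary~\ref{coro:OptimalMaximumEffort}) optimality is instead deduced from the ex-post result holding pointwise with a realization-independent maximizer ($m=2$, $V_1=B$, via Theorem~\ref{thm:ExpostHighestEffort} and Lemma~\ref{lem:binomDesc}), so expectation-taking is trivial and no curvature analysis is needed. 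If you want to finish in the paper's spirit, the missing lemma is such pointwise dominance of a single vertex uniformly over realizations for the relevant rank, rather than an expectation-level exchange argument. Your additional observation for concave costs --- $g^{-1}$ convex, expectation preserves convexity, and a convex function on a compact convex set attains its maximum at an extreme point --- is correct and slightly stronger than what the paper writes down.
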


Recall that the difference between consecutive prizes, $(V_l-V_{l+1})$ incentivizes contestants' efforts (Remark~\ref{rmk:PrizeGap}). The above conditions ensure that the designer's objective is linear with respect to the gaps $(V_l-V_{l+1})$. Therefore, we can identify the most profitable gap (the one with the highest weight after adjustments) and allocate the entire budget to enlarging that gap. This results in an optimal simple contest. Additionally, this approach provides an $O(n^2)$ algorithm for finding the optimal contest.

We now focus on two specific types of linear objectives, which are commonly used in practice and have been extensively studied in the literature: the maximum individual effort and total effort.

\subsection{Maximum Individual Effort}\label{subsec:HighestEffort}

Under the maximum individual effort objective, the designer aims to maximize the expected effort of the strongest contestant in equilibrium, i.e., $\mathbb{E}_{x \sim X_{(1)}}[b(x)]$. As a result, the general design guideline (Proposition~\ref{prop:DesignGuideline}) indicates that the optimal contest is a simple contest when the goal is to maximize individual effort. This reduces the design problem to finding the best combination of shortlist size $m$ and the number of prizes $l$. 
On one hand, when the prize structure is fixed, there is no need to admit more than $l+1$ contestants, as additional competition diminishes the strongest contestant's enthusiasm (Corollary~\ref{coro:EmptyPrize}), which implies that the optimal contest satisfies $m=l+1$. On the other hand, intuitively, as more contestants are admitted, the expected prize awarded to the strongest contestant decreases (Proposition~\ref{prop:ThreatenDesc}), leading to a reduction in the effort exerted. 
In summary, a two-contestant winner-take-all contest (i.e., $m=2$ and $l=1$) is the optimal design for maximizing individual effort. Since this analysis holds for any realization of ability levels, we can conclude the following theorem:

% The general design guideline (Proposition~\ref{prop:DesignGuideline}) has shown that the optimal contest is a simple contest, then the decision problem becomes to find the best combination of size $m$ and prize number $l$. When prizes are fixed, there is no need to admit more than $l+1$ contestants, since this extra competition discourages the strongest contestant (Corollary~\ref{coro:EmptyPrize}). Then, remaining candidates are contests with $m$ position and $m-1$ equal prizes. Intuitively, the more contestants are admitted, the less probable that the strongest contestant end up with nothing (getting the last place), we can then further prove that her effort is decreasing with $m$, resulting in the optimal contest with $m=2,l=1$. Since this analysis holds for any realization of ability level, we arrive at the following theorem:

\begin{theorem}[Optimal Contest for the Ex-post Maximum Individual Effort]\label{thm:ExpostHighestEffort}
    The optimal contest that maximizes the ex-post maximum individual effort is a two-contestant winner-take-all contest, where $m=2$ and $V_1=B$.

    The resulting maximum ex-post individual effort is as follows:
    \[
    g^{-1}\left ( B \int_{0}^{x_{1}} \frac{(n-1)f(t)t}{F(t)+(n-1)(1-F(t))} \, dt \right ),
    \]
    where $x_1$ denotes the highest realized ability level among all contestants.
\end{theorem}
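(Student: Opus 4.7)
The plan is to apply the earlier results of the paper in sequence to successively narrow the search down to one contest. First, since the maximum individual effort objective $\mathbb{E}[b(x_{(1)})]$ puts positive weight only on the effort of the top rank, the first condition of Proposition~\ref{prop:DesignGuideline} applies and the optimum must be a non-trivial simple contest, i.e., $V_1=\cdots=V_l=B/l$ and $V_{l+1}=\cdots=V_m=0$ for some $1\le l<m$. Second, Corollary~\ref{coro:Consolation} rules out $V_m>0$, and Corollary~\ref{coro:EmptyPrize} shows that keeping more than one zero prize (i.e., taking $m>l+1$) strictly decreases the effort of every admitted contestant. So for any fixed number of non-zero prizes $l$, the optimal shortlist size is $m=l+1$.

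The remaining, and main, step is to optimize over $l\ge 1$. Substituting $V_1=\cdots=V_l=B/l$, $V_{l+1}=0$, and $m=l+1$ into the sBNE formula of Theorem~\ref{thm:contestantSBNE} and using the identity $\binom{n-1}{l-1}(n-l)/l=\binom{n-1}{l}$, the equilibrium effort at ability $x$ in the simple contest indexed by $l$ becomes
\[
b^{*}_l(x)=g^{-1}\left(\int_{0}^{x}\frac{B\binom{n-1}{l}F^{n-l-1}(t)(1-F(t))^{l-1}f(t)}{\sum_{j=0}^{l}\binom{n-1}{j}F^{n-1-j}(t)(1-F(t))^{j}}\,t\,dt\right).
\]
Since $g^{-1}$ is strictly increasing, establishing $l=1$ as the ex-post optimum for every realization of $x_{(1)}$ reduces to a pointwise comparison: for every $t\in(0,x_{(1)})$, the integrand above should be largest at $l=1$. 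Setting $p=1-F(t)$ and letting $Y\sim\text{Binomial}(n-1,p)$, the integrand factors neatly as $(Bf(t)/p)\cdot \Pr[Y=l]/\Pr[Y\le l]$, so the claim reduces to the monotonicity statement that the binomial reverse hazard rate $h_l(p):=\Pr[Y=l]/\Pr[Y\le l]$ is non-increasing in $l$.

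This monotonicity of $h_l$ is the main obstacle; everything else is algebraic bookkeeping. I would derive it directly from the log-concavity of the binomial PMF, $\Pr[Y=l]^{2}\ge\Pr[Y=l-1]\Pr[Y=l+1]$. Setting $\alpha=\Pr[Y=l]/\Pr[Y=l+1]$ (the only nontrivial case being $\alpha<1$), log-concavity geometrically bounds $\Pr[Y\le l]\le \Pr[Y=l]/(1-\alpha)$, which reduces $h_l\ge h_{l+1}$ to a one-line check. Iterating yields $h_1\ge h_l$ for all $l\ge 1$, so $l=1$, $m=2$, $V_1=B$ is optimal. Plugging these back into the sBNE formula and simplifying $J(F,n,2,t)=F^{n-2}(t)[F(t)+(n-1)(1-F(t))]$ recovers the stated closed-form effort expression, completing the proof.
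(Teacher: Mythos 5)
Your proposal is correct and follows essentially the same route as the paper's own proof: both use Proposition~\ref{prop:DesignGuideline} and Corollary~\ref{coro:EmptyPrize} to reduce the search to simple contests with $m=l+1$, then compare the equilibrium-effort integrands pointwise in $t$ by recognizing the factor $\Pr[Y=l]/\Pr[Y\le l]$ for $Y\sim\mathrm{Binomial}(n-1,1-F(t))$ and invoking its monotonicity in $l$ (exactly the paper's Lemma~\ref{lem:binomDesc}), before simplifying $J(F,n,2,t)$ to obtain the closed form. The only divergence is in proving that auxiliary monotonicity: you bound $\Pr[Y\le l]$ by a geometric series via log-concavity of the binomial PMF, whereas the paper sums the increasing consecutive-PMF ratios term by term --- two equivalent elementary arguments resting on the same log-concavity core.
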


Thus, even for the the ex-ante maximum individual effort, the optimality of the two-contestant winner-take-all contest still holds.

\begin{corollary}[Optimal Contest for the Ex-ante Maximum Individual Effort]\label{coro:OptimalMaximumEffort}
    The optimal contest that maximizes the ex-ante maximum individual effort is a two-contestant winner-take-all contest, where $m=2$ and $V_1=B$. $m=2$ and $V_1=B$.

    The resulting maximum ex-ante individual effort is as follows:
    \[
    \mathbb{E}_{x\sim X_{(1)}} \left [ g^{-1}\left ( B \int_{0}^{x} \frac{(n-1)f(t)t}{F(t)+(n-1)(1-F(t))} \, dt \right ) \right ],
    \]
    where random variable $X_{(1)}$ denotes the highest realized ability level among all contestants.
\end{corollary}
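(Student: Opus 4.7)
The plan is to obtain Corollary~\ref{coro:OptimalMaximumEffort} as an immediate consequence of Theorem~\ref{thm:ExpostHighestEffort}. The ex-post statement already establishes that for every realized ability profile $\vec{x}$, the configuration $(m,\vec{V})=(2,(B,0))$ maximizes the effort of the strongest contestant; crucially, the optimizer is the \emph{same} configuration uniformly in $\vec{x}$. Therefore, denoting by $e_{(1)}^{\mathcal{C}}(\vec{x})$ the equilibrium effort of the top-ranked contestant under configuration $\mathcal{C}$, one has $e_{(1)}^{\mathcal{C}}(\vec{x}) \leq e_{(1)}^{(2,(B,0))}(\vec{x})$ pointwise. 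Taking expectation over $\vec{x}\sim F^{n}$ on both sides and using monotonicity of expectation yields $\text{ME}(\mathcal{C},F)\leq \text{ME}((2,(B,0)),F)$ for every admissible $\mathcal{C}$, which is exactly the optimality claim.

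To extract the closed form, I would substitute $m=2$, $V_{1}=B$, $V_{2}=0$ into the equilibrium formula of Theorem~\ref{thm:contestantSBNE}. The sum over $l$ collapses to its $l=1$ term, so the numerator of the integrand becomes $(n-1)B\,F^{n-2}(t)f(t)$. For the denominator, evaluating the elementary integral $\int_{0}^{F(t)} s^{n-2}\,ds = F^{n-1}(t)/(n-1)$ gives
\[
J(F,n,2,t) = (n-1)F^{n-2}(t)(1-F(t)) + F^{n-1}(t) = F^{n-2}(t)\bigl[F(t)+(n-1)(1-F(t))\bigr].
\]
Cancelling the common $F^{n-2}(t)$ factor recovers the integrand $(n-1)f(t)t/[F(t)+(n-1)(1-F(t))]$ appearing in Theorem~\ref{thm:ExpostHighestEffort}. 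Since $b^{*}(\cdot)$ is monotone non-decreasing, the maximum individual effort is exerted by the admitted contestant with the highest ability, whose ability is the first order statistic $X_{(1)}$; taking expectation of the ex-post formula against the law of $X_{(1)}$ produces the stated closed form.

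I do not expect any substantive obstacle; the corollary is essentially Theorem~\ref{thm:ExpostHighestEffort} rewritten in expectation. The only point requiring a brief justification is that the two shortlisted contestants in the optimal configuration are precisely the two highest-ability registered players, so that $X_{(1)}$ indeed represents the top effort across all $n$ contestants (eliminated contestants contributing zero effort). This follows immediately from the shortlisting rule described in Section~\ref{sec:pre} together with the monotonicity of the equilibrium strategy $b^{*}$.
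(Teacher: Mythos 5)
Your proposal is correct and takes essentially the same route as the paper: the paper likewise deduces ex-ante optimality by observing that Theorem~\ref{thm:ExpostHighestEffort} gives pointwise dominance of the two-contestant winner-take-all contest, $b(x;2,Be_1)\geq b(x;m',\vec{V}')$ for all $x\geq 0$, and then applies monotonicity of expectation over $X_{(1)}$. Your explicit substitution of $m=2$, $V_1=B$ into Theorem~\ref{thm:contestantSBNE} to recover the closed form simply reproduces the computation of $J(F,n,2,t)$ already carried out at the end of the paper's proof of Theorem~\ref{thm:ExpostHighestEffort}, so no new argument is needed there.
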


\subsection{Total Effort}\label{subsec:TotalEffort}

Under the total effort objective, the designer aims to maximize the expected total effort of all admitted contestants in equilibrium, i.e., $\mathbb{E}_{X_{(1)}, \ldots, X_{(m)}}[\sum_{j=1}^{m}b(x_{(j)})]$.
In this subsection, we assume a linear cost function, $g(e)=ke$ with $k>0$ and show that the optimal contest as follows.
\begin{theorem}[Optimal Contest for the Total Effort Objective]\label{thm: opt for total}
    For any ability distribution $F$, the optimal contest for maximizing the total effort can be described as:
    \begin{enumerate}
        \item The number of prize $l^*$ is equal to the shortlist size $m^*$ minus one, i.e., $l^*=m^*(n)-1$.
        \item The budget is equally divided into these prizes, i.e, $V_1=V_2=\cdots= V_{l^*}=B/l^*$.
        \item The optimal shortlist is proportional to $n$, i.e., $m^*(n)=kn$, where $k$ is the solution to 
        \[
            \int_k^1 F^{-1}(1-q)(\frac{1}{q}-(2k-k^2)\frac{1}{q^2})dq=0.
        \]
    \end{enumerate}
\end{theorem}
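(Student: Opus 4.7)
The strategy is to peel off the three claims in order, using the structural reductions established earlier to shrink the designer's decision space to a single variable $k=m/n$, and then perform asymptotic analysis as $n\to\infty$.

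\textbf{Steps 1--2: reducing the prize structure.} Since the cost is linear, Proposition~\ref{prop:DesignGuideline} (case 2) applied to the total-effort objective (which is a positive linear combination of the $b(x_{(j)})$) collapses the optimal prize vector to a non-trivial simple contest: there exists some $l\in\{1,\dots,m\}$ with $V_1=\cdots=V_l=B/l$ and $V_{l+1}=\cdots=V_m=0$. This proves part~(2) once $l$ is determined. Next I would fix an arbitrary $l$ and argue $m^{*}=l+1$. Corollary~\ref{coro:Consolation} rules out $V_m>0$, so the optimum has at least one zero prize; Corollary~\ref{coro:EmptyPrize} says that, given the non-zero part of the prize structure is fixed, adding any further zero-prize admitted contestant strictly lowers every incumbent's equilibrium effort. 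The newly added contestant contributes $b(x_{(m)})$, but by the same corollary this last-place effort is dominated by what it would have been in the smaller contest; combined with the Corollary~\ref{coro:Consolation} style monotonicity in the gap $(V_l-V_{l+1})$ noted in Remark~\ref{rmk:PrizeGap}, extending $m$ beyond $l+1$ strictly decreases total effort. Hence $m^{*}=l^{*}+1$, which is part~(1).

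\textbf{Step 3: closed-form total effort under the reduced family.} With $m=l+1$ and all non-zero prizes equal to $B/l$, Theorem~\ref{thm:contestantSBNE} gives a closed-form equilibrium; substituting into $\mathrm{TE}(\mathcal{C},F)=\mathbb{E}[\sum_j b^{*}(x_{(j)})]$ and swapping the order of integration between the outer expectation over $(x_{(1)},\dots,x_{(m)})$ and the inner integral defining $b^{*}$ should yield a single integral against the density of a Beta-distributed quantile---this is precisely where the promised Lemma~\ref{lem:betaRepTotalEffort} enters, expressing $\mathrm{TE}$ as an integral over $q\in[0,1]$ of $F^{-1}(1-q)$ weighted by a Beta$(m,n-m+1)$-type kernel. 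The algebra here is routine: the $(V_l-V_{l+1})$ gap simplifies since only one gap is non-zero, and $J(F,n,m,t)$ telescopes nicely.

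\textbf{Step 4: asymptotic optimization in $k=m/n$.} To identify the optimal shortlist, I substitute $m=kn$ and use standard asymptotics for the incomplete Beta/normalized order statistics to replace the discrete $m$ by a continuous variable $k\in(0,1]$. The Beta density concentrates around its mean $k$, and after rescaling the resulting limiting expression for $\mathrm{TE}/\text{(constant)}$ should take the form
\begin{equation*}
H(k)\;=\;\int_{k}^{1} F^{-1}(1-q)\Bigl(\tfrac{1}{q}-(2k-k^{2})\tfrac{1}{q^{2}}\Bigr)\,dq \cdot (\text{factor}),
\end{equation*}
up to a positive $k$-independent factor that can be dropped for optimization. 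Differentiating $H$ with respect to $k$ using Leibniz's rule (the boundary term at $q=k$ vanishes against the $1/q-(2k-k^{2})/q^{2}$ factor in a manner that needs to be checked) and setting the derivative to zero will produce the first-order condition stated in part~(3). Verifying that this critical point is indeed a maximum requires a second-order or sign-change argument on the derivative; I expect $dH/dk$ to change sign exactly once from positive to negative, which establishes uniqueness of $k$.

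\textbf{Main obstacle.} The routine parts are the reductions in Steps~1--2 (they follow directly from the cited corollaries) and the substitution in Step~3. The real work lies in Step~4: rigorously passing from the finite-$n$ Beta representation to the clean integral equation, controlling the dependence on $F^{-1}$ near the endpoints $q=0$ and $q=1$ (where $F^{-1}$ may blow up for unbounded distributions), and proving that the integral equation has a unique solution $k\in(0,1)$ independent of fine properties of $F$. I would handle this by splitting the integral into a bulk region where Laplace-type asymptotics give uniform control, and tail regions controlled by the boundedness assumption on abilities stated in Section~\ref{sec:pre}.
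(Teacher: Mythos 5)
Your Steps 3--4 follow the paper's route (Lemma~\ref{lem:betaRepTotalEffort}, the concentration asymptotics behind Lemma~\ref{lem:AsyRep}, then the first-order condition of Proposition~\ref{thm:OptAsmLinear}), but there is a genuine gap in your Steps 1--2, at exactly the point where you claim that ``extending $m$ beyond $l+1$ strictly decreases total effort'' follows from Corollary~\ref{coro:EmptyPrize} together with Remark~\ref{rmk:PrizeGap}. Corollary~\ref{coro:EmptyPrize} only says that the \emph{incumbents'} equilibrium efforts fall when the shortlist is enlarged; it says nothing about the aggregate. The newly admitted contestants exerted \emph{zero} effort in the smaller contest (they were eliminated), so their effort in the larger contest is a strictly positive \emph{addition} to the total --- your sentence that ``this last-place effort is dominated by what it would have been in the smaller contest'' has it backwards and cannot close the argument. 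Establishing $m^{*}=l^{*}+1$ for the total-effort objective requires showing that the incumbents' aggregate loss outweighs the newcomers' aggregate gain, and this is precisely where the paper does real work: the proof of Proposition~\ref{thm:ConpleteSimpleContest} rewrites $S(m,n,l)$ as $\frac{n}{l}\int_0^\infty f_{(l,n-1)}(t)\,t\,h(m,t)\,dt$ with $h(m,t)=\int_t^\infty A(m,n,x)f(x)\,dx / A(m,n,t)$, proves that the component ratios $g(j,t)$ decrease in $j$, and then shows by induction that $h(m,t)<h(m-1,t)$ pointwise, so that $S(m,n,l)<S(m-1,n,l)$ for every $m>l+1$. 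Without some quantitative argument of this kind (or an equivalent one), your part~(1) is unproven.

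A secondary, more minor confusion sits in your Step 4: you present $H(k)=\int_k^1 F^{-1}(1-q)\bigl(\tfrac{1}{q}-(2k-k^2)\tfrac{1}{q^2}\bigr)dq$ as the limiting objective (up to a factor) and then propose to differentiate it to obtain the first-order condition of part~(3). In the paper the limiting objective is $\phi(k)=\int_0^{1-k}F^{-1}(q)\tfrac{q}{1-q}\tfrac{k}{1-k}\bigl(\tfrac{1}{q}-\tfrac{k}{q(1-q)}\bigr)dq$ (Lemma~\ref{lem:AsyRep}), and the quantity you wrote is already its derivative: $\phi'(k)=\tfrac{1}{(1-k)^2}\int_k^1 F^{-1}(1-q)\bigl(\tfrac{1}{q}-(2k-k^2)\tfrac{1}{q^2}\bigr)dq$, so setting it to zero \emph{is} the stated equation. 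Differentiating your $H$ would produce the wrong condition. Also note that the interchange of limit and maximization implicitly needed here is justified in the paper by the fact that the convergence in Lemma~\ref{lem:AsyRep} is uniform in $k$ (the convergence rate is independent of $k$); your plan should make that point explicit rather than rely on concentration alone.
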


To show the optimality of the above characterization, we first define the contests satisfy condition (1) and (2) as a complete simple contest. 
\begin{definition}[Complete Simple Contest]
    A simple contest is complete, if it is a non-trivial simple contest with shortlist size $m$ that has exactly $m-1$ prizes, i.e., $V_1 = \ldots =V_{m-1}>V_m=0$.
\end{definition}

Then, we show that the optimal contest maximizing the total effort is exactly a complete simple contest. By general design guideline, the optimal contest in this case is a simple contest. Moreover, due to linearity, the budget appears as a constant factor in the objective under simple contests and therefore does not affect optimality. To facilitate discussion, we set $B=1$ as a standing assumption.

% Under the total effort objective, the designer seeks to maximize the expected total effort of all admitted contestants in equilibrium, i.e., $\mathbb{E}_{X_{(1)}, \ldots, X_{(m)}}[\sum_{j=1}^{m}b(x_{(j)})]$.  In this subsection, we assume that the cost function is linear, i.e., $g(e)=ke, k>0$. Then, by general design guideline, the optimal contest is a simple contest. Also, by linearity, the budget will appear in the objective as a constant factor under simple contests, therefore will not affect optimality, we then let $B=1$ as a standing assumption to facilitate discussion.  

Using the equilibrium effort expression from Theorem~\ref{thm:contestantSBNE}, the ex-ante total effort of a simple contest with shortlist size $m$ and $l$ equal prizes, denoted by $S(m,n,l)$, is given by:
\[
\mathbb{E}_{X_{(1)}, \ldots, X_{(m)}} \left [\sum_{i=1}^{m}\int_{0}^{x_{(i)}}\frac{\binom{n-1}{l}F^{n-l-1}(t)(1-F(t))^{l}}{\sum_{j=1}^{m}\binom{n-1}{j-1}F^{n-j}(t)(1-F(t))^{j-1}}\frac{f(t)}{1-F(t)} t\, dt \right ],
\]
where $X_{(i)}$ is the $i^{\text{th}}-$highest ability level among all contestants, and $x_{(i)}$ is its realization.

% Expanding from the equilibrium effort expression (Theorem~\ref{thm:contestantSBNE}), the ex-ante total effort of a simple contest with shortlist size $m$ and $l$ equal prizes, denote by $S(m,n,l)$, is given as:
% \[
% \mathbb{E}_{X_{(1)}, \ldots, X_{(m)}} \left [\sum_{i=1}^{m}\int_{0}^{x_{(i)}}\frac{\binom{n-1}{l}F^{n-l-1}(t)(1-F(t))^{l}}{\sum_{j=1}^{m}\binom{n-1}{j-1}F^{n-j}(t)(1-F(t))^{j-1}}\frac{f(t)}{1-F(t)} t\, dt \right ],
% \]where $X_{(i)}$ is the $i^{\text{th}}$ highest ability level among all contestants and $x_{(i)}$ is its realization. 

% Before further characterizing the optimal contest, we introduce the following definition:

% \begin{example}
%     The optimal contest design under maximum individual effort target, i.e., a 2-contestant winner-take-all contest (Corollary~\ref{coro:OptimalMaximumEffort}), is a complete simple contest.
% \end{example}

In Subsection~\ref{subsec:HighestEffort}, we used Corollary~\ref{coro:EmptyPrize} to show that, when prizes are fixed, admitting more contestants decreases the maximum individual effort. This monotonicity can extend to the total effort. When more contestants are allowed into the contest, the effort of the previously admitted contestants decreases, as their subjective probability of winning declines (Remark~\ref{rmk:subjectiveProb}). On the other hand, the effort contributed by the newcomers cannot compensate for this, as their ability levels are lower. Thus, we have:

\begin{proposition}\label{thm:ConpleteSimpleContest}
    The optimal contest with a shortlist that maximizes total effort is a complete simple contest.
\end{proposition}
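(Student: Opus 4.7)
The plan is to combine Proposition~\ref{prop:DesignGuideline} with a monotonicity property of the ex-ante total effort in the shortlist size. By that proposition, linearity of $g$ already forces the optimum to be a non-trivial simple contest, say with $l\in\{1,\dots,m-1\}$ equal non-zero prizes of size $1/l$; Corollary~\ref{coro:Consolation} further rules out the consolation case $l=m$. To conclude the contest is \emph{complete}, it then suffices to show that, for each fixed $l$, the ex-ante total effort $S(m,n,l)$ of such a simple contest is non-increasing in $m$ on $\{l+1,\dots,n\}$, so that $(l+1,l)$ weakly dominates every non-complete $(m,l)$.

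To implement this, I would substitute the simple-contest prize vector into the equilibrium formula of Theorem~\ref{thm:contestantSBNE}, where all prize gaps vanish except $V_l-V_{l+1}=1/l$, so the inner sum collapses to a single term. Using the identity $\Pr[x\text{ is in the top }m\mid x]=J(F,n,m,x)$ (from Remark~\ref{rmk:subjectiveProb}) together with symmetry, the ex-ante total effort becomes $S(m,n,l)=n\int_0^\infty b(x)\,J(F,n,m,x)\,f(x)\,dx$. Swapping the order of integration in the resulting double integral yields
\[
  S(m,n,l)\;=\;\frac{n\binom{n-1}{l}}{k}\int_0^\infty F^{n-l-1}(t)(1-F(t))^{l-1}f(t)\,t\;\cdot\;R_m(t)\,dt,\qquad R_m(t):=\frac{\int_t^\infty J(F,n,m,x)f(x)\,dx}{J(F,n,m,t)}.
\]
The pre-weight is non-negative and independent of $m$, so the claim reduces to showing $R_m(t)$ is non-increasing in $m$ pointwise in $t$.

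For $R_m(t)$, setting $q=F(t)$ and substituting $p=F(x)$ writes numerator and denominator as $\sum_{j=1}^m\binom{n-1}{j-1}I_j(q)$ and $\sum_{j=1}^m\binom{n-1}{j-1}J_j(q)$ respectively, where $I_j(q)=\int_q^1 p^{n-j}(1-p)^{j-1}\,dp$ and $J_j(q)=q^{n-j}(1-q)^{j-1}$. I then invoke the elementary ratio-of-partial-sums lemma: if $\alpha_j/\beta_j$ is decreasing in $j$ with $\alpha_j,\beta_j>0$, then so is $\sum_{j=1}^m\alpha_j/\sum_{j=1}^m\beta_j$ in $m$. The hypothesis holds because
\[
  \frac{I_j(q)}{J_j(q)}\;=\;\int_q^1 \Bigl(\frac{p}{q}\Bigr)^{n-j}\Bigl(\frac{1-p}{1-q}\Bigr)^{j-1}dp
\]
is decreasing in $j$: for $p>q$ one has $p/q>1$ and $(1-p)/(1-q)<1$, so both factors in the integrand shrink as $j$ grows. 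Hence $R_m(t)$ is non-increasing in $m$, $S(m,n,l)$ is non-increasing in $m$, and $S(l+1,n,l)\geq S(m,n,l)$, proving the proposition.

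The main obstacle is that the statement cannot be obtained by a pointwise (ex-post) comparison: while Corollary~\ref{coro:EmptyPrize} tells us that each incumbent's effort strictly drops when the shortlist grows, the newcomer contributes an extra positive effort, and the sign of the ex-post change in total effort depends on the ability realization. Averaging against $f$ is essential, and the decisive quantitative ingredient is the monotonicity of $I_j(q)/J_j(q)$ in $j$, which is precisely what makes the ratio-of-partial-sums lemma applicable to the integrand defining $R_m(t)$.
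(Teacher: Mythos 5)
Your proposal is correct and follows essentially the same route as the paper's proof: after invoking Proposition~\ref{prop:DesignGuideline}, both reduce the claim to showing that $S(m,n,l)$ is decreasing in $m$ for fixed $l$, rewrite the ex-ante total effort via Fubini as an $m$-independent non-negative weight integrated against the ratio $R_m(t)$ (the paper's $h(m,t)$), and deduce its monotonicity in $m$ from the decreasingness in $j$ of the term-ratio $I_j(q)/J_j(q)$ (the paper's $g(j,t)$) combined with a weighted-average/partial-sums argument, which the paper proves by induction and you invoke as a standard lemma. Your pointwise-in-$p$ justification that $I_j/J_j$ decreases in $j$ is a slightly more direct rendering of the paper's quotient computation, but the two proofs are mathematically the same.
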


% In previous Section~\ref{subsec:HighestEffort}, we use Corollary~\ref{coro:EmptyPrize} to show that when prizes are fixed, admitting more contestants will lower the maximum individual effort. This monotonicity generalizes to the total effort. When more contestants are allowed into the contest, the effort from former qualified contestants decrease since their subjective winning probability declines (Remark~\ref{rmk:subjectiveProb}). On the other hand, effort contributed from the new comers can not compensate for that, since their ability level is lower. Then we have:

The decision problem now reduces to determining the optimal shortlist size $m$ (with the number of prizes being $m-1$ accordingly). We explore additional properties of the optimal contest through asymptotic analysis. First, we rewrite the total effort objective using the beta distribution:
\begin{lemma}[Beta Representation for Total Effort]\label{lem:betaRepTotalEffort}
    The ex-ante total effort in a complete simple contest, denoted by $S(m,n)$ (abbreviated from $S(m,n,m-1)$ when no confusion arises), can be expressed using the beta distribution $\beta(x,a,b)$, as follows:
    \[
    \begin{aligned}
        S(m,n) = & 
        \int_0^1F^{-1}(q)\beta(q,n-m+1,m)\,dq \\
        & +\int_0^1F^{-1}(q)\frac{q}{1-q} \frac{m}{n-m}\frac{\beta(q,n-m,m)}{\int_0^q\beta(x,n-m,m)\,dx}\int_q^1\beta(x,n-m,m+1)\,dx\, dq,
    \end{aligned}
    \] where $\beta(x,a,b) =x^{a-1}(1-x)^{b-1}B(a,b)^{-1}$ is the probability density function of the beta distribution, parametrized by positive integers $a$ and $b$, and $B(a,b)=\frac{\Gamma(a)\Gamma(b)}{\Gamma(a+b)}$ is the beta function.
\end{lemma}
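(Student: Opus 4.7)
The plan is to substitute $q=F(t)$ in the closed form of $b^\ast$, apply Fubini to swap the expected sum over order statistics with the quantile integral, and decompose the resulting kernel into two pieces whose beta-density interpretations line up with the two summands of the claim. For a complete simple contest with $l=m-1$ equal prizes $V_1=\cdots=V_{m-1}=B/(m-1)$ and $V_m=0$, Theorem~\ref{thm:contestantSBNE} (with $B=1$) reduces to $b^\ast(x)=\int_0^x h(t)\,dt$ with the integrand shown in the expression preceding the lemma. Writing $B_\alpha(q)=\binom{n-1}{m-1}q^{n-m}(1-q)^{m-1}$, $D(q)=\sum_{j=1}^m \binom{n-1}{j-1}q^{n-j}(1-q)^{j-1}$, and $Q_{(i)}=F(X_{(i)})$, the substitution $q=F(t)$ rewrites
\[ \sum_{i=1}^m b^\ast(X_{(i)}) = \int_0^1 \frac{B_\alpha(q)F^{-1}(q)}{D(q)(1-q)}\cdot \#\{i\le m : Q_{(i)} > q\}\, dq. \]
The counting factor equals $\min(N(q),m)$ with $N(q):=\#\{j\in[n] : F(X_j)>q\}\sim \mathrm{Binomial}(n,1-q)$, so Fubini yields
\[ S(m,n) = \int_0^1 \frac{B_\alpha(q)F^{-1}(q)}{D(q)(1-q)}\,\mathbb{E}[\min(N(q),m)]\, dq. \]

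Next I decompose the expectation. Writing $\min(N,m)=N\,\mathbf{1}\{N\le m\} + m\,\mathbf{1}\{N\ge m+1\}$ and using $k\binom{n}{k}=n\binom{n-1}{k-1}$,
\[ \mathbb{E}[N(q)\,\mathbf{1}\{N(q)\le m\}] = \sum_{j=1}^m j\binom{n}{j}(1-q)^j q^{n-j} = n(1-q)\sum_{k=0}^{m-1}\binom{n-1}{k}(1-q)^k q^{n-1-k} = n(1-q)D(q). \]
Hence $\mathbb{E}[\min(N(q),m)]=n(1-q)D(q) + m\,\Pr(N(q)\ge m+1)$. Substituting this split into the previous display produces two summands which I match term by term with the claim.

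In the first summand the factor $(1-q)D(q)$ cancels the denominator, leaving $n\,B_\alpha(q)F^{-1}(q)$; since $1/B(n-m+1,m)=n\binom{n-1}{m-1}$, this equals $F^{-1}(q)\,\beta(q,n-m+1,m)$, matching the first integral in the lemma. For the second summand I invoke three normalizations: (a) a direct comparison of densities shows $B_\alpha(q)=\frac{q}{n-m}\beta(q,n-m,m)$ (since $(n-m)B(n-m,m)=1/\binom{n-1}{m-1}$); (b) the binomial--beta duality gives $D(q)=I_q(n-m,m)=\int_0^q \beta(x,n-m,m)\,dx$, because $D(q)=\Pr(\mathrm{Binomial}(n-1,q)\ge n-m)$; (c) applying the same duality to $N(q)\sim \mathrm{Binomial}(n,1-q)$ yields $\Pr(N(q)\ge m+1)=1-I_q(n-m,m+1)=\int_q^1\beta(x,n-m,m+1)\,dx$. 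Plugging (a)--(c) into $\frac{m\, B_\alpha(q)\,F^{-1}(q)\,\Pr(N(q)\ge m+1)}{D(q)(1-q)}$ reproduces exactly the second integrand of the claim.

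The one substantive algebraic step is the identity $\mathbb{E}[N(q)\mathbf{1}\{N(q)\le m\}]=n(1-q)D(q)$ in the decomposition; everything else is bookkeeping of beta-function normalizations and the binomial--beta duality, which I expect to be the main source of friction since several factorials and beta parameter shifts must be tracked simultaneously.
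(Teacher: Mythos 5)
Your proof is correct --- I verified the kernel rewriting, the identity $\mathbb{E}[N(q)\mathbf{1}\{N(q)\le m\}]=n(1-q)D(q)$, and all three beta normalizations (a)--(c), and each checks out --- but it reaches the lemma by a genuinely different route than the paper. The paper's proof is analytic throughout: it starts from the order-statistic expression for $S(m,n)$ (itself obtained via the admission-probability manipulation in the proof of Proposition~\ref{thm:ConpleteSimpleContest}), shows by a telescoping-sum computation that $\bigl(\sum_{j=1}^m b(j,q)\bigr)'=\beta(q,n-m,m)$ so that your $D(q)$ equals $\int_0^q\beta(u,n-m,m)\,du$, writes the numerator as the double integral $\int_q^1\int_0^u\beta(x,n-m,m)\,dx\,du$, and splits it by interchanging the order of integration over the trapezoidal region, after which a chain of beta parameter-shift identities yields the two summands. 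You instead give the weight $\frac{1}{n}\mathbb{E}[\min(N(q),m)]$ (which in the paper's notation is exactly $\sum_j a(j,q)$, though the paper never makes this probabilistic identification) a direct interpretation as the expected number of shortlisted contestants with quantile above $q$, with $N(q)\sim\mathrm{Binomial}(n,1-q)$, and then obtain the same two-term split from the decomposition $\min(N,m)=N\mathbf{1}\{N\le m\}+m\mathbf{1}\{N\ge m+1\}$, the mean identity via $j\binom{n}{j}=n\binom{n-1}{j-1}$, and two applications of binomial--beta duality. The endpoint decomposition is identical, but your middle step replaces the paper's telescoping and Fubini-over-a-trapezoid computation with a self-contained probabilistic argument (your counting/Fubini derivation of the starting kernel also does not lean on Proposition~\ref{thm:ConpleteSimpleContest}); what you lose is that the paper's explicit identity $D(q)=\int_0^q\beta(u,n-m,m)\,du$ is derived rather than imported from the duality, and that identity is reused later (e.g., in the proofs of Lemma~\ref{lem:AsyRep} and Lemma~\ref{lem:3}), whereas your version makes the structure of the second summand --- a hazard-like ratio times a binomial tail --- conceptually transparent.
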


The concentration behavior of the beta distribution\footnote{In general, $\beta(q,\alpha,\beta)$ concentrates around its maximum point $\mu=\frac{\alpha-1}{\alpha+\beta-2}$ as $\alpha$ and $\beta$ go large.} allows us to asymptotically approximate the integration terms in the expression (Lemmas~\ref{lem:1} and~\ref{lem:2}). Thus, the representation simplifies to:
\begin{lemma}[Asymptotic Expression for Total Effort]\label{lem:AsyRep}
    The ex-ante total effort in a complete simple contest has the following asymptotic expression with respect to $n$:
    $$\lim_{n\rightarrow +\infty}\frac{S(m,n)}{n}=\int_0^{1-k}F^{-1}(q)\frac{q}{1-q}\frac{k}{1-k}(\frac{1}{q}-\frac{k}{q(1-q)})dq,$$
    where $k=m/n$, and the convergence rate is independent of $k$. 
\end{lemma}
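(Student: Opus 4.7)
The plan is to analyze the two terms of the Beta representation of $S(m,n)$ given by Lemma~\ref{lem:betaRepTotalEffort} separately, normalize by $n$, and pass to the limit as $n\to\infty$ with $k=m/n$ held fixed.

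First, I would dispose of the leading term
\[
A(m,n):=\int_0^1 F^{-1}(q)\,\beta(q,n-m+1,m)\,dq.
\]
The density $\beta(\cdot,n-m+1,m)$ concentrates around its mean $(n-m)/(n-1)\to 1-k$ with variance of order $1/n$, and since ability is assumed bounded, $F^{-1}$ is bounded. Hence $A(m,n)\to F^{-1}(1-k)=O(1)$, and $A(m,n)/n\to 0$; this term contributes nothing to the asymptotic limit.

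The bulk of the work lies in the second term
\[
B(m,n):=\int_0^1 F^{-1}(q)\,\frac{q}{1-q}\,\frac{m}{n-m}\,\frac{\beta(q,n-m,m)}{\int_0^q\beta(x,n-m,m)\,dx}\int_q^1\beta(x,n-m,m+1)\,dx\,dq.
\]
The prefactor $m/(n-m)\to k/(1-k)$ is immediate. For the two remaining factors I would invoke the already-stated Lemmas~\ref{lem:1} and~\ref{lem:2}, which give the required asymptotic behaviour of Beta densities under the scaling $m=kn$. Concretely, $\mathrm{Beta}(n-m,m+1)$ concentrates at $1-k$, so for $q<1-k$ the tail $\int_q^1\beta(x,n-m,m+1)\,dx\to 1$, while for $q>1-k$ it decays exponentially. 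For the ``reverse hazard'' factor, a Laplace expansion at the endpoint $x=q$ of the rate function
\[
\phi(q)=(1-k)\log\frac{1-k}{q}+k\log\frac{k}{1-q},\qquad \phi'(q)=\frac{q-(1-k)}{q(1-q)},
\]
yields, for $q<1-k$,
\[
\frac{\beta(q,n-m,m)}{\int_0^q\beta(x,n-m,m)\,dx}\;\sim\;n\,|\phi'(q)|\;=\;n\,\frac{1-k-q}{q(1-q)}.
\]
Multiplying the three pieces on $(0,1-k)$ and simplifying gives the pointwise limit of the integrand of $B(m,n)/n$ as $F^{-1}(q)\cdot\frac{k(1-k-q)}{(1-k)(1-q)^2}$; a direct algebraic rearrangement shows this equals the target integrand $F^{-1}(q)\frac{q}{1-q}\frac{k}{1-k}\bigl(\frac{1}{q}-\frac{k}{q(1-q)}\bigr)$. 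On $(1-k,1)$ the pointwise limit is $0$.

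Finally, I would promote pointwise convergence to convergence of the integral via dominated convergence, using the boundedness of $F^{-1}$ together with the integrable envelope supplied by the estimates above. \textbf{The main obstacle} is the transition region near $q=1-k$, where $\phi'(q)\to 0$ so the leading-order Laplace estimate deteriorates. I would control this by localization: split the integral into a bulk region $q\in(0,1-k-\delta_n)$ on which Lemma~\ref{lem:1} applies with a uniform multiplicative error, and a window of width $\delta_n=n^{-1/2+\varepsilon}$ around $1-k$, where a Gaussian approximation of the Beta density and the boundedness of $F^{-1}$ show the contribution is $O(\delta_n)\to 0$. The exponential tail bound on the right handles $q>1-k$. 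Because none of these estimates depend on the particular value of $k$, only on $F^{-1}$ and the distance to $1-k$, the resulting rate of convergence of $B(m,n)/n$ to the target integral is independent of $k$, which is the second assertion of the lemma.
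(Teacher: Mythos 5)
Your proposal is correct, and its skeleton coincides with the paper's own proof: the first Beta term stays $O(1)$ by concentration (this is the paper's Lemma~\ref{lem:2}) and vanishes after dividing by $n$, and the second term is split into exactly the same three regions the paper uses --- the bulk $q<1-k-\delta$, a shrinking window around $1-k$, and the right tail $q>1-k+\delta$, with the tail killed by the concentration estimate of Lemma~\ref{lem:1}. The one genuine difference is how you obtain the key reverse-hazard limit $\beta(q,n-m,m)\big/\bigl(n\int_0^q\beta(x,n-m,m)\,dx\bigr)\to \frac{1}{q}-\frac{k}{q(1-q)}$ on the bulk: the paper isolates this as Lemma~\ref{lem:3} and proves it by a discrete argument (partitioning $[0,q]$ into $N=\sqrt{n}$ cells and showing the cell weights grow geometrically with ratio $\lambda_N\to\infty$, so the endpoint cell dominates the weighted average), whereas you use a Laplace endpoint expansion of the rate function $\phi$, giving the ratio $\sim n\,|\phi'(q)|=n\,\frac{1-k-q}{q(1-q)}$, which matches after the algebra you indicate. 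Your route is the more standard and more transparent one --- it makes the mechanism explicit ($\phi$ is strictly decreasing left of the mode, so $\int_0^q e^{-n\phi}$ is endpoint-dominated) --- while the paper's discretization is elementary and self-contained. Two points you should tighten. First, the dominating envelope you gesture at is available cheaply from the identity $\beta(q,n-m,m)=\frac{n-m}{q}\,b(m,q)$ with $b(m,q)\le B(m,q)$, which gives $\beta(q,n-m,m)\big/\bigl(n\int_0^q\beta\bigr)\le\frac{1-k}{q}$ and hence an integrable bound $F^{-1}(q)\frac{k}{1-q}$ on the bulk; this is precisely the crude bound the paper deploys in its window region (its Part 2), and it also bounds your transition window without needing a separate Gaussian approximation. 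Second, the $k$-independence of the rate is asserted rather than proved in your sketch: the Laplace constants degrade as $k\to 0$ or $k\to 1$ (the variance $k(1-k)/n$ and $|\phi'|$ near the mode both degenerate), and the paper handles this explicitly inside Lemma~\ref{lem:1} by choosing a uniform $\delta=\Theta\bigl(\sqrt{\ln n/n}\bigr)$, using $k(1-k)\le\frac{1}{4}$ and $k(1-k)\ln\frac{1}{2\pi k(1-k)}\to 0$; you should replicate that uniformization to justify the second assertion of the lemma.
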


Following this expression, we treat the admitting ratio $k$ as a decision variable and solve the first-order condition. This leads to the asymptotic optimal size, which is proportional to $n$.

\begin{proposition}[Optimal size is Asymptotically Linear]\label{thm:OptAsmLinear}
The ex-ante total effort in a complete simple contest converges to a function of $k=m/n$ as $n \rightarrow \infty$, and the convergence rate is independent of 
$k$. Therefore, the optimal shortlist size grows asymptotically linearly with $n$. Formally, there exists a $k^*\in(0,1)$ such that:
$$\lim_{n\rightarrow \infty}m^*(n)/n=k^*,$$
where $k^*$ is the solution of the following equation:
\[
\int_k^1 F^{-1}(1-q)(\frac{1}{q}-(2k-k^2)\frac{1}{q^2})dq=0.
\]
\end{proposition}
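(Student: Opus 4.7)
The plan is to pass from the discrete optimization $\max_m S(m,n)$ to the asymptotic optimization of a one-variable function $\phi(k)$ given by Lemma~\ref{lem:AsyRep}, then characterize the maximizer by the first-order condition. Concretely, I would introduce
\[
\phi(k) := \int_0^{1-k} F^{-1}(q)\,\frac{q}{1-q}\,\frac{k}{1-k}\!\left(\frac{1}{q}-\frac{k}{q(1-q)}\right)dq,
\]
so that Lemma~\ref{lem:AsyRep} states $S(m,n)/n \to \phi(m/n)$ with a convergence rate independent of $k$. The substitution $p = 1-q$ collapses the two inner factors into the cleaner form
\[
\phi(k) = \frac{k}{1-k}\int_k^1 F^{-1}(1-p)\left(\frac{1}{p}-\frac{k}{p^2}\right)dp,
\]
on which all subsequent calculus is easier to perform.

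Next I would verify the boundary behavior to locate the maximizer in the interior. Since abilities are bounded, $F^{-1}$ is bounded on $[0,1]$, so $\phi$ extends continuously to $[0,1]$ with $\phi(0^+)=0$ (the prefactor $k/(1-k)$ dominates the at-most-logarithmic blow-up of the integral) and $\phi(1^-)=0$ (the integral is $O((1-k)^2)$ while the prefactor is $O(1/(1-k))$). Together with $\phi>0$ on $(0,1)$, this guarantees that every global maximizer $k^*$ lies in the open interval $(0,1)$ and thus satisfies $\phi'(k^*)=0$.

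The third step is to differentiate. Write $I(k) = \int_k^1 F^{-1}(1-p)(1/p - k/p^2)\,dp$. Applying Leibniz's rule, the boundary contribution at $p=k$ equals $F^{-1}(1-k)(1/k - 1/k)=0$, so
\[
I'(k) = -\int_k^1 \frac{F^{-1}(1-p)}{p^2}\,dp, \qquad \left(\frac{k}{1-k}\right)' = \frac{1}{(1-k)^2}.
\]
Setting $\phi'(k)=0$, multiplying by $(1-k)$, and expanding $I(k)$ gathers the $1/p^2$ terms with coefficient $k + k(1-k) = 2k-k^2$, producing exactly
\[
\int_k^1 F^{-1}(1-p)\left(\frac{1}{p} - \frac{2k-k^2}{p^2}\right)dp = 0,
\]
which is the characterization claimed.

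Finally, to return to the discrete problem, I would use the uniform (in $k$) convergence from Lemma~\ref{lem:AsyRep}: if $m^*(n)/n$ had a subsequential limit $k^\dagger$, continuity of $\phi$ plus uniform convergence forces $\phi(k^\dagger) = \sup \phi$, so every accumulation point of $m^*(n)/n$ is a maximizer of $\phi$, and hence $m^*(n)/n \to k^*$. The main obstacle is not the calculus at all but this transfer step, which hinges entirely on the uniform-in-$k$ convergence stated in Lemma~\ref{lem:AsyRep}; without it, one could imagine $\phi$ being maximized at an interior $k^*$ while the discrete optima $m^*(n)/n$ drift away due to $k$-dependent error terms. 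A secondary subtlety, which I would address in a short remark, is that the stated equation is only a first-order condition and may admit several roots for pathological $F$; the boundary analysis above is what certifies that the relevant solution is a global maximizer.
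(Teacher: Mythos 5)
Your proposal is correct and follows essentially the same route as the paper: invoke Lemma~\ref{lem:AsyRep} to reduce the discrete problem to maximizing $\phi(k)=\frac{k}{1-k}\int_k^1 F^{-1}(1-p)\left(\frac{1}{p}-\frac{k}{p^2}\right)dp$, then differentiate to obtain exactly the paper's first-order condition $\frac{1}{(1-k)^2}\int_k^1 F^{-1}(1-q)\left(\frac{1}{q}-(2k-k^2)\frac{1}{q^2}\right)dq=0$. If anything you are slightly more careful than the paper: you certify interiority of the maximizer via $\phi(0^+)=\phi(1^-)=0$ together with $\phi>0$ on $(0,1)$ (the paper instead checks the sign of $\phi'$ at $0$ and at $k_2\approx 0.3162$, the latter borrowed from the proof of Theorem~\ref{thm:UniversalBound}), and you make explicit the subsequential-limit transfer from $m^*(n)/n$ to the continuous maximizer via the uniform-in-$k$ convergence, a step the paper leaves implicit.
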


Proposition~\ref{thm:OptAsmLinear} provides a general way to solve the optimal size for any given distribution, i.e., to solve the equation of first-order condition, as can be seen from following examples. 

\begin{example}[Optimal size for Uniform Distributions]\label{exam:OptimalUniform}
    For a uniform distribution $U[0,b]$ that starting at $0$, i.e., $F(x)=x/b$, then $F^{-1}(1-q)=b(1-q)$. Following Theorem~\ref{thm:OptAsmLinear}, the optimal ratio $k^*$ satisfies:\(\int_k^1 b(1-q)(\frac{1}{q}-(2k-k^2)\frac{1}{q^2})dq=0 \).
    This equation simplifies to $\ln k=1+\frac{4-2k}{k^2-2k-1}$, which can be solved numerically, obtaining $k^* \approx 15.07\%.$
\end{example}
\begin{example}[Optimal size for Square Function Distribution]\label{exam:OptimalSquare}
    For square function distribution, i.e., $F(x) = x^2, x\in[0,1]$, then $F^{-1}(1-q) = (1-q)^{1/a}$. The first-order condition $\sqrt{1-k}(k-4)+(k^2-2k+2)\ln(1+\sqrt{(1-k)})+1/2 \cdot(k^2-2k-2)\ln k =0$ gives that $k^*\approx 20.67\%$. 
    %Sqrt[1 - k] (-4 + k) + (2 - (-2 + k) k) Log[1 + Sqrt[1 - k]] + 1/2 (-2 + (-2 + k) k) Log[k] == 0
\end{example}
\begin{example}[Optimal size for Exponential Distributions]\label{exam:OptimalExp}
    For exponential distribution, i.e., $F(x)=1-e^{-\lambda x}$, then $F^{-1}(1-q)=-\frac{1}{\lambda}\ln q$. The first-order condition $\frac{1}{2\lambda}((\ln k)^2+(4-2k)\ln k+2(k-2)(k-1))= 0$ gives that $k^*\approx 9.70\%$, which is independent of the parameter $\lambda$.
\end{example}

%Example~\ref{exam:OptimalUniform},~\ref{exam:OptimalSquare},~\ref{exam:OptimalExp}. 

In this way, we have fully characterized the optimal contest with a shortlist for the total effort objective. A natural question arises: Is there an upper bound for the optimal admitting ratio? In other words, can the contest designer eliminate a portion of contestants before knowing anything about the ability distribution, while still seeking to maximize total effort? Surprisingly, the answer is yes. Formally, we state the following theorem:

\begin{theorem}[Tight Upper Bound for the Optimal size]\label{thm:UniversalBound}
    For an arbitrary ability distribution $F$, when $n\rightarrow +\infty$, the optimal shortlist size that maximizes ex-ante total effort, denoted by $m^{*}$, has the following linear upper bound with respect to $n$:
    $$ \lim_{n \rightarrow \infty} \frac{m^*(n)}{n} \leq \bar{k},$$
    where $\bar{k} \approx 31.62\%$ is the solution to the equation $\ln k=(2-k)(k-1)$.

    Moreover, there exists a distribution such that $m^*(n)/n=\bar{k}$, meaning the bound is tight. 
\end{theorem}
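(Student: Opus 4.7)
The plan is to derive from the limiting first-order condition of Proposition~\ref{thm:OptAsmLinear} a purely scalar inequality that any optimal ratio $k^{*}=\lim_{n\to\infty} m^{*}(n)/n$ must satisfy, and then recognize $\bar{k}$ as the exact threshold for that inequality. Setting $c:=2k^{*}-(k^{*})^{2}$, one checks that $c\in(k^{*},1)$ for every $k^{*}\in(0,1)$, so the weight $(q-c)/q^{2}$ appearing in the FOC is strictly negative on $(k^{*},c)$ and strictly positive on $(c,1)$. Splitting the FOC at $q=c$ and moving the negative piece to the right yields the positive balance
\[
\int_{c}^{1} F^{-1}(1-q)\,\frac{q-c}{q^{2}}\,dq \;=\; \int_{k^{*}}^{c} F^{-1}(1-q)\,\frac{c-q}{q^{2}}\,dq.
\]

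The crux is to eliminate $F$ by monotonicity. Since $F^{-1}(1-q)$ is decreasing in $q$, we have $F^{-1}(1-q)\leq F^{-1}(1-c)$ on $[c,1]$ and $F^{-1}(1-q)\geq F^{-1}(1-c)$ on $[k^{*},c]$; these two one-sided bounds happen to tighten both sides of the balance in compatible directions. Substituting them in, dividing out $F^{-1}(1-c)>0$ (positive by the standing assumption $f>0$), and recombining gives the distribution-free inequality
\[
\int_{k^{*}}^{1}\frac{q-c}{q^{2}}\,dq \;\geq\; 0,
\]
which after computing the antiderivative and substituting $c=k^{*}(2-k^{*})$ becomes $\ln k^{*}\leq(2-k^{*})(k^{*}-1)$. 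Define $\phi(k):=(2-k)(k-1)-\ln k$. A direct computation gives $\phi'(k)=-(2k-1)(k-1)/k$, so $\phi$ is strictly decreasing on $(0,1/2]$ and strictly increasing on $[1/2,1]$, with $\phi(0^{+})=+\infty$, $\phi(1/2)=\ln 2-\tfrac{3}{4}<0$, and $\phi(1)=0$. Hence $\phi$ has a unique root $\bar{k}\in(0,1/2)$, which coincides with the solution of $\ln k=(2-k)(k-1)$ in the statement, and $\phi(k)\geq 0$ on $(0,1)$ iff $k\leq\bar{k}$. Since any interior optimum has $k^{*}<1$, the inequality forces $k^{*}\leq\bar{k}$.

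For tightness, the monotonicity step is sharp precisely when $F^{-1}(1-q)$ is (essentially) constant on $[k^{*},1]$, so I would exhibit a family of vanishingly narrow distributions, e.g.\ $F_{\delta}=\mathrm{Uniform}[1,1+\delta]$, for which $F_{\delta}^{-1}(1-q)=1+\delta(1-q)\to 1$ uniformly as $\delta\downarrow 0$. The FOC for $F_{\delta}$ then reduces to $\int_{k}^{1}(q-c)/q^{2}\,dq+O(\delta)=0$, whose root converges to $\bar{k}$, so the asymptotic bound is realized (and a point mass at any $x_{0}>0$ attains it as a formal limit). The main technical obstacle is the sign bookkeeping in the monotonicity step: one must verify that the two one-sided estimates on $F^{-1}(1-q)$ push the left and right halves of the balance in \emph{the same} direction after accounting for the sign flip between the negative and positive pieces, so that the common factor $F^{-1}(1-c)$ can legitimately be pulled out and cancelled. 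Once this alignment is in hand, the rest is elementary single-variable calculus, and the universal cap $\bar{k}\approx 31.62\%$ emerges without any reference to the particular distribution $F$.
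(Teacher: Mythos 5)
Your proposal is correct and takes essentially the same route as the paper's proof: both exploit the sign change of $r(q)=\frac{1}{q}-(2k-k^2)\frac{1}{q^2}$ at $q_0=2k-k^2$ together with the monotonicity of $F^{-1}(1-q)$ (comparing against its value at $q_0$) to reduce the first-order condition to the distribution-free inequality $(2-k)(k-1)-\ln k\geq 0$, and both then locate the unique root $\bar{k}$ via the same derivative computation $-\frac{(2k-1)(k-1)}{k}$. Your tightness construction ($\mathrm{Uniform}[1,1+\delta]$ with $\delta\downarrow 0$) is only cosmetically different from the paper's concentrating Beta family --- both make $F^{-1}(1-q)$ asymptotically constant so that the FOC root converges to $\bar{k}$.
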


Interestingly, with no assumptions on the ability distribution, the designer can eliminate up to approximately 68.38\% of the contestants without sacrificing the optimal contest that maximizes total effort. Less competitive contestants contribute little effort, but their presence significantly discourages stronger contestants, as they are perceived to be as competitive as the others. By eliminating weaker contestants, the stronger contestants have more room to fully compete, which increases the total effort and allows the designer to fully leverage their informational advantage.

\section{Shortlist vs. No Shortlist}
\label{sec: compare}
How much does a contest designer benefit from implementing a shortlist? In this section, we address this question by comparing the optimal contest designs with and without a shortlist under a linear cost function. Specifically, we focus on three types of contests:
%How much can a contest designer benefit from the shortlist option? In this section, we answer this question by comparing the optimal contest design with and without a shortlist, when the cost function is linear. In this section, we mainly focus on three kinds of contests:
\begin{enumerate}
    \item The $n$-contestant winner-take-all contest – This is the optimal contest without a shortlist for both maximum individual effort and total effort \cite{MS01,CHS19}.
    \item The two-contestant winner-take-all contest – This is the optimal contest with a shortlist for the maximum individual effort (by Theorem~\ref{thm:ExpostHighestEffort}).
    \item The complete simple contest with a shortlist size of $m^*$ and $m^*-1$ prizes - This is the optimal contest with a shortlist for the total effort (by Proposition~\ref{thm:ConpleteSimpleContest}).
\end{enumerate}

For any ability distribution $F$ and $n$ initial registered contestants, let $S^{(1)}(m,n,l)$ and $S(m,n,l)$ denote the maximum individual effort and total effort, respectively, in a simple contest with a shortlist size of $m$ and $l\leq m$ prizes. To quantify the gap in objectives, we establish bounds on $S^{(1)}(m,n,l)$ and $S(m,n,l)$ for the three contest types discussed above.

Before deriving bounds on the objectives, we first express the maximum individual effort and total effort objectives in terms of quantiles. Given any ability distribution $F$, recall that the quantile is defined as $q:=1-F(x)$ and its reverse function $v(q):=F^{-1}(1-q)=x$. Using this notation, we obtain the following result.

\begin{lemma}[Quantile Representation for Effort]\label{lem:QuantileRep}
    % By using quantile $q:=1-F(x)$ and its reverse function $v(q):=F^{-1}(1-q)=x$, Ex-ante total effort of a simple contest expresses as:
    % \[
    % S(m,n, l)= n\int_0^1|v'(q)|\int_0^qG_{(m,l)}(t)\,dt\,dq,
    % \]where $l$ is the number of prizes, $G_{(m,l)}(t)=\frac{\binom{n-1}{l}(1-t)^{n-l-1}t^{l-1}}{\sum_{j=1}^{m}\binom{n-1}{j-1}(1-t)^{n-j}t^{j-1}}\int_{0}^{t}\sum_{j=1}^{m}\binom{n-1}{j-1}p^{j-1}(1-p)^{n-j}\,dp$. We use $H_{(m,l)}(q):=\int_0^qG_{(m,l)}(t)\,dt$ to denote the distribution-free part.  
    By using quantile $q:=1-F(x)$ and its reverse function $v(q):=F^{-1}(1-q)=x$, the ex-ante maximum individual effort of a simple contest is:
    \[
    S^{(1)}(m,n, l)= n\int_0^1|v'(q)|\int_0^qG^{(1)}_{(m,l)}(t)\,dt\,dq,
    \]
    where $l$ is the number of prizes and $G_{l,m}^{(1)}(t):=\frac{\binom{n-1}{l}(1-t)^{n-l-1}t^{l-1}}{\sum_{j=1}^{m}\binom{n-1}{j-1}(1-t)^{n-j}t^{j-1}}\int_{0}^{t}(1-p)^{n-1}\,dp.$
    
    And the ex-ante total effort of a simple contest expresses as:
    \[
    S(m,n, l)= n\int_0^1|v'(q)|\int_0^qG_{(m,l)}(t)\,dt\,dq,
    \]similarly, $G_{(m,l)}(t):=\frac{\binom{n-1}{l}(1-t)^{n-l-1}t^{l-1}}{\sum_{j=1}^{m}\binom{n-1}{j-1}(1-t)^{n-j}t^{j-1}}\int_{0}^{t}\sum_{j=1}^{m}\binom{n-1}{j-1}p^{j-1}(1-p)^{n-j}\,dp$, and we use $H(q):=\int_0^qG(t)\,dt$ to denote the distribution-free part.  
\end{lemma}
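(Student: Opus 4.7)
The plan is to specialize the equilibrium effort of Theorem~\ref{thm:contestantSBNE} to a simple contest, pass to the quantile variable $p = 1-F(t)$, and then use Fubini plus the boundary identity $\int_t^1 |v'(q)|\,dq = v(t)$ (which holds because $v(1)=F^{-1}(0)=0$) to match the stated form. Because the only thing that changes between the maximum individual effort and the total effort is the distribution weighting, I will handle the two cases in parallel.

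In a simple contest with $l$ equal prizes, the only nonzero consecutive-prize gap is $V_l - V_{l+1} = B/l$, so after the combinatorial identity $\binom{n-1}{l-1}(n-l)/l = \binom{n-1}{l}$ and the normalization $B=k=1$, Theorem~\ref{thm:contestantSBNE} collapses to $b^*(x) = \int_0^x \binom{n-1}{l} F^{n-l-1}(t)(1-F(t))^{l-1} f(t)\, t/J(F,n,m,t)\, dt$. Substituting $p = 1 - F(t)$ and using the closed form $J(F,n,m,t) = \sum_{j=1}^{m}\binom{n-1}{j-1}(1-p)^{n-j}p^{j-1}$ recalled in Remark~\ref{rmk:subjectiveProb}, I obtain
\[
b^*(v(q)) = \int_q^1 A(p)\, v(p)\, dp, \qquad A(p) := \frac{\binom{n-1}{l}(1-p)^{n-l-1} p^{l-1}}{\sum_{j=1}^m \binom{n-1}{j-1}(1-p)^{n-j} p^{j-1}},
\]
where $A(p)$ is precisely the rational factor appearing in both $G^{(1)}_{(m,l)}$ and $G_{(m,l)}$.

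For the maximum individual effort, the minimum of $n$ iid uniform quantiles has density $n(1-q)^{n-1}$, so $S^{(1)}(m,n,l) = n\int_0^1 (1-q)^{n-1} \int_q^1 A(p) v(p)\,dp\, dq$. For the total effort, exchangeability gives $\mathbb{E}\bigl[\sum_{i=1}^m b^*(X_{(i)})\bigr] = n\,\mathbb{E}\bigl[b^*(X_1)\, \mathbf{1}\{X_1 \in \text{top } m\}\bigr]$, and conditional on the quantile $q$ of $X_1$, the top-$m$ indicator has probability $\sum_{j=1}^m \binom{n-1}{j-1} q^{j-1}(1-q)^{n-j}$. In both cases, swapping the $q$- and $p$-integrals over the triangle $\{0 \le q \le p \le 1\}$ produces an outer integral of the form $\int_0^1 A(p) v(p) \cdot [\text{inner integral in } q]\, dp$, which equals $n \int_0^1 A(p) v(p) \Psi(p)\, dp$ for the appropriate $\Psi$ matching the definitions of $G^{(1)}_{(m,l)}$ and $G_{(m,l)}$ respectively.

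To recover the lemma's format, I apply Fubini to the stated right-hand side $n\int_0^1 |v'(q)| \int_0^q G(t)\,dt\,dq$ over the same triangle and use $\int_t^1 |v'(q)|\,dq = v(t)$ to turn it into $n\int_0^1 G(t)\, v(t)\,dt$, which matches the expressions derived above term by term. The main obstacle is careful bookkeeping across the two Fubini swaps and the quantile substitution with its sign flip from the decreasing $v$; the one substantive (rather than combinatorial) point is the boundary identity $v(1)=0$, which is guaranteed by the model's implicit assumption that the support of $F$ starts at $0$, consistent with the integration origin in Theorem~\ref{thm:contestantSBNE}.
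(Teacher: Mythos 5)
Your proposal is correct and takes essentially the same route as the paper's proof: both reduce the objective by exchangeability to a single contestant weighted by the top-$m$ (resp.\ top-$1$) selection probability, pass to the quantile variable, and perform the same two Fubini swaps, with the final step resting on the identity $v(q)=\int_q^1|v'(t)|\,dt$ (i.e., $v(1)=0$), which you rightly flag. The only cosmetic difference is that you re-derive the simple-contest equilibrium effort from Theorem~\ref{thm:contestantSBNE} via the gap identity $\binom{n-1}{l-1}\frac{n-l}{l}=\binom{n-1}{l}$, whereas the paper starts from the already-specialized expression for $S(m,n,l)$ in Subsection~\ref{subsec:TotalEffort}.
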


Based on this representation, we establish the ranges for both the maximum individual effort and the total effort achieved by the $n$-contestant winner-take-all contest and the $2$-contestant winner-take-all contest, respectively, as stated in the following two lemmas.
% Having this representation, we provide the ranges of both the maximum individual effort  and the total effort achieved by the $n$-contestant winner-take-all contest and the $2$-contestant winner-take-all contest, respectively, shown in the following two lemmata. 

\begin{lemma}\label{lem:bound on n,1}
    For any ability distribution, the $n$-contestant winner-take-all contest achieves a maximum individual effort of $S^{(1)}(n, n,1) = \Theta(1)$ and a total effort of $S(n, n,1) = \Theta(1)$.
\end{lemma}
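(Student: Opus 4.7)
The plan is to substitute $m=n$, $l=1$ into the quantile representation of Lemma~\ref{lem:QuantileRep}, use the binomial theorem to collapse the kernels to closed form, and then bound the integrals uniformly in $n$. First, for $m=n$ we have
\[
\sum_{j=1}^{n}\binom{n-1}{j-1}(1-t)^{n-j}t^{j-1} \;=\; ((1-t)+t)^{n-1} \;=\; 1,
\]
which simultaneously eliminates the denominator in both $G$ and $G^{(1)}$ and trivializes the inner integral in $G_{(m,l)}$. This yields the explicit forms
\[
G_{(n,1)}(t) \;=\; (n-1)\,t\,(1-t)^{n-2}, \qquad G^{(1)}_{(n,1)}(t) \;=\; \frac{(n-1)(1-t)^{n-2}\bigl(1-(1-t)^{n}\bigr)}{n}.
\]
Elementary integration (a single integration by parts) then gives closed forms for $H(q) = \int_0^q G_{(n,1)}$ and $H^{(1)}(q) = \int_0^q G^{(1)}_{(n,1)}$, and one checks that $nH(q),\,nH^{(1)}(q) \in [0,1]$ uniformly for all $q \in [0,1]$ and all $n$.

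For the upper bound, I plug the uniform bound $nH(q) \leq 1$ into the representation to obtain
\[
S(n,n,1) \;=\; \int_0^1 |v'(q)|\,nH(q)\,dq \;\leq\; \int_0^1 |v'(q)|\,dq \;=\; v(0)-v(1),
\]
which is the diameter of the support of $F$ and hence a finite distribution-dependent constant under the model's bounded-ability assumption; the identical argument caps $S^{(1)}(n,n,1)$. For the matching lower bound, the closed forms show that for each fixed $q \in (0,1)$ the $(1-q)^{n-1}$ and $(1-q)^{2n-1}$ terms decay exponentially, so $nH(q) \to 1$ and $nH^{(1)}(q) \to \tfrac12$ pointwise. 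Combined with the uniform $[0,1]$ bound, dominated convergence gives
\[
\lim_n S(n,n,1) \;=\; v(0)-v(1), \qquad \lim_n S^{(1)}(n,n,1) \;=\; \tfrac12\bigl(v(0)-v(1)\bigr),
\]
both strictly positive distribution-dependent constants. Together these bounds establish $S(n,n,1) = \Theta(1)$ and $S^{(1)}(n,n,1) = \Theta(1)$.

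The main obstacle is the bookkeeping in computing $H$ and $H^{(1)}$ cleanly and tracking the factor of $n$ through the limit; once the closed forms are in hand, the upper bound is immediate and the lower bound is a direct dominated-convergence argument, provided $|v'(q)|$ is integrable on $[0,1]$ (equivalently, that $F$ has bounded support, as the model assumes).
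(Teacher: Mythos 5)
Your proof is correct, and it reaches the same closed form $nH_{(n,1)}(q)=1-(1-q)^n-nq(1-q)^{n-1}$ that the paper derives, but from there the two arguments genuinely diverge. The paper (Lemma~\ref{lem:Hn1}) converts this closed form into a two-sided envelope $H_{(n,1)}(q)=\Theta(\min(nq^2,\frac{1}{n}))$ via Taylor expansions, splits the integral from Lemma~\ref{lem:QuantileRep} at $q=1/n$, and invokes a two-sided pointwise bound $L'\le|v'(q)|\le L$; it then treats the maximum-effort case indirectly, showing $H^{(1)}_{(n,1)}=\Theta(H_{(n,1)})$ through Lemma~\ref{lemma:1/qint approximation}. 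You instead use only the uniform bound $nH_{(n,1)}(q)\in[0,1]$ (it equals $\Pr[\mathrm{Binomial}(n,q)\ge 2]$) together with the pointwise limits $nH_{(n,1)}(q)\to 1$ and $nH^{(1)}_{(n,1)}(q)\to\frac12$ and dominated convergence, which buys three things: (i) exact limiting constants $v(0)-v(1)$ and $\frac12\left(v(0)-v(1)\right)$ rather than bare $\Theta(1)$ bounds; (ii) weaker hypotheses --- you need only $\int_0^1|v'(q)|\,dq<\infty$ and strict monotonicity of $v$ (so the limits are strictly positive), not a positive lower bound $L'$ on $|v'|$; and (iii) an exact closed form $nH^{(1)}_{(n,1)}(q)=1-(1-q)^{n-1}-\frac{n-1}{2n-1}\left(1-(1-q)^{2n-1}\right)$ in place of a $\Theta$-equivalence. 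What the paper's heavier route buys in exchange is reusable machinery: the envelope $\Theta(\min(nq^2,\frac1n))$ and Lemma~\ref{lemma:1/qint approximation} are precisely what its proof of Lemma~\ref{lem:bound on 2,1} requires in the $2$-contestant case, where the denominator no longer collapses to $1$ and no clean closed form exists, so your self-contained argument would not extend there. One small point worth making explicit: to upgrade a convergent limit into a $\Theta(1)$ statement over all $n\ge 2$, you should also observe that $S(n,n,1)$ and $S^{(1)}(n,n,1)$ are finite and strictly positive at every fixed $n$, which is immediate from $nH>0$ on $(0,1)$ and $|v'(q)|>0$.
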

    
\begin{lemma}\label{lem:bound on 2,1}
    For any ability distribution, the $2$-contestant winner-take-all contest achieves a maximum individual effort of $S^{(1)}(2, n,1) = \Theta(\log n)$ and a total effort of $S(2, n,1) = \Theta(\log n)$.
\end{lemma}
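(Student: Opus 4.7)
The plan is to specialize the quantile representation of Lemma~\ref{lem:QuantileRep} to $m=2$, $l=1$, compute the distribution-free kernels $G^{(1)}_{(2,1)}$ and $G_{(2,1)}$ in closed form, carry out a two-regime asymptotic analysis to obtain matching upper and lower bounds on $nH^{(1)}$ and $nH$, and finally integrate against $|v'|$ using the boundedness of the ability support. Plugging in the parameters, the denominator factors cleanly as
\[
\sum_{j=1}^{2}\binom{n-1}{j-1}(1-t)^{n-j}t^{j-1} \;=\; (1-t)^{n-2}\bigl[1+(n-2)t\bigr],
\]
so the rational prefactor simplifies to $(n-1)/[1+(n-2)t]$. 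The inner integral in $G^{(1)}_{(2,1)}$ is elementary, and the one in $G_{(2,1)}$ yields to a single integration by parts, giving
\[
G^{(1)}_{(2,1)}(t) = \frac{(n-1)\bigl[1-(1-t)^n\bigr]}{n\bigl[1+(n-2)t\bigr]}, \qquad G_{(2,1)}(t) = \frac{(n-1)\bigl\{2-(1-t)^{n-1}[2+(n-2)t]\bigr\}}{n\bigl[1+(n-2)t\bigr]}.
\]

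I would then split $[0,1]$ at the threshold $t=1/n$ and bound each kernel uniformly in each regime. For $t\le 1/n$, Bernoulli's inequality gives $1-(1-t)^n=\Theta(nt)$ and $1+(n-2)t=\Theta(1)$, so both kernels are $O(1)$, contributing only $H(1/n)=O(1/n)$ to the cumulative. For $t\ge 1/n$, $(1-t)^{n-1}\le e^{-(n-1)t}$ is bounded away from $1$, the bracketed factors lie between two positive constants, and $1+(n-2)t=\Theta(nt)$, yielding $G(t)=\Theta(1/(nt))$. Integrating then gives $nH^{(1)}(q),\,nH(q)=\Theta(1+\log(nq))$ for $q\in[1/n,1]$, with constants independent of $F$. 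Substituting into $S^{(1)}(2,n,1)=\int_0^1 |v'(q)|\cdot nH^{(1)}(q)\,dq$ and the analogous expression for $S(2,n,1)$ from Lemma~\ref{lem:QuantileRep}, the upper bound is $nH(1)\cdot\int_0^1|v'(q)|\,dq = O(\log n)\cdot(v(0)-v(1))$, finite since abilities are bounded. For the lower bound I would restrict the integration to $q\in[1/2,1]$, on which $nH(q)\ge nH(1/2)=\Omega(\log n)$; the $|v'|$-mass $v(1/2)-v(1)$ is strictly positive because $f>0$ makes $v$ strictly decreasing, so the integral is $\Omega(\log n)$.

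The main technical obstacle is the asymptotic analysis of the more intricate kernel $G_{(2,1)}$, specifically verifying that $2-(1-t)^{n-1}[2+(n-2)t]$ is sandwiched between two positive constants uniformly for $t$ bounded away from $0$ while vanishing linearly near $t=0$, so that $G_{(2,1)}$ enjoys the same clean $\Theta(1/(nt))$ bound as $G^{(1)}_{(2,1)}$. A Taylor expansion at $t=0$ shows the numerator equals $nt+O(t^2)$, matching the $\Theta(nt)$ small-$t$ behavior predicted by the $G^{(1)}$ analysis; at $t=1/n$, the numerator is approximately $2-3/e\approx 0.9>0$ and then tends monotonically to $2$ as $t$ grows, which handles the bulk. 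A secondary concern is the transition zone $t\approx 1/n$, where neither asymptotic regime is sharp; there the contribution is $O(1/n)$ and can be absorbed into the $\Theta(1+\log(nq))$ estimate by continuity, completing the argument.
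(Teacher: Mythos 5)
Your proposal is correct, and it follows the same skeleton as the paper's proof --- quantile representation (Lemma~\ref{lem:QuantileRep}), a two-regime split at $t=1/n$, the estimate $H_{(2,1)}(q)=\Theta\bigl(\min\{nq^2,(1+\log(nq))/n\}\bigr)$ (the paper's Lemma~\ref{lem:H21}), and a final integration against $|v'|$ --- but the two intermediate steps are executed by genuinely different means. Where the paper bounds the inner average via the general binomial lemma (Lemma~\ref{lemma:1/qint approximation}, $\frac1q\int_0^q\zeta(m,n,t)\,dt=\Theta(\min\{1,\tfrac{m}{nq}\})$), which it reuses for Lemma~\ref{lem:bound on n,1} and for reducing $H^{(1)}$ to $\Theta(H)$, you instead exploit the factorization $\sum_{j=1}^2\binom{n-1}{j-1}(1-t)^{n-j}t^{j-1}=(1-t)^{n-2}[1+(n-2)t]$, special to $m=2$, to get both kernels in closed form; your formula for $G_{(2,1)}$ checks out, since $2-2(1-t)^n-nt(1-t)^{n-1}=2-(1-t)^{n-1}[2+(n-2)t]$, and your sandwich on the bracket is rigorous uniformly in $n$ because $\psi(t):=(1-t)^{n-1}[2+(n-2)t]$ has $\psi'(t)=-(1-t)^{n-2}\,n\,[1+(n-2)t]<0$, giving $2-\psi(t)\in[2-3e^{-1/2},2]$ for all $t\geq 1/n$, $n\geq2$. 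In the final step the paper invokes two-sided bounds $L'\leq|v'(q)|\leq L$ together with Lemma~\ref{lem:logInt} to evaluate $\int_{1/n}^1|v'(q)|\,\Theta(\log(nq))\,dq$; you instead use monotonicity of $H$ --- upper bound $nH(1)\cdot(v(0)-v(1))=O(\log n)$, lower bound by restricting to $q\in[1/2,1]$ where $nH(q)\geq nH(1/2)=\Omega(\log n)$ and $v(1/2)-v(1)>0$ since $f>0$. This is both simpler (no analogue of Lemma~\ref{lem:logInt} is needed) and holds under weaker hypotheses (bounded support and strict monotonicity of $v$ rather than a two-sided bound on $|v'|$), at the cost of an argument that does not generalize beyond $(m,l)=(2,1)$ the way the paper's binomial-average lemma does. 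Two small remarks: your small-$t$ claim $1-(1-t)^n=\Theta(nt)$ needs the complement $1-(1-t)^n\geq(1-e^{-1})nt$ for $nt\leq1$ in addition to Bernoulli (which only gives the upper direction), a routine fix; and note that your conclusion matches the lemma as stated, whereas the paper's own proof ends with a typo asserting $\Theta(n)$ instead of $\Theta(\log n)$.
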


On the other hand, using the Beta representation for total effort (Lemma \ref{lem:betaRepTotalEffort}), we can derive the range of total effort achieved by a complete simple contest. 

\begin{lemma}\label{lem:bound on m,m-1}
    Fixed any ability distribution $F$, the optimal complete simple contest with a shortlist size $m^*$ achieves a total effort of $S(m^*,n, m^*-1) = \Theta(n)$, where $m^*$ depends on the distribution $F$.
\end{lemma}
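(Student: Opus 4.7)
The plan is to combine two results already proven earlier in the section: the asymptotic expression for the normalized total effort in Lemma~\ref{lem:AsyRep}, and the asymptotic linearity of the optimal shortlist size in Proposition~\ref{thm:OptAsmLinear}. Writing
\[
I(k) := \int_0^{1-k} F^{-1}(q)\,\frac{q}{1-q}\,\frac{k}{1-k}\!\left(\frac{1}{q}-\frac{k}{q(1-q)}\right)\!dq \;=\; \frac{k}{1-k}\int_0^{1-k} F^{-1}(q)\,\frac{1-k-q}{(1-q)^2}\,dq,
\]
Lemma~\ref{lem:AsyRep} reads $S(m,n)/n \to I(k)$ as $n\to\infty$ with $k=m/n$, and the stated uniformity of the convergence rate in $k$ permits substituting a sequence $m_n/n\to\hat k$ and concluding $S(m_n,n)/n\to I(\hat k)$. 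Proposition~\ref{thm:OptAsmLinear} provides exactly such a sequence for the optimizer: $m^*(n)/n\to k^*(F)\in(0,1)$. Therefore I would deduce $S(m^*,n)/n\to I(k^*)$, and reduce the lemma to showing that $I(k^*)$ is a positive, finite constant depending on $F$.

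For \emph{positivity}, I would observe that the simplified integrand $\frac{k^*}{1-k^*}\cdot \frac{1-k^*-q}{(1-q)^2}\cdot F^{-1}(q)$ is strictly positive on the whole integration interval $q\in(0,1-k^*)$: the prefactor is positive since $k^*\in(0,1)$, the middle factor is positive since $1-k^*-q>0$ there, and $F^{-1}(q)>0$ because abilities are strictly positive. Hence $I(k^*)>0$. For \emph{finiteness}, the model assumes bounded abilities, so $F^{-1}(q)\le X_{\max}<\infty$; the remaining factor is continuous and bounded on $[0,1-k^*]$, so $I(k^*)<\infty$. Combining these two, there exist constants $c_1,c_2>0$ depending only on $F$ such that $c_1 n \le S(m^*,n)\le c_2 n$ for all sufficiently large $n$, which is exactly $S(m^*,n)=\Theta(n)$.

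The one non-routine step is the passage from the pointwise limit $S(m,n)/n\to I(k)$ to the substitution limit $S(m^*(n),n)/n\to I(k^*)$, since $m^*(n)/n$ only tends to $k^*$ rather than equaling it. I would handle this using the uniformity of the convergence rate asserted in Lemma~\ref{lem:AsyRep} together with continuity of $I$ at $k^*$; alternatively, one can sandwich $S(m^*,n)\ge S(\lceil k^* n\rceil,n)$ (by optimality of $m^*$) against the trivial upper bound $S(m^*,n)\le \max_{m} S(m,n)$ and pass to the limit on each side via Lemma~\ref{lem:AsyRep}. Either route sidesteps the need to redo the beta-concentration estimates underlying the proof of Lemma~\ref{lem:AsyRep}, and makes the $\Theta(n)$ conclusion an essentially immediate consequence of the asymptotic machinery already in place.
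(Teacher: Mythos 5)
Your proposal follows essentially the same route as the paper's own proof, which likewise combines the asymptotic expression of Lemma~\ref{lem:AsyRep} with the constant limiting ratio $k^*$ from Proposition~\ref{thm:OptAsmLinear} to conclude that $S(m^*,n,m^*-1)$ grows linearly in $n$. Your write-up is in fact more careful than the paper's two-line argument: the positivity and finiteness of the limit constant $I(k^*)$ and the passage from $m^*(n)/n \to k^*$ to $S(m^*(n),n)/n \to I(k^*)$ are exactly the details the paper leaves implicit, and your treatment of them (via the stated uniformity in $k$, or the sandwich using optimality of $m^*$) is correct.
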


Based on these three lemmas, we compare the optimal contests with and without a shortlist in terms of two types of objectives.

\noindent \textbf{Maximum Individual Effort.}
% For the maximum individual effort objective, we only focus on comparing the $2$-contestant winner-take-all contest with $n$-contestant winner-take-all contest, in term of the maximum individual effort they can achieve, since they are the optimal among all feasible ability distributions. Based on Lemma \label{lem:bound on n,1} and \label{lem:bound on 2,1}, we can obtain the following theorem. 
For the maximum individual effort objective, we only focus on comparing the two-contestant winner-take-all contest with the $n$-contestant winner-take-all contest in terms of the maximum individual effort they can achieve, as these are optimal for all feasible ability distributions. Based on Lemmas \ref{lem:bound on n,1} and \ref{lem:bound on 2,1}, we derive the following theorem.

\begin{theorem}\label{thm: 2,1 vs n,1 max effort}
    For any ability distribution, under the maximum individual effort objective, the two-contestant winner-take-all contest results in $\Theta(\log n)$ times the maximum individual effort of the optimal contest without a shortlist. Specifically, we have:
    $$\frac{S^{(1)}(2,n,1)}{S^{(1)}(n,n,1)} = \Theta(\log n).$$
\end{theorem}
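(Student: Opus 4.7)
The plan is to derive the claim as a direct corollary of the two preceding bounds. First, I would apply Lemma~\ref{lem:bound on 2,1} to the numerator to obtain positive constants $c_1, C_1$ (possibly depending on the ability distribution $F$) with $c_1 \log n \leq S^{(1)}(2,n,1) \leq C_1 \log n$ for all sufficiently large $n$. Second, I would apply Lemma~\ref{lem:bound on n,1} to the denominator to obtain positive constants $c_2, C_2$ with $c_2 \leq S^{(1)}(n,n,1) \leq C_2$. Dividing then gives
\[
    \frac{c_1}{C_2} \log n \;\leq\; \frac{S^{(1)}(2,n,1)}{S^{(1)}(n,n,1)} \;\leq\; \frac{C_1}{c_2} \log n,
\]
which is precisely $\Theta(\log n)$ for each fixed distribution $F$, establishing the theorem.

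Conceptually, the reason the two contests differ by a $\log n$ factor becomes transparent from the quantile representation in Lemma~\ref{lem:QuantileRep}. For $m=n, l=1$ the normalizing sum $\sum_{j=1}^{n}\binom{n-1}{j-1}(1-t)^{n-j}t^{j-1}$ collapses to $1$ by the binomial theorem, so the kernel $G^{(1)}_{(n,1)}$ decays quickly in $n$ and the outer integral stays bounded. For $m=2, l=1$ the same sum reduces to $(1-t)^{n-1}+(n-1)(1-t)^{n-2}t$, whose reciprocal behaves like $1/(nt+1)$ on the bulk of $[0,1]$; paired with the $\int_0^t(1-p)^{n-1}dp$ factor this produces a harmonic-type integral of order $\log n$. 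Intuitively, removing $n-2$ weak opponents through the shortlist gives the two surviving strong contestants a non-vanishing stake in the prize, amplifying their equilibrium effort by a logarithmic factor.

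The main obstacle is not the theorem itself but rather the two input lemmas, both of which are assumed. Given those, the ratio step is elementary; the only subtlety I would make explicit is that the distribution-dependent constants $c_1, c_2$ are strictly positive (so that division is legitimate), which is guaranteed because the lemmas assert $\Theta$ bounds rather than merely $O$ bounds. I would also note that while the hidden constants in the final $\Theta(\log n)$ conclusion depend on $F$, they are uniform in $n$, so the theorem holds in the standard asymptotic sense for every ability distribution in the admissible class.
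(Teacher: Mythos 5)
Your proposal is correct and matches the paper's proof exactly: the paper also derives Theorem~\ref{thm: 2,1 vs n,1 max effort} as an immediate consequence of Lemma~\ref{lem:bound on 2,1} (numerator $\Theta(\log n)$) and Lemma~\ref{lem:bound on n,1} (denominator $\Theta(1)$), with the division step left implicit. Your added remarks on strict positivity of the lower-bound constants and the distribution-dependence of the hidden constants are sound elaborations of the same argument, not a departure from it.
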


\noindent \textbf{Total Effort.}
For the total effort objective, while the optimal contest with a shortlist is a complete simple contest as shown by Theorem \ref{thm:ConpleteSimpleContest}, the optimal shortlist size depends on the ability distribution. This means the optimal contests vary with different ability functions. To make a meaningful comparison, we focus on a fixed ability distribution $F$ and examine the gap in total effort between the optimal contest with a shortlist under $F$ and the $n$-contestant winner-take-all contest. Based on Lemma \ref{lem:bound on n,1} and \ref{lem:bound on m,m-1}, the comparison of these two contests is summarized in Theorem \ref{thm:TotalOPTVAN}.

\begin{theorem}\label{thm:TotalOPTVAN}
    Fixed any ability distribution $F$, under the total effort objective, the optimal contest with a shortlist can achieve $\Theta(n)$ times the total effort compared to the optimal contest without a shortlist. Specifically,
    $$\frac{S(m^*,n,m^*-1)}{S(n,n,1)} = \Theta(n),$$
    where $m^*$ is the optimal shortlist size for the ability distribution $F$. 
\end{theorem}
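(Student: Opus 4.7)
The plan is to derive the theorem as an immediate consequence of Lemmas~\ref{lem:bound on n,1} and~\ref{lem:bound on m,m-1}, which pin down the growth rates of the denominator and the numerator respectively. Lemma~\ref{lem:bound on n,1} gives $S(n,n,1) = \Theta(1)$ for the $n$-contestant winner-take-all contest, and Lemma~\ref{lem:bound on m,m-1} gives $S(m^{*},n,m^{*}-1) = \Theta(n)$ for the optimal complete simple contest with a shortlist under the fixed distribution $F$. Dividing these two asymptotic expressions immediately yields the claimed ratio $\Theta(n)$, provided the hidden constants on both sides are strictly positive and finite for $F$.

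To make the combination of the two lemmas clean, I would first recall the quantile representation from Lemma~\ref{lem:QuantileRep}, writing $S(m,n,l) = n\int_0^1 |v'(q)|\, H_{(m,l)}(q)\,dq$, so that the order of growth in $n$ is controlled by the distribution-free factor $H_{(m,l)}$ once $F$ is fixed. For the denominator, the single-prize, all-pay contest among $n$ contestants has per-contestant equilibrium effort of order $1/n$ by the closed-form expression in Theorem~\ref{thm:contestantSBNE}, so the ex-ante sum remains $\Theta(1)$ as $n \to \infty$. For the numerator, the asymptotic identity of Lemma~\ref{lem:AsyRep} combined with Proposition~\ref{thm:OptAsmLinear} yields a finite positive limit $\lim_{n} S(m^{*},n,m^{*}-1)/n = C(F) > 0$, from which $S(m^{*},n,m^{*}-1) = \Theta(n)$ follows.

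The step I expect to require the most care is justifying that $C(F)$ is both strictly positive and finite for an arbitrary admissible $F$, since a degenerate value would collapse the claimed $\Theta(n)$ gap. Positivity follows because the asymptotic optimal ratio $k^{*} \in (0,1)$ from Proposition~\ref{thm:OptAsmLinear}, substituted into the expression of Lemma~\ref{lem:AsyRep}, produces a nonzero integral whenever $F^{-1}$ is non-constant; finiteness follows from the standing assumption that abilities are bounded, which guarantees integrability of $F^{-1}$ against the distribution-free weight on $(0,1-k^{*})$. Once these two endpoint conditions are checked, dividing the $\Theta(n)$ numerator by the $\Theta(1)$ denominator closes the proof, and the resulting constant in the ratio can be read off as $C(F)$ times a distribution-dependent denominator constant, quantifying the exact benefit of the shortlist under $F$.
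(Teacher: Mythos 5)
Your proposal is correct and follows exactly the paper's route: the paper proves Theorem~\ref{thm:TotalOPTVAN} by directly combining Lemma~\ref{lem:bound on n,1} ($S(n,n,1)=\Theta(1)$) with Lemma~\ref{lem:bound on m,m-1} ($S(m^*,n,m^*-1)=\Theta(n)$), the latter itself resting on Lemma~\ref{lem:AsyRep} and Proposition~\ref{thm:OptAsmLinear} just as you describe. Your additional care in checking that the limiting constant $C(F)$ is strictly positive and finite is a sound elaboration of what the cited lemmas already guarantee, not a departure from the paper's argument.
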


Additionally, with respect to total effort, we still can compare the fully shortlisted contest (the 2-contestant winner-take-all contest) with the optimal contest without a shortlist (the $n$-contestant winner-take-all contest), as presented in the following proposition.
\begin{proposition}\label{prop:TotalTWOVAN}
    For any ability distribution, under the total effort objective, the two-contestant winner-take-all contest results in $\Theta(\log n)$ times the total effort of the optimal contest without a shortlist. Specifically, we have:
    $$\frac{S^{(1)}(2,n,1)}{S^{(1)}(n,n,1)} = \Theta(\log n).$$
\end{proposition}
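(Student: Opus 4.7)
The proposition follows essentially as a one-line corollary of the two asymptotic bounds already established, so the plan is simply to quote those bounds, take their ratio, and observe that the resulting $\Theta$-estimate is preserved. First, I would invoke Lemma~\ref{lem:bound on 2,1}, which states that the two-contestant winner-take-all contest yields total effort $S(2,n,1)=\Theta(\log n)$ for every ability distribution. Second, I would invoke Lemma~\ref{lem:bound on n,1}, which states that the $n$-contestant winner-take-all contest yields total effort $S(n,n,1)=\Theta(1)$ for every ability distribution. Dividing two two-sided $\Theta$-bounds preserves the ratio, giving
\[
\frac{S(2,n,1)}{S(n,n,1)} \;=\; \frac{\Theta(\log n)}{\Theta(1)} \;=\; \Theta(\log n),
\]
which is exactly the claim (noting that the maximum-individual-effort bounds in the same two lemmas also satisfy $\Theta(\log n)/\Theta(1)=\Theta(\log n)$, so the argument is insensitive to whether the ratio in the statement is read as $S$ or as $S^{(1)}$).

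The reason both lemmas hold uniformly in $F$ traces back to the quantile representation of Lemma~\ref{lem:QuantileRep}, which factors the total effort as $n\int_0^1 |v'(q)|\,H(q)\,dq$, cleanly separating the distribution-dependent weight $|v'(q)|$ from the distribution-free kernel $H(q)=\int_0^q G_{(m,l)}(t)\,dt$. For $(m,l)=(n,1)$ the kernel $H$ is of constant order in $q$, yielding the $\Theta(1)$ denominator. For $(m,l)=(2,1)$, the posterior-belief denominator $\sum_{j=1}^{2}\binom{n-1}{j-1}(1-t)^{n-j}t^{j-1}$ in $G_{(2,1)}(t)$ inflates the kernel by a factor of order $\log n$; this is the mechanism by which the strong admission signal of a fully shortlisted contest converts a constant-order objective into a logarithmic one.

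The only real obstacle in this sequence, already handled inside Lemma~\ref{lem:bound on 2,1}, is verifying that the $\Theta(\log n)$ behavior of $S(2,n,1)$ is genuinely distribution-free even when $|v'(q)|$ is ill-behaved near the endpoints of $[0,1]$; this requires combining pointwise estimates on the kernel with an integrability argument against $|v'|$. Since that lemma is already in hand, the present proposition reduces to the single division above, and no further calculation is needed.
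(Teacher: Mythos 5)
Your proposal is correct and matches the paper's own proof, which consists of exactly the same one-line deduction from Lemma~\ref{lem:bound on n,1} and Lemma~\ref{lem:bound on 2,1}. Your observation that the statement's displayed ratio is written with $S^{(1)}$ while the proposition concerns total effort (an apparent typo in the paper) is well handled, since both lemmas give the same $\Theta(\log n)/\Theta(1)$ bounds for $S$ and $S^{(1)}$ alike.
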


\section{Towards Practical Applications}\label{sec:practicalApp}

In previous sections, we use asymptotic analysis to characterize and solve the optimal contests, which has a potential application value. In this section, we further provide numerical results to support that our findings fit well even for small $n$, bringing them closer to practical implementation.
In this section, we focus on the total effort objective, the most widely used in practice.

\noindent \textbf{Finding the Optimal Contest.} The optimal contest is a complete simple contest (Proposition~\ref{thm:ConpleteSimpleContest}), and the corresponding shortlist size grows linearly with  $n$. The slope $k$ is determined by the ability distribution and can be identified by solving an equation (Theorem~\ref{thm:OptAsmLinear}, asymptotic).

Figure~\ref{fig:DistributionOpt} illustrates that this linear trend emerges even for very small values of  $n$, with the asymptotic ratio providing a close prediction. Therefore, the optimal contest for any given distribution can be determined efficiently

\begin{figure}[htb]
\centering
\begin{subfigure}[ht]{0.30\textwidth}
    \centering
    \includegraphics[width=\textwidth]{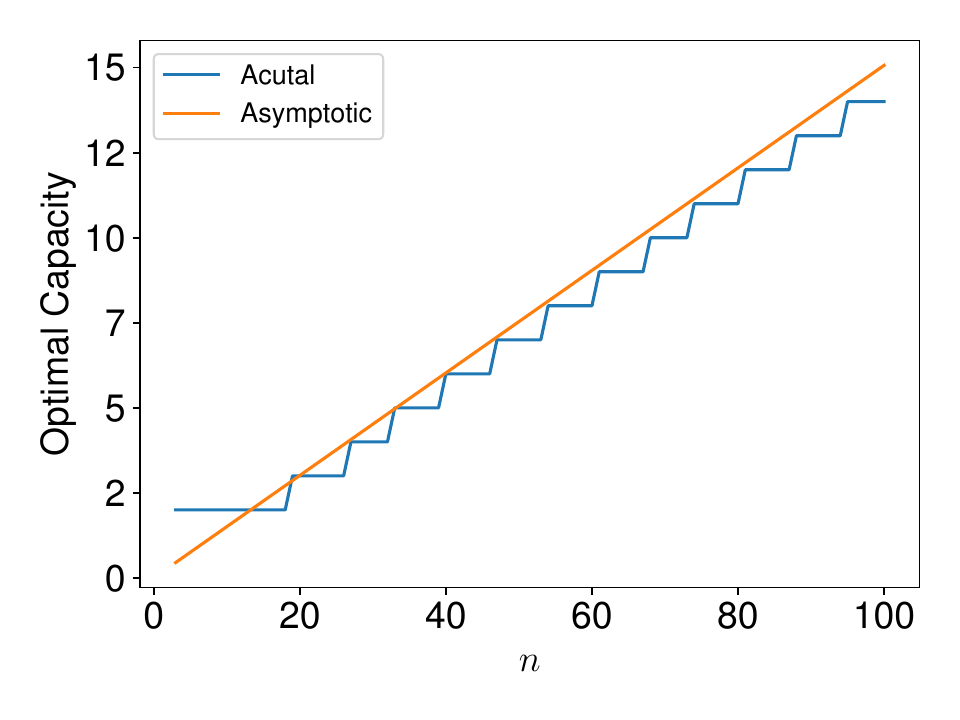}
    \subcaption{$F(x)=x$}
    \label{fig:disopt-a}
    \end{subfigure}
%\hfill
          % 子图 (b)
\begin{subfigure}[ht]{0.30\textwidth}
    \centering
    \includegraphics[width=\textwidth]{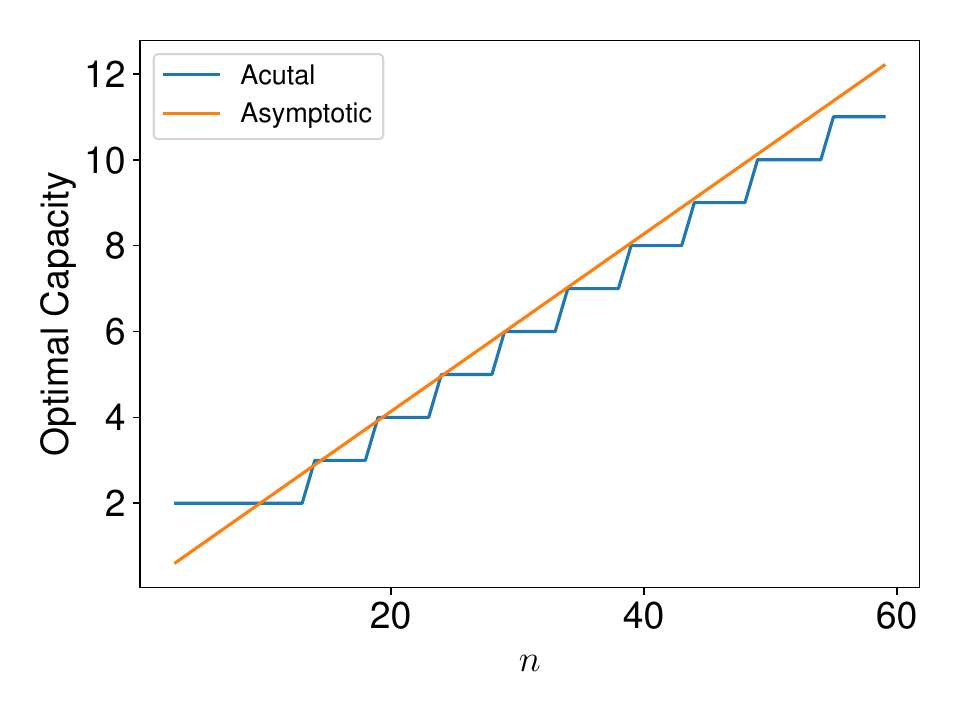}
    \subcaption{$F(x)=x^2$}
    \label{fig:disopt-b}
    \end{subfigure}
%\hfill
\begin{subfigure}[ht]{0.30\textwidth}
    \centering
    \includegraphics[width=\textwidth]{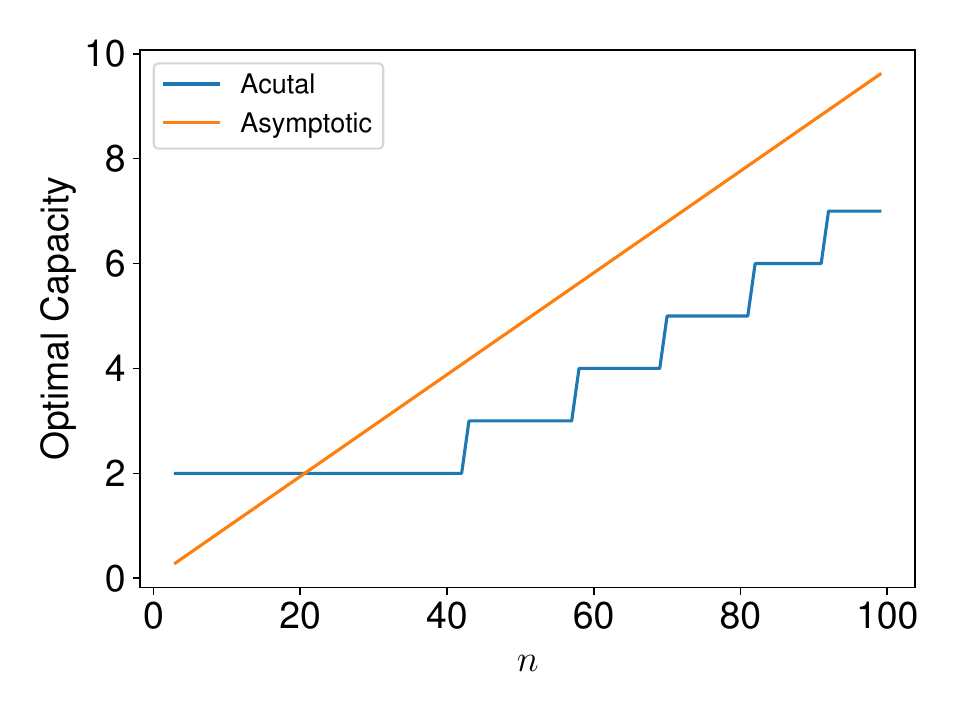}
    \subcaption{$F(x)=1-e^{-x}$}
    \label{fig:disopt-c}
    \end{subfigure}

\caption{The actual optimal size and $m^*$ predicted by asymptotic relation.}
\label{fig:DistributionOpt}
\end{figure}
\noindent \textbf{Universal Upper Bound of the Optimal Size.} There is no distribution such that the optimal shortlist size is larger than $31.62\%n$ (Theorem~\ref{thm:UniversalBound}, asymptotic).
We propose an $O(n)$ algorithm to determine the supremum of the optimal shortlist size across all distributions for any given $n$ (Proposition~\ref{prop:SupM}). Numerical results in Figure~\ref{fig:universal1} confirm that the asymptotic linear trend holds even for very small $n$. Therefore, contest designers can confidently eliminate nearly 68\% of contestants without compromising optimality.

\begin{figure}
    \centering
    \begin{minipage}{0.30\linewidth}
        \centering
        \includegraphics[width=\textwidth]{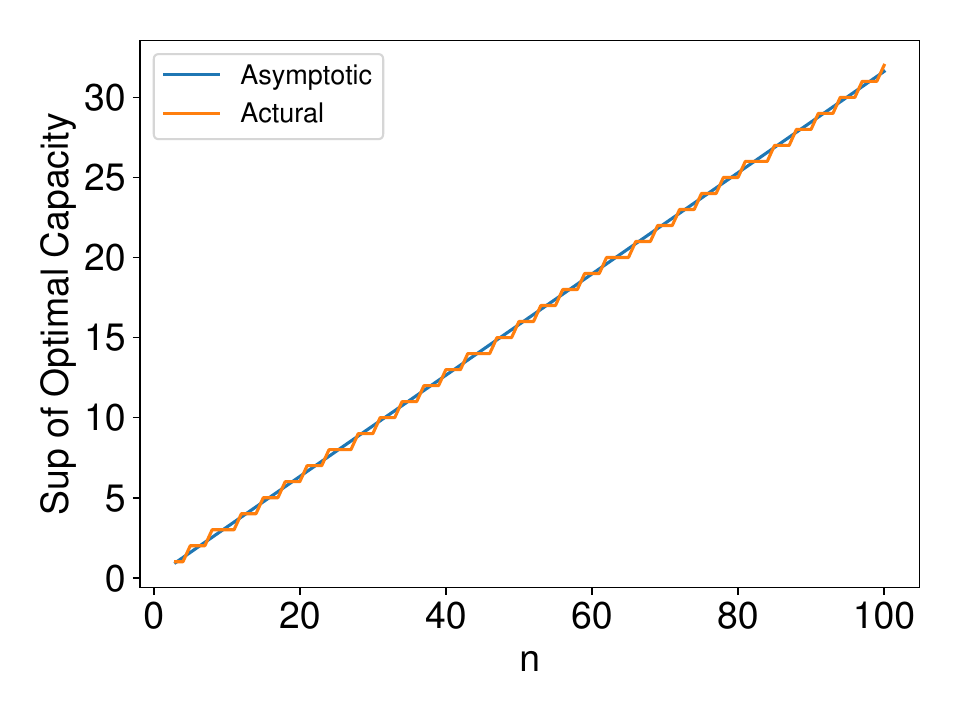}
        \caption{Supremum of optimal $m$.}
        \label{fig:universal1}
    \end{minipage}
    %\hfill
    \begin{minipage}{0.55\linewidth}
        \centering
        \includegraphics[width=\textwidth]{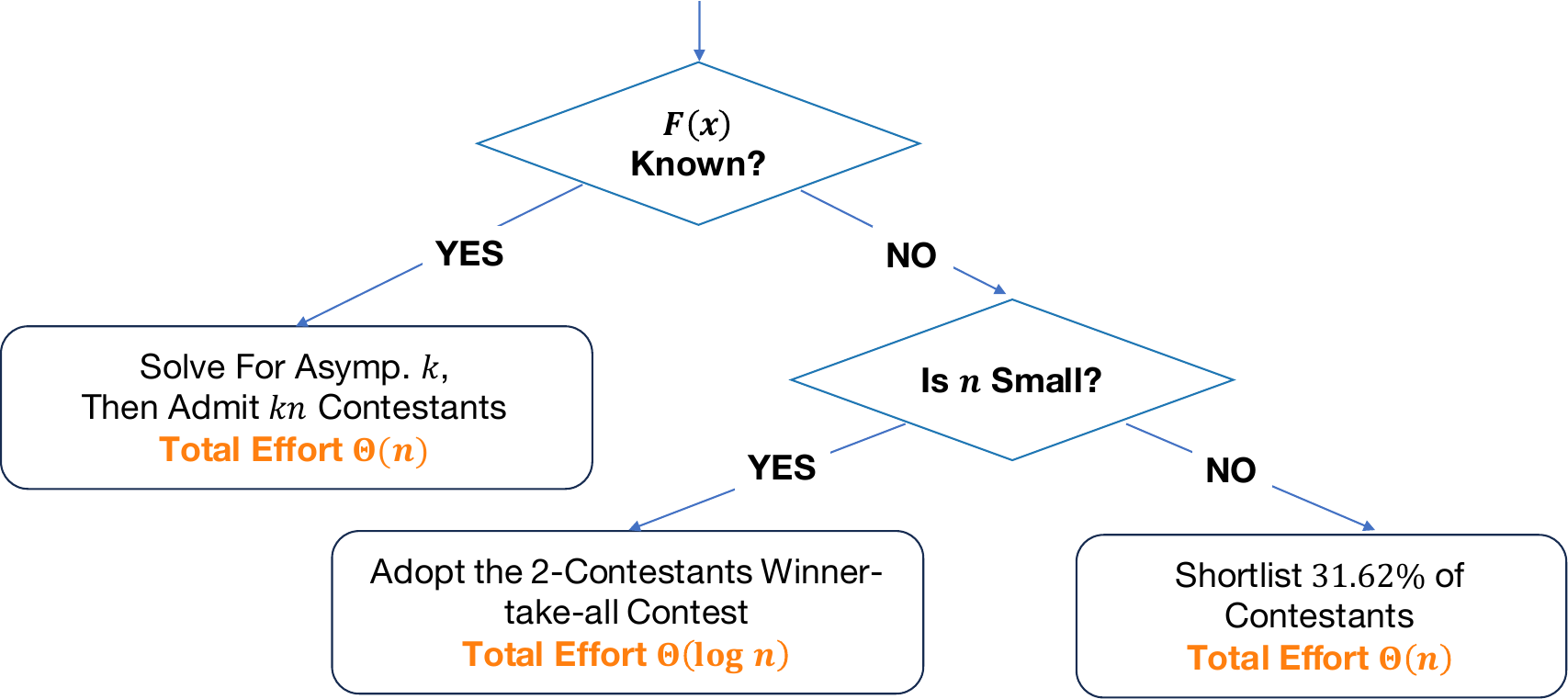}
        \caption{A flow chart for contest design in practice.}
        \label{fig:flowchart}
    \end{minipage}
\end{figure}

% \begin{figure}[h]
%     \centering
%     \includegraphics[width=0.33\textwidth]{figure/UpperBound.pdf}
%     \caption{The upper bound of optimal $m$.}
%     \label{fig:universal}
% \end{figure}

\noindent \textbf{Performance Enhancement.} For any distribution, the two-player winner-take-all contest is $\Theta(\log n)$ times better (Proposition~\ref{prop:TotalTWOVAN}) and the optimal contest with a shortlist, it is $\Theta(n)$ times better (Theorem~\ref{thm:TotalOPTVAN}, asymptotic), compared to the optimal one without a shortlist. Moreover, even when the distribution is unknown, the designer can still attain a $\Theta(n)$ improvement simply by shortlisting to 31.62\% of the contestants, compared to having no shortlist (Corollary~\ref{coro:shortlistAlways}).

In Figure~\ref{fig:DistributionOpt1}, numerical results demonstrate that the asymptotic approximation ratio holds even for small $n$, and the performance of the asymptotic optimal contest is nearly identical to that of the actual optimal contest. This indicates that our algorithm yields a contest design that is not only near-optimal and highly effective at small scales but also guarantees asymptotic optimality and a strong approximation ratio. 

\begin{figure}[h]
\begin{subfigure}[ht]{0.30\textwidth}
    \centering
    \includegraphics[width=\textwidth]{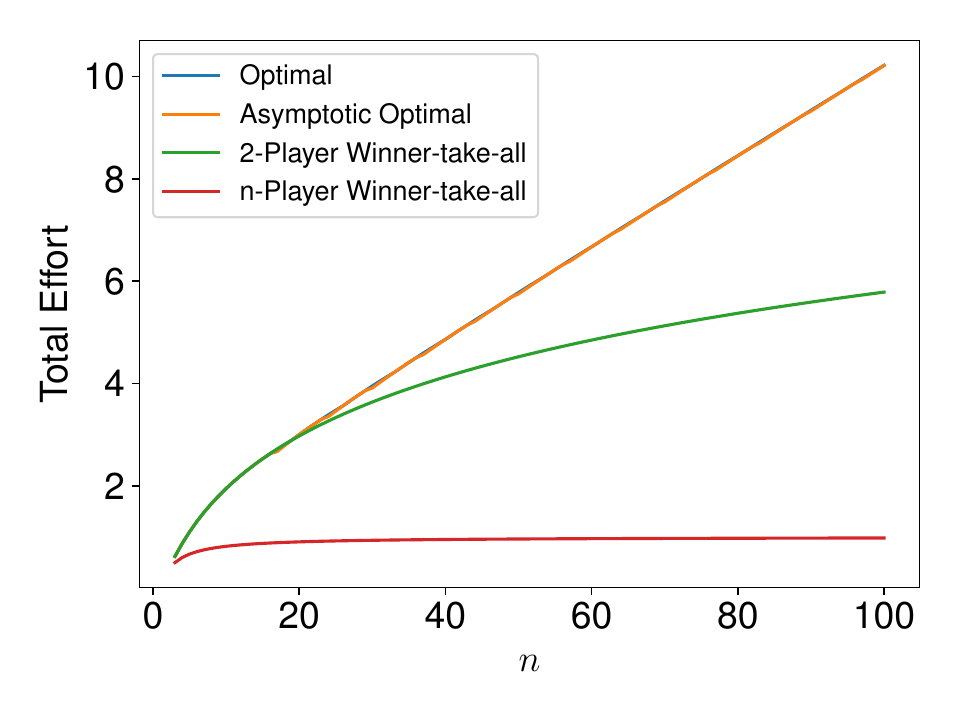}
    \subcaption{$F(x)=x$}
    \label{fig:disopt-a1}
    \end{subfigure}
%\hfill
          % 子图 (b)
\begin{subfigure}[ht]{0.30\textwidth}
    \centering
    \includegraphics[width=\textwidth]{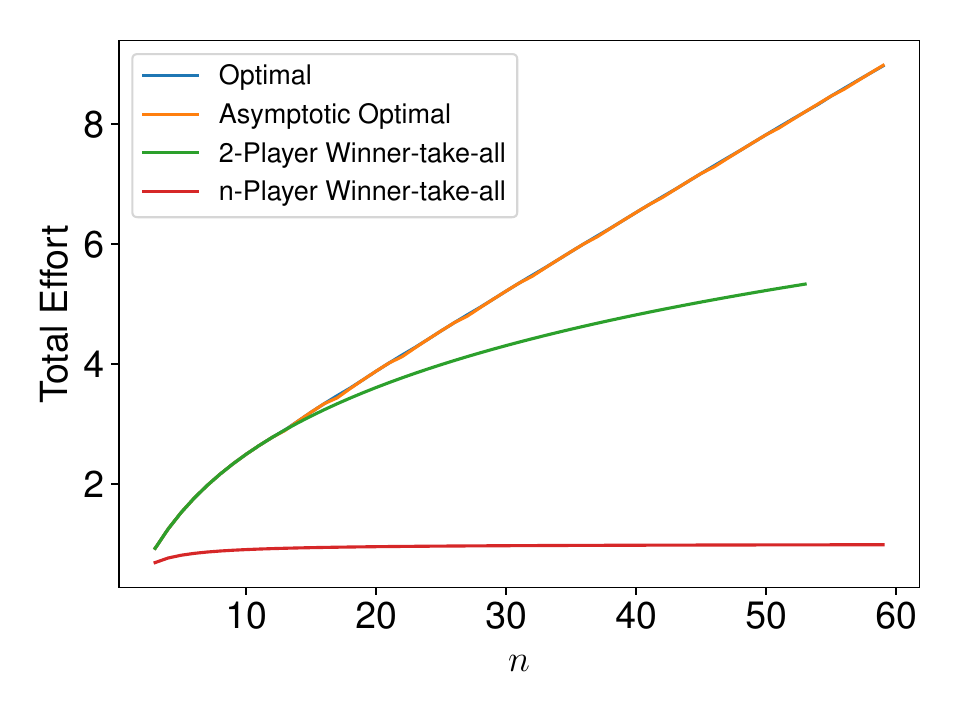}
    \subcaption{$F(x)=x^2$}
    \label{fig:disopt-b1}
    \end{subfigure}
%\hfill
\begin{subfigure}[ht]{0.30\textwidth}
    \centering
    \includegraphics[width=\textwidth]{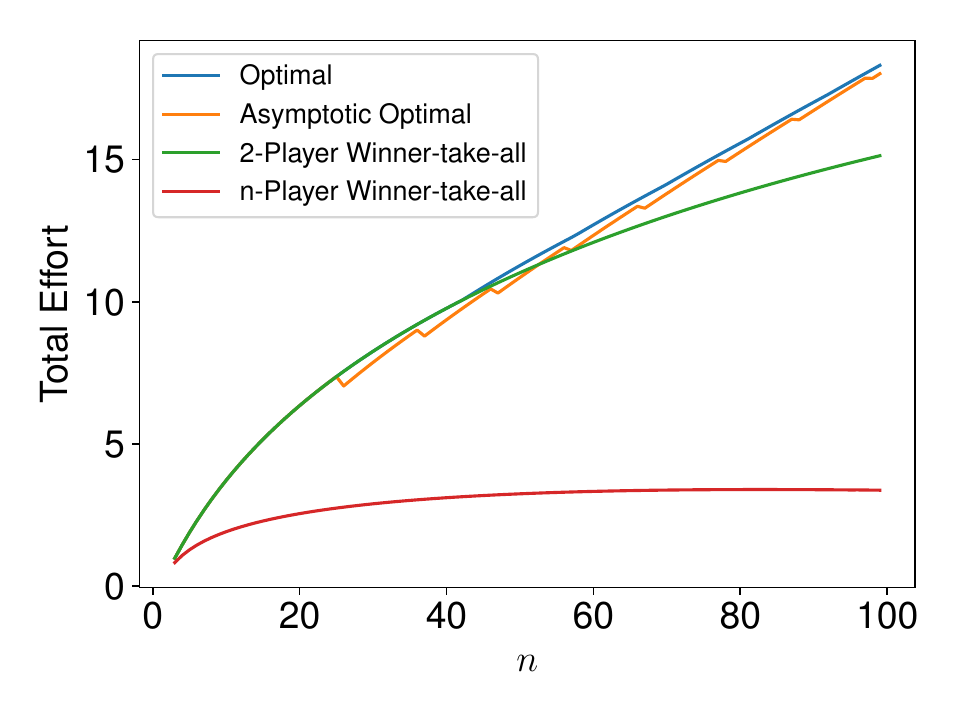}
    \subcaption{$F(x)=1-e^{-x}$}
    \label{fig:disopt-c1}
    \end{subfigure}
    
\caption{Total effort performance of different contest designs.}
\label{fig:DistributionOpt1}
\end{figure}

Finally, we summarize our findings into a practical guideline, as illustrated in Figure~\ref{fig:flowchart}.

\noindent \textbf{Contest Design Cheatsheet.} 
%A complete simple contest with no more than $31.62\%$ of contestant admitted. Small scale? YES $\to$ Find optimal capacity by enumeration in $O(n)$, get $\Theta(n)$ / Use 2-player winner-take-all, get $\Theta(\log n)$. NO $\to$ Distribution known? NO $\to$ Admit $31.62\%n$, then get $\Theta(n)$. YES $\to$ Solve  asymptotic slope $k \leq 31.62\%$, admit $kn$, and get almost-optimal performance, $\Theta(n)$.
Is Distribution known? YES $\to$ Solve for the asymptotic slope $k \leq 31.62\%$, admit $kn$, and achieve almost-optimal performance, get $\Theta(n)$. NO $\to$ Is $n$ small? YES $\to$ Use 2-Contestant winner-take-all, get $\Theta(\log n)$. NO $\to$ Admit $31.62\%n$, get $\Theta(n)$.

\section{Conclusion and Future Works}
\label{sec:conclusion}

In this work, we study the optimal design of a rank-order contest with a shortlist for two objectives: the maximum individual effort and total effort. The designer must determine both the shortlist size and the prize structure, with only shortlisted contestants exerting costly effort to compete for prizes.
First, we fully characterize the unique symmetric Bayesian Nash equilibrium of admitted contestants. Next, we provide a detailed characterization of the optimal contests with a shortlist for both objectives. Finally, we compare the performance of optimal contests with and without a shortlist, establishing asymptotically tight bounds.

Several interesting directions remain for future research. First, when there are no restrictions on the cost function, the optimal contest with a shortlist remains unknown for the total effort. Second, investigating the approximation ratio in the general case, particularly for general cost functions, is a valuable problem. Lastly, How do contestant equilibria and optimal contest design, if there multiple contests with a shortlist?

\bibliographystyle{unsrtnat}
\bibliography{sample-bibliography}

\newpage
%\bibliographystyle{ACM-Reference-Format}
%\bibliography{sample-bibliography}

\appendix
\newpage
\section{Missing Proofs in Section \ref{sec:playerSBNE}}

\subsection*{Proof of Proposition~\ref{prop:posteriorBeliefs}}
\begin{proof}%[Proof of Proposition~\ref{prop:posteriorBeliefs}]
The observer knows her own ability as well as the promotion status of others. Therefore, the joint posterior probability density \(\beta_1(\mathbf{x})\) is given by:
\[
\beta_1(\mathbf{x}) = \frac{\Pr\left[ \bigwedge_{i=2}^{m} X_i = x_i, \bigwedge_{j=m+1}^{n} X_j \leq \min_{k \in [m]}(X_k), X_1 = x_1 \right]}{\Pr\left[ \bigwedge_{j=m+1}^{n} X_j \leq \min_{k \in [m]}(X_k), X_1 = x_1 \right]}.
\]

For the denominator, we apply the law of total probability over \(X_2, \dots, X_m\):
\[
\Pr\left[ \bigwedge_{j=m+1}^n X_j \leq \min_{k \in [m]}(X_k), X_1 = x_1 \right] =
\sum_{(x_2, \dots, x_m) \in \mathcal{D}_X} \Pr\left[ \bigwedge_{j=m+1}^n X_j \leq \min_{k \in [m]}(x_k) \right] \Pr\left[ \bigwedge_{i=1}^m X_i = x_i \right].
\]

Expanding the probabilities:
\[
\sum_{(x_2, \dots, x_m) \in \mathcal{D}_X} \prod_{j=m+1}^n \Pr\left[X_j \leq \min_{k \in [m]}(x_k)\right] \prod_{i=1}^m f(x_i).
\]

Next, we classify the cases based on the minimum ability among the admitted candidates:

\underline{Case 1:} Observer's ability is the lowest among the admitted (\(x^{(1)} \geq x_1\)).

In this case, \(\min_{k \in [m]}(x_k) = x_1\), so:
\[
\prod_{j=m+1}^n \Pr\left[X_j \leq \min_{k \in [m]}(x_k)\right] = \prod_{j=m+1}^n \Pr\left[X_j \leq x_1\right].
\]

Thus, the integral becomes:
\[
\begin{aligned}
& \int_{x^{(1)} \geq x_1} \prod_{j=m+1}^n \Pr\left[X_j \leq x_1\right] \prod_{i=1}^m f(x_i) \, d\mathbf{x_{-1}} \\
= & F^{n-m}(x_1) f(x_1) \int_{x^{(1)} \geq x_1} \prod_{i=1}^m f(x_i) \, d\mathbf{x_{-1}} \\
= & F^{n-m}(x_1) f(x_1) \prod_{i=1}^m \int_{0}^{x_1} f(x_i) \, dx_i \\
= & F^{n-m}(x_1) (1 - F(x_1))^{m-1} f(x_1).
\end{aligned}
\]

\underline{Case 2:} Observer's ability is not the lowest among the admitted (\(x^{(1)} < x_1\)).

In this case, \(\min_{k \in [m]}(x_k) = x^{(1)}\), so:
\[
\prod_{j=m+1}^n \Pr\left[X_j \leq \min_{k \in [m]}(x_k)\right] = \prod_{j=m+1}^n \Pr\left[X_j \leq x^{(1)}\right].
\]

The integral becomes:
\[
\begin{aligned}
& \int_{x^{(1)} < x_1} \prod_{j=m+1}^n \Pr\left[X_j \leq x^{(1)}\right] \prod_{i=1}^m f(x_i) \, d\mathbf{x_{-1}} \\
= & (m-1) \int_{x_2 < x_1} \int_{\mathbf{x_{-1,2}} \geq x_2} \prod_{j=m+1}^n \Pr\left[X_j \leq x_2\right] \prod_{i=1}^m f(x_i) \, d\mathbf{x_{-1,2}} dx_2 \\
= & (m-1) f(x_1) \int_{0}^{x_1} F^{n-m}(x_2) \int_{\mathbf{x_{-1,2}} \geq x_2} \prod_{i=3}^m f(x_i) \, d\mathbf{x_{-1,2}} f(x_2) \, dx_2 \\
= & (m-1) f(x_1) \int_{0}^{x_1} F^{n-m}(x_2) \left(\prod_{i=3}^m \int_{x_2}^1 f(x_i) \, dx_i\right) f(x_2) \, dx_2 \\
= & (m-1) f(x_1) \int_{0}^{x_1} F^{n-m}(x_2) (1 - F(x_2))^{m-2} f(x_2) \, dx_2 \\
= & (m-1) f(x_1) \int_{0}^{F(x_1)} t^{n-m} (1-t)^{m-2} \, dt.
\end{aligned}
\]

Combining the results from both cases, the denominator becomes:
\[
\int_{x^{(1)} < x_1} \prod_{j=m+1}^n \Pr\left[X_j \leq x^{(1)}\right] \prod_{i=1}^m f(x_i) \, d\mathbf{x_{-1}} + \int_{x^{(1)} \geq x_1} \prod_{j=m+1}^n \Pr\left[X_j \leq x_1\right] \prod_{i=1}^m f(x_i) \, d\mathbf{x_{-1}}.
\]

This simplifies to:
\[
= \left[F^{n-m}(x_1)(1 - F(x_1))^{m-1} + (m-1) \int_{0}^{F(x_1)} t^{n-m} (1-t)^{m-2} \, dt\right] f(x_1).
\]

Or equivalantly:
\[
\binom{n-1}{m-1}^{-1}J(F, n, m, x_1) f(x_1).
\]

For the numerator, using the conditional probability formula:
\[
\begin{aligned}
& \Pr\left[\bigwedge_{i=2}^m X_i = x_i, \bigwedge_{j=m+1}^n X_j \leq \min_{k \in [m]}(X_k), X_1 = x_1\right] \\
= & \Pr\left[\bigwedge_{j=m+1}^n X_j \leq \min_{k \in [m]}(x_k) \mid \bigwedge_{i=1}^m X_i = x_i\right] \Pr\left[\bigwedge_{i=1}^m X_i = x_i\right] \\
= & \left(\prod_{j=m+1}^n \Pr\left[X_j \leq \min_{k \in [m]}(x_k)\right]\right) \prod_{i=1}^m f(x_i).
\end{aligned}
\]

We also analyze by cases. When observer's ability is the lowest among the admitted (\(x^{(1)} \geq x_1\)), the numerator becomes: 
\[
    \begin{aligned}
        & \left ( \prod_{j=m+1}^{n} Pr\left [ X_j \leq x_1 \right ] \right ) \prod_{i=1}^mf(x_i) \\
        = & F^{n-m}(x_1)\prod_{i=1}^mf(x_i)  
    \end{aligned}
\]

Similarly, when observer's ability is not the lowest among the admitted (\(x^{(1)} < x_1\)), the numerator becomes $F^{n-m}(x^{(1)})\prod_{i=1}^mf(x_i)$. 

Combining the numerator and denominator, we obtain:
\[
    \beta_1(\mathbf{x}) =   
    \begin{cases} 
    \frac{\binom{n-1}{m-1}F^{n-m}(x^{(1)})\prod_{i=2}^{m}f(x_i)}{J(F,n,m,x_1)} & \text{if } x^{(1)} \leq x_1, \\
    \frac{\binom{n-1}{m-1}F^{n-m}(x_1)\prod_{i=2}^{m}f(x_i)}{J(F,n,m,x_1)} & \text{if } x^{(1)} > x_1.
    \end{cases},
\]

which completes the proof.
\end{proof}

\begin{lemma} \label{lem:betaRep}
Let the incomplete Beta function be defined as 
\(
B_x(a, b) = \int_{0}^{x} t^{a-1}(1-t)^{b-1} \, dt,
\)
and the normalized incomplete Beta function as 
\(
I_x(a, b) = \frac{B_x(a, b)}{B(a, b)}
\),
where \(B(a, b)\) is the Beta function. For a random variable \(X \sim \text{Binomial}(n, p)\), the following equality holds:
\[
\Pr(X \leq k) = I_{1-p}(n-k, k+1) = 1 - I_p(k+1, n-k).
\]
\end{lemma}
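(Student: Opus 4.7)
The plan is to prove this classical binomial/Beta identity in two steps: first, reduce the two claimed equalities to a single one using the standard symmetry of the regularized incomplete Beta function; second, establish that single equality via a clean order-statistic coupling.

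First, I would verify the symmetry $I_{1-p}(n-k, k+1) = 1 - I_p(k+1, n-k)$ directly from the definition. Starting with $B_{1-p}(n-k, k+1) = \int_0^{1-p} t^{n-k-1}(1-t)^k\,dt$, I apply the substitution $u = 1-t$ (so $du = -dt$ and the limits $t=0, t=1-p$ become $u=1, u=p$) to get $\int_p^1 u^k (1-u)^{n-k-1}\,du = B(k+1, n-k) - B_p(k+1, n-k)$. Dividing by $B(n-k, k+1) = B(k+1, n-k)$ yields the symmetry immediately. This shows the two right-hand sides in the lemma are equal, so it suffices to prove only one of them.

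Next, I would prove $\Pr(X \leq k) = 1 - I_p(k+1, n-k)$, equivalently $\Pr(X \geq k+1) = I_p(k+1, n-k)$, by a coupling argument. Let $U_1, \ldots, U_n$ be i.i.d. Uniform$[0,1]$ and define $X := |\{i : U_i \leq p\}|$, which is Binomial$(n, p)$. Then the event $\{X \geq k+1\}$ is exactly $\{U_{(k+1)} \leq p\}$, where $U_{(k+1)}$ denotes the $(k+1)$-th order statistic. A standard computation shows $U_{(k+1)}$ has density $\frac{n!}{k!(n-k-1)!}\,u^k (1-u)^{n-k-1}$ on $[0,1]$, which is precisely the Beta$(k+1, n-k)$ density since $B(k+1, n-k)^{-1} = \frac{n!}{k!(n-k-1)!}$. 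Integrating from $0$ to $p$ gives $\Pr(U_{(k+1)} \leq p) = I_p(k+1, n-k)$, as desired.

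There is no real obstacle here, since this identity is textbook material and each step is elementary; the only decision is stylistic. An alternative would be to prove the binomial--Beta identity directly by repeated integration by parts on $\int_0^p t^k(1-t)^{n-k-1}\,dt$, peeling off one factor at a time to unfold the binomial tail sum; this is more mechanical but avoids introducing auxiliary uniform random variables. I would prefer the order-statistic proof for brevity and conceptual clarity.
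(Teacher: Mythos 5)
Your proof is correct. Note, however, that the paper gives \emph{no} proof of Lemma~\ref{lem:betaRep}: it is stated bare as a classical binomial--Beta identity and is invoked only inside the proofs of Lemma~\ref{lem:normal} and Corollary~\ref{prop:marginalBelief}, so there is no in-paper argument to compare against, and your contribution is to supply the missing derivation. Both halves of your argument check out: the substitution $u=1-t$ correctly turns $B_{1-p}(n-k,k+1)$ into $\int_p^1 u^{k}(1-u)^{n-k-1}\,du = B(k+1,n-k)-B_p(k+1,n-k)$, and together with the symmetry $B(n-k,k+1)=B(k+1,n-k)$ this reduces the lemma to a single equality; the coupling step is then sound, since with $X=|\{i: U_i\le p\}|$ the event $\{X\ge k+1\}$ coincides exactly with $\{U_{(k+1)}\le p\}$, and the $(k+1)$-th order statistic of $n$ i.i.d.\ uniforms has density $\frac{n!}{k!\,(n-k-1)!}u^{k}(1-u)^{n-k-1}$, i.e.\ $\mathrm{Beta}(k+1,n-k)$, giving $\Pr(X\le k)=1-I_p(k+1,n-k)$. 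One pedantic caveat worth recording: the order-statistic argument (and indeed the statement itself) requires $0\le k\le n-1$ so that both Beta parameters are positive --- at $k=n$ the expression $I_p(n+1,0)$ is undefined, though $\Pr(X\le n)=1$ trivially. This range covers every use in the paper, where the identity is applied with $X\sim \mathrm{Binomial}(n-1,1-F(x_i))$ and $k=m-2$, so the parameters are $n-m+1\ge 1$ and $m-1\ge 1$ for $2\le m\le n$. Your alternative route via repeated integration by parts on $\int_0^p t^{k}(1-t)^{n-k-1}\,dt$ would work equally well and is more self-contained analytically, but the order-statistic proof is shorter and explains \emph{why} the identity holds, so your stylistic preference is well placed.
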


\begin{lemma}[Interpretation of the Normalizer] \label{lem:normal} The normalization factor $J(F,n,m,x_i)$ is the prior probability of contestant $i$ advancing into the second stage, i.e., the following equality holds:
\[
    J(F,n,m,x_i) = \sum_{j=1}^{m}\binom{n-1}{j-1}F^{n-j}(x_i)(1-F(x_i))^{j-1},
\]
which also gives that $J(F,n,m,x_i)$ is an increasing function of shortlist capacity $m$.    
\end{lemma}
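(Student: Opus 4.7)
The plan is to prove Lemma~\ref{lem:normal} by giving a probabilistic interpretation of $J(F,n,m,x_i)$ and then verifying that the two stated expressions for it agree. Both the monotonicity claim and the closed-form identity then drop out essentially for free.

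First, I would establish the identity directly at the level of the sum. The natural probabilistic meaning of $\sum_{j=1}^{m}\binom{n-1}{j-1}F^{n-j}(x_i)(1-F(x_i))^{j-1}$ is the prior probability that contestant $i$ (with ability $x_i$) ranks in the top $m$ among the $n$ contestants: the $j$-th summand counts the event that exactly $j-1$ of the other $n-1$ contestants have ability strictly greater than $x_i$, which happens with probability $\binom{n-1}{j-1}(1-F(x_i))^{j-1}F(x_i)^{n-j}$. Summing over $j=1,\dots,m$ yields the advancement probability. From this representation the monotonicity assertion is immediate: increasing $m$ simply adds non-negative terms.

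Next I need to show this probabilistic quantity equals the original definition
\[
J(F,n,m,x) = \binom{n-1}{m-1}F^{n-m}(x)(1-F(x))^{m-1}+\binom{n-1}{m-1}(m-1)\int_0^{F(x)}t^{n-m}(1-t)^{m-2}\,dt.
\]
The integral is an incomplete Beta function $B_{F(x)}(n-m+1,m-1)$. Using Lemma~\ref{lem:betaRep} (the Beta–Binomial identity), for $Z\sim\mathrm{Binomial}(n-1,1-F(x))$ one has $I_{F(x)}(n-m+1,m-1)=\Pr(Z\leq m-2)$. Since $\binom{n-1}{m-1}(m-1)\cdot B(n-m+1,m-1)=1$ (an elementary factorial cancellation), the integral term rewrites as $\Pr(Z\leq m-2)=\sum_{k=1}^{m-1}\binom{n-1}{k-1}(1-F(x))^{k-1}F(x)^{n-k}$ after reindexing $k=j+1$. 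Adding the isolated $k=m$ term that forms the first summand of $J$ completes the sum up to $k=m$, matching the claimed expression exactly.

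The step that requires some care — though not genuinely deep — is the incomplete Beta calculation and the constant-factor cancellation; I would state the Beta–Binomial identity explicitly (as already recorded in Lemma~\ref{lem:betaRep}) and verify $\binom{n-1}{m-1}(m-1)B(n-m+1,m-1)=1$ in one line so that the algebraic and probabilistic sides line up transparently. Beyond that, the monotonicity in $m$ is a one-line consequence of the sum form, and the ``interpretation'' assertion of the lemma is precisely the probabilistic derivation given above, so no further argument is needed.
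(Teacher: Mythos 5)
Your proposal is correct and follows essentially the same route as the paper's proof: both hinge on the Beta--Binomial identity of Lemma~\ref{lem:betaRep} to convert the incomplete Beta integral into the tail sum $\Pr(Z\leq m-2)$ for $Z\sim \mathrm{Binomial}(n-1,1-F(x_i))$, together with the factorial cancellation $\binom{n-1}{m-1}(m-1)B(n-m+1,m-1)=1$, and both obtain monotonicity in $m$ from the positivity of the summands. The only difference is expository direction (you start from the probabilistic interpretation of the sum and work back to the definition of $J$, whereas the paper starts from $J$ and derives the sum), which is immaterial.
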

\begin{proof}
By definition, The expression for \( J(F, n, m, x_i) \) is given as:
\[
J(F, n, m, x_i) = \binom{n-1}{m-1}F^{n-m}(x_i)(1-F(x_i))^{m-1} + \binom{n-1}{m-1}(m-1)\int_0^{F(x_i)}t^{n-m}(1-t)^{m-2} \, dt.
\]

The second term on the right-hand side can be rewritten as:
\[
\begin{aligned}
    & \frac{(n-1)!}{(m-2)!(n-m)!} B_{F(x_i)}(n-m+1, m-1) \\
    = & \left[ \frac{\Gamma(n-m+1)\Gamma(m-1)}{\Gamma(n)} \right]^{-1} B_{F(x_i)}(n-m+1, m-1) \\
    = & \frac{B_{F(x_i)}(n-m+1, m-1)}{B(n-m+1, m-1)} \\
    = & I_{F(x_i)}(n-m+1, m-1),
\end{aligned}
\]

where \( B_{F(x_i)}(n-m+1, m-1) \) is the incomplete beta function. According to the relationship provided by Lemma~\ref{lem:betaRep}, the above expression corresponds to the value of \( \Pr(X \leq m-2) \), where \( X \sim Binomial(n-1, 1-F(x_i)) \). That is:
\[
\sum_{j=0}^{m-2} \binom{n-1}{j} F^{n-1-j}(x_i)(1-F(x_i))^j.
\]

Substituting back into the original expression and re-indexing terms, we obtain:
\[
J(F, n, m, x_i) = \sum_{j=1}^{m}\binom{n-1}{j-1}F^{n-j}(x_i)(1-F(x_i))^{j-1},
\]

which represents the probability that the ability of contestant \( i \) ranks among the top \( m \) contestants.

Since each term in the summation is positive, it follows that \( J(F, n, m, x_i) \) is monotonically increasing respect to shortlist capacity $m$, which completes the proof.
\end{proof}

\subsection*{Proof of Corollary~\ref{prop:marginalBelief}}
\begin{proof}
We perform a case analysis based on the value of \( z \). Since the denominator \( J(F,n,m,x_1) \) is independent of \( z \), we omit it in our discussion and focus on the marginalization of the numerator:

\underline{Case 1:} \( z \leq x_1 \). In this scenario, \( x^{(1)} \leq x_1 \). We further classify based on the relationship between \( z \) and \( x^{(1)} \), dividing the marginalization integral into two parts:

The first part, when \( z = x^{(1)} \), the marginalization of the numerator is:
\[
\begin{aligned}
    & \binom{n-1}{m-1} \int_{\mathbf{x_{-1,2}} \geq z} F^{n-m}(z) \prod_{k=2}^m f(x_k) \, d\mathbf{x_{-1,2}} \\ 
    & = \binom{n-1}{m-1} F^{n-m}(z) \left ( \int_{z}^{1} f(x) \, dx \right )^{m-2} f(z) \\
    & = \binom{n-1}{m-1} F^{n-m}(z) (1-F(z))^{m-2} f(z).
\end{aligned}
\]

The second part, when \( z > x^{(1)} \): Without loss of generality, we reorder the indices so that the competitor corresponding to \( x^{(1)} \) is indexed by $3$. The marginalization expression then becomes:
\[
\begin{aligned}
    & \binom{n-1}{m-1}(m-2) \int_0^z \left ( \underbrace{\int_{x_3}^{1} \cdots \int_{x_3}^{1}}_{m-3} F^{n-m}(x_3) \prod_{k=2}^m f(x_k) \, d\mathbf{x_{-1,2,3}} \right ) \, dx_3 \\
    = & \binom{n-1}{m-1}(m-2) f(z) \int_0^{z} F^{n-m}(x_3) (1-F(x_3))^{m-3} f(x_3) \, dx_3 \\
    = & \binom{n-1}{m-1}(m-2) f(z) \int_0^{F(z)} t^{n-m} (1-t)^{m-3} \, dt.
\end{aligned}
\]

By summing these two parts and factoring out the common terms, we obtain the expression for the marginalized numerator:
\[
\binom{n-1}{m-1} \left ( F^{n-m}(z)(1-F(z))^{m-2} + (m-2) \int_0^{F(z)} t^{n-m} (1-t)^{m-3} \, dt \right ) f(z).
\]

\underline{Case 2:} when \( z \geq x_1 \). Similarly, if the ability of contestant $2$ is not the lowest among the other advancing contestants, we re-index the contestant corresponding to \( x^{(1)} \) to 3. We then classify the discussion based on the relationship between \( x^{(1)} \) and \( x_1 \), splitting the marginalization integral into two region:

The first region, when \( x^{(1)} \leq x_1 \):

In this case, the marginalization of the numerator becomes:
\[
\begin{aligned}
    & \binom{n-1}{m-1}(m-2)\int_0^{x_1} \left( \underbrace{\int_{x_3}^{1} \cdots \int_{x_3}^{1}}_{m-3} F^{n-m}(x_3) \prod_{k=2}^m f(x_k) \, d\mathbf{x_{-1,2,3}}\right) \, dx_3 \\
    = & \binom{n-1}{m-1}(m-2)f(z)\int_0^{F(x_1)} t^{n-m}(1-t)^{m-3} \, dt
\end{aligned}
\]

The second region, when \( x^{(1)} > x_1 \):

We further classify based on the relationship between \( x^{(1)} \) and \( z \). When \( x_1 \leq x^{(1)} < z \), the marginalization of the numerator is given by:
\[
\begin{aligned}
    & \binom{n-1}{m-1}(m-2)\int_{x_1}^{z} \left( \underbrace{\int_{x_3}^{1} \cdots \int_{x_3}^{1}}_{m-3} F^{n-m}(x_1) \prod_{k=2}^m f(x_k) \, d\mathbf{x_{-1,2,3}}\right) \, dx_3 \\
    = & \binom{n-1}{m-1}(m-2)f(z)F^{n-m}(x_1)\int_{x_1}^{z} (1-t)^{m-3}\, dt
\end{aligned}
\]

When \( z \) is the smallest ability among the remaining contestants, i.e., \( z = x^{(1)} \), the marginalization of the numerator becomes:
\[
\begin{aligned}
    & \binom{n-1}{m-1}\int_{\mathbf{x_{-1,2}} \geq z} F^{n-m}(x_1)  \prod_{k=2}^m f(x_k) \, d\mathbf{x_{-1,2}}\\
    = & \binom{n-1}{m-1}f(z)F^{n-m}(x_1)(1-F(z))^{m-2} \\
    = & \binom{n-1}{m-1}(m-2)f(z)F^{n-m}(x_1)\int_z^{1} (1-t)^{m-3} \, dt
\end{aligned}
\]

Adding the two components together, we obtain the marginalization of the second part's numerator: 
\[
\begin{aligned}
    & \binom{n-1}{m-1}(m-2)f(z)F^{n-m}(x_1)\int_0^{1} (1-t)^{m-3} \, dt \\
    = & \binom{n-1}{m-1} F^{n-m}(x_1)(1-F(x_1))^{m-2}
\end{aligned}
\]

In summary, we derive the posterior probability density function:
\[
\beta_1(z) =   
\begin{cases} 
\frac{\binom{n-1}{m-1}\left( F^{n-m}(z)(1-F(z))^{m-2}+(m-2)\int_0^{F(z)}t^{n-m}(1-t)^{m-3}\, dt \right)f(z)}{J(F,n,m,x_1)} & \text{if } z \leq x_1, \\
\frac{\binom{n-1}{m-1}\left(  F^{n-m}(x_1)(1-F(x_1))^{m-2} +(m-2)\int_0^{F(x_1)} t^{n-m}(1-t)^{m-3}\, dt \right) f(z)}{J(F,n,m,x_1)} & \text{if } z > x_1.
\end{cases}
\]

Since each term in the expression is continuous, it is easy to verify that \(\lim_{z \rightarrow x_i^{+}} \beta_1(z) = \beta_1(x_i)\), thus \(\beta_1\) is continuous.

We then compute the expression for \(\Pr_{\beta_i}\). When \(z \leq x_i\), the integral of the numerator of \(\beta_1(z)\) over \((0, z]\) is:
\[
\begin{aligned}
    & \binom{n-1}{m-1} \left[ \int_0^{z} F^{n-m}(t)(1-F(t))^{m-2} f(t)\, dt + (m-2) \int_0^{z} \int_0^{F(t)} p^{n-m}(1-p)^{m-3}\, dp \, f(t) \, dt \right] \\
    = & \binom{n-1}{m-1} \left[ \int_0^{F(z)} t^{n-m}(1-t)^{m-2} \, dt + (m-2) \int_0^{z} B_{F(t)}(n-m+1, m-2) \, dF(t) \right] \\
    = & \binom{n-1}{m-1} \left[ B_{F(z)}(n-m+1, m-1) + (m-2) \int_0^{z} B_{F(t)}(n-m+1, m-2) \, dF(t) \right],
\end{aligned}
\]
where the last equality follows from the definition of incomplete beta function \(B_x(a,b) = \int_0^x t^{a-1}(1-t)^{b-1} \, dt\) as defined in Lemma~\ref{lem:betaRep}.

When \(z > x_i\), only the last term \(f(z)\) in the expression for \(\beta_1(z)\) is related to \(z\), thus:
\[
\begin{aligned}
    \Pr_{\beta_1}(X_2 \leq z) & = \Pr_{\beta_1}( X_2 \leq x_1) + \int_{x_1}^z \beta_1(t) \, dt \\
    & = \Pr_{\beta_1}(X_2 \leq x_1) + \int_{x_1}^z f(t) \, dt \frac{\beta_1(z)}{f(z)} \\
    & = \Pr_{\beta_1}(X_2 \leq x_1) + (F(z) - F(x_1)) \frac{\beta_1(z)}{f(z)}.
\end{aligned}
\]

In summary, we obtain the expression for the posterior cumulative probability distribution:
\[
\Pr_{\beta_1}(X_2 \leq z) =   
\begin{cases} 
\frac{\binom{n-1}{m-1} \left[ B_{F(z)}(n-m+1, m-1) + (m-2) \int_{0}^{z} B_{F(t)}(n-m+1, m-2) \, dF(t) \right]}{J(F,n,m,x_1)} & \text{if } z \leq x_1, \\
\Pr_{\beta_1}(X_2 \leq x_1) + \frac{\binom{n-1}{m-1} (F(z) - F(x_1)) \left[ F^{n-m}(x_1)(1-F(x_1))^{m-2} + (m-2) B_{F(x_1)}(n-m+1, m-2) \right]}{J(F,n,m,x_1)} & \text{if } z > x_1.
\end{cases}
\]

Since \(\beta_1\) is continuous, its integral \(\Pr_{\beta_i}\) is also continuous, completing the proof.
\end{proof}

\begin{lemma}\label{lem:StoDom}
    For two continuous functions \( f(x), g(x) \geq 0 \), if the following conditions are satisfied:
    \begin{enumerate}
        \item \( g(x) \) is non-decreasing.
        \item \( \int_0^{+\infty} f(x) \, dx = \int_0^{+\infty} g(x) \, dx = \int_0^{+\infty} g(x) f(x) \, dx = 1 \),
    \end{enumerate}
    then for all \( a \geq 0 \), the inequality 
    \[
    \int_0^{a} f(x) \, dx \geq \int_0^a g(x) f(x) \, dx
    \]
    holds.
\end{lemma}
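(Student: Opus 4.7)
The plan is to rewrite the desired inequality in the equivalent form
\[
\int_0^a \bigl(1-g(x)\bigr)f(x)\,dx \;\geq\; 0,
\]
and exploit the fact that the ``defect'' function $h(x):=(1-g(x))f(x)$ integrates to zero over $[0,\infty)$ (by the third condition in hypothesis~2) while changing sign at most once (because $g$ is monotone and $f\geq 0$).

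More precisely, I would introduce the crossing level
\[
c \;:=\; \inf\bigl\{x \geq 0 \;:\; g(x) \geq 1\bigr\},
\]
with the convention $c=+\infty$ if the set is empty. Monotonicity of $g$ then guarantees that $g(x)\leq 1$ on $[0,c)$ and $g(x)\geq 1$ on $(c,\infty)$, so $h(x)\geq 0$ on $[0,c)$ and $h(x)\leq 0$ on $(c,\infty)$. If $c=+\infty$, then $g\leq 1$ on the whole half-line together with $\int g f = \int f$ forces $(1-g)f\equiv 0$ almost everywhere on the support of $f$, and the inequality is trivial; similarly if $g$ is identically $\geq 1$ there.

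In the generic case the argument splits into two cases according to the position of $a$ relative to $c$. If $a\leq c$, then on $[0,a]$ the integrand $(1-g)f$ is nonnegative, so $\int_0^a (1-g)f\,dx \geq 0$ immediately. If $a>c$, I would use the vanishing total integral:
\[
\int_0^a (1-g(x))f(x)\,dx
\;=\; \underbrace{\int_0^\infty (1-g(x))f(x)\,dx}_{=\,0}
\;-\; \int_a^\infty (1-g(x))f(x)\,dx,
\]
and the second integral is $\leq 0$ because $g(x)\geq g(c)\geq 1$ for $x\geq a\geq c$. Subtracting a nonpositive quantity gives a nonnegative result, completing the proof.

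The argument is essentially a one-crossing (or ``single-crossing'') lemma, and I do not expect any serious obstacle; the only delicate point is confirming that the crossing point $c$ is well-defined and that the pathological cases ($g$ never reaches $1$, or $g$ is identically $\geq 1$) are correctly absorbed by the normalization constraint $\int g f\,dx = \int f\,dx = 1$. No integrability issues arise because both $f$ and $gf$ are given to be probability densities on $[0,\infty)$.
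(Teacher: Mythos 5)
Your proof is correct and takes essentially the same route as the paper: both reduce the claim to $\int_0^a\bigl(1-g(x)\bigr)f(x)\,dx\geq 0$ and exploit that $(1-g)f$ changes sign at most once (by monotonicity of $g$) while integrating to zero over $[0,+\infty)$. The paper packages this as unimodality of $\varphi(a)=\int_0^a(1-g(x))f(x)\,dx$ with $\varphi(0)=\varphi(+\infty)=0$, whereas your explicit crossing point $c$ and tail decomposition make the same single-crossing argument concrete — and in fact treat the degenerate cases ($g$ never reaching $1$, or $g\geq 1$ throughout) more carefully than the paper's phrasing does.
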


\begin{proof}
    Define 
    \[
    \varphi(a) = \int_0^a \big(1 - g(x)\big) f(x) \, dx.
    \]
    
    Differentiating \(\varphi(a)\), we have
    \[
    \varphi'(a) = \big(1 - g(a)\big) f(a).
    \]
    
    Since \( g(x) \) is non-decreasing and \(\int_0^{+\infty} g(x) \, dx = 1\), there exists a finite point \( a' \neq +\infty \) such that \(\varphi'(a) = 0\). Moreover, for \( a < a' \), \(\varphi'(a) > 0\), and for \( a > a' \), \(\varphi'(a) < 0\). This implies that \(\varphi(a)\) is a unimodal function, achieving its minimum only at the endpoints of the domain.

    Evaluating \(\varphi(a)\) at the endpoints, we have:
    \[
    \varphi(0) = 0,
    \]
    and
    \[
    \varphi(+\infty) = \int_0^{+\infty} \big(1 - g(x)\big) f(x) \, dx = \int_0^{+\infty} f(x) \, dx - \int_0^{+\infty} g(x) f(x) \, dx = 1 - 1 = 0.
    \]
    
    Therefore \(\varphi(a)\) is nonnegative for all \( a \geq 0 \), and we conclude that
    \[
    \int_0^a \big(1 - g(x)\big) f(x) \, dx \geq 0.
    \]
    
    This implies
    \[
    \int_0^a f(x) \, dx \geq \int_0^a g(x) f(x) \, dx,
    \]
    which completes the proof.
\end{proof}

\subsection*{Proof of Proposition~\ref{prop:stoDomPos}}
\begin{proof}
    Define the function:
    \[
    q_{x_i}(z) :=
    \begin{cases} 
    \frac{F^{n-m}(z)(1-F(z))^{m-2}+(m-2)\int_0^{F(z)}t^{n-m}(1-t)^{m-3}\, dt}{\binom{n-1}{m-1}^{-1}J(F,n,m,x_i)} & \text{if } z \leq x_i, \\
    \frac{F^{n-m}(x_i)(1-F(x_i))^{m-2} +(m-2)\int_0^{F(x_i)} t^{n-m}(1-t)^{m-3}\, dt}{\binom{n-1}{m-1}^{-1}J(F,n,m,x_i)} & \text{if } z > x_i.
    \end{cases}
    \]

    The posterior belief \(\beta_i\) can be rewritten as \(\beta_i(z) = q_{x_i}(z)f(z)\).

    To apply Lemma~\ref{lem:StoDom}, we will prove that \(q_{x_i}(z)\) is non-decreasing and \(\int_0^{+\infty}q_{x_i}(z)\,dz = 1\).

    First, consider monotonicity. When \(z > x_i\), \(q_{x_i}(z)\) is independent of \(z\), so \(q'_{x_i}(z) = 0\). When \(z \leq x_i\):
    \[
    \begin{aligned}
        q'_{x_i}(z) = & (n-m)F^{n-m-1}(z)(1-F(z))^{m-2}f(z) \\
                      & - (m-2)F^{n-m}(z)(1-F(z))^{m-3}f(z) + (m-2)F^{n-m}(z)(1-F(z))^{m-3}f(z) \\
                    = & (n-m)F^{n-m-1}(z)(1-F(z))^{m-2}f(z) \geq 0,
    \end{aligned}
    \]
    thus \(q_{x_i}'(z) \geq 0\) holds, $q_{x_i}(z)$ is non-decreasing, as desired.

    Next, consider the integral identity. Since the denominator of \(q_{x_i}(z)\) is independent of \(z\), this is equivalent to proving:
    \begin{equation}\label{eq:StoDomInt}
        \int_0^{+\infty}\binom{n-1}{m-1}^{-1}J(F,n,m,x_i)q_{x_i}(z)\,dz=\binom{n-1}{m-1}^{-1}J(F,n,m,x_i).
    \end{equation}

    Expanding the integral based on the piecewise definition of \(q_{x_i}(z)\):
    \[
    \begin{aligned}
        & \int_0^{x_i}\binom{n-1}{m-1}^{-1}J(F,n,m,x_i)q_{x_i}(z)\,dz + \int_{x_i}^{+\infty}\binom{n-1}{m-1}^{-1}J(F,n,m,x_i)q_{x_i}(z)\,dz \\
        = & \int_0^{F(x_i)}t^{n-m}(1-t)^{m-2}\, dt +(m-2)\int_0^{x_i}\int_0^{F(z)}t^{n-m}(1-t)^{m-3}\, dt\, dF(z) \\
        & + F^{n-m}(x_i)(1-F(x_i))^{m-2}+(m-2)(1-F(x_i))\int_0^{F(x_i)}t^{n-m}(1-t)^{m-3}\, dt.
    \end{aligned}
    \]

    Both sides of Equation~\eqref{eq:StoDomInt} are functions of \(x_i\). To prove these functions are equal on \([0,+\infty)\), differentiate both sides with respect to \(x_i\). On the left-hand side:
    \[
    \begin{aligned}
        & F^{n-m}(x_i)(1-F(x_i))^{m-2} + (m-2)\int_0^{F(x_i)}t^{n-m}(1-t)^{m-3}\, dt \\
        & + (n-m)F^{n-m-1}(x_i)(1-F(x_i))^{m-1} - (m-1)F^{n-m}(x_i)(1-F(x_i))^{m-2} \\
        & + (m-2)F^{n-m}(x_i)(1-F(x_i))^{m-2} -(m-2)\int_0^{F(x_i)}t^{n-m}(1-t)^{m-3}\, dt \\
        = & (n-m)F^{n-m-1}(x_i)(1-F(x_i))^{m-1}.
    \end{aligned}
    \]

    For the right-hand side, by definition:
    \[
    J(F, n, m, x_i) = \binom{n-1}{m-1}F^{n-m}(x_i)(1-F(x_i))^{m-1} + \binom{n-1}{m-1}(m-1)\int_0^{F(x_i)}t^{n-m}(1-t)^{m-2} \, dt.
    \]

    Differentiating with respect to \(x_i\) yields:
    \[
    \begin{aligned}
        & (n-m)F^{n-m-1}(x_i)(1-F(x_i))^{m-1} - (m-1)F^{n-m}(x_i)(1-F(x_i))^{m-2} \\
        & + (m-1)F^{n-m}(x_i)(1-F(x_i))^{m-2} \\
        = & (n-m)F^{n-m-1}(x_i)(1-F(x_i))^{m-1}.
    \end{aligned}
    \]

    Thus, for the Equation~\eqref{eq:StoDomInt}, \(\text{LHS}'_{x_i} = \text{RHS}'_{x_i}\). Additionally, at \(x_i = 0\), \(\text{LHS} = \text{RHS} = 0\). This two conditions gives that \(\text{LHS} = \text{RHS}\) holds, i.e., \(\int_0^{+\infty}q_{x_i}(z)\, dz = 1\).

    By Lemma~\ref{lem:StoDom}, we obtain, for all \(z \geq 0\):
    \[
    \begin{aligned}
        \int_0^z q_{x_i}(t)f(t) \,dt & \leq \int_0^{z}f(t)\, dt, \\
        \int_0^{z} \beta_i(t)\, dt & \leq F(z), \\
        \Pr_{\beta_i}(X_j \leq z) & \leq \Pr_{f}(X_j \leq z),
    \end{aligned}
    \]which completes the proof.
\end{proof}

\subsection*{Proof of Proposition~\ref{prop:StoDomAbi}}
\begin{proof}
    For the convenience of discussion, we define:
    \[
    \begin{aligned}
        I(F,n,m,x) :&= \binom{n-1}{m-1}^{-1}J(F,n,m,x) \\
        & = F^{n-m}(x)(1-F(x))^{m-1}+(m-1)\int_0^{F(x)}t^{n-m}(1-t)^{m-2}\, dt.
    \end{aligned}
    \]

    Then, $I(F,n,m,x)$'s derivative respect to $x$ is expressed as:
    \begin{equation}\label{eq:derivativeI}
    \begin{aligned}
        I'(F,n,m,x) = & (n-m)F^{n-m-1}(x)(1-F(x))^{m-1}f(x) \\
        & -(m-1)F^{n-m}(x)(1-F(x))^{m-2}f(x) \\
        & +(m-1)F^{n-m}(x)(1-F(x))^{m-2}f(x) \\
        = &(n-m)F^{n-m-1}(x)(1-F(x))^{m-1}f(x).
    \end{aligned}
    \end{equation}

    With the help of $I(F,n,m,x_j)$, the formula for $q_{x_j}(p)$ can be simplified to:
    \[
    q_{x_j}(p) =
    \begin{cases} 
    \frac{I(F,n-1,m-1,p)}{I(F,n,m,x_j)} & \text{if } p \leq x_j, \\
    \frac{I(F,n-1,m-1,x_j)}{I(F,n,m,x_j)} & \text{if } p > x_j.
    \end{cases}
    \]
    
    By definition, for contestant $j$ (so does contestant $i$):
    \[
    \Pr_{\beta_j}(X_k \leq z) = \int_0^{z}q_{x_j}(p)f(p)\, dp.
    \]

    Since $q_{x_j}(\cdot)$ is piece-wise, we classify the cases based on the value of $z$. 

    \underline{Case 1:} When $z \leq x_j$:

    In this case, only the denominator of $q_{x_j}(p)$ depends on $x_j$. Equation~\eqref{eq:derivativeI} shows that $I'(F,n,m,x)\geq 0$ for all $x\geq0$, $I(F,n,m,x)$ decreases with $x$. Therefore, $q_{x_j}(p) \geq q_{x_i}(p)$ holds for any $p \geq 0$, so we have $\int_0^{z}q_{x_i}(p)f(p)\, dp \leq \int_0^{z}q_{x_j}(p)f(p)\, dp$, i.e.,$\Pr_{\beta_i}(X_k \leq z) \leq \Pr_{\beta_j}(X_k \leq z)$, as desired. 

    \underline{Case 2:} When $z > x_j$:

    In this case, it suffices to prove that $\Pr_{\beta_j}(X_k \leq z)$ is an decreasing function of her ability $x_j$, which, by Leibniz integral rule, is to show that:
    \[
    \int_0^{z}\frac{\partial q_{x_j}(p)}{\partial x_j} f(p)\,dp \leq 0.
    \]

    We then proceed by dividing the integration interval into two parts, $(0,x_j]$ and $(x_j,z]$. Since after taking derivative, the denominator of $\partial q_{x_j}(p)/ \partial x_j$ is the same positive term for both interval and irrelevant to integration variable $z$, so we omit it in the discussion. 

    The first part, $\int_0^{x_j}\frac{\partial q_{x_j}(p)}{\partial x_j} f(p)\,dp$, the numerator of of $\partial q_{x_j}(p)/ \partial x_j$ is:
    \[
    \begin{aligned}
        & -I'(F,n,m,x_i)I(F,n-1,m-1,p) \\
        = & -(n-m)F^{n-m-1}(x_j)(1-F(x_j))^{m-1}f(x_j)\left [ F^{n-m}(p)(1-F(p))^{m-2}+\int_0^{F(p)}t^{n-m}(1-t)^{m-3}\,dt\right ]
    \end{aligned}
    \]

    The integration over $(0,x_j]$ then becomes:
    \begin{multline*}
        -(n-m)F^{n-m-1}(x_j)(1-F(x_j))^{m-1}f(x_j) \\
        \cdot \left [ \int_0^{F(x_j)}t^{n-m}(1-t)^{m-2}\, dt+ \int_0^{x} \int_0^{F(p)}t^{n-m}(1-t)^{m-3}\,dt f(p) \, dp\right ]
    \end{multline*}

    The second part, $\int_{x_j}^{z}\frac{\partial q_{x_j}(p)}{\partial x_j} f(p)\,dp$, the numerator of $\partial q_{x_j}(p)/ \partial x_j$ is: 
    \[
    \begin{aligned}
        & I'(F,n-1,m-1,x_j)I(F,n,m,x_j)-I'(F,n,m,x_j)-I'(F,n-1,m-1,x_j) \\
        = & (n-m)F^{n-m-1}(x_j)(1-F(x_j))^{m-2}f(x_j)\left [ F^{n-m}(x_j)(1-F(x_j))^{m-1}+\int_0^{F(x_j)}t^{n-m}(1-t)^{m-2}\,dt\right ] \\
        & -(n-m)F^{n-m-1}(x_j)(1-F(x_j))^{m-1}f(x_j)\left [ F^{n-m}(x_j)(1-F(x_j))^{m-2}+\int_0^{F(x_j)}t^{n-m}(1-t)^{m-3}\,dt\right ] \\
        = & (n-m)F^{n-m-1}(x_j)(1-F(x_j))^{m-2}f(x_j)
        \\
        & \cdot \left [\int_0^{F(x_j)}t^{n-m}(1-t)^{m-2}\,dt - (1-F(x_j))\int_0^{F(x_j)}t^{n-m}(1-t)^{m-3}\,dt\right ]
    \end{aligned}
    \]

    The integration over $(x_j,z]$ then becomes:
    \begin{multline*}
    (n-m)F^{n-m-1}(x_j)(1-F(x_j))^{m-2}f(x_j) \\
    \cdot \left [ (F(z)-F(x_j))\int_0^{F(x_j)}t^{n-m}(1-t)^{m-2}\,dt-(F(z)-F(x_j))(1-F(x_j))\int_0^{F(x_j)}t^{n-m}(1-t)^{m-3}\,dt\right]
    \end{multline*}

    For the terms inside the square brackets, we have:
    \[
    \begin{aligned}
        \leq & (F(z)-F(x_j))\int_0^{F(x_j)}t^{n-m}(1-t)^{m-2}\,dt \\
        \leq & (1-F(x_j))\int_0^{F(x_j)}t^{n-m}(1-t)^{m-2}\,dt
    \end{aligned}
    \]

    Combining two parts, we obtain that $I^2(F,n,m,x_j)\int_0^{z}\frac{\partial q_{x_j}(p)}{\partial x_j} f(p)\,dp$:
    \[
    \begin{aligned}
        \leq &  (n-m)F^{n-m-1}(x_j)(1-F(x_j))^{m-1}f(x_j) \\
        & \cdot \left[ \int_0^{F(x_j)}t^{n-m}(1-t)^{m-2}\,dt -\int_0^{F(x_j)}t^{n-m}(1-t)^{m-2}\, dt - \int_0^{x} \int_0^{F(p)}t^{n-m}(1-t)^{m-3}\,dt f(p)\, dt\right] \\
        = & (n-m)F^{n-m-1}(x_j)(1-F(x_j))^{m-1}f(x_j)\int_0^{x} \int_0^{F(p)}t^{n-m}(1-t)^{m-3}\,dt f(p)\, dt \leq 0.
    \end{aligned}
    \]

    Therefore, when $z > x_j$, $\int_0^{z}\frac{\partial q_{x_j}(p)}{\partial x_j} f(p)\,dp \leq 0$, holds for any ability value $x_j$, so we have $\int_0^{z}q_{x_i}(p)f(p)\, dp \leq \int_0^{z}q_{x_j}(p)f(p)\, dp$, i.e.,$\Pr_{\beta_i}(X_k \leq z) \leq \Pr_{\beta_j}(X_k \leq z)$, as desired. 

    Finally, after discussion by cases, we conclude that for all $z\geq0$, it holds that $\Pr_{\beta_i}(X_k \leq z) \leq \Pr_{\beta_j}(X_k \leq z)$, which completes the proof.
\end{proof}

\subsection*{Proof of Proposition~\ref{prop:ThreatenDesc}}
\begin{proof}
    It is suffice to prove $\Pr_{\beta_1}(X_{(l)} \leq x_1) \geq \Pr_{f}(X_{(l)}\leq x_1)$ for the first claim. We start by solving the expression for $\Pr_{\beta_1}(X_{(l)})$. 

    As stated in the proposition, the observer is re-indexed as $1$, we then appoint the top $l$ contestant (which induces a binomial factor $\binom{m-1}{l}$), re-index the $l^{\text{th}}$ strongest contestant as $2$ (which induces a coefficient $l$), and re-index the weakest contestant as $3$ (which induces a coefficient $m-l-1$). 
    
    Since $x^{(1)} \leq X_{(l)} \leq x_1$, therefore we have $\Pr_{\beta_1}(X_{(l)})$ equals: 
    \[
    \binom{m-1}{l}l(m-l-1)\int_0^{x_1}\underbrace{\int_{x_2}^1 \cdots \int_{x_2}^1}_{l-1} \int_0^{x_2} \underbrace{\int_{x_3}^{x_2} \cdots \int_{x_3}^{x_2}}_{m-l-2} \frac{\binom{n-1}{m-1}F^{n-m}(x_3)\prod_{i=2}^{m}f(x_i)}{J(F,n,m,x_1)} \, d\mathbf{x_{-1}}. 
    \]

    We then proceed by dealing with these integration from inside to outside. We omit coefficient and constant denominator $J(F,n,m,x_1)$ in the procedure.

    First, deal with last $m-l-1$ contestants:
    \[
    \begin{aligned}
    & \int_0^{F(x_2)} t^{n-m}(F(x_2)-t)^{m-l-2}\, dt \\
    = & \frac{(n-m)!(m-l-2)!}{(n-l-1)!} F(x_2)^{n-l-1},
    \end{aligned}
    \]where the equality gives by Lemma~\ref{lem:betaUpper}. 

    Then, the integration over top $l$ contestants gives:
    \[
    \begin{aligned}
        \int_0^{F(x_1)}t^{n-l-1}(1-t)^{l-1} \, dt 
    \end{aligned}
    \]

    Combining the coefficients, they becomes:
    \[
    \begin{aligned}
        & \frac{(m-1)!}{l!(m-l-1)!}l(m-l-1)\frac{(n-1)!}{(m-1)!(n-m)!}\frac{(n-m)!(m-l-2)!}{(n-l-1)!} \\
        = & \frac{(n-1)!}{(l-1)!(n-l-1)!} \\
        = & (n-1) \binom{n-2}{l-1}
    \end{aligned}
    \]

    Recovering the denominator, we get:
    \[
    \Pr_{\beta_1}(X_{(l)}) = \frac{(n-1)\binom{n-2}{l-1}\int_0^{F(x_1)} t^{n-l-1}(1-t)^{l-1}\, dt}{J(F,n,m,x_1)}.
    \]

    Specifically, when $m=n$, i.e., without shortlisting, $J(F,n,m,x_1)=1$, then $\Pr_{\beta_1}(X_{(l)})$ exactly correspond to the $l^{\text{th}}$ order-statistics (in our notation, the order is the inverse of the tradition) of the prior distribution, that is, $\Pr_{\beta_1(m)}(X_{(l)} \leq x_1) = \Pr_{f}(X_{(l)}\leq x_1)$ when $m=n$.

    Finally, Lemma~\ref{lem:normal} gives that $J(F,n,m,x_1)$ increases with $m$, then $\Pr_{\beta_1}(X_{(l)})$ decreases with $m$, equivalently: 
    \[
    \Pr_{\beta_1(m)}(X_{(l)} > x_1) \leq \Pr_{\beta_1(m')}(X_{(l)} > x_1) \leq \Pr_{f}(X_{(l)} > x_1),
    \]for all $l<m<m'\leq n$, which completes the proof.
\end{proof}

\begin{lemma}\label{lem:betaUpper}
For any $x > 0$, $a,b \in \mathbb{N}$, the following equality holds:
    \[
    \int_0^{x}t^{a-1}(x-t)^{b-1}\, dt = \frac{(a-1)!(b-1)!}{(a+b-1)!}x^{a+b-1}
    \]
\end{lemma}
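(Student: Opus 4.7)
The plan is to reduce the integral to the standard Beta function and then invoke its well-known closed form for integer arguments. First, I would perform the change of variables $t = xs$, so $dt = x\, ds$, with $s$ running from $0$ to $1$ as $t$ runs from $0$ to $x$. This substitution yields
\[
\int_0^{x}t^{a-1}(x-t)^{b-1}\, dt = \int_0^{1} (xs)^{a-1}(x - xs)^{b-1} x \, ds = x^{a+b-1}\int_0^{1} s^{a-1}(1-s)^{b-1}\, ds,
\]
pulling the factor $x^{a-1+b-1+1} = x^{a+b-1}$ out of the integral. The remaining integral is precisely the Beta function $B(a,b)$.

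Next, I would verify that for positive integers $a,b$ the identity $B(a,b) = \frac{(a-1)!(b-1)!}{(a+b-1)!}$ holds. Since the paper has not explicitly developed $B(a,b) = \Gamma(a)\Gamma(b)/\Gamma(a+b)$, I would give a short self-contained proof by induction on $b$. The base case $b=1$ is immediate: $\int_0^1 s^{a-1}\, ds = 1/a = \frac{(a-1)!\,0!}{a!}$. For the inductive step, integrate by parts with $u = (1-s)^{b-1}$ and $dv = s^{a-1}\, ds$:
\[
\int_0^1 s^{a-1}(1-s)^{b-1}\, ds = \left[\tfrac{s^a}{a}(1-s)^{b-1}\right]_0^1 + \tfrac{b-1}{a}\int_0^1 s^{a}(1-s)^{b-2}\, ds = \tfrac{b-1}{a}B(a+1,b-1).
\]
Applying the inductive hypothesis $B(a+1,b-1) = \frac{a!(b-2)!}{(a+b-1)!}$ and simplifying gives $B(a,b) = \frac{(a-1)!(b-1)!}{(a+b-1)!}$ as desired.

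Combining the two steps yields the claimed identity. There is no real obstacle here; the only care needed is keeping the boundary term in the integration-by-parts step correct (it vanishes at both endpoints provided $a \geq 1$ and $b \geq 2$, which is exactly the regime where the inductive step is invoked). The substitution step is the cleanest move since it isolates all the $x$-dependence as an overall factor, reducing the lemma to a well-known Beta function evaluation.
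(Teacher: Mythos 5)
Your proof is correct, and it shares its first step with the paper's proof: both perform the rescaling substitution ($t = xs$ in your notation) to pull out the factor $x^{a+b-1}$ and reduce the claim to evaluating $B(a,b) = \int_0^1 s^{a-1}(1-s)^{b-1}\,ds$. Where you diverge is in the second step. The paper simply cites the standard Euler identity $B(a,b) = \frac{\Gamma(a)\Gamma(b)}{\Gamma(a+b)}$ and then uses $\Gamma(m) = (m-1)!$ for integer $m$, treating the Beta--Gamma relation as a known fact. You instead give a self-contained induction on $b$: the base case $B(a,1) = 1/a$ is immediate, and the inductive step via integration by parts yields the recursion $B(a,b) = \frac{b-1}{a}B(a+1,b-1)$, which closes correctly against the hypothesis $B(a+1,b-1) = \frac{a!\,(b-2)!}{(a+b-1)!}$. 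Your handling of the boundary term is also right: it vanishes at $s=0$ because $a \geq 1$ and at $s=1$ because $b \geq 2$, exactly the regime where the step is used. Your route is more elementary and avoids invoking the Gamma function at all, which is arguably cleaner here since the lemma only ever needs integer parameters; the paper's route is shorter and leans on a standard identity it also uses elsewhere (e.g., in the Beta-distribution representation of total effort), so citing it once is economical in context. Both arguments are complete and valid.
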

\begin{proof}
    \[
    \int_0^{x}t^{a-1}(x-t)^{b-1}\, dt = x^{a+b-1} \int_0^x (\frac{t}{x})^{a-1}(\frac{t}{x})^{b-1}(x^{-1} \, dt ) = x^{a+b-1} \int_{0}^{1}t^{a-1}(1-t)^{b-1}\, dt
    \]
    
    Recall that for beta function $B(a,b)$, it holds that:
    \[
    B(a,b) = \int_{0}^{1}t^{a-1}(1-t)^{b-1}\, dt = \frac{\Gamma(a)\Gamma(b)}{\Gamma(a+b)}
    \]

    This gives:
    \[
    \int_0^{x}t^{a-1}(x-t)^{b-1}\, dt = \frac{\Gamma(a)\Gamma(b)}{\Gamma(a+b)} x^{a+b-1} = \frac{(a-1)!(b-1)!}{(a+b-1)!}x^{a+b-1}
    \]
    Where last equality holds since $\Gamma(m) = (m-1)!$ for any $m\in \mathbb{N}$, which completes the proof.
\end{proof}

\subsection*{Proof of Proposition~\ref{prop:winProb}}
\begin{proof}
To facilitate the discussion, we re-index the admitted contestants as \(\mathcal{I} = [m]\), and we designate contestant \(i\) with \(1\) in the new index. Since there is a one-to-one mapping between $e_i$ and $\gamma_i$, choosing an effort \(e_i\) is equivalent to reporting an ability \(\gamma_i\). In the following discussion, we focus on the latter interpretation, which simplifies notations.

Contestant \(i\) (with the original index) sees her probability of achieving rank \(l \neq m\) is:
\[
\begin{aligned}
& \binom{m-1}{l-1} \Pr\left [ \bigwedge_{j=2}^{l}X_j > \gamma_i, \bigwedge_{k=l+1}^m X_k < \gamma_i \mid x_i, \mathcal{I} \right ] \\
= & \binom{m-1}{l-1} \underbrace{\int_{\gamma_i}^{1} \cdots \int_{\gamma_i}^{1}}_{l-1} \underbrace{\int_{0}^{\gamma_i} \cdots \int_{0}^{\gamma_i}}_{m-l} \beta_i(\mathbf{x}) \, d\mathbf{x_{-i}}
\end{aligned}
\]

At this point, \(x^{(1)}\) appears among the last \(m-l\) contestants. Since the reported ability \(\gamma_i\) affects the expression of the integrand \(\beta_i(\mathbf{x})\), we classify the cases based on the value of \(\gamma_i\):

\underline{Case 1:} The reported ability is lower than the true ability, i.e., \(\gamma_i \leq x_i\). Here, \(x^{(1)} \leq x_i\), so:
\[
\binom{m-1}{l-1} \underbrace{\int_{\gamma_i}^{1} \cdots \int_{\gamma_i}^{1}}_{l-1} \underbrace{\int_{0}^{\gamma_i} \cdots \int_{0}^{\gamma_i}}_{m-l} \frac{\binom{n-1}{m-1}F^{n-m}(x^{(1)})\prod_{k=2}^{m}f(x_k)}{J(F,n,m,x_i)} \, d\mathbf{x_{-i}}
\]

Without loss of generality, assume contestant \(m\) has the lowest ability among the admitted contestants:
\[
\begin{aligned}
& \binom{m-1}{l-1} (m-l) \int_{0}^{\gamma_i} \underbrace{\int_{\gamma_i}^{1} \cdots \int_{\gamma_i}^{1}}_{l-1} \underbrace{\int_{0}^{\gamma_i} \cdots \int_{0}^{\gamma_i}}_{m-l-1} \frac{\binom{n-1}{m-1}F^{n-m}(x^{(1)})\prod_{k=2}^{m}f(x_k)}{J(F,n,m,x_i)} \, d\mathbf{x_{-i,m}} \, dx_m \\
= & \binom{m-1}{l-1} (m-l) [1-F(\gamma_i)]^{l-1} \int_{0}^{\gamma_i} \underbrace{\int_{0}^{\gamma_i} \cdots \int_{0}^{\gamma_i}}_{m-l-1} \frac{\binom{n-1}{m-1}F^{n-m}(x_m)\prod_{k=l+1}^{m}f(x_k)}{J(F,n,m,x_i)} \, d\mathbf{x_{-[l]\cup{\{m\}}}} \, dx_m \\
= & \binom{m-1}{l-1} (m-l) [1-F(\gamma_i)]^{l-1} \int_{0}^{\gamma_i} \frac{\binom{n-1}{m-1}F^{n-m}(x_m)[F(\gamma_i)-F(x_m)]^{m-l-1}f(x_m)}{J(F,n,m,x_i)} \, dx_m \\
\end{aligned}
\]

Extracting terms independent of the integral, we obtain:
\[
\frac{\binom{n-1}{m-1} \binom{m-1}{l-1} (m-l) [1-F(\gamma_i)]^{l-1} \int_{0}^{F(\gamma_i)} t^{n-m}[F(\gamma_i)-t]^{m-l-1} \, dt}{J(F,n,m,x_i)}
\]

Substituting the equality from Lemma~\ref{lem:betaUpper}:
\[
\begin{aligned}
& \frac{\binom{n-1}{m-1} \binom{m-1}{l-1} (m-l) [1-F(\gamma_i)]^{l-1} \frac{(n-m)!(m-l-1)!}{(n-l)!}F(\gamma_i)^{n-l}}{J(F,n,m,x_i)} \\
= & \frac{\frac{(n-1)!}{(n-m)!(m-1)!} \frac{(m-1)!}{(l-1)!(m-l)!} [1-F(\gamma_i)]^{l-1} \frac{(n-m)!(m-l)!}{(n-l)!}F(\gamma_i)^{n-l}}{J(F,n,m,x_i)} \\
= & \frac{\frac{(n-1)!}{(l-1)!(n-l)!} [1-F(\gamma_i)]^{l-1}F(\gamma_i)^{n-l}}{J(F,n,m,x_i)} \\
= & \frac{\binom{n-1}{l-1}F(\gamma_i)^{n-l}(1-F(\gamma_i))^{l-1}}{J(F,n,m,x_i)}
\end{aligned}
\]

\underline{Case 2:} The reported ability value by the contestant is higher than the actual ability value, i.e., \(\gamma_i > x_i\). We need to decompose the original integral into two additive parts by considering the relationship between \(x_i\) and \(x^{(1)}\):

\begin{enumerate}
\item \underline{First Part:} When \(x^{(1)} > x_i\), the integral over this interval is:
   \[
   \begin{aligned}
       & \underbrace{\int_{\gamma_i}^{1} \cdots \int_{\gamma_i}^{1}}_{l-1} \underbrace{\int_{x_i}^{\gamma_i} \cdots \int_{x_i}^{\gamma_i}}_{m-l} \frac{\binom{n-1}{m-1}F^{n-m}(x_i)\prod_{k=2}^{m}f(x_k)}{J(F,n,m,x_i)} \, d\mathbf{x_{-i}} \\
       = & \frac{\binom{n-1}{m-1}[1-F(\gamma_i)]^{l-1}F^{n-m}(x_i)[F(\gamma_i)-F(x_i)]^{m-l}}{J(F,n,m,x_i)}
   \end{aligned}
   \]
\item \underline{Second Part:} When \(x^{(1)} \leq x_i\), similarly, assuming the \(m\)-th contestant has the lowest ability value among those advancing, the integral over this interval can be expressed as:
   \[
   \begin{aligned}
       & (m-l) \int_{0}^{x_i} \underbrace{\int_{x_m}^{\gamma_i} \cdots \int_{x_m}^{\gamma_i}}_{m-l-1} \underbrace{\int_{\gamma_i}^1 \cdots \int_{\gamma_i}^1}_{l-1} \frac{\binom{n-1}{m-1} F^{n-m}(x^{(1)}) \prod_{k=2}^m f(x_k)}{J(F,n,m,x_i)} \, d\mathbf{x_{-i}} \, dx_m \\
       = & (m-l) [1-F(\gamma_i)]^{l-1} \int_{0}^{x_i} \underbrace{\int_{x_m}^{\gamma_i} \cdots \int_{x_m}^{\gamma_i}}_{m-l-1} \frac{\binom{n-1}{m-1} F^{n-m}(x_m) \prod_{k=l+1}^m f(x_k)}{J(F,n,m,x_i)} \, d\mathbf{x_{-[l]\cup{\{m\}}}} \, dx_m \\
       = & (m-l) [1-F(\gamma_i)]^{l-1} \int_{0}^{x_i} \frac{\binom{n-1}{m-1} F^{n-m}(x_m) [F(\gamma_i) - F(x_m)]^{m-l-1}f(x_m)}{J(F,n,m,x_i)} \, dx_m 
   \end{aligned}
   \]
   
   Extracting the terms independent of the integral, we obtain:
   \[
   \frac{\binom{n-1}{m-1}(m-l)[1-F(\gamma_i)]^{l-1} \int_{0}^{F(x_i)} t^{n-m}[F(\gamma_i)-t]^{m-l-1}\, dt}{J(F,n,m,x_i)} 
   \]
\end{enumerate}

By adding the two parts together, incorporating the binomial coefficient, and factoring out common terms, we derive the final expression:\[
\frac{\binom{n-1}{m-1}\binom{m-1}{l-1}(1-F(\gamma_i))^{l-1}\left[F^{n-m}(x_i)(F(\gamma_i)-F(x_i))^{m-l}+(m-l)\int_{0}^{F(x_i)}t^{n-m}(F(\gamma_i)-t)^{m-l-1}\, dt\right]}{J(F,n,m,x_i)}
\]

contestant \( i \) sees her probability of achieving rank \( l = m \) is given by:
\[
\Pr\left [ \bigwedge_{j=2}^{m} X_j > \gamma_i \mid x_i, \mathcal{I} \right ]
\]

We consider different cases based on the value of \(\gamma_i\):

\underline{Case 1:} When \(\gamma_i > x_i\), we have \(x^{(1)} > x_i\). The expression becomes:
\[
\begin{aligned}
    & \underbrace{\int_{\gamma_i}^{1} \cdots \int_{\gamma_i}^{1}}_{m-1} \frac{\binom{n-1}{m-1} F^{n-m}(x_i) \prod_{k=2}^{m} f(x_k)}{J(F,n,m,x_i)} \, d\mathbf{x_{-i}} \\
    = & \binom{n-1}{m-1} F^{n-m}(x_i) [1-F(\gamma_i)]^{m-1}
\end{aligned}
\]

\underline{Case 2:} When \(\gamma_i \leq x_i\), we decompose the integration space into two parts:

\begin{enumerate}
\item \underline{First Part:} When \(x^{(1)} > x_i\), the integral is:
\[
\begin{aligned}
    & \underbrace{\int_{x_i}^{1} \cdots \int_{x_i}^{1}}_{m-1} \frac{\binom{n-1}{m-1} F^{n-m}(x_i) \prod_{k=2}^{m} f(x_k)}{J(F,n,m,x_i)} \, d\mathbf{x_{-i}} \\
    = & \binom{n-1}{m-1} F^{n-m}(x_i) [1-F(x_i)]^{m-1}
\end{aligned}
\]
\item \underline{Second Part:} When \(x^{(1)} \leq x_i\), the integral is:
\[
\begin{aligned}
    & (m-1) \int_{\gamma_i}^{x_i} \underbrace{\int_{x_2}^{1} \cdots \int_{x_2}^{1}}_{m-2} \frac{\binom{n-1}{m-1} F^{n-m}(x_i) \prod_{k=2}^{m} f(x_k)}{J(F,n,m,x_i)} \, d\mathbf{x_{-[l]\cup{\{m\}}}} \, dx_m \\
    = & \binom{n-1}{m-1}(m-1) \int_{F(\gamma_i)}^{F(x_i)} t^{n-m} (1-t)^{m-2} \, dt
\end{aligned}
\]
\end{enumerate}

Combining both parts and factoring out common terms, we obtain the final expression:
\[
\frac{\binom{n-1}{m-1} \left [ F^{n-m}(x_i) (1-F(x_i))^{m-1} + (m-1) \int_{F(\gamma_i)}^{F(x_i)} t^{n-m} (1-t)^{m-2} \, dt \right ]}{J(F,n,m,x_i)}
\]

Based on the above discussion, we have derived the expression for \(P_{(i,l)}(\gamma_i \mid x_i)\). Next, we will verify the continuity of \(P_{(i,l)}\). It suffices to prove that the function is right-continuous at \(x_i\).

We then compute the right limit of $P_{(i,l)}$ at \(x_i\). Since the denominator is independent of \(\gamma_i\), we focus on the limit of the numerator:

When \(l = m\), the numerator is:
\[
\binom{n-1}{m-1}F^{n-m}(x_i)(1-F(x_i))^{m-1}
\]

When \(l < m\), the numerator is:
\[
\binom{n-1}{m-1}\binom{m-1}{l-1}(1-F(x_i))^{l-1}(m-l)\int_{0}^{F(x_i)}t^{n-m}(F(x_i)-t)^{m-l-1}\, dt
\]

Substituting the equality from Lemma~\ref{lem:betaUpper}:
\[
\begin{aligned}
    & \binom{n-1}{m-1}\binom{m-1}{l-1}(1-F(x_i))^{l-1}(m-l)\frac{(n-m)!(m-l-1)!}{(n-l)!}F(x_i)^{n-l} \\ 
    = & \frac{(n-1)!}{(n-m)!(m-1)!} \frac{(m-1)!}{(l-1)!(m-l)!}(1-F(x_i))^{l-1}\frac{(n-m)!(m-l)!}{(n-l)!}F(x_i)^{n-l} \\
    = & \binom{n-1}{l-1}(1-F(x_i))^{l-1}F(x_i)^{n-l} 
\end{aligned}
\]

In summary, we obtain:
\[
\begin{aligned}
\lim_{\gamma_i \rightarrow x_i^{+}}P_{(i,l)}(\gamma_i \mid x_i)  & = \begin{cases} 
\frac{\binom{n-1}{l-1}(1-F(x_i))^{l-1}F(x_i)^{n-l} }{J(F,n,m,x_i)} & \text{if } l < m, \\
\frac{\binom{n-1}{m-1}F^{n-m}(x_i)(1-F(x_i))^{m-1}}{J(F,n,m,x_i)} & \text{if } l = m.
\end{cases} \\
& = P_{(i,l)}(x_i \mid x_i) 
\end{aligned}
\]

Therefore, \(P_{(i,l)}(\gamma_i \mid x_i)\) is continuous throughout \(\gamma_i \geq 0\), thus completing the proof.
\end{proof}

\begin{corollary}[Derivative of Winning Probability]\label{coro:DerivProb}
When admitted contestant $i$ reports an ability $\gamma_i$ given her ability $x_i$, the derivative function of her perceived probability of getting rank $l$ respect to $\gamma_i$, denoted as $P'_{(i,l)}(\gamma_i \mid x_i)$, exists. Specifically, the derivative at $x_i$ can be expressed as: 
\[
P'_{(i,l)}(x_i\mid X_i=x_i) =   
\begin{cases} 
\frac{\binom{n-1}{l-1}\left [ (n-l)(1-F(x_i))-(l-1)F(x_i)\right ]F(x_i)^{n-l-1}(1-F(x_i))^{l-2}f(x_i)}{J(F,n,m,x_i)} & \text{if } l < m, \\
- \frac{\binom{n-1}{m-1}(m-1)F^{n-m}(x_i)(1-F(x_i))^{m-2}f(x_i)}{J(F,n,m,x_i)} & \text{if } l = m.
\end{cases}
\]
We denote $P'_{(i,l)}$ as a shorthand for $P'_{(i,l)}(x_i\mid X_i=x_i)$. 
\end{corollary}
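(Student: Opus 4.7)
The plan is to exploit the piecewise definition of $P_{(i,l)}(\gamma_i \mid x_i)$ given in Proposition~\ref{prop:winProb}. Since the denominator $J(F,n,m,x_i)$ does not depend on $\gamma_i$, it is enough to differentiate the two numerators with respect to $\gamma_i$ and then verify that the resulting one-sided derivatives agree at $\gamma_i = x_i$. Agreement simultaneously establishes differentiability at $\gamma_i = x_i$ and identifies the common value as the derivative.

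For the left derivative ($\gamma_i \leq x_i$), the calculation is direct. When $l < m$, the numerator equals $\binom{n-1}{l-1}F^{n-l}(\gamma_i)(1-F(\gamma_i))^{l-1}$, and the product rule (via $\frac{d}{d\gamma_i}F(\gamma_i)=f(\gamma_i)$) yields the factor $[(n-l)(1-F(\gamma_i))-(l-1)F(\gamma_i)]F^{n-l-1}(\gamma_i)(1-F(\gamma_i))^{l-2}f(\gamma_i)$. When $l=m$, only the integral $\int_{F(\gamma_i)}^{F(x_i)}t^{n-m}(1-t)^{m-2}\,dt$ depends on $\gamma_i$, and Leibniz's rule (applied to the lower limit) gives $-F^{n-m}(\gamma_i)(1-F(\gamma_i))^{m-2}f(\gamma_i)$. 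Evaluated at $\gamma_i=x_i$, these reproduce the two cases in the corollary.

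For the right derivative ($\gamma_i > x_i$), the case $l=m$ is immediate because only $(1-F(\gamma_i))^{m-1}$ depends on $\gamma_i$, which trivially matches the left-sided value. The nontrivial case is $l<m$, where one must differentiate the product of $(1-F(\gamma_i))^{l-1}$ with
\[
A(\gamma_i):=F^{n-m}(x_i)(F(\gamma_i)-F(x_i))^{m-l}+(m-l)\int_{0}^{F(x_i)}t^{n-m}(F(\gamma_i)-t)^{m-l-1}\,dt.
\]
Evaluating at $\gamma_i = x_i$ requires closed-form values of $A(x_i)$ and $A'(x_i)$, which are supplied by Lemma~\ref{lem:betaUpper}: after merging the subcases $m-l=1$ and $m-l\geq 2$ into a single expression, one obtains $A(x_i)=\frac{(m-l)!(n-m)!}{(n-l)!}F(x_i)^{n-l}$ and $A'(x_i)=\frac{(m-l)!(n-m)!}{(n-l-1)!}F(x_i)^{n-l-1}f(x_i)$. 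Plugging these into the product rule and collapsing the binomial factors via the identity $\binom{n-1}{m-1}\binom{m-1}{l-1}(m-l)!(n-m)!/(n-l)!=\binom{n-1}{l-1}$ returns precisely the left-sided formula.

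The main obstacle is the right-derivative computation for $l<m$: the factor $(F(\gamma_i)-F(x_i))^{m-l-1}$ behaves qualitatively differently when $m-l=1$ (it equals $1$ so the first term of $A'$ survives while the integral vanishes because of the prefactor $m-l-1=0$) versus when $m-l\geq 2$ (the first term vanishes but the integral contributes). Handling these subcases uniformly and showing they produce the single closed form above is the delicate step; once that is done, the agreement of the two one-sided derivatives yields both the existence of $P'_{(i,l)}$ at $\gamma_i=x_i$ and the stated expression.
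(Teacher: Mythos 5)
Your proposal is correct, and it follows the same overall skeleton as the paper's proof: strip off the $\gamma_i$-independent denominator $J(F,n,m,x_i)$, differentiate each piecewise numerator from Proposition~\ref{prop:winProb}, and establish existence at $\gamma_i=x_i$ by matching the two one-sided derivatives (the left branch and the case $l=m$ are routine in both). Where you genuinely diverge is in the crux computation, the right derivative for $l<m$. The paper first performs a change of variables that extracts $F^{n-l}(\gamma_i)$ from the integral, so the $\gamma_i$-dependence migrates into the upper limit $F(x_i)/F(\gamma_i)$; it then applies the product rule to three factors, takes right-hand limits term by term, splits into the cases $l<m-1$ and $l=m-1$, and verifies that two surviving boundary contributions cancel exactly. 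You instead keep the integral in its original form with fixed limits, differentiate under the integral sign (legitimate, since for $\gamma_i>x_i$ the integrand $t^{n-m}(F(\gamma_i)-t)^{m-l-1}$ is smooth in $F(\gamma_i)$ and bounded away from the singularity because $F(\gamma_i)-t\geq F(\gamma_i)-F(x_i)>0$), and evaluate $A(x_i)$ and $A'(x_i)$ in closed form via Lemma~\ref{lem:betaUpper}; I checked that your claimed values $A(x_i)=\frac{(m-l)!(n-m)!}{(n-l)!}F^{n-l}(x_i)$ and $A'(x_i)=\frac{(m-l)!(n-m)!}{(n-l-1)!}F^{n-l-1}(x_i)f(x_i)$ are right in both subcases $m-l=1$ and $m-l\geq 2$, and that the binomial identity $\binom{n-1}{m-1}\binom{m-1}{l-1}\frac{(m-l)!(n-m)!}{(n-l)!}=\binom{n-1}{l-1}$ collapses everything to the left-sided formula. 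Your route is shorter because the paper's delicate cancellation (its Equations~\eqref{eq:derEq1} and~\eqref{eq:derEq2} summing to zero) is absorbed automatically inside $A'$: the same subcase dichotomy appears, but localized in one quantity rather than spread across three product-rule components. One small point worth making explicit in a full write-up: passing from the limit of $N'(\gamma_i)$ as $\gamma_i\to x_i^{+}$ to the one-sided derivative at $x_i$ uses the continuity of $P_{(i,l)}$ at $\gamma_i=x_i$ (established in Proposition~\ref{prop:winProb}) together with the mean-value theorem; the paper glosses over this in the same way, so it is not a gap relative to the published argument.
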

\begin{proof}[Proof of Corollary~\ref{coro:DerivProb}]
From Proposition~\ref{prop:winProb}, it is easy to see \( P_{(i,l)} \) has derivatives when \(\gamma_i \neq x_i\). We only need to prove the existence of the derivative at \(\gamma_i = x_i\), which means showing that the left-hand and right-hand derivatives at \(\gamma_i = x_i\) are equal. Since the denominator \( J(F,n,m,x_i) \) is independent of \(\gamma_i\), we omit it in our discussion. We will consider the cases \( l < m \) and \( l = m \) separately.

\underline{Case 1:} \( l < m \) :

When \(\gamma_i \leq x_i\), the derivative of the numerator is:
\[
\begin{aligned}
    & \left[ \binom{n-1}{l-1} F(\gamma_i)^{n-l} (1-F(\gamma_i))^{l-1} \right]' \\
    = & \binom{n-1}{l-1} \left[ (n-l)(1-F(\gamma_i)) - (l-1)F(\gamma_i) \right] F(\gamma_i)^{n-l-1} (1-F(\gamma_i))^{l-2} f(\gamma_i)
\end{aligned}
\]

Substituting \(\gamma_i = x_i\), the numerator of the left-hand derivative \( P'^{-}_{(i,l)}(x_i \mid x_i) \) is:
\[
\binom{n-1}{l-1} \left[ (n-l)(1-F(x_i)) - (l-1)F(x_i) \right] F(x_i)^{n-l-1} (1-F(x_i))^{l-2} f(x_i)
\]

When \(\gamma_i > x_i\), the numerator of the expression for \( P_{(i,l)} \) expands to:
\[
\binom{n-1}{m-1} \binom{m-1}{l-1} F^{n-m}(x_i) (1-F(\gamma_i))^{m-1} + \binom{n-1}{m-1} \binom{m-1}{l-1} (m-l) \int_{0}^{F(x_i)} t^{n-m} (F(\gamma_i)-t)^{m-l-1} \, dt
\]

We handle each part separately.
After differentiating the first part, we obtain:
\begin{multline*}
    \binom{n-1}{m-1} \binom{m-1}{l-1} \left[ (m-l)(1-F(\gamma_i)) - (l-1)(F(\gamma_i)-F(x_i)) \right] \\
    \cdot F^{n-m}(x_i) (1-F(\gamma_i))^{l-2} (F(\gamma_i) - F(x_i))^{m-l-1} f(\gamma_i)
\end{multline*}

This expression is continuous, so the right-hand limit at \(\gamma_i = x_i\) can be obtained by direct substitution. When \( l < m-1 \), \((F(\gamma_i) - F(x_i))^{m-l-1} = 0\), making this part's right-hand limit zero. When \( l = m-1 \), \((F(\gamma_i) - F(x_i))^{m-l-1} = 1\), and the right-hand limit then becomes:
\begin{equation} \label{eq:derEq1}
    \binom{n-1}{m-1} \binom{m-1}{l-1} F^{n-l-1}(x_i) (1-F(x_i))^{l-1} f(x_i)
\end{equation}

In the second part, we extract the term \( F(\gamma_i) \) from the integrand in the last term:
\[
\begin{aligned}
    \int_{0}^{F(x_i)} t^{n-m} (F(\gamma_i) - t)^{m-l-1} \, dt &= F^{n-l}(\gamma_i) \int_{0}^{F(x_i)} \left(\frac{t}{F(\gamma_i)}\right)^{n-m} \left(1 - \frac{t}{F(\gamma_i)}\right)^{m-l-1} \left(F(\gamma_i)^{-1} dt\right) \\
    &= F^{n-l}(\gamma_i) \int_0^{\frac{F(x_i)}{F(\gamma_i)}} t^{n-m} (1-t)^{m-l-1} \, dt
\end{aligned}
\]

Thus, the expression of this part becomes:
\[
\binom{n-1}{m-1} \binom{m-1}{l-1} (m-l) F^{n-l}(\gamma_i) (1-F(\gamma_i))^{l-1} \int_0^{\frac{F(x_i)}{F(\gamma_i)}} t^{n-m} (1-t)^{m-l-1} \, dt
\]

This expression includes the product of three functions of \(\gamma_i\). After differentiation, it turns into the sum of three components:

The first component involves \([F^{n-l}(\gamma_i)]'\):
\[
\binom{n-1}{m-1} \binom{m-1}{l-1} (n-l) (m-l) F^{n-l-1}(\gamma_i) (1-F(\gamma_i))^{l-1} f(\gamma_i) \int_0^{\frac{F(x_i)}{F(\gamma_i)}} t^{n-m} (1-t)^{m-l-1} \, dt
\]

The limit of the last term at $x_i$ is \(B(n-m+1, m-l)\). Substituting into the Lemma~\ref{lem:betaUpper} and combining other coefficients, we have:
\[
\begin{aligned}
    & \frac{(n-1)!}{(m-1)!(n-m)!} \frac{(m-1)!}{(l-1)!(m-l)!} (n-l) (m-l) \frac{(n-m)!(m-l-1)!}{(n-l)!} \\
    = & \frac{(n-1)!}{(l-1)!(n-l)!} (n-l) = \binom{n-1}{l-1}(n-l)
\end{aligned}
\]

Thus, the right-hand limit of this component is:
\[
\binom{n-1}{l-1} (n-l) F^{n-l-1}(x_i) (1-F(x_i))^{l-1} f(x_i)
\]

The second component involves \([(1-F(\gamma_i))^{l-1}]'\):
\[
-\binom{n-1}{m-1} \binom{m-1}{l-1} (m-l)(l-1) F^{n-l}(\gamma_i) (1-F(\gamma_i))^{l-2} f(\gamma_i) \int_0^{\frac{F(x_i)}{F(\gamma_i)}} t^{n-m} (1-t)^{m-l-1} \, dt
\]

Similarly, substituting the value of \(B(n-m+1, m-l)\) and simplifying, the right-hand limit of this component becomes:
\[
-\binom{n-1}{l-1} (l-1) F^{n-l}(x_i) (1-F(x_i))^{l-2} f(x_i)
\]

The third component involves the expression:
\[
\left[ \int_0^{\frac{F(x_i)}{F(\gamma_i)}} t^{n-m}(1-t)^{m-l-1} \, dt \right]'.
\]

By applying the differentiation rule for integrals with variable upper limits, we obtain:
\[
\begin{aligned}
    \left[ \int_0^{\frac{F(x_i)}{F(\gamma_i)}} t^{n-m}(1-t)^{m-l-1} \, dt \right]' 
    &= - \left( \frac{F(x_i)}{F(\gamma_i)} \right)^{n-m} \left( 1-\frac{F(x_i)}{F(\gamma_i)} \right)^{m-l-1} \frac{F(x_i)}{F^2(\gamma_i)} f(\gamma_i).
\end{aligned}
\]

Now consider the behavior of this term under different $l$:
\begin{enumerate}
\item Case \( l < m-1 \):  
   In this case, we have:
   \[
   \lim_{\gamma_i \to x_i^+} \left( 1-\frac{F(x_i)}{F(\gamma_i)} \right)^{m-l-1} = 0.
   \]
   Consequently, the right-hand limit of this component is \( 0 \).
\item Case \( l = m-1 \):
   Here, the factor \( \left( 1-\frac{F(x_i)}{F(\gamma_i)} \right)^{m-l-1} \) simplifies to \( 1 \). Thus, the right-hand limit of the derivative with respect to the upper limit becomes:
   \[
   -\frac{f(x_i)}{F(x_i)}.
   \]
   The corresponding expression for this component's right-hand limit is:
   \begin{equation} \label{eq:derEq2}
   -\binom{n-1}{m-1} \binom{m-1}{l-1} F^{n-l-1}(x_i) (1-F(x_i))^{l-1} f(x_i).
   \end{equation}
\end{enumerate}

In both cases (\( l < m-1 \) and \( l = m-1 \)), the sum of the two parts from Equation~\eqref{eq:derEq1} and Equation~\eqref{eq:derEq2} equals \( 0 \). Therefore, no additional classification of \( l \) is required.

Finally, combining the results from all parts, we have the right-hand limit of the numerator:
\[
\binom{n-1}{l-1} \left[ (n-l)(1-F(\gamma_i)) - (l-1)F(\gamma_i) \right] F(\gamma_i)^{n-l-1}(1-F(\gamma_i))^{l-2} f(\gamma_i).
\]

Thus, we conclude that for \( l < m \), the following equality holds, as desired:
\[
P'^{-}_{(i,l)}(x_i \mid x_i) = P'^{+}_{(i,l)}(x_i \mid x_i)
\]

\underline{Case 2:} \( l = m \):

If \( \gamma_i \leq x_i \), the derivative's numerator is:
\[
\begin{aligned}
    & \binom{n-1}{m-1} \left[ F^{n-m}(x_i)(1-F(x_i))^{m-1} + (m-1) \int_{F(\gamma_i)}^{F(x_i)} t^{n-m}(1-t)^{m-2} \, dt \right]' \\ 
    = & \binom{n-1}{m-1}(m-1) \left[ \int_{0}^{F(x_i)} t^{n-m}(1-t)^{m-2} \, dt - \int_{0}^{F(\gamma_i)} t^{n-m}(1-t)^{m-2} \, dt \right]' \\ 
    = & -\binom{n-1}{m-1}(m-1) \left[ \int_{0}^{F(\gamma_i)} t^{n-m}(1-t)^{m-2} \, dt \right]' \\
    = & -\binom{n-1}{m-1}(m-1) F^{n-m}(\gamma_i)(1-F(\gamma_i))^{m-2} f(\gamma_i)
\end{aligned}
\]

Substituting \(\gamma_i = x_i\) gives the numerator of left-hand derivative \( P'^{-}_{(i,l)}(x_i \mid x_i) \) as:
\[
-\binom{n-1}{m-1}(m-1) F^{n-m}(x_i)(1-F(x_i))^{m-2} f(x_i)
\]

If \( \gamma_i > x_i \), the derivative's numerator is:
\[
\begin{aligned}
    & \left[ \binom{n-1}{m-1} F^{n-m}(x_i)(1-F(\gamma_i))^{m-1} \right]' \\
    = & -\binom{n-1}{m-1}(m-1) F^{n-m}(x_i)(1-F(\gamma_i))^{m-2} f(\gamma_i)
\end{aligned}
\]

Taking the right-hand limit at $x_i$, we obtain:
\[
-\binom{n-1}{m-1}(m-1) F^{n-m}(x_i)(1-F(x_i))^{m-2} f(x_i)
\]

Thus, we have shown that when \( l = m \), \( P'^{-}_{(i,l)}(x_i \mid x_i) = P'^{+}_{(i,l)}(x_i \mid x_i) \).

In conclusion, for any \( l \leq m \), \( P_{(i,l)}(\gamma_i \mid x_i) \) is differentiable everywhere. Therefore, the derivative \( P'_{(i,l)}(\gamma_i \mid x_i) \) exists and at \( \gamma_i = x_i \), and it is given by:
\[
P'_{(i,l)}(x_i \mid X_i = x_i) =   
\begin{cases} 
\frac{\binom{n-1}{l-1} \left[ (n-l)(1-F(x_i))-(l-1)F(x_i) \right] F(x_i)^{n-l-1}(1-F(x_i))^{l-2} f(x_i)}{J(F,n,m,x_i)} & \text{if } l < m, \\
- \frac{\binom{n-1}{m-1}(m-1) F^{n-m}(x_i)(1-F(x_i))^{m-2} f(x_i)}{J(F,n,m,x_i)} & \text{if } l = m.
\end{cases}
\]
This completes the proof.
\end{proof}

\subsection*{Proof of Theorem~\ref{thm:contestantSBNE}}
\begin{proof}
Under a symmetric Bayesian Nash Equilibrium, each individual's effort \( b(x_i) \) is her best response given her ability \( x_i \), i.e., for all \( i \in [n] \):
\[
b(x_i) \in \mathop{\arg \max}_{e_i} u_i := \sum_{l=1}^{m} V_l P_{(i,l)} - \frac{g(e_i)}{x_i}
\]

Since \( e_i \) and \( \gamma_i \) are in one-to-one correspondence, the expression can be rewritten as:
\[
b(x_i) \in \left\{ b(\gamma_i) \mid \mathop{\arg \max}_{\gamma_i} \sum_{l=1}^{m} V_l P_{(i,l)} - \frac{g(b(\gamma_i))}{x_i} \right\}
\]

Taking the derivative of \( u_i \) with respect to \( \gamma_i \) gives the first-order condition:
\[
\sum_{l=1}^{m} V_l P'_{(i,l)}(\gamma_i \mid X_i = x_i) = \frac{g'(b(\gamma_i)) b'(\gamma_i)}{x_i}
\]

In equilibrium, everyone exerts effort according to \( b(x_i) \), so \( \gamma_i = \gamma(e_i) = \gamma(b(x_i)) = x_i \). Substituting this into the equation and moving the denominator on the right to the left side:
\[
\sum_{l=1}^{m} V_l P'_{(i,l)}(x_i \mid X_i = x_i)x_i = g'(b(x_i)) b'(x_i)
\]

Since the equilibrium condition holds for any realization of ability \( x_i > 0 \), we replace \( x_i \) with a variable \( t \) and integrate both sides over \((0, x_i]\):
\[
\begin{aligned}
\int_{0}^{x_i} \sum_{l=1}^{m} V_l P'_{(i,l)}(t \mid X_i = t)t \, dt &  = \int_{0}^{x_i} g'(b(t)) b'(t) \, dt \\
 & = g(b(x_i)) 
\end{aligned}
\]

Applying the inverse of the cost function \( g \): 
\[
b(x_i) = g^{-1}\left( \int_{0}^{x_i} \sum_{l=1}^{m} V_l P'_{(i,l)}(t \mid X_i = t)t \, dt \right)
\]

By plugging the expression for $P'_{(i,j)}$ into the summation \(\sum_{l=1}^{m}V_lP'_{(i,l)}(t|X_i=t)\):
\[
\begin{aligned} 
    & \frac{\sum_{l=1}^{m-1}\binom{n-1}{n-l}(n-l)F^{n-l-1}(t)(1-F(t))^{l-1}V_l - \sum_{l=1}^m\binom{n-1}{l-1}(l-1)F^{n-l}(t)(1-F(t))^{l-2}V_l}{J(F,n,m,t)f^{-1}(t)} \\
    = & \frac{\sum_{l=1}^{m-1}\binom{n-1}{n-l}(n-l)F^{n-l-1}(t)(1-F(t))^{l-1}V_l - \sum_{l=2}^m\binom{n-1}{l-1}(l-1)F^{n-l}(t)(1-F(t))^{l-2}V_l}{J(F,n,m,t)f^{-1}(t)}
\end{aligned}
\]

Combining like terms, we get:
\[
\begin{aligned}
& \frac{\sum_{l=1}^{m-1}\left [ \binom{n-1}{n-l}(n-l)V_l - \binom{n-1}{l}lV_{l+1}\right ]F^{n-l-1}(t)(1-F(t))^{l-1}f(t)}{J(F,n,m,t)} \\
= & \frac{\sum_{l=1}^{m-1}\binom{n-1}{l-1}(n-l)(V_l-V_{l+1})F^{n-l-1}(t)(1-F(t))^{l-1}f(t)}{J(F,n,m,t)}
\end{aligned}
\]

Therefore, the equilibrium effort must satisfy:
\begin{equation}\label{eq:EquEffort}
        b(x_i) = g^{-1}\left(\int_{0}^{x_i}\frac{\sum_{l=1}^{m-1}\binom{n-1}{l-1}(n-l)(V_l-V_{l+1})F^{n-l-1}(t)(1-F(t))^{l-1}f(t)}{J(F,n,m,t)} t\, dt \right)
\end{equation}

Since we consider rank-order prize structures, i.e., \( V_l \geq V_{l+1}\), the integrand is almost strictly increasing everywhere. Therefore, non-negativity and monotonicity of \( b(x_i) \) are guaranteed. Therefore, we obtain the unique symmetric Bayesian Nash Equilibrium, which completes the proof.
\end{proof}

\section{Missing Proofs in Section \ref{sec:optimal design}}

\subsection*{Proof of Corollary~\ref{coro:Consolation}}
\begin{proof}
    The contribution of \( V_m \) to the contestant $i$'s equilibrium effort is given by:
    \[
    V_m \int_{0}^{x_i} t P'_{(i,m)}(t \mid X_i = t) \, dt
    \]
    
    For \( \forall t > 0 \), we have:
    \[
    P'_{(i,m)}(t \mid X_i = t) = - \frac{\binom{n-1}{m-1}(m-1)F^{n-m}(t)(1-F(t))^{m-2}f(t)}{J(F,n,m,t)} \leq 0
    \]
    
    Furthermore, for a non-zero measure subset of the domain, it holds that \( P'_{(i,m)}(t \mid X_i = t) < 0 \).

    Therefore,
    \begin{align*}
    b(x_i) & = g^{-1} \left( \sum_{l=1}^{m} V_l \int_{0}^{x_i} t P'_{(i,l)}(t \mid X_i = t) \, dt \right) \\
           & < g^{-1} \left( \sum_{l=1}^{m-1} V_l \int_{0}^{x_i} t P'_{(i,l)}(t \mid X_i = t) \, dt \right)
    \end{align*}
    Setting \( V_m = 0 \) will increase the effort for each admitted contestant, which completes the proof. 
\end{proof}

\subsection*{Proof of Corollary~\ref{coro:EmptyPrize}}
\begin{proof}
    For \( m > k+1 \), by Lemma~\ref{lem:normal}, we have:
    \[
    0 < \frac{J(F,n,k+1,t)}{J(F,n,m,t)} < 1
    \]

    From Equation~\eqref{eq:EquEffort} and the rank-order prize structure, we know that \(\sum_{l=1}^{m} V_l P'_{(i,l)}(t \mid X_i = t) \geq 0\). Additionally, for \( l \geq k+1 \), \( V_l = 0 \). Therefore, we have:
    \begin{align*}
        \sum_{l=1}^{k+1} V_l P'_{(i,l)}(t \mid X_i = t, k+1) &= \sum_{l=1}^{k} V_l P'_{(i,l)}(t \mid X_i = t, k+1) \\
        &= \sum_{l=1}^{k} V_l P'_{(i,l)}(t \mid X_i = t, m) \frac{J(F,n,m,t)}{J(F,n,k+1,t)} \\
        &= \sum_{l=1}^{m} V_l P'_{(i,l)}(t \mid X_i = t, m) \frac{J(F,n,m,t)}{J(F,n,k+1,t)} \\
        &> \sum_{l=1}^{m} V_l P'_{(i,l)}(t \mid X_i = t, m)
    \end{align*}

    Multiplying both sides by \( t \) and integrating over \( (0, x_i] \), then applying the inverse cost function \( g^{-1} \), we obtain:
    \begin{align*}
        g^{-1}\left( \int_{0}^{x_i} \sum_{l=1}^{k+1} V_l P'_{(i,l)}(t \mid X_i = t, k+1)t \, dt \right) & > g^{-1}\left( \int_{0}^{x_i} \sum_{l=1}^{m} V_l P'_{(i,l)}(t \mid X_i = t, m)t \, dt \right)
    \end{align*}
    Thus, \( b(x_i; k+1) > b(x_i; m) \), meaning that merely providing more advancement slots will decrease the effort of any participant who was originally advancing, which completes the proof.
\end{proof}

\begin{lemma}\label{lem:binomDesc}
    Let $X \sim Binomial(n,p)$, then $\Pr(X = m \mid X \leq m)$ decreases as \(m\) increases.  
\end{lemma}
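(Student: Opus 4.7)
The plan is to let $b_k := \binom{n}{k} p^k (1-p)^{n-k}$ and $S_m := \sum_{k=0}^m b_k$, so that $\Pr(X=m \mid X \leq m) = b_m/S_m$. The claim that this quantity is non-increasing in $m$ is equivalent to $b_m S_{m+1} \geq b_{m+1} S_m$. Dividing through by $b_{m+1} S_m$ and using $S_{m+1} = S_m + b_{m+1}$, this reduces to the clean inequality
\[
\frac{b_m}{b_{m+1}} + \frac{b_m}{S_m} \;\geq\; 1.
\]

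The structural fact I would exploit is the log-concavity of the binomial mass function, which is transparent from the closed form $r_k := b_k/b_{k+1} = \frac{(k+1)(1-p)}{(n-k)\,p}$: the ratio $r_k$ is strictly increasing in $k$. Given this, I split into two cases. If $b_m \geq b_{m+1}$ (equivalently $r_m \geq 1$), the first summand on the left is already at least $1$ and the inequality holds trivially.

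The substantive case is the increasing regime $b_{m+1} > b_m$, so $r_m < 1$. Here I would iterate the monotonicity $r_{m-j} \leq r_m$ for all $j \geq 1$ to conclude $b_{m-j} \leq b_m r_m^{\,j}$ for every $0 \leq j \leq m$. Summing and extending to a geometric series gives
\[
S_m \;\leq\; b_m \sum_{j=0}^{\infty} r_m^{\,j} \;=\; \frac{b_m}{1-r_m} \;=\; \frac{b_m\, b_{m+1}}{b_{m+1}-b_m},
\]
which, rearranged, is exactly $\frac{b_m}{S_m} \geq 1 - \frac{b_m}{b_{m+1}}$, closing the argument.

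The main obstacle is precisely this increasing regime: a priori the tail-sum $S_m$ could dwarf its last term $b_m$, which would destroy the inequality. The resolution is the uniform geometric decay of $b_{m-j}$ guaranteed by log-concavity with common ratio $r_m$; this forces $b_m$ to control a definite fraction of $S_m$, quantified exactly by the next-step ratio $b_{m+1}/b_m$, which is what the target inequality demands.
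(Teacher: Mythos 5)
Your proof is correct, and it rests on the same structural fact as the paper's: the monotonicity in $k$ of the consecutive-ratio $r_k = b_k/b_{k+1} = \frac{(k+1)(1-p)}{(n-k)p}$, i.e., log-concavity of the binomial mass function (your closed form is the right one; the paper's displayed ratio $\frac{m+1}{n-m}\cdot\frac{p}{1-p}$ has the factor inverted, a harmless typo since it is cancelled by its own reciprocal in the final step). Where you diverge is in how that fact is deployed. The paper applies it term-by-term to shift the partial sum: from $b_j \le r_m\, b_{j+1}$ for all $j \le m$ it gets $S_m \le r_m \sum_{j=0}^{m} b_{j+1} \le r_m S_{m+1}$, which gives the conditional ratio $\frac{b_m}{b_{m+1}}\cdot\frac{S_{m+1}}{S_m} \ge 1$ in one stroke, with no case distinction. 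You instead rearrange the target to $\frac{b_m}{b_{m+1}} + \frac{b_m}{S_m} \ge 1$, dispose of the regime $r_m \ge 1$ trivially, and in the increasing regime bound $S_m$ by the geometric series $b_m/(1-r_m)$. The two are in fact the same estimate in the hard case: $S_m \le r_m S_{m+1} = r_m(S_m + b_{m+1})$ rearranges, when $r_m<1$, to exactly your $S_m \le b_m/(1-r_m)$. So the paper's index shift buys uniformity (no case split, and strictness is immediate because the term-wise inequalities $b_j \le r_m b_{j+1}$ are strict for $j<m$), while your version isolates the conceptual crux more transparently — the mass below $m$ decays geometrically at rate $r_m$, so $b_m$ controls a definite fraction of $S_m$ — at the cost of the case analysis. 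One cosmetic point: the lemma asserts strict decrease, and your argument delivers that too, since in your second case the truncated sum $\sum_{j=0}^{m} r_m^{\,j}$ is strictly below the full geometric series, and in your first case the extra term $b_m/S_m > 0$ makes the inequality strict.
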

\begin{proof}
    First, observe that:
    \begin{align*}
    \frac{\Pr(X = m)}{\Pr(X = m+1)} & = \frac{\binom{n}{m} p^{m} (1-p)^{n-m}}{\binom{n}{m+1} p^{m+1} (1-p)^{n-m-1}} \\
                                & = \frac{m+1}{n-m} \cdot \frac{p}{1-p}
    \end{align*}
    This ratio increases as \(m\) increases.

    Therefore, we have:
    \begin{align*}
        \Pr(X \leq m) & = \sum_{j=0}^{m} \Pr(X = j) \\
                    & \leq \frac{m+1}{n-m} \cdot \frac{p}{1-p} \sum_{j=0}^{m+1} \Pr(X = j) \\
                    & = \frac{m+1}{n-m} \cdot \frac{p}{1-p} \cdot \Pr(X \leq m+1)
    \end{align*}
    Since \(\Pr(X = m \mid X \leq m) = \frac{\Pr(X = m)}{\Pr(X \leq m)}\), consider the ratio of consecutive terms:
    \begin{align*}
        \frac{\Pr(X = m \mid X \leq m)}{\Pr(X = m+1 \mid X \leq m+1)} & = \frac{\Pr(X = m)}{\Pr(X = m+1)} \cdot \frac{\Pr(X \leq m+1)}{\Pr(X \leq m)} \\
            & > \left( \frac{m+1}{n-m} \cdot \frac{p}{1-p} \right) \left( \frac{n-m}{m+1} \cdot \frac{1-p}{p} \right) \\
            & = 1 
    \end{align*}
    Since probabilities are non-negative, \(\Pr(X = m \mid X \leq m)\) decreases as \(m\) increases.
\end{proof}

% \begin{proof}[Proof of Remark~\ref{rmk:TrivialZero}]
%     A trivial simple contest involves two cases, either $V_1 = 0$ or $V_1 = \ldots = V_m \neq 0$. In either case, $\forall l \in [m-1]$, we have $V_l - V_{l+1} = 0$. Also, by definition, $g^{-1}(0) = 0$, therefore, by Theorem~\ref{thm:playerSBNE}, under the symmetric Bayesian Nash Equilibrium, $b(x) = 0$ for any realization of $x$. Thus, every contestant exerts zero effort, which completes the proof.
% \end{proof}

\subsection*{Proof of Proposition~\ref{prop:DesignGuideline}}
\begin{proof}
    We first discuss the ex-post utility when the designer only cares about the effort from ranking $i$. The utility is expressed as:
    \[
    \begin{aligned}
        u_d(\vec{x})  & = c_i b(x_{(i)}) \\
             & = c_ig^{-1}\left(\int_{0}^{x_{(1)}}\frac{\sum_{l=1}^{m-1}\binom{n-1}{l-1}(n-l)(V_l-V_{l+1})F^{n-l-1}(t)(1-F(t))^{l-1}f(t)}{J(F,n,m,t)} t\, dt \right)
    \end{aligned}
    \]
    
    Since $g(\cdot)$ is an monotonically increasing function, so does the inverse $g^{-1}(\cdot)$. Also, $c_i$ is a positive constant, then we only have to optimize over the inner expression. 

    The designer's optimization problem then can be written as :
    \[
    \begin{aligned}
        \mathop{\arg \max}_{m,\vec{V}} \quad & \sum_{l=1}^{m-1}(V_l-V_{l+1})\int_{0}^{x_{(i)}}\frac{\binom{n-1}{l-1}(n-l)F^{n-l-1}(t)(1-F(t))^{l-1}}{\sum_{j=1}^{m}\binom{n-1}{j-1}F^{n-j}(t)(1-F(t))^{j-1}}f(t) t \, dt \\
        s.t. \quad & \sum_{l=1}^{m} V_l \leq B \\
            & V_l  \geq V_{l+1} \geq 0,
    \end{aligned}
    \]
    where we use Lemma~\ref{lem:normal} to simplifies the expression of $J(F,n,m,t)$. 

    Let $Z_l=l(V_l-V_{l+1})$, we know from Corollary~\ref{coro:Consolation} that the optimal contest design should satisfy $V_m=0$, hence the above optimization problem becomes:
    \begin{equation}\label{eq:ExpostHighestObj}
    \begin{aligned}
        \mathop{\arg \max}_{m,\vec{Z}} \quad & \sum_{l=1}^{m-1}Z_l\int_{0}^{x_{(1)}}\frac{\binom{n-1}{l-1}\frac{n-l}{l}F^{n-l-1}(t)(1-F(t))^{l-1}}{\sum_{j=1}^{m}\binom{n-1}{j-1}F^{n-j}(t)(1-F(t))^{j-1}}f(t) t \, dt \\
        s.t. \quad & \sum_{l=1}^{m-1} Z_l \leq B \\
            & Z_l \geq 0,
    \end{aligned}
    \end{equation}
    where it be easily verified that $\sum_{l=1}^{m-1} Z_l = \sum_{l=1}^{m} V_l -mV_m\leq B-0=B$. 

    Since for any given $m$, the coefficient for $Z_l$ is a positive constant that determined by the realization of $i^\text{th}$ order statistics of ability $x_{(1)}$, the optimal solution for the problem must have some $Z_{l^*}=B$, i.e., a non-trivial simple contest with $V_1=V_2=\ldots=V_{l^*}=\frac{B}{l^*}$, as desired. 

    Next, we consider the ex-post utility when the cost function is linear. The utility is given as: 
    \[
    \begin{aligned}
        u_d(\vec{x})  & =\vec{c}\cdot e(\vec{x}) \\
             & = \sum_{i=1}^{m}c_ik^{-1}\int_{0}^{x_i}\frac{\sum_{l=1}^{m-1}\binom{n-1}{l-1}(n-l)(V_l-V_{l+1})F^{n-l-1}(t)(1-F(t))^{l-1}f(t)}{J(F,n,m,t)} t\, dt \\
             & = \sum_{l=1}^{m-1}(V_l-V_{l+1}) \left [ \sum_{i=1}^{m} c_ik^{-1} \left ( \int_{0}^{x_{i}}\frac{\binom{n-1}{l-1}(n-l)F^{n-l-1}(t)(1-F(t))^{l-1}}{\sum_{j=1}^{m}\binom{n-1}{j-1}F^{n-j}(t)(1-F(t))^{j-1}}f(t) t \, dt \right ) \right ],
    \end{aligned}
    \]
    which is still a linear combination of $(V_l-V_{l+1})$s. 

    Again, we obtain the optimization problem:
    \[
    \begin{aligned}
        \mathop{\arg \max}_{m,\vec{Z}} \quad & \sum_{l=1}^{m-1}Z_l \left [ \sum_{i=1}^{m} c_ik^{-1} \left ( \int_{0}^{x_{i}}\frac{\binom{n-1}{l-1}\frac{n-l}{l}F^{n-l-1}(t)(1-F(t))^{l-1}}{\sum_{j=1}^{m}\binom{n-1}{j-1}F^{n-j}(t)(1-F(t))^{j-1}}f(t) t \, dt \right ) \right ] \\
        s.t. \quad & \sum_{l=1}^{m-1} Z_l \leq B \\
            & Z_l \geq 0,
    \end{aligned}
    \] the optimal solution of which is still a simple contest, as desired. 

    As for the ax-ante utilities, it follows directly that the optimal contest are simple contests, since the linearity of expectation operator and cost function enable us to extract out $(V_l-V_{l+1})$s, the optimization objective of the designer is still a non-negative combination of $Z_l$s, thus the optimal contest must be a non-trivial simple contest. 

    To sum up, we now conclude that if the designer's utility is a non-negative linear combination of admitted contestants' effort under the equilibrium, whether ex-ante or ex-post, as long as the designer only cares about effort of single ranking or the cost function is linear, the optimal contests are always non-trivial simple contests, which completes the proof.  
\end{proof}

\subsection*{Proof of Theorem~\ref{thm:ExpostHighestEffort}}
\begin{proof}
    From Equation~\ref{eq:ExpostHighestObj} in Proposition~\ref{prop:DesignGuideline}, the designer's maximization objective becomes:
    \[
    \begin{aligned}
        & \int_{0}^{x_{1}}\frac{\binom{n-1}{l-1}\frac{n-l}{l}F^{n-l-1}(t)(1-F(t))^{l-1}}{\sum_{j=1}^{m}\binom{n-1}{j-1}F^{n-j}(t)(1-F(t))^{j-1}}f(t) t \, dt \\
        = & \int_{0}^{x_{1}}\frac{\binom{n-1}{l}F^{n-l-1}(t)(1-F(t))^{l-1}}{\sum_{j=1}^{m}\binom{n-1}{j-1}F^{n-j}(t)(1-F(t))^{j-1}}f(t) t \, dt \\
        = & \int_{0}^{x_{1}}\frac{\binom{n-1}{l}F^{n-l-1}(t)(1-F(t))^{l}}{\sum_{j=1}^{m}\binom{n-1}{j-1}F^{n-j}(t)(1-F(t))^{j-1}} \frac{f(t)}{1-F(t)} t \, dt,
    \end{aligned}
    \]
    where the numerator of the first term in the integrand indicate the probability of a contestant with ability $t$ to get rank $l+1$ among all contestants, while the denominator denotes her chance of getting admitted into the contest with shortlist capacity $m$.

    To find the optimal simple contest, we have to identify the best $m$. Since the designer only cares about the highest effort $b(x_1)$, from Corollary~\ref{coro:EmptyPrize}, we know immediately that the optimal contest must satisfy $m^*=l^{*}+1$, where $l^*$ is the number of prize in the optimal contest.

    Substituting the relation of $m$ and $l$ when optimal, the optimization problem now becomes:
    \[
    \begin{aligned}
         \mathop{\arg \max}_{m} &  \int_{0}^{x_{1}}\frac{\binom{n-1}{m-1}F^{n-m}(t)(1-F(t))^{m-1}}{\sum_{j=0}^{m-1}\binom{n-1}{j}F^{n-j-1}(t)(1-F(t))^{j}}\frac{f(t)}{1-F(t)} t \, dt \\
        s.t. & \quad m \in [n]\backslash\{1\}.
    \end{aligned}
    \]

    The first term in the integrand of the objective corresponds to $\Pr(X=m-1\mid X \leq m-1)$, where $X\sim Binomial(n-1,1-F(t))$. It follows from Lemma~\ref{lem:binomDesc} that for all $t \geq0$, this term attains its maximum at $m=2$. So the objective also attains its maximum when $m=2,l=1$.

    Finally, we conclude that the optimal contest that maximize ex-post highest effort is a 2-contestant winner-take-all contest, the corresponding effort (or designer's utility $u_d$) expresses as: 
    $$
    \begin{aligned}
        & g^{-1}\left ( B \int_{0}^{x_{1}} \frac{(n-1)F^{n-2}(t)f(t)t}{F^{n-1}(t)+(n-1)F^{n-2}(t)(1-F(t))} \, dt \right ) \\
        = & g^{-1}\left ( B \int_{0}^{x_{1}} \frac{(n-1)f(t)t}{F(t)+(n-1)(1-F(t))} \, dt \right )
    \end{aligned}
    $$
    which completes the proof.
\end{proof}

\subsection*{Proof of Corollary~\ref{coro:OptimalMaximumEffort}}
\begin{proof}
    To find the optimal contest that maximize ex-ante highest effort, the problem can be written as:
    \[
    \mathop{\arg \max}_{m,\vec{V}} \quad\mathbb{E}_{x \sim X_{(1)}}[b(x;m,\vec{V})],
    \]
    where we use $b(\cdot;m,\vec{V})$ to denote the symmetric Bayesian Nash Equilibrium that uniquely determined by contest configuration $(m,\vec{V})$. 

    Theorem~\ref{thm:ExpostHighestEffort} states that the 2-contestant winner-take-all contest maximize $b(x;m,\vec{V})$ for any given $x$, i.e., for any other contest $(m', \vec{V}')$ and $\forall x\geq0$, we have $b(x;2,Be_1) \geq b(x;m',\vec{V}')$. 

    Therefore, $\mathbb{E}_{x\sim X_{(1)}}[b(x;2,Be_1)]\geq \mathbb{E}_{x\sim X_{(1)}}[b(x;m',\vec{V}')]$, i.e., the 2-contestant winner-take-all contest maximizes ex-ante highest effort, which completes the proof.
\end{proof}

\subsection*{Proof of Theorem \ref{thm: opt for total}}
\begin{proof}
    This theorem can be concluded by Proposition \ref{thm:ConpleteSimpleContest} and \ref{thm:OptAsmLinear}
\end{proof}

\subsection*{Proof of Proposition~\ref{thm:ConpleteSimpleContest}}
\begin{proof}
    Since we consider total effort under linear cost function, Proposition~\ref{prop:DesignGuideline} gives that the optimal contest is a non-trivial simple contest, with capacity $m$ and $l$ equal prizes.

    We will show that when $l$ is fixed, total effort decrease with $m$ for $l+1\leq m\leq n$. 

    We begin by introducing several notations, the first is order statistics. We use $X_{(k)} \sim F_{(k,n)}$ to denote the $k^{\text{th}}$ highest ability among all contestants, then:
    $$f_{(k,n)}(x) = n \binom{n-1}{k-1} (1-F(x))^{k-1} F(x)^{n-k} f(x) $$
    $$F_{(k,n)}(x) = n \binom{n-1}{k-1} \int_0^x (1-F(t))^{k-1} F(t)^{n-k} f(t) \,dt $$.

    Additionally, define:
    \[
    A(m, n, x) = \sum_{j=1}^m \binom{n-1}{j-1} F(x)^{n-j} (1-F(x))^{j-1},
    \]when $m=n$, $A(m, n, x) = 1$; when $m<n$, $\frac{dA}{dx} = (n-m) \binom{n-1}{m-1} F(x)^{n-m-1} (1-F(x))^{m-1} f(x) = f_{(m,n-1)}(x)$, therefore $A(m, n, x) = F_{(m,n-1)}(x) $.

    Then use these notation to rewrite the expression of total effort:
    $$
\begin{aligned}   
S(m,n,l) = &\sum_{i=1}^m n\binom{n-1}{i-1} \int_0^{\infty} \int_0^{x} \frac{\frac{1}{l} \cdot f_{(l,n-1)}(t) \cdot t}{A(m,n,t)}\, dt \cdot F(x)^{n-i}(1-F(x))^{i-1}f(x)\,dx \\= &\sum_{i=1}^m n\binom{n-1}{i-1} \int_0^{\infty} \int_t^{\infty} F(x)^{n-i}(1-F(x))^{i-1}f(x)\,dx \cdot \frac{\frac{1}{l} \cdot f_{(l,n-1)}(t) \cdot t}{A(m,n,t)} \,dt \\=& \frac{n}{l} \int_0^{\infty} \int_t^{\infty} \frac{f_{l,n-1}(t) \cdot t}{A(m,n,t)} \cdot A(m,n,x)f(x)\,dx \,dt \\=& \frac{n}{l} \int_0^{\infty} \frac{f_{l,n-1}(t)t}{A(m,n,t)} \,dt \int_t^{\infty} A(m,n,x)f(x)\, dx \\
=& \frac{n}{l} \int_0^{\infty} f_{l,n-1}(t)t  \int_t^{\infty} \frac{A(m,n,x)f(x)}{A(m,n,t)}\, dx \, dt
\end{aligned}
$$

    Now, define $h(m,t)$ to denote the inner integration:
    \[
    h(m, t) = \frac{\int_t^\infty A(m, n, x) f(x) dx}{A(m, n, t)} = \frac{\int_t^\infty\sum_{j=1}^m \binom{n-1}{j-1} F(x)^{n-j} (1-F(x))^{j-1}f(x)dx}{\sum_{j=1}^m \binom{n-1}{j-1} F(t)^{n-j} (1-F(t))^{j-1} }
    \]

    Consider the fraction of $j^\text{th}$ corresponding terms, we denote as $g(j,t)$:
    \[
    g(j,t)=\frac{\int_t^\infty \binom{n-1}{j-1} F(x)^{n-j} (1-F(x))^{j-1}f(x)dx}{\binom{n-1}{j-1} F(t)^{n-j} (1-F(t))^{j-1} }=\frac{\int_q^1 u^{n-j}(1-u)^{j-1}du}{q^{n-j}(1-q)^{j-1}},
    \]where we let $u=F(x)$ and $q=F(t)$ (symbol $q$ is then override).

    We will show that $g(j,t)$ is decreasing with $j$. 
    $$\frac{g(j,t)}{g(j-1,t)}=\frac{\int_q^1 u^{n-j}(1-u)^{j-1}du}{\int_q^1 u^{n-j+1}(1-u)^{j-2}du}\cdot \frac{q^{n-j+1}(1-q)^{j-2}}{q^{n-j}(1-q)^{j-1}}=\frac{\int_q^1 u^{n-j+1}(1-u)^{j-2}\cdot \frac{1-u}{u}du}{\int_q^1 u^{n-j+1}(1-u)^{j-2}du}\cdot\frac{q}{1-q}$$

    Because $\frac{1-u}{u}$ decrease with $u$, also it holds in the integration interval that $u\ge q$, so we have $\frac{1-u}{u}\le \frac{1-q}{q}$. Furthermore,
    $$\frac{g(j,t)}{g(j-1,t)}\le \frac{\int_q^1 u^{n-j+1}(1-u)^{j-2}\cdot \frac{1-q}{q}du}{\int_q^1 u^{n-j+1}(1-u)^{j-2}du}\cdot\frac{q}{1-q}=1,$$ (the equal sign binds only when integration of $u^{n-j+1}(1-u)^{j-2}$ over $u>q$ is $0$, thus, generally, we can assume that the inequality strictly holds). 

    Hence, $g(m,t)< g(m-1,t) < \ldots < g(1,t)$, showing that  $g(j,t)$ is decreasing with $j$. 

    We now prove by induction that $h(m,t)< h(m-1,t)$.

    \underline{Base Case:}
    First, we verify the base case. 
    $$h(1,t)=g(1,t)=\frac{\frac{1}{n}(1-q^n)}{q^{n-1}}$$
$$
\begin{aligned}
h(2,t)= & \frac{\sum_{j=1}^2 \binom{n-1}{j-1}\int_q^1 u^{n-j}(1-u)^{j-1}du}{\sum_{j=1}^2 \binom{n-1}{j-1} q^{n-j} (1-q)^{j-1} }\\
= &\frac{\sum_{j=1}^2 \binom{n-1}{j-1} q^{n-j} (1-q)^{j-1}\cdot g(j,t)}{\sum_{j=1}^2 \binom{n-1}{j-1} q^{n-j} (1-q)^{j-1} } \\ < & g(1,t) =h(1,t)
\end{aligned}
$$

Similarly, we can show that $h(2,t)>g(2,t)$.

    \underline{Inductive Hypothesis:}
    Assume that $h(j,t)< h(j-1,t)$ holds for all $j\leq m-1$.

    \underline{Induction Step:}
    We now prove that $h(m,t)< h(m-1,t)$. 

    $$
\begin{aligned}
& h(m,t)  \\ = &\frac{\sum_{j=1}^{m-1} \binom{n-1}{j-1}\int_q^1 u^{n-j}(1-u)^{j-1}du+\binom{n-1}{m-1}\int_q^1 u^{n-m}(1-u)^{m-1}du}{\sum_{j=1}^{m-1} \binom{n-1}{j-1} q^{n-j} (1-q)^{j-1}+\binom{n-1}{m-1} q^{n-m} (1-q)^{m-1}} \\ =&\frac{h(m-1,t)\cdot \sum_{j=1}^{m-1} \binom{n-1}{j-1} q^{n-j} (1-q)^{j-1}+g(m,t)\cdot \binom{n-1}{m-1} q^{n-m} (1-q)^{m-1}}{\sum_{j=1}^{m-1} \binom{n-1}{j-1} q^{n-j} (1-q)^{j-1}+\binom{n-1}{m-1} q^{n-m} (1-q)^{m-1}} \\ <&\frac{h(m-1,t)\cdot \sum_{j=1}^{m-1} \binom{n-1}{j-1} q^{n-j} (1-q)^{j-1}+g(m-1,t)\cdot \binom{n-1}{m-1} q^{n-m} (1-q)^{m-1}}{\sum_{j=1}^{m-1} \binom{n-1}{j-1} q^{n-j} (1-q)^{j-1}+\binom{n-1}{m-1} q^{n-m} (1-q)^{m-1}} \\ <&\frac{h(m-1,t)\cdot \sum_{j=1}^{m-1} \binom{n-1}{j-1} q^{n-j} (1-q)^{j-1}+h(m-1,t)\cdot \binom{n-1}{m-1} q^{n-m} (1-q)^{m-1}}{\sum_{j=1}^{m-1}\binom{n-1}{j-1} q^{n-j} (1-q)^{j-1}+\binom{n-1}{m-1} q^{n-m} (1-q)^{m-1}} \\=&h(m-1,t)
\end{aligned}
$$

Similarly, it can be shown that $h(m,t)>g(m,t)$. Thus, by the principle of mathematical induction, $g(j,t)<h(j,t)<h(j-1,t)$ holds for any $j=1,\ldots,m$.

Now the total effort becomes: 
$$
\begin{aligned}
S(m, n, l) & = \frac{n}{l}\int_0^\infty f_{(l,n-1)}(t)t h(m,t)dt \\ &< \frac{n}{l}\int_0^\infty f_{(l,n-1)}(t)t h(m-1,t)dt
\\ &= S(m-1,n,l)
\end{aligned}
$$

Therefore, $S(m,n,l)$ decrease with $m$, also $m>l$, thus, we conclude with $S(m,n,l)<S(l+1,n,l)$. 

So far we have showed that in total effort setting, the optimal simple contest must satisfies $l=m-1$, hence it is a complete simple contest, which complete the proof.
\end{proof}

\subsection*{Proof of Lemma~\ref{lem:betaRepTotalEffort}}
\begin{proof}
    Let $q=F(t)$ ($q$ therefore no longer stand for quantile), we then define $a(j,q)=\int_q^1\binom{n-1}{j-1}u^{n-j}(1-u)^{j-1}\,du$ and $b(j,q)=\binom{n-1}{j-1}q^{n-j}(1-q)^{j-1}$.

    Ex-ante total effort now rewrites as:
    \[
    S(m)=\int_0^1\frac{F^{-1}(q)}{1-q}b(m,q)\frac{\sum_{j=1}^{m}a(j,q)}{\sum_{j=1}^{m}b(j,q)}\,dq
    \]

    We first simplify the summation term in the denominator, i.e., $B(m,q)=\sum_{j=1}^{m}b(j,q)$, after taking derivative, we get:
    $$
\begin{aligned}
B(m,q)' = &(n-1)q^{n-2}+\sum_{j=2}^m(\binom{n-1}{j-1}(n-j)q^{n-j-1}(1-q)^{j-1}-\binom{n-1}{j-1}(j-1)q^{n-j}(1-q)^{j-2}) \\ = & \sum_{j=2}^m (-\binom{n-1}{j-1}(j-1)q^{n-j}(1-q)^{j-2}+\binom{n-1}{j-1-1}(n-j+1)q^{n-j+1-1}(1-q)^{j-1-1})\\&+\binom{n-1}{m-1}(n-m)q^{n-m-1}(1-q)^{m-1} \\= &\binom{n-1}{m-1}(n-m)q^{n-m-1}(1-q)^{m-1}    
\end{aligned}
$$

Thus, we can use integration to express $B(m,q)$:
$$B(m,q)=B(m,0)+\int_0^qB(m,u)'\,du=\int_0^qB(m,u)'\, du=\int_0^q\beta(u,n-m,m)\,du.$$

By definition, $b(m,q)=\frac{1}{n}\beta(q,n-m+1,m)$, so:
$$A(m,q)=\sum_{j=1}^{m}a(j,q)=\int_q^1B(m,u)\, du=\int_q^1\int_0^u \beta(x,n-m,m)\, dx\, du.$$

Rearrange the order of integration. Since the integration region is a trapezoid, the outer integral can be split into two parts:
$$
\begin{aligned} 
A(m,q)&=\int_0^q\int_q^1\beta(x,n-m,m)\,du\,dx+\int_q^1\int_x^1 \beta(x,n-m,m)\,du\,dx  \\&= \int_0^q(1-q)\beta(x,n-m,m)\,dx+\int_q^1(1-x) \beta(x,n-m,m)\,dx  
\end{aligned}
$$

Thereby, total effort can be simplified using beta distribution:
\[
\begin{aligned}
S(m) &=n\int_0^1\frac{F^{-1}(q)}{1-q}b(m,q)\frac{\sum_{j=1}^{m}a(j,q)}{\sum_{j=1}^{m}b(j,q)}dq 
\\&=n\int_0^1\frac{F^{-1}(q)}{1-q}\frac{1}{n}\beta(q,n-m+1,m)\\&\phantom{a}\cdot\frac{\int_0^q(1-q)\beta(x,n-m,m)dx+\int_q^1(1-x) \beta(x,n-m,m)dx}{\int_0^q\beta(u,n-m,m)du}dq \\&=\int_0^1\frac{F^{-1}(q)}{1-q}\beta(q,n-m+1,m)\big((1-q)+\frac{\int_q^1(1-x) \beta(x,n-m,m)dx}{\int_0^q\beta(u,n-m,m)du}\big)dq \\&=\int_0^1F^{-1}(q)\beta(q,n-m+1,m)dq\\&\phantom{a}+\int_0^1\frac{F^{-1}(q)}{1-q}\beta(q,n-m+1,m)\frac{\int_q^1(1-x) \beta(x,n-m,m)dx}{\int_0^q\beta(u,n-m,m)du}dq \\&=\int_0^1F^{-1}(q)\beta(q,n-m+1,m)dq\\ & \phantom{a}+\int_0^1\frac{F^{-1}(q)q}{1-q}\frac{n}{n-m}\beta(q,n-m,m)\frac{\frac{m}{n}\int_q^1\beta(x,n-m,m+1)dx}{\int_0^q\beta(u,n-m,m)du}dq \\&=\int_0^1F^{-1}(q)\beta(q,n-m+1,m)dq
\\ & \phantom{a}+\int_0^1\frac{F^{-1}(q)q}{1-q}\frac{m}{n-m}\beta(q,n-m,m)\frac{\int_q^1\beta(x,n-m,m+1)dx}{\int_0^q\beta(u,n-m,m)du}dq
\\ &= \int_0^1F^{-1}(q)\beta(q,n-m+1,m)dq\\&\phantom{a}+\int_0^1 F^{-1}(q)\frac{q}{1-q}\frac{m}{n-m}\frac{\beta(q,n-m,m)}{\int_0^q\beta(u,n-m,m)du}\int_q^1\beta(x,n-m,m+1)\,dx\,dq,
\end{aligned}
\]which completes the proof.
\end{proof}

\begin{lemma}\label{lem:1}
    Assume $\beta(q,n-m+1,m+1)$ attain it maximum at $q=\mu$, then when $\mid q-\mu\mid > \delta$, we have $\beta(q,n-m+1,m+1)\le \epsilon$, where $\delta = \sqrt{\frac{k(1-k)}{n}\big(\ln n+\ln \frac{1}{\epsilon^2}+\ln \frac{1}{2\pi k(1-k)}\big)}=\Theta(\sqrt{\frac{\ln n}{n}})$.
\end{lemma}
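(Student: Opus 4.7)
The plan is to show that $\beta(q, n-m+1, m+1)$ decays like a Gaussian around its mode $\mu$, with variance of order $k(1-k)/n$ where $k = m/n$, and then invert this bound to solve for $\delta$. First I would identify the mode of the Beta density: since $\frac{d}{dq}\log \beta(q, n-m+1, m+1) = \frac{n-m}{q} - \frac{m}{1-q}$, the unique maximum occurs at $\mu = (n-m)/n = 1-k$. I would then compute the peak value using Stirling's formula applied to $B(n-m+1,m+1) = \Gamma(n-m+1)\Gamma(m+1)/\Gamma(n+2)$, which yields
\[
\beta(\mu, n-m+1, m+1) \sim \sqrt{\frac{n}{2\pi k(1-k)}}.
\]

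Next I would derive a quadratic upper bound on $\log \beta(q)$ around $\mu$. Differentiating twice gives $\frac{d^2}{dq^2}\log \beta(q) = -\frac{n-m}{q^2} - \frac{m}{(1-q)^2}$, which at $q = \mu$ equals $-n/(k(1-k))$. Because $\log \beta(\cdot, n-m+1, m+1)$ is concave on $(0,1)$ (the parameters exceed $1$) and its second derivative is uniformly bounded below in absolute value by $n/(k(1-k))$ on the relevant neighbourhood of $\mu$, a Taylor expansion with remainder gives
\[
\log \beta(q) \leq \log \beta(\mu) - \frac{n(q-\mu)^2}{2 k(1-k)},
\]
so $\beta(q) \leq \beta(\mu)\exp\bigl(-n(q-\mu)^2/(2k(1-k))\bigr)$ for $q$ in a neighbourhood of $\mu$, with the bound extended outside by log-concavity (since the density decays even faster in the tails).

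Then I would set the right-hand side equal to $\epsilon$ and solve. Requiring $\beta(\mu)\exp\bigl(-n\delta^2/(2k(1-k))\bigr) \leq \epsilon$ is equivalent to
\[
\delta^2 \geq \frac{2 k(1-k)}{n}\log\frac{\beta(\mu)}{\epsilon} = \frac{k(1-k)}{n}\left(\ln n + \ln\frac{1}{\epsilon^2} + \ln\frac{1}{2\pi k(1-k)}\right),
\]
using the Stirling-derived expression for $\beta(\mu)$, which exactly matches the declared $\delta$. Finally, since $k(1-k)$ and the $\epsilon$-dependent terms are $O(1)$, this gives $\delta = \Theta(\sqrt{\ln n/n})$.

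The main obstacle will be making the quadratic log-expansion rigorous uniformly in $q$: the second derivative $-\frac{n-m}{q^2} - \frac{m}{(1-q)^2}$ blows up at the boundaries, so the naive Taylor bound only holds on a controlled neighbourhood of $\mu$. I would handle this by splitting the analysis into two regimes: for $|q-\mu|$ of order $\sqrt{\ln n/n}$ I use the Taylor bound with the minimum of the second derivative over that neighbourhood (which is still $-n/(k(1-k))(1+o(1))$), and for larger $|q-\mu|$ I invoke log-concavity together with the fact that the tangent quadratic at the edge of the neighbourhood dominates the density globally. A secondary technical point is tracking the Stirling error terms carefully to confirm that $\ln \beta(\mu) = \tfrac{1}{2}\ln(n/(2\pi k(1-k))) + o(1)$, so that the resulting $\delta$ is exactly the one stated rather than one differing by a constant.
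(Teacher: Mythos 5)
Your proposal is correct and follows essentially the same route as the paper: identify the mode $\mu=1-k$, evaluate the peak via Stirling as $\sqrt{n/(2\pi k(1-k))}$, obtain the Gaussian bound $\beta(q)\le \beta(\mu)\exp\bigl(-n(q-\mu)^2/(2k(1-k))\bigr)$ (the paper does this by Taylor-expanding $(n-m)\ln(1-\tfrac{c}{1-k})+m\ln(1+\tfrac{c}{k})$, which is exactly your second-order expansion of the log-density), and solve for $\delta$. Your extra care about the uniformity of the quadratic bound and the tail regime (via log-concavity/unimodality) only makes rigorous what the paper leaves implicit, so no substantive difference remains.
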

\begin{proof}
    We begin by finding the point where probability mass of beta distribution attain its maximum. Taking derivative: 
$$\beta(q,\alpha,\beta)'=\beta(q,\alpha,\beta)(\frac{\alpha-1}{q}-\frac{\beta-1}{1-q}).$$

    Then it can be concluded that the Beta distribution function first increases and then decreases. At \(\mu = \frac{\alpha - 1}{\alpha + \beta - 2}\), the derivative equals zero, and the function attains its maximum value, $\beta(q,n-m,m)_{\max}\simeq\frac{\sqrt{n}}{\sqrt{2\pi k(1-k)}}$, where $k=n/m$, when $n$ is large.

For the convenience of discussion, we consider $\beta(q,n-m+1,m+1)$, thus, $\mu=\frac{n-m}{n}=1-k$, 
$$
\begin{aligned}
&\beta(\mu-c,n-m+1,m+1) \\&=\frac{(n+1)!}{(n-m)!m!}(\mu-c)^{n-m}(1-\mu+c)^{m} \\&=\frac{(n+1)\sqrt{2\pi n}(\frac{n}{e})^n}{\sqrt{2\pi (n-m)}(\frac{n-m}{e})^{n-m}\sqrt{2\pi (m)}(\frac{m}{e})^{m}}(\mu-c)^{n-m}(1-\mu+c)^{m}\\&=\frac{\sqrt{n}(\frac{1-k-c}{1-k})^{n-m}(\frac{k+c}{k})^m}{\sqrt{2\pi k(1-k)}} \\&=\frac{\sqrt{n}}{\sqrt{2\pi k(1-k)}}\exp\{(n-m)\ln (1-\frac{c}{1-k})+m\ln (1+\frac{c}{k})\} \\&=\frac{\sqrt{n}}{\sqrt{2\pi k(1-k)}}\exp\{n(1-k)(-\frac{c}{1-k}-\frac{1}{2}\big(\frac{c}{1-k})^2\big) + \big( nk (\frac{c}{k}-\frac{1}{2}(\frac{c}{k})^2)\big) \}\\&=\frac{\sqrt{n}}{\sqrt{2\pi k(1-k)}}\exp\{ \frac{-nc^2}{2k(1-k)}\},   
\end{aligned}
$$
where the second equality derives from Sterling's formula, and the second to last equality derives from Taylor expansion.

To satisfy $\beta(\mu-c,n-m+1,m+1)\le \epsilon$, we have to make the following inequality stands:
$$\ln (\frac{\sqrt{n}}{\sqrt{2\pi k(1-k)}})-\frac{nc^2}{2k(1-k)}\le \ln \epsilon$$

Therefore, 
$$c\ge \delta=\sqrt{\frac{k(1-k)}{n}\big(\ln n+\ln \frac{1}{\epsilon^2}+\ln \frac{1}{2\pi k(1-k)}\big)}=\Theta(\sqrt{\frac{\ln n}{n}}).$$

We aim to ensure that the convergence rate is independent of \( k \). To achieve this, it suffices to choose \( c \) greater than the maximum value of the right-hand side (with respect to \( k \)). Since \( k(1-k) \leq \frac{1}{4} \), it follows that:
\[
\frac{k(1-k)}{n} \ln n \leq \frac{1}{4} \ln n.
\]

Additionally, as \( k(1-k) \to 0 \), we have: 
\[
k(1-k) \ln \frac{1}{2\pi k(1-k)} \to 0.
\]

Therefore, there exists a sufficiently large \( N \) such that for all \( n > N \), the maximum value of the right-hand side does not exceed \( \sqrt{\frac{\ln n}{n}} \). Consequently, as long as the deviation of \( q \) from \( \mu \) exceeds \( \delta = \sqrt{\frac{\ln n}{n}} \), it holds that \( \beta(q, n-m+1, m+1) \leq \epsilon \), as desired.

The analysis is similar for the other side. This completes the proof.
\end{proof}

\begin{lemma}\label{lem:2}
    when $n \rightarrow \infty$, we have:
    $$\int_0^1F^{-1}(q)\beta(q,n-m+1,m)\,dq = F^{-1}(1-k).$$
\end{lemma}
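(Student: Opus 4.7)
The plan is to observe that $\beta(\cdot, n-m+1, m)$ is exactly the density of a $\mathrm{Beta}(n-m+1, m)$ random variable, whose mean is $(n-m+1)/(n+1) \to 1-k$ and whose variance is $\Theta(1/n)$. So the whole integral is simply the expectation of $F^{-1}$ against a distribution that concentrates at $1-k$, and by continuity of $F^{-1}$ at $1-k$ we should recover $F^{-1}(1-k)$ in the limit.

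Concretely, I would first use that Beta densities integrate to one to rewrite
\[
\int_0^1 F^{-1}(q)\,\beta(q,n-m+1,m)\,dq - F^{-1}(1-k) = \int_0^1 \bigl(F^{-1}(q) - F^{-1}(1-k)\bigr)\,\beta(q,n-m+1,m)\,dq,
\]
and then split the range as $I_\delta \cup I_\delta^c$ with $I_\delta = [1-k-\delta,\,1-k+\delta]$, where $\delta = \Theta(\sqrt{\ln n/n})$ is exactly the window built in the preceding Lemma~\ref{lem:1}. The Stirling-based argument in that lemma adapts verbatim to $\beta(\cdot, n-m+1, m)$ because the parameters differ by one, which is absorbed into constants in the asymptotic expansion.

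On $I_\delta$, fix any $\eta>0$; by continuity of $F^{-1}$ at $1-k$, once $n$ is large enough we have $|F^{-1}(q)-F^{-1}(1-k)| < \eta$ for every $q \in I_\delta$, and since $\beta$ integrates to one the contribution of $I_\delta$ is bounded by $\eta$. On $I_\delta^c$, the adapted Lemma~\ref{lem:1} bound gives $\beta(q,n-m+1,m) \le \epsilon$, which tends to zero faster than any polynomial in $n$. The crucial feature inherited from Lemma~\ref{lem:1} is that $\delta$ (and hence the convergence rate) is independent of $k$, matching the uniformity demanded by Lemma~\ref{lem:AsyRep}.

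The main obstacle will be the tail on $I_\delta^c$ when $F^{-1}$ is unbounded near $q = 1$, as in the exponential case of Example~\ref{exam:OptimalExp}: a uniform sup-norm bound on $F^{-1}$ is unavailable, so the pointwise smallness $\beta \le \epsilon$ is not by itself enough. I would resolve this by pitting the exponential-in-$n$ decay of the Beta density (from the Stirling estimate used to derive Lemma~\ref{lem:1}) against the much slower growth of $F^{-1}(q)$ as $q \to 1$, under the mild standing assumption that the ability distribution has a finite mean, i.e.\ $\int_0^1 |F^{-1}(q)|\,dq < \infty$. Sending $\eta, \epsilon \to 0$ after $n \to \infty$ then yields the stated limit.
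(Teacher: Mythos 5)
Your proposal is correct and is essentially the paper's own argument: both proofs combine the concentration window from Lemma~\ref{lem:1} with continuity of $F^{-1}$ at $1-k$, splitting the integral into a $\delta$-neighborhood of $\mu=1-k$ (controlled by continuity, since the density integrates to one) and its complement (controlled by the pointwise bound $\beta\le\epsilon$), with the window width independent of $k$ as Lemma~\ref{lem:AsyRep} demands; whether one fixes the window by continuity and lets the Lemma~\ref{lem:1} radius $\Theta(\sqrt{\ln n/n})$ shrink into it, as the paper does, or works on the shrinking window directly, as you do, is immaterial. Where you genuinely go beyond the paper is the boundedness issue: the paper's proof simply opens with ``provided that $F^{-1}(q)\le L$ is bounded,'' an assumption that its own Example~\ref{exam:OptimalExp} (exponential abilities, $F^{-1}(q)=-\lambda^{-1}\ln(1-q)$, unbounded as $q\to 1$) fails to satisfy, and you correctly flag and repair exactly this gap. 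Your repair can in fact be stated more simply than the ``exponential decay versus slow growth'' race you sketch: since the Beta density is unimodal, the bound $\beta(q,n-m+1,m)\le\epsilon$ holds on the \emph{entire} complement of the window, including near $q=1$, so under your finite-mean assumption the tail contribution is at most $\epsilon\bigl(\int_0^1|F^{-1}(q)|\,dq+|F^{-1}(1-k)|\bigr)\to 0$ directly. Two minor caveats: your claim that $\epsilon$ vanishes faster than any polynomial overstates what the fixed radius $\delta=\Theta(\sqrt{\ln n/n})$ delivers (it yields a $k$-dependent polynomial rate, which still suffices for the limit), and you were right to justify, rather than assume, the shift from $\beta(\cdot,n-m+1,m+1)$ in Lemma~\ref{lem:1} to $\beta(\cdot,n-m+1,m)$ here --- a parameter change the paper's proof elides silently.
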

\begin{proof}
    Provided that $F^{-1}(q)\le L$ is bounded and continuous. Therefore for any $\epsilon>0$, there exists $\delta_1>0$, such that when $|q-\mu|=|q-1+k|\le \delta_1$, we have $|F^{-1}(q)-F^{-1}(1-k)|\le \epsilon /2$.

    From Lemma~\ref{lem:1}, for $\epsilon_1=\frac{\epsilon}{2L}$, exists $\delta_2=\Theta(\sqrt{\frac{\ln n}{n}})$, such that $\beta(q,n-m+1,m)<\epsilon_1, if |q-\mu|> \delta_2$.

When $n\rightarrow +\infty$, $\frac{\ln n}{n}=0$, therefore exist $N_1$, such that when $n>N_1$, $\delta_2=\Theta(\sqrt{\frac{\ln n}{n}})<\delta_1$. 

Let $\delta=\delta_1$, therefore:
$$
\begin{aligned}
& \int_0^1F^{-1}(q)\beta(q,n-m+1,m)dq \\ &=\int_{\mu-\delta}^{\mu+\delta}F^{-1}(q)\beta(q,n-m+1,m)dq+ \int_0^{\mu-\delta}F^{-1}(q)\beta(q,n-m+1,m)dq \\ & \phantom{a}+\int_{\mu+\delta}^1F^{-1}(q)\beta(q,n-m+1,m)dq  \\ &\le (1-2\delta)\epsilon_1L+ (F^{-1}(1-k)+\frac{\epsilon}{2})\int_{\mu-\delta}^{\mu+\delta}\beta(q,n-m+1,m)dq \le F^{-1}(1-k)+\epsilon    
\end{aligned}
$$

Similarly, we can prove that:
$$\int_0^1F^{-1}(q)\beta(q,n-m+1,m)dq \ge F^{-1}(1-k)-\epsilon $$

Thus, when $n\rightarrow \infty$, it holds that:
$$\int_0^1F^{-1}(q)\beta(q,n-m+1,m)dq = F^{-1}(1-k),$$
which completes the proof.
\end{proof}

\begin{lemma}\label{lem:3}
    When $q<\mu=1-k$ and $n\rightarrow \infty$, it holds that:
    $$\frac{\beta(q,n-m,m)}{n\int_0^q\beta(x,n-m,m)dx}=\frac{1}{q}-\frac{k}{q(1-q)}.$$
\end{lemma}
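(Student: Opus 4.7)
The plan is a Laplace-type approximation of the cumulative integral in the denominator, exploiting the fact that for $q < 1-k$ the point $q$ lies strictly to the left of the mode of the Beta$(n-m, m)$ density (which tends to $1-k$), so that $\beta(\cdot, n-m, m)$ grows roughly exponentially as $x \uparrow q$. Writing $a = n-m$ and $b = m$, I would first compute
\[
\frac{d}{dx}\ln\beta(x, a, b) = \frac{a-1}{x} - \frac{b-1}{1-x},
\]
and evaluate at $x = q$ with $a/n \to 1-k$ and $b/n \to k$ to obtain $n\alpha(q) + O(1)$, where
\[
\alpha(q) := \frac{1-k}{q} - \frac{k}{1-q} = \frac{(1-k)(1-q) - kq}{q(1-q)} = \frac{1}{q} - \frac{k}{q(1-q)}.
\]
The hypothesis $q < 1-k$ ensures $\alpha(q) > 0$, so the Beta density is increasing at $q$ and the claimed limit should be precisely this $\alpha(q)$.

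Next I would substitute $u = n(q - x)$ to rewrite
\[
n \int_0^q \beta(x, a, b)\, dx = \int_0^{nq} \beta\bigl(q - u/n,\, a, b\bigr)\, du,
\]
and Taylor-expand the log-ratio of the integrand to $\beta(q, a, b)$:
\[
\ln\frac{\beta(q - u/n, a, b)}{\beta(q, a, b)} = (a-1)\ln\bigl(1 - \tfrac{u}{nq}\bigr) + (b-1)\ln\bigl(1 + \tfrac{u}{n(1-q)}\bigr) \;\longrightarrow\; -\alpha(q)\,u,
\]
pointwise in $u$ as $n\to\infty$. To pass the limit through the integral I would use the elementary inequalities $\ln(1-t) \le -t$ and $\ln(1+t) \le t$ to get the $n$-uniform one-sided bound
\[
\frac{\beta(q - u/n, a, b)}{\beta(q, a, b)} \le e \cdot e^{-\alpha(q)\,u} \qquad \text{for all } u \in [0, nq],
\]
an integrable dominating function. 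Dominated convergence then gives
\[
\frac{n\int_0^q \beta(x, a, b)\, dx}{\beta(q, a, b)} \;\longrightarrow\; \int_0^\infty e^{-\alpha(q) u}\, du = \frac{1}{\alpha(q)},
\]
which rearranges to the stated identity.

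The main obstacle is justifying the exchange of limit and integration uniformly up to the endpoint $u = nq$: the Taylor expansion alone gives only pointwise convergence, and the naive $O(u^2/n)$ quadratic Taylor error would blow up for $u$ of order $\sqrt{n}$, so one cannot dominate via a naive Gaussian envelope. The log-concavity shortcut above (turning the Taylor quadratic into a one-sided linear bound via $\ln(1\pm t) \le \pm t$) is what closes the argument: it produces a clean exponential dominator valid on the entire range $[0, nq]$, and the tiny additive constant absorbed into the factor $e$ disappears in the limit, leaving exactly $\int_0^\infty e^{-\alpha(q) u}\,du = 1/\alpha(q)$ as required.
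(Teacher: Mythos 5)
Your proof is correct, but it takes a genuinely different route from the paper's. The paper discretizes: it divides $[0,q]$ into $N$ even subintervals with $n=N^2$, rewrites $\beta(q,n-m,m)\big/\bigl(n\int_0^q\beta(x,n-m,m)\,dx\bigr)$ as a weighted average of the terms $\frac{1}{q_j}-\frac{k}{q_j(1-q_j)}$, and argues via the consecutive-weight ratio $\beta(q_{j+1})/\beta(q_j)\geq\lambda_N$ with $\lambda_N\to\infty$ that the weight of the last term tends to $1$, so the average converges to the value at $q$. You instead substitute $u=n(q-x)$ and identify the limit $n\int_0^q\beta\,/\,\beta(q)\to\int_0^\infty e^{-\alpha(q)u}\,du=1/\alpha(q)$ by dominated convergence, where $\alpha(q)=\frac1q-\frac{k}{q(1-q)}$. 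Your envelope checks out: the one-sided bounds $\ln(1-t)\le-t$ and $\ln(1+t)\le t$ give, for $u\in[0,nq]$, a log-ratio at most $-\alpha(q)u+\frac{u(1-2q)}{nq(1-q)}$, and since $\frac{u(1-2q)}{nq(1-q)}\le\frac{1-2q}{1-q}<1$ for $q<\frac12$ (and is nonpositive for $q\ge\frac12$), the dominator $e\cdot e^{-\alpha(q)u}$ is valid on the whole range; integrability uses exactly the hypothesis $q<1-k$, i.e., $\alpha(q)>0$. What each approach buys: your version is tighter in rigor, since the paper's discrete step replaces the increments $\beta(q_{j+1})-\beta(q_j)$ by $\beta'(q_j)q/N$ without explicit error control and its weighted-average limit is somewhat informal, whereas dominated convergence handles the full range, including the endpoint $x=0$, cleanly and produces the constant directly. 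One caveat worth a sentence in your write-up: the paper additionally asserts (and later uses in Lemma~\ref{lem:AsyRep}) that the convergence is uniform in $k$; your dominating function degrades as $q\uparrow 1-k$ because $\alpha(q)\to 0$, so to recover the uniform statement on $q\le 1-k-\delta$ you should observe that $\alpha(q)$ is bounded below by a positive constant depending only on $\delta$ there, after which your bound is uniform as well.
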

\begin{proof}
    We start by making discrete approximation for the numerator. Diving $[0,q]$ into $N$ intervals evenly, $q_j=\frac{j}{N}q,j=0,1,2,\cdots,N$, let $n=N^2$, the fraction becomes:
$$
\begin{aligned}
\frac{\beta(q,n-m,m)}{n\int_0^q\beta(x,n-m,m)dx} & =\frac{\sum_{j=1}^{N-1}(\beta(q_{j+1}-\beta(q_j)))}{n\sum_{j=1}^{N-1}\beta(q_{j})\frac{q}{N}}\\ &= \frac{\sum_{j=1}^{N-1}\beta(q_j)'\frac{q}{N}}{n\sum_{j=1}^{N-1}\beta(q_{j})\frac{q}{N}}\\&=\frac{\sum_{j=1}^{N-1}\beta(q_j)(\frac{1}{q_j}-\frac{k}{q_j(1-q_j)})}{\sum_{j=1}^{N-1}\beta(q_{j})}
\end{aligned}
$$

This fraction can be viewed as weighted average over $N-1$ fraction terms, where the $j^{\text{th}}$ fraction term is $\frac{1}{q_j}-\frac{k}{q_j(1-q_j)}$, and its weight is $\frac{\beta(q_{j})}{\sum_{j=1}^{N-1}\beta(q_{j})}$.

We can compare the weight of two consecutive terms:
$$
\begin{aligned}
\frac{\beta(q_{j+1})}{\beta(q_j)} & =1+\frac{\beta(q_{j+1})-\beta(q_j)}{\beta(q_j)}\\ &=1+n\big(\frac{1}{q_j}-\frac{k}{q_j(1-q_j)}\big)\frac{q}{N}\\ &>1+N\big(\frac{1}{q}-\frac{k}{q(1-q)}\big)\\ &=\lambda_N
\end{aligned}
$$

Then the wight of last term:
$$\frac{\beta_{N-1}}{\sum_{j=1}^{N-1}\beta(q_{j})}\ge\frac{\beta_{N-1}}{\beta_{N-1}\sum_{j=1}^{N-1}\frac{1}{\lambda_N^{j-1}}}\ge \frac{1}{1-\frac{1}{\lambda_N}}=\frac{\lambda_N}{\lambda_N-1}$$

When \( q \leq 1-k-\delta \), we have:
\[
\lambda_N = 1 + \frac{n}{N} \frac{1}{q} \left( 1 - \frac{k}{1-q} \right) \geq 1 + \frac{n}{N} \frac{1}{1-k-\delta} \frac{\delta}{k+\delta}.
\]

Using the definition of \(\delta = \sqrt{\frac{\ln n}{n}}\), this can be further bounded as:
\[
\lambda_N \geq 1 + \frac{4 \delta n}{N} = 1 + \frac{4n \sqrt{\frac{\ln n}{n}}}{N} = 1 + 4\sqrt{\ln n}.
\]

As \( N \to \infty \), it follows that \( \lambda_N \to \infty \). Consequently, in the expression:
\[
\frac{\sum_{j=1}^{N-1} \beta(q_j) \left( \frac{1}{q_j} - \frac{k}{q_j(1-q_j)} \right)}{\sum_{j=1}^{N-1} \beta(q_j)},
\]

the weight of the final term approaches 1. Therefore, the entire ratio converges to the value of the final term. Therefore, we have:
\[
\lim_{n \to \infty} \frac{\beta(q, n-m, m)}{n \int_0^q \beta(x, n-m, m) \, dx} = \lim_{N \to \infty} \left( \frac{1}{q_{N-1}} - \frac{k}{q_{N-1}(1-q_{N-1})} \right)=\frac{1}{q} - \frac{k}{q(1-q)}.
\]

Moreover, the rate of convergence is independent of \( k \). The proof is then completed.
\end{proof}

\subsection*{Proof of Lemma~\ref{lem:AsyRep}}
\begin{proof}
    Recall from Lemma~\ref{lem:betaRepTotalEffort} the beta representation of ex-ante total effort:
    \[
    \begin{aligned}
        S(m,n) = & 
        \int_0^1F^{-1}(q)\beta(q,n-m+1,m)\,dq \\
        & +\int_0^1F^{-1}(q)\frac{q}{1-q} \frac{m}{n-m}\frac{\beta(q,n-m,m)}{\int_0^q\beta(x,n-m,m)\,dx}\int_q^1\beta(x,n-m,m+1)\,dx\, dq.
    \end{aligned}
    \]

    Let:
    \[
    \phi(k) = \int_0^{1-k} F^{-1}(q) \frac{q}{1-q} \frac{k}{1-k} \left( \frac{1}{q} - \frac{k}{q(1-q)} \right) \, dq,
    \]then we aim to show that for any \(\epsilon > 0\), there exists \(N > 0\) such that for all \(n > N\), the inequality \(
\left| \frac{S(m,n)}{n} - \phi(k) \right| \leq \epsilon, \forall k \in (0,1), \) holds.

    %Also recall from Lemma~\ref{lem:1}, that $\beta(q,n-m,m)$ is almost concentrated in the $\delta$ neighborhood of $1-k$, and converges to zero elsewhere. It satisfies that $\delta\sim \Theta(\sqrt{\frac{\ln n}{n}})$.

    The first term in \(\frac{S(m,n)}{n}\) is \(\frac{1}{n} \int_0^1 F^{-1}(q) \beta(q, n-m+1, m) \, dq. \) From the asymptotic properties of the beta distribution, as \(n \to \infty\), \(\beta(q, n-m+1, m)\) becomes concentrated near \(q = 1-k\), and the integral's contribution outside this region vanishes. Furthermore, since \(F^{-1}(q)\) is bounded and \(\beta(q, n-m+1, m)\) scales with \(n\), we have:
\[
\lim_{n \to \infty} \frac{1}{n} \int_0^1 F^{-1}(q) \beta(q, n-m+1, m) \, dq = \lim_{n \to \infty} \frac{F^{-1}(1-k)}{n} = 0.
\]
Thus, the first term vanishes asymptotically.

    The second integration term in the $S(m,n)$ is $\frac{1}{n}\int_0^1F^{-1}(q)\frac{q}{1-q} \frac{m}{n-m}\frac{\beta(q,n-m,m)}{\int_0^q\beta(x,n-m,m)\,dx}\int_q^1\beta(x,n-m,m+1)\,dx\, dq$, which can be separated into three parts based on the integration interval.

    \underline{Part 1:} When $q<1-k-\delta$,

    \(\beta(q, n-m, m)\) is negligible because the beta distribution is concentrated near the $\delta$ neighborhood of \(q = 1-k\) (Lemma~\ref{lem:1}) with $\delta = \Theta(\sqrt{\frac{\ln n}{n}})$, the the convergence speed is independent of $k$. Additionally, for \(q < 1-k-\delta\), \(\int_q^1 \beta(x, n-m, m+1) \, dx \approx 1\). This combines with Lemma~\ref{lem:3}, gives:
    \[
    \begin{aligned}
        & \lim_{n \to \infty} \frac{1}{n}\int_0^{1-k-\delta}F^{-1}(q)\frac{q}{1-q} \frac{m}{n-m}\frac{\beta(q,n-m,m)}{\int_0^q\beta(x,n-m,m)\,dx}\int_q^1\beta(x,n-m,m+1)\,dx\, dq \\
        = & \lim_{n \to \infty} \frac{1}{n}\int_0^{1-k-\delta}F^{-1}(q)\frac{q}{1-q} \frac{m}{n-m}\frac{\beta(q,n-m,m)}{\int_0^q\beta(x,n-m,m)\,dx}\, dq \\
        = & \lim_{n \to \infty} \int_0^{1-k-\delta}F^{-1}(q)\frac{q}{1-q} \frac{m}{n-m}\left (\frac{1}{q} -\frac{k}{q(1-q)} \right)\, dq \\
        = & \int_0^{1-k}F^{-1}(q)\frac{q}{1-q} \frac{k}{1-k}\left (\frac{1}{q} -\frac{k}{q(1-q)} \right)\, dq.
    \end{aligned}
    \]

    \underline{Part 2:} When $q\in (1-k-\delta, 1-k+\delta)$, we denote this integration as $S_2$.

%     For \(q \in (1-k-\delta, 1-k+\delta)\), the beta distribution \(\beta(q, n-m, m)\) is concentrated within the small interval of width \(\delta\). Using the bounds for \(\beta(q, n-m, m)\) and \(\int_q^1 \beta(x, n-m, m+1) \, dx\), we get:
%     \[
% S_2 \leq \int_{1-k-\delta}^{1-k+\delta} F^{-1}(q) \frac{q}{1-q} \frac{k}{1-k} \frac{1-k}{kq} \, dq \leq \int_{1-k-\delta}^{1-k+\delta} \frac{F^{-1}(q)}{1-q} \, dq.
% \]

% Since \(F^{-1}(q)\) is bounded, and \(\delta = \Theta\left(\sqrt{\frac{\ln n}{n}}\right)\), it follows that: \(\lim_{n \to \infty} S_2 = 0.\)
\[
\frac{\beta(q, n-m, m)}{n \int_0^q \beta(x, n-m, m) dx} = \frac{n-m}{nq} \frac{b(m, q)}{B(m, q)} \leq \frac{1-k}{kq}.
\]

Now, for the term \( S_2 \), we have the following bounds:
\[
S_2 \leq \int_{1-k-\delta}^{1-k+\delta} F^{-1}(q) \frac{q}{1-q} \frac{k}{1-k} \frac{1-k}{kq} \int_q^1 \beta(x, n-m, m+1) dx.
\]

Simplifying further, we find:
\[
S_2 \leq \int_{1-k-\delta}^{1-k+\delta} \frac{F^{-1}(q)}{1-q} dq \leq \frac{F^{-1}(1-k+\delta)}{k-\delta} \cdot 2\delta.
\]

Since \( \frac{F^{-1}(1-k+\delta)}{k-\delta} \) is bounded, and with \( \delta = \Theta\left(\sqrt{\frac{\ln n}{n}}\right) \), we conclude that: \(\lim_{n \to \infty} S_2 = 0.\)
    
    %The integrand is bounded by $\beta_{\max}*2\delta \sim \Theta(\sqrt{n})\cdot \Theta(\sqrt{\frac{\ln n}{n}})=\Theta(\sqrt{\ln n})$, which can be ignored since the whole integration is $\Theta(n)$.

    \underline{Part 3:} When $q>1-k+\delta$,

\[
\begin{aligned}
    S_3 & \leq \int_{1-k+\delta}^1 F^{-1}(q) \frac{q}{1-q} \frac{m}{n-m} \frac{\beta(q, n-m, m)}{n(1-\epsilon)} \epsilon \, dq 
\\ & = \int_{1-k+\delta}^1 F^{-1}(q) \frac{m}{n-m} \frac{n-m}{m-1} \beta(q, n-m+1, m-1) \frac{\epsilon}{n(1-\epsilon)} \, dq,
\\ & \leq\int_{1-k+\delta}^1 F^{-1}(q) \frac{2\epsilon^2}{n(1-\epsilon)} \, dq.
\end{aligned}
\]

Since \( F^{-1}(q) \) is bounded, we have:
\[
S_3 \leq \frac{2\epsilon^2}{n(1-\epsilon)} \int v f(v) \, dv = \frac{2\epsilon^2}{n(1-\epsilon)} E(v),
\]where \( E(v) \) denotes the expectation of \( v \), and it is assumed to be bounded. 

Taking the limit as \( n \to \infty \), We get
\(\lim_{n \to \infty} S_3 = 0,
\)

    % Then the second integral in $S(m,n)$ reduces to $\int_0^{1-k}F^{-1}(q)\frac{q}{1-q} \frac{k}{1-k}n\left (\frac{1}{q} -\frac{k}{q(1-q)} \right)\, dq$. 
    
    % We also know that the first integration in $S(m,n)$ converges to $F^{-1}(1-k)$, which follows directly from Lemma~\ref{lem:2}. Therefore when $n\rightarrow \infty$, we have: 
    % $$S(m,n)=F^{-1}(1-k)+\int_0^{1-k}F^{-1}(q)\frac{q}{1-q}\frac{k}{1-k}n(\frac{1}{q}-\frac{k}{q(1-q)})dq$$

    % Formally speaking,
    % $$\lim_{n\rightarrow +\infty}\frac{S(m,n)}{n}=\int_0^{1-k}F^{-1}(q)\frac{q}{1-q}\frac{k}{1-k}(\frac{1}{q}-\frac{k}{q(1-q)})dq.$$
    % This completes the proof.
    To sum up, we conclude that \(
\lim_{n \to \infty} \frac{S(m,n)}{n} = \phi(k),
\) which completes the proof.
\end{proof}

\subsection*{Proof of Proposition~\ref{thm:OptAsmLinear}}
\begin{proof}
    Recall from Lemma~\ref{lem:AsyRep} that: 
    $$\lim_{n\rightarrow +\infty}\frac{S(m,n)}{n}=\int_0^{1-k}F^{-1}(q)\frac{q}{1-q}\frac{k}{1-k}(\frac{1}{q}-\frac{k}{q(1-q)})dq,$$
    we then denote $\phi(k):=\lim_{n\rightarrow +\infty}\frac{S(m,n)}{n}$, which is a function of $k=m/n$.

To find the optimal $m^*(n)$, it is suffice to find the $k$ such that $\phi(k)$ attains its maximum. Since $\phi'(0)=\int_0^1F^{-1}(1-q)\frac{1}{q}dq>0$, $\phi'(k_2)<0$ (where $k_2 \approx 0.3162$),  there exist an $0<k^*<k_2$, such that $\phi'(k^*)=0$, and $\phi(k)$ attains maximum. 

Such $k^*$ can be found by solving the following equation:
$$\frac{d \phi}{dk}=\frac{1}{(1-k)^2}\int_k^1 F^{-1}(1-q)(\frac{1}{q}-(2k-k^2)\frac{1}{q^2})dq=0.$$

This is equivalent to:
$$\int_k^1 F^{-1}(1-q)(\frac{1}{q}-(2k-k^2)\frac{1}{q^2})dq=0,$$
which completes the proof. 
\end{proof}

\subsection*{Proof of Theorem~\ref{thm:UniversalBound}}
\begin{proof}
The proof consists of two parts. First, we demonstrate that \(k_2\) is an upper bound, and then we show that this upper bound is tight by constructing a binding distribution.

\underline{Part 1:} \(k_2\) is an upper bound.

Based on Theorem~\ref{thm:OptAsmLinear}, the optimal \(k\) must satisfy:
\[
\int_k^1 F^{-1}(1-q)\left(\frac{1}{q}-(2k-k^2)\frac{1}{q^2}\right)dq=0.
\]

\(r(q)=\frac{1}{q}-(2k-k^2)\frac{1}{q^2}\) is zero at \(q=2k-k^2\), and \(r(q)<0\) for \(k \le q < 2k-k^2\), \(r(q)>0\) for \(1 \ge q > 2k-k^2\). We then Let \(q_0=2k-k^2\).

Since \(F^{-1}(1-q)\) is a decreasing function of \(q\), for any \(k\):
\[
\begin{aligned}
&\int_k^1 F^{-1}(1-q)\left(\frac{1}{q}-(2k-k^2)\frac{1}{q^2}\right)dq \\
&=\int_k^{q_0} F^{-1}(1-q)\left(\frac{1}{q}-(2k-k^2)\frac{1}{q^2}\right)dq+\int_{q_0}^1 F^{-1}(1-q)\left(\frac{1}{q}-(2k-k^2)\frac{1}{q^2}\right)dq \\
&\le F^{-1}(1-q_0)\int_k^{q_0}\left(\frac{1}{q}-(2k-k^2)\frac{1}{q^2}\right)dq+F^{-1}(1-q_0)\int_{q_0}^1\left(\frac{1}{q}-(2k-k^2)\frac{1}{q^2}\right)dq \\
&= F^{-1}(1-q_0)\int_k^1\left(\frac{1}{q}-(2k-k^2)\frac{1}{q^2}\right)dq \\
&= F^{-1}(1-q_0)\left((2-k)(k-1)-\ln k \right)
\end{aligned}
\]

When \(k\) is the solution \(k_2\) of the equation \(\ln k = (2-k)(k-1)\), the right-hand side of the above inequality is zero. This shows that at \(k=k_2\), the derivative of \(\phi(k)\) is non-positive.

Next, we will show that the optimal $k$ will not appear after $k_2$. 

For \(k>k_2\), let \(c(k)=(2-k)(k-1)-\ln k\), then \(c'(k)=-2k+3-\frac{1}{k}=-\frac{1}{k}\left((2k-1)(k-1)\right)\).

Thus, \(c'(k)\) is first negative and then positive, with a zero at \(k=0.5\), implying that \(c(k)\) first decreases and then increases. Since \(c(k_2)=0\) and \(c(1)=0\), for \(k_2< k < 1\), \(c(k)<0\), and $\phi'(k) < F^{-1}(1-q_0)c(k) < 0$, so \(S(n,m)\) will not achieve its maximum for \(k>k_2\).

Therefore, there exists a linear upper bound for the optimal \(m\) in terms of \(n\), i.e., \(\lim_{n \rightarrow \infty} \frac{m^*(n)}{n} \leq k_2\), where \(k_2\) is the non-trivial solution of the equation \(\ln k = (2-k)(k-1)\).

\underline{Part 2:} \(k_2\) is binding.

We now construct a probability distribution such that the corresponding optimal \(k^*\) approaches \(k_2\) asymptotically.

Consider \(f(v)\), which is the probability mass function of a Beta distribution \(f(v; \alpha, \beta)\). As \(\alpha + \beta \to \infty\), \(f(v)\) concentrated around a single point \(\mu = \frac{\alpha - 1}{\alpha + \beta - 2}\). Let \(F^{-1}(1-q)\) be written as \(v(q)\), where \(v(q)\) is a decreasing function of \(q\). The absolute value of the derivative of \(v(q)\) with respect to \(q\) is given by \(|v'(q)| = \frac{1}{f(v)}\). Since \(f(v)\) is uni-modal, increasing and then decreasing as \(q\) varies from \(k\) to 1, \(v'(q)\) is always negative, and \(|v'(q)|\) first decreases and then increases.

By Lemma~\ref{lem:1}, for any \(\epsilon > 0\), if \(|x-\mu| > \delta = \sqrt{\frac{k(1-k)}{n} \left( \ln n + \ln \frac{1}{\epsilon^2} + \ln \frac{1}{2 \pi k(1-k)} \right)} = \Theta\left(\sqrt{\frac{\ln n}{n}}\right)\), then \(f(x) < \epsilon\). This implies that the Beta distribution \(f(v; \alpha, \beta)\) integrates to \(\epsilon_1\) and \(\epsilon_2\) over the intervals \([\mu + \delta, 1]\) and \([0, \mu - \delta]\), respectively, with \(\epsilon_1 + \epsilon_2 < \epsilon\) holds.

So, for \(v(q)\), when \(q > \epsilon_1\), \(v(q) < \mu + \delta\); and when \(q > 1 - \epsilon_2\), \(v(q) < \mu - \delta\). As \(n = \alpha + \beta\) grows large, \(\delta = \Theta\left(\frac{\ln n}{n}\right) \to 0\). Thus, there exists an \(N\) such that when \(\alpha + \beta > N\), \(\delta < \epsilon\) is satisfied.

The integral \(\int_k^1 F^{-1}(1-q)\left(\frac{1}{q} - (2k-k^2)\frac{1}{q^2}\right) dq\) can now be simplified as:
\[
\int_k^{1-\epsilon_2} \mu \left(\frac{1}{q} - (2k-k^2)\frac{1}{q^2}\right) dq = \mu \int_k^{1-\epsilon_2} \left(\frac{1}{q} - (2k-k^2)\frac{1}{q^2}\right) dq.
\]

When \(k = k_2\), this integral approaches 0. This implies that for such a Beta distribution \(f(v; \alpha, \beta)\), \(k^* \to k_2 \approx 31.62\%\), therefore upper bound $k_2$ is binding, which completes the proof. 
\end{proof}

\section{Missing Proofs in Section \ref{sec: compare}}

\subsection*{Proof of Lemma~\ref{lem:QuantileRep}}
\begin{proof}
    We start from the normal expression of $S(m,n,l)$:
    \begin{align*}
    & \mathbb{E}_{X_{(1)}, \ldots, X_{(m)}} \left [\sum_{i=1}^{m}\int_{0}^{x_{(i)}}\frac{\binom{n-1}{l}F^{n-l-1}(t)(1-F(t))^{l}}{\sum_{j=1}^{m}\binom{n-1}{j-1}F^{n-j}(t)(1-F(t))^{j-1}}\frac{f(t)}{1-F(t)} t\, dt \right ]\\
    = &\sum_{i\in[n]}\mathbb{E}_{x_i}[\Pr[x_i>=x_{(m)}]\int_{0}^{x_{i}}\frac{\binom{n-1}{l}F^{n-l-1}(t)(1-F(t))^{l}}{\sum_{j=1}^{m}\binom{n-1}{j-1}F^{n-j}(t)(1-F(t))^{j-1}}\frac{f(t)}{1-F(t)} t\,dt]\\
    = &n\int_0^{+\infty} F_{(m-1,n-1)}(x_i)f(x_i)\int_{0}^{x_{i}}\frac{\binom{n-1}{l}F^{n-l-1}(t)(1-F(t))^{l}}{\sum_{j=1}^{m}\binom{n-1}{j-1}F^{n-j}(t)(1-F(t))^{j-1}}\frac{f(t)}{1-F(t)} t\,dt\,dx_i\\
    = &n\int_0^{+\infty} F_{(m-1,n-1)}(x_i)f(x_i)\int_{1-F(x_{i})}^{1}\frac{\binom{n-1}{l}(1-q)^{n-l-1}q^{l}q^{-1}}{\sum_{j=1}^{m}\binom{n-1}{j-1}(1-q)^{n-j}q^{j-1}}v(q)\,dq\,dx_i\\
    = &n\int_0^{1} \sum_{j=1}^{m}\binom{n-1}{j-1}q_i^{j-1}(1-q_i)^{n-j}\int_{q_i}^{1}\frac{\binom{n-1}{l}(1-q)^{n-l-1}q^{l}q^{-1}}{\sum_{j=1}^{m}\binom{n-1}{j-1}(1-q)^{n-j}q^{j-1}}v(q)\,dq \,dq_i\\
    = &n\int_0^{1} \frac{\binom{n-1}{l}(1-q)^{n-l-1}q^{l}q^{-1}}{\sum_{j=1}^{m}\binom{n-1}{j-1}(1-q)^{n-j}q^{j-1}}\int_{0}^{q}\sum_{j=1}^{m}\binom{n-1}{j-1}q_i^{j-1}(1-q_i)^{n-j}\,dq_i\,v(q)\,dq,
\end{align*}where $F_{(m-1,n-1)}(x)$ is cumulative probability function following the proof of Theorem~\ref{thm:ConpleteSimpleContest}.

    Next, we use $G_{(m,l)}(q)$ to denote the distribution-free part:
    \[
    G_{(m,l)}(q)=\frac{\binom{n-1}{l}(1-q)^{n-l-1}q^{l-1}}{\sum_{j=1}^{m}\binom{n-1}{j-1}(1-q)^{n-j}q^{j-1}}\int_{0}^{q}\sum_{j=1}^{m}\binom{n-1}{j-1}p^{j-1}(1-p)^{n-j}\,dp.
    \]

    Then, by change of integration sequence, we obtain:
    \[
    \begin{aligned}
        S(l,n,m) = & \int_0^{1} G_{l,m}(q)v(q)\,dq
        \\ = & \int_0^{1} G_{l,m}(q)\int_{q}^1|v'(t)|\, dt\,dq
        \\ = &\int_0^1|v'(q)|\int_0^qG_{l,m}(t)\,dt\,dq,
    \end{aligned}
    \]as desired. The same derivation applies to $S^{(1)}(m,n,l)$, resulting in $S^{(1)}(m,n, l)= n\int_0^1|v'(q)|\int_0^qG^{(1)}_{(m,l)}(t)\,dt\,dq$ and $G_{l,m}^{(1)}(t):=\frac{\binom{n-1}{l}(1-t)^{n-l-1}t^{l-1}}{\sum_{j=1}^{m}\binom{n-1}{j-1}(1-t)^{n-j}t^{j-1}}\int_{0}^{t}(1-p)^{n-1}\,dp,$ which completes the proof.
\end{proof}

% \begin{remark}
%     This representation of effort helps us to somehow decouple the contribution of contest structure itself and the distribution, which is helpful in the proofs of worst case analysis. 
% \end{remark}

\begin{lemma}
\label{lemma:1/qint approximation}
    Define $$\zeta(m, n, q) = \sum_{j=1}^m \binom{n-1}{j-1} (1-q)^{n-j}q^{j-1},$$
    For any $1\leq m\leq n$, $q\in[0,1]$, it holds that 
    $$ \frac14\min\{1,\frac{m}{nq}\}\leq \frac1{q}\int_0^{q}\zeta(m, n, t)dt\leq \min\{1,\frac{m}{nq}\}.$$
\end{lemma}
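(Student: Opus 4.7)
The plan is to give a probabilistic reformulation of $\zeta$ and reduce the lemma to classical bounds on a truncated binomial expectation. First, observe that $\zeta(m,n,q)=\Pr[Y\le m-1]$ where $Y\sim\mathrm{Binomial}(n-1,q)$, since the $j$-th summand equals $\Pr[Y=j-1]$. The key identity I would invoke is the regularized incomplete beta formula
\[
\int_0^q \binom{n-1}{j-1}t^{j-1}(1-t)^{n-j}\,dt=\tfrac{1}{n}\Pr[Z\ge j],\qquad Z\sim\mathrm{Binomial}(n,q),
\]
which is immediate since both sides vanish at $q=0$ and share derivative $\binom{n-1}{j-1}q^{j-1}(1-q)^{n-j}$ (the latter from the beta-density representation of $\tfrac{d}{dq}I_q(j,n-j+1)$). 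Summing over $j=1,\dots,m$ and applying the layer-cake identity $\sum_{j=1}^m\Pr[Z\ge j]=\mathbb{E}[\min\{Z,m\}]$ gives
\[
\frac{1}{q}\int_0^q\zeta(m,n,t)\,dt=\frac{\mathbb{E}[\min\{Z,m\}]}{nq},
\]
so the lemma reduces to showing $\tfrac{1}{4}\min\{nq,m\}\le \mathbb{E}[\min\{Z,m\}]\le \min\{nq,m\}$.

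The upper bound is immediate: $\min\{Z,m\}\le Z$ gives $\mathbb{E}\le nq$, and $\min\{Z,m\}\le m$ gives $\mathbb{E}\le m$. For the lower bound I would case-split on $nq$. When $nq\ge 1$, I use the folklore fact that the median of $\mathrm{Binomial}(n,q)$ lies in $\{\lfloor nq\rfloor,\lceil nq\rceil\}$, so $\Pr[Z\ge \lfloor nq\rfloor]\ge \tfrac12$. Setting $\tilde m=\min\{m,\lfloor nq\rfloor\}\ge 1$, the bound $\mathbb{E}[\min\{Z,m\}]\ge \tilde m\cdot\Pr[Z\ge\tilde m]\ge \tilde m/2$ combined with the elementary inequality $\lfloor nq\rfloor\ge nq/2$ (valid for $nq\ge 1$) yields $\tilde m\ge \min\{m,nq\}/2$ and hence $\mathbb{E}[\min\{Z,m\}]\ge \min\{m,nq\}/4$. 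When $nq<1$, the median argument degenerates, so I instead estimate $\mathbb{E}[\min\{Z,m\}]\ge \Pr[Z\ge 1]=1-(1-q)^n$ via the second-order Bonferroni bound $(1-q)^n\le 1-nq+\binom{n}{2}q^2$; combined with $\binom{n}{2}q^2\le (nq)^2/2\le nq/2$ this yields $\mathbb{E}[\min\{Z,m\}]\ge nq/2$, which matches the required bound since $\min\{nq,m\}=nq$ in this regime (as $m\ge 1\ge nq$).

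The main obstacle is the small-$nq$ regime in the lower bound: the clean median-based argument only applies once $\mathrm{Binomial}(n,q)$ has a nondegenerate median, forcing the separate Bonferroni patch for $nq<1$. The constant $1/4$ in the statement leaves ample slack---one could in fact extract $1/2$ by refining the case split---but $1/4$ is what the downstream effort comparisons require, so I would not pursue further optimization.
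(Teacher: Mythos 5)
Your proof is correct, but it follows a genuinely different route from the paper's. The paper works directly with $\zeta(m,n,q)=\Pr[\mathrm{Binomial}(n-1,q)\le m-1]$: the upper bound comes from the crude facts $\zeta\in[0,1]$ and $\int_0^1\zeta(m,n,t)\,dt=\frac{m}{n}$, and the lower bound exploits that $\zeta$ is non-increasing in $q$, lower-bounding the integral by a rectangle of height $\tfrac12$ on $[0,q_0]$ where $q_0$ is a point with $\zeta(m,n,q_0)\ge\tfrac12$ supplied by the binomial median (with a separate computation $\zeta(1,n,\tfrac1n)=(1-\tfrac1n)^{n-1}\ge\tfrac12$ for the degenerate case $m=1$, and $q_0=\tfrac{m-1}{n-1}$ for $m\ge2$). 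You instead evaluate the integral \emph{exactly} via the incomplete-beta identity $\int_0^q\binom{n-1}{j-1}t^{j-1}(1-t)^{n-j}\,dt=\tfrac1n\Pr[Z\ge j]$ with $Z\sim\mathrm{Binomial}(n,q)$, which together with the layer-cake identity gives the clean reformulation $\frac1q\int_0^q\zeta(m,n,t)\,dt=\frac{\mathbb{E}[\min\{Z,m\}]}{nq}$; the two-sided bound then becomes a statement about a truncated binomial mean, with the upper bound trivial and the lower bound handled by the median-location theorem (for $nq\ge1$) and a second-order Bonferroni estimate (for $nq<1$). Each step of yours checks out: the derivative verification of the beta identity, $\mathbb{E}[\min\{Z,m\}]\ge\tilde m\Pr[Z\ge\tilde m]$ with $\tilde m=\min\{m,\lfloor nq\rfloor\}$, the elementary $\lfloor x\rfloor\ge x/2$ for $x\ge1$, and $1-(1-q)^n\ge nq-\binom n2q^2\ge\tfrac{nq}2$ for $nq\le1$. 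What the paper's argument buys is brevity and the reuse of monotonicity of $\zeta$ in $q$; what yours buys is an exact probabilistic interpretation of the quantity being bounded (a truncated mean normalized by $nq$), which makes the upper bound immediate, localizes all slack in the lower bound, and, as you note, would support a sharper constant than $\tfrac14$ if one were ever needed. Note also that the case splits differ in kind: the paper's special case is $m=1$ (where the median of $\mathrm{Binomial}(n-1,q)$ at threshold $m-1=0$ is uninformative), while yours is $nq<1$ (where $\lfloor nq\rfloor=0$); both patches are necessary in their respective formulations and both are handled correctly.
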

\begin{proof}
    % Firstly, note that $\zeta(m, n, q)=\Pr[\mathrm{Binomial}(n-1,q)\leq m-1]$, and 
    Firstly, we have $\zeta(m,n,q)\in[0,1]$ for all $q\in[0,1]$, and $\int_0^1\zeta(m, n, t)dt=\frac{m}{n}$.
    We immediately have $\frac1{q}\int_0^{q}\zeta(m, n, t)dt\leq\frac1{q}\int_0^{q}1dt\leq=1$ and $\frac1{q}\int_0^{q}\zeta(m, n, t)dt\leq\frac1{q}\int_0^{1}\zeta(m, n, t)dt\leq=\frac{m}{nq}$, so the upper bound holds.

    For the lower bound, observe that $\zeta(m,n,q)$ is non-increasing in $q$ on $[0,1]$. We can then discuss the following three cases:

    \underline{Case 1:} $\zeta(m,n,q)\geq\frac{1}{2}$. 
    In this case, we have $\frac1{q}\int_0^{q}\zeta(m, n, t)dt\geq \zeta(m, n, q)dt\geq\frac12$.

    \underline{Case 2:} $\zeta(m,n,q)<\frac12$ and $m=1$. We can calculate $\zeta(m,n,\frac{m}{n})=(1-\frac1{n})^{n-1}\geq\frac12$, which implies that $q>\frac{m}{n}$. It holds that $\frac1{q}\int_0^{q}\zeta(m, n, t)dt\geq \frac1{q}\int_0^{\frac{m}{n}}\frac12dt=\frac{m}{2nq}$.
    
    \underline{Case 3:} $\zeta(m,n,q)<\frac12$ and $m\geq 2$. Since $\zeta(m, n, q)=\Pr[\mathrm{Binomial}(n-1,q)\leq m-1]$, by the property of binomial distribution we have that $\zeta(m, n, \frac{m-1}{n-1})\geq\frac12$, which implies that $q>\frac{m-1}{n-1}$.
    Therefore, we have $\frac1{q}\int_0^{q}\zeta(m, n, t)dt\geq \frac1{q}\int_0^{\frac{m-1}{n-1}}\frac12dt=\frac1{2q}\frac{m-1}{n-1}>\frac{m}{4nq}$.

    In summary, we have $\frac1{q}\int_0^{q}\zeta(m, n, t)dt\geq \frac14\min\{1,\frac{m}{nq}\}$, which completes the proof.
\end{proof}

\begin{lemma}\label{lem:Hn1}
    $H_{(n,1)}(q)=\Theta(\min(nq^2,\frac1{n}))$
\end{lemma}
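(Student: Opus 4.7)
The plan is to exploit the simplification that occurs when $m=n$ in the formula from Lemma~\ref{lem:QuantileRep}, obtain an explicit closed form for $H_{(n,1)}(q)$, and then establish the two-sided bound by a short case analysis on whether $q \le 1/n$ or $q > 1/n$.

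First I would specialize the expression for $G_{(m,l)}$ at $m=n$, $l=1$. The denominator sum is $\sum_{j=1}^{n}\binom{n-1}{j-1}(1-t)^{n-j}t^{j-1} = (t+(1-t))^{n-1} = 1$ by the binomial theorem, and the integrand inside the integral factor equals $1$ for the same reason. Therefore $G_{(n,1)}(t) = (n-1)t(1-t)^{n-2}$. Next, integration by parts (with $u=t$, $dv=(n-1)(1-t)^{n-2}\,dt$) gives the closed form
\[
H_{(n,1)}(q) \;=\; \int_0^q (n-1)t(1-t)^{n-2}\,dt \;=\; \frac{1}{n}\Bigl[1 - (1-q)^{n-1}\bigl(1+(n-1)q\bigr)\Bigr].
\]

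For the \textbf{upper bound}, the closed form yields $H_{(n,1)}(q) \le 1/n$ trivially, while bounding $(1-t)^{n-2}\le 1$ inside the integral gives $H_{(n,1)}(q) \le (n-1)q^2/2 = O(nq^2)$, so $H_{(n,1)}(q)=O(\min(nq^2,1/n))$. For the \textbf{lower bound}, I would split into two cases. If $q \le 1/n$, then for $t\in[0,q]$ we have $(1-t)^{n-2}\ge (1-1/n)^{n-2} \ge c_1$ for a universal constant $c_1>0$ (bounded below by, say, $1/e$ for $n\ge 2$), giving $H_{(n,1)}(q) \ge c_1(n-1)q^2/2 = \Omega(nq^2) = \Omega(\min(nq^2,1/n))$. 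If $q>1/n$, monotonicity of $H_{(n,1)}$ (since $G_{(n,1)}\ge 0$) and the first case at $q=1/n$ give $H_{(n,1)}(q) \ge H_{(n,1)}(1/n) \ge c_1(n-1)/(2n^2) = \Omega(1/n) = \Omega(\min(nq^2,1/n))$.

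There is no serious obstacle: once the sums collapse to $1$, the rest is bookkeeping on a single integral. The only subtlety worth double-checking is that the constant $(1-1/n)^{n-2}$ used in the small-$q$ regime stays bounded away from $0$ uniformly in $n\ge 2$, which follows from monotonicity of $(1-1/n)^n$ converging to $e^{-1}$. An alternative, essentially equivalent route is to view $\psi(q):=n\,H_{(n,1)}(q) = 1-(1-q)^{n-1}(1+(n-1)q)$ as an approximation of $f(x):=1-(1+x)e^{-x}$ at $x=(n-1)q$, and use $f(x)=\Theta(\min(x^2,1))$; I would prefer the direct integral argument above since it avoids passing to the limit and yields the bound for every finite $n$.
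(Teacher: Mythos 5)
Your proposal is correct, and it shares the paper's overall skeleton: both collapse the denominator sum to $1$ via the binomial theorem to get $G_{(n,1)}(t)=(n-1)t(1-t)^{n-2}$, both derive the same closed form by integration by parts (your $\frac{1}{n}\bigl[1-(1-q)^{n-1}(1+(n-1)q)\bigr]$ is algebraically identical to the paper's $\frac{1}{n}\bigl(1-(1-q)^n-n(1-q)^{n-1}q\bigr)$), both split at a threshold of order $1/n$, and both use monotonicity of $H_{(n,1)}$ plus the trivial bound $H_{(n,1)}(q)\le 1/n$ in the large-$q$ regime. Where you genuinely diverge is the small-$q$ estimate: the paper works on the closed form with two Taylor expansions (Lagrange-form intermediate points $\xi_1,\xi_2$) around the threshold $q\le \frac{1}{2n}$, which requires a page of bookkeeping to arrive at $\frac12(e^{-1/2}-\frac12)n^2q^2 \le 1-(1-q)^n-n(1-q)^{n-1}q \le \frac12 n^2q^2$; you instead sandwich the integrand directly, $(1-t)^{n-2}\in[c_1,1]$ on $[0,1/n]$, which yields the two-sided $\Theta(nq^2)$ bound in two lines. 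Your route is more elementary and less error-prone, at the cost of slightly cruder constants; the one point you flag as needing care is indeed fine, since $(1-1/n)^{n-2}\ge (1-1/n)^{n-1}\ge e^{-1}$ for all $n\ge 2$ (it is $(1-1/n)^{n-1}$, decreasing to $e^{-1}$ from above, that gives the uniform bound, rather than the increasing sequence $(1-1/n)^n$ you cite, but the claimed constant $1/e$ is valid).
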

\begin{proof}
   Firstly, for any $q\leq \frac1{2n}$, we prove that $\frac12(e^{-\frac12}-\frac12)nq^2\leq H_{(n,1)}(q)\leq \frac12nq^2$. We can calculate
   \begin{align*}
       &H_{(n,1)}(q)=\int_0^q (n-1) (1-t)^{n-2} t \, dt\\
=& -(1-t)^{n-1} t \Big|_0^q - \int_0^q \left( -(1-t)^{n-1} \right) \, dt\\
=& -(1-q)^{n-1} q - \frac{1}{n} (1-t)^n \Big|_0^q\\
=& -(1-q)^{n-1} q - \frac{1}{n} (1-q)^n + \frac{1}{n}\\
=& \frac{1}{n} \left( 1 - (1-q)^n - n(1-q)^{n-1} q \right).
   \end{align*}
    By Taylor expansion of $x^n$ at $x=1$, we have: $(1-q)^n=1-nq+\frac{n(n-1)\xi_1^{n-2}}{2}q^2$ for some $\xi_1\in[1-q,1]$.
    
   Therefore, we can calculate $1 - (1-q)^n - n(1-q)^{n-1} q
   =nq-\frac{n(n-1)\xi_1^{n-2}}{2}q^2-n(1-q)^{n-1} q
   =nq(1-\frac{n-1}{2}\xi_1^{n-2}q-(1-q)^{n-1})$. 
   
   Since $\xi_1\in[1-q,1]$, we have: 
   $$
   \begin{aligned}
       nq(1-(1-q)^{n-1}-\frac{n-1}{2}q) & \leq 1 - (1-q)^n - n(1-q)^{n-1} q\\ & \leq nq(1-(1-q)^{n-1}-\frac{n-1}{2}(1-q)^{n-2}q).
   \end{aligned}
   $$
   
   We have $1-(1-q)^{n-1}=(n-1)\xi_2^{n-2}q$ for some $\xi_2\in[1-q,1]$ by Taylor expansion of $x^{n-1}$ at $1$, so $(1-q)^{n-2}(n-1)q\leq 1-(1-q)^{n-1}\leq (n-1)q$. Then we have:
   
   $$
   \begin{aligned}
       nq((1-q)^{n-2}(n-1)q-\frac{n-1}{2}q) & \leq 1 - (1-q)^n - n(1-q)^{n-1} q \\ &\leq nq((n-1)q-\frac{n-1}{2}(1-q)^{n-2}q).
   \end{aligned}
   $$
   
   Since $q\leq\frac1{2n}$, $(1-q)^{n-2}=(1+\frac{1}{2n-1})^{-(n-2)}\geq e^{-\frac12}>\frac12$:

   For the lower bound, $nq((1-q)^{n-2}(n-1)q-\frac{n-1}{2}q)\geq nq(e^{-\frac12}-\frac12)(n-1)q\geq \frac12(e^{-\frac12}-\frac12)n^2q^2$.
   
   For the upper bound, $nq((n-1)q-\frac{n-1}{2}(1-q)^{n-2}q)\leq nq\frac12(n-1)q\leq\frac12n^2q^2$.

   So we have $\frac12(e^{-\frac12}-\frac12)n^2q^2\leq 1 - (1-q)^n - n(1-q)^{n-1} q\leq\frac12n^2q^2$, and then $\frac12(e^{-\frac12}-\frac12)nq^2\leq H_{(n,1)}(q)\leq\frac12nq^2$.

   Next, for any $q\geq\frac1{2n}$, we prove that $\frac18(e^{-\frac12}-\frac12)\frac{1}{n}\leq H_{(n,1)}(q)\leq \frac1n$.
   For the lower bound, since $H_{(n,1)}(q)$ is increasing in $q$, we have $H_{(n,1)}(q)\geq H_{(n,1)}(\frac1{2n})\geq \frac18(e^{-\frac12}-\frac12)\frac{1}{n}$.
   % We have $\frac1n\left( 1 - (1-q)^n - n(1-q)^{n-1} q\right)=\frac1n(1-\sum_{j=2}^n\binom{n}{j}(1-q)^{n-j}q^j\geq\frac1n\binom{n}{2}(1-q)^{n-2}q^2$, and $H_{(n,1)}(\frac1{2n})\geq \frac1{4n^3}\frac{n(n-1)}{2}(1-\frac1{2n})^{n-2}\geq \frac1{e^{\frac12}16n}$
   For the upper bound, we have $H_{(n,1)}(q)=\frac1n\left( 1 - (1-q)^n - n(1-q)^{n-1} q\right)\leq \frac1n$. We then conclude with $H_{(n,1)}(q)=\Theta(\min(nq^2,\frac1{n}))$.
\end{proof}

\begin{lemma}\label{lem:H21}
    $H_{(2,1)}(q)=\Theta(\min\{nq^2,\frac{\log(nq)+1}{n}\})$.
\end{lemma}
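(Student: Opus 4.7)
The plan is to mirror the structure of the proof of Lemma~\ref{lem:Hn1}, but using Lemma~\ref{lemma:1/qint approximation} (which was tailor-made for this kind of simplification) instead of direct Taylor expansion, since the $m=2$ case has a non-trivial integration factor that prevents the clean closed form available when $m=n$.

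First I would reduce $G_{(2,1)}(t)$ to a tractable product. Plugging $m=2$, $l=1$ into the definition in Lemma~\ref{lem:QuantileRep}, the denominator becomes $(1-t)^{n-1}+(n-1)t(1-t)^{n-2}=(1-t)^{n-2}[1+(n-2)t]$, which exactly cancels the $(1-t)^{n-2}$ in the numerator, giving the clean factor $\tfrac{n-1}{1+(n-2)t}=\Theta\!\bigl(\tfrac{n}{1+nt}\bigr)$. The remaining integral factor is $\int_0^t\zeta(2,n,p)\,dp$, so Lemma~\ref{lemma:1/qint approximation} (with $m=2$) yields
\[
\int_0^t\zeta(2,n,p)\,dp \;=\; \Theta\!\bigl(t\cdot\min\{1,\tfrac{1}{nt}\}\bigr)\;=\;\Theta\!\bigl(\min\{t,\tfrac{1}{n}\}\bigr).
\]
Combining the two pieces gives the two-regime asymptotic
\[
G_{(2,1)}(t)\;=\;\Theta\!\Bigl(\tfrac{n}{1+nt}\cdot\min\{t,\tfrac{1}{n}\}\Bigr)\;=\;\begin{cases} \Theta(nt) & t\le 1/n,\\[2pt] \Theta(1/(nt)) & t> 1/n.\end{cases}
\]

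Next I would integrate to obtain $H_{(2,1)}(q)$, case-splitting on whether $q\le 1/n$ or $q>1/n$. In the small-$q$ regime, $H_{(2,1)}(q)=\int_0^q\Theta(nt)\,dt=\Theta(nq^2)$, and since $nq^2\le \tfrac{1}{n}$ here, the minimum in the target expression is also $\Theta(nq^2)$. In the large-$q$ regime, split the integral at $1/n$: the first piece contributes $\Theta(1/n)$, while the tail contributes $\int_{1/n}^q \Theta(1/(nt))\,dt=\Theta(\log(nq)/n)$, so $H_{(2,1)}(q)=\Theta((\log(nq)+1)/n)$, matching the asymptotic order of $\min\{nq^2,(\log(nq)+1)/n\}$ in the regime $nq\ge 1$ because then $nq^2\ge q\ge 1/n\ge (\log(nq)+1)/n$ up to logarithmic factors.

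The main obstacle, and the reason the proof is more delicate than that of Lemma~\ref{lem:Hn1}, is managing the implicit constants in Lemma~\ref{lemma:1/qint approximation} cleanly at the transition point $t\approx 1/n$: the upper and lower bounds in that lemma differ by a factor of $4$, and one has to check that this slack does not spoil the integration when it is combined with the mildly singular factor $\tfrac{1}{1+(n-2)t}$. The cleanest way to handle this, which I would follow, is to establish explicit two-sided constants $c_1,c_2$ such that $c_1\le\tfrac{n}{1+(n-2)t}\cdot\min\{t,\tfrac{1}{n}\}\cdot G_{(2,1)}(t)^{-1}\le c_2$ uniformly on $(0,1)$, and only then integrate, so the $\Theta$-notation propagates unambiguously. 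A minor subtlety is the interpretation of $\log(nq)+1$ when $nq<1$; here I would note that in this subregime $nq^2<1/n$ is always the binding term in the $\min$, so the stated bound remains valid as an order-of-magnitude statement.
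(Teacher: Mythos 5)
Your proposal is correct and follows essentially the same route as the paper's proof: both reduce the integrand via the cancellation $(1-t)^{n-1}+(n-1)t(1-t)^{n-2}=(1-t)^{n-2}\left[1+(n-2)t\right]$, invoke Lemma~\ref{lemma:1/qint approximation} to replace the inner integral by $\Theta\left(\min\{t,\tfrac{1}{n}\}\right)$, and then integrate $\Theta\!\left(\frac{\min\{t,1/n\}}{t+1/n}\right)$ with a case split at $q=1/n$, yielding $\Theta(nq^2)$ and $\Theta\!\left(\frac{\log(nq)+1}{n}\right)$ in the two regimes. Your extra care with uniform two-sided constants and with the interpretation of $\log(nq)+1$ for $nq<1$ only tightens steps the paper handles implicitly inside its $\Theta$-notation.
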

\begin{proof}
We can calculate
\begin{align*}
H_{(2,1)}(q) &= \int_0^q (n-1) (1-t)^{n-2} t \frac{ \frac{1}{t} \int_0^t \left( (1-x)^{n-1} + (n-1) (1-x)^{n-2} x \right) dx }{(1-t)^{n-1} + (n-1) (1-t)^{n-2} t} \, dt \\
&= \int_0^q (n-1) (1-t)^{n-2} t \frac{\Theta\left( \min\left( 1, \frac{2}{nt} \right) \right)}{(1-t)^{n-1} + (n-1) (1-t)^{n-2} t} \, dt\\
&= \int_0^q \frac{\Theta(1) \min\left( 1, \frac{1}{nt} \right) }{\frac{(1-t)}{(n-1)t}  + 1} \, dt\\
&= \Theta(1) \int_0^q \frac{\min\left(t, \frac{1}{n} \right)}{\frac{1}{n-1}(1+(n-2)t)} \, dt \\
&= \Theta(1) \int_0^q \frac{\min\left( t, \frac{1}{n} \right)}{\Theta(1)(\frac{1}{n} + t)} \, dt \\
&= \Theta(1) \int_0^q \frac{\min\left( t, \frac{1}{n} \right)}{\frac{1}{n} + t} \, dt, \\
\end{align*}
where the second equality holds is by Lemma \ref{lemma:1/qint approximation}.

For $q \leq \frac{1}{n}$, $\int_0^q \frac{\min\left( t, \frac{1}{n} \right)}{\frac{1}{n} + t} \, dt=\int_0^q\frac{t}{t+\frac1n}dt=\int_0^q\Theta(n)tdt=\Theta(nq^2)$

For $q \geq \frac{1}{n}$, $\int_{\frac1n}^q \frac{\min\left( t, \frac{1}{n} \right)}{\frac{1}{n} + t} \, dt=\int_{\frac1n}^q\frac{\frac1n}{t+\frac1n}dt=\Theta(1/n)\int_{\frac1n}{q}\frac1tdt=\Theta(\frac{\ln(qn)}{n})$, and then $H_{(2,1)}(q)=H_{(2,1)}(\frac1n)+\int_{\frac1n}^q \frac{\min\left( t, \frac{1}{n} \right)}{\frac{1}{n} + t} \, dt=\Theta(\frac{1}{n}+\frac{\log(nq)}{n})$.
% , \quad H_{(2,1)}(q) &= \Theta\left(\frac{1}{n}\right) + \Theta(1) \int_{\frac{1}{n}}^q \frac{\frac{1}{n}}{t} \, dt \\
           % &= \Theta\left(\frac{1}{n} + \frac{\ln(nq)}{n}\right)
\end{proof}

\subsection*{Proof of Lemma~\ref{lem:bound on n,1}}
\begin{proof}
    From the quantile representation of total effort (Lemma~\ref{lem:QuantileRep}) and analysis on $H_{(n,1)}$ (Lemma~\ref{lem:Hn1}) we have:
    \[
    \begin{aligned}
        S(n,n,1) & = n\int_0^1|v'(q)|H_{(m,1)}(q)\,dq \\
        & = n\int_0^1|v'(q)| \Theta(\min(nq^2,\frac1{n}))\, dq \\
        &= \int_0^1|v'(q)| \Theta(\min(n^2q^2,1))\, dq \\
        &= \int_{0}^{\frac{1}{n}}|v'(q)| \Theta(n^2q^2)\, dq+\int_{\frac{1}{n}}^{1}|v'(q)| \Theta(1)\, dq \\
        &= \Theta(n^2)\int_{0}^{\frac{1}{n}}|v'(q)|q^2\, dq+\Theta(1)\int_{\frac{1}{n}}^{1}|v'(q)|\, dq.
    \end{aligned}
    \]
    
    Since $|v'(q)|$ is bounded, i.e., $L'\leq|v'(q)|\leq L$, then $\int_{\frac{1}{n}}^{1}|v'(q)|\, dq = \Theta(1+n^{-1}) = \Theta(1)$. Similarly, $\int_{0}^{\frac{1}{n}}|v'(q)|q^2\, dq=\Theta(n^{-3})$, Therefore we have:
    \[
    S(n,n,1)= \Theta(n^{2})\Theta(n^{-3})+\Theta(1)\Theta(1)=\Theta(1).
    \]

    As for $S^{(1)}(2,n,1)$, by Lemma \ref{lemma:1/qint approximation} we have 
$H_{l,m}^{(1)}(t)=\Theta(1)\int_0^t\frac{\binom{n-1}{l}(1-q)^{n-l-1}q^{l}}{\sum_{j=1}^{m}\binom{n-1}{j-1}(1-q)^{n-j}q^{j-1}}\min\{1,\frac{1}{nq}\}dq$.

    Therefore, We have $H_{(n,1)}^{(1)}(t)=\Theta(1)\int_0^t(n-1)(1-q)^{n-2}q\min\{1,\frac{1}{nq}\}dq=\Theta(H_{(n,1)}(t))$. The same procedure immediately applies. Then we conclude that, $S^{(1)}(n,n,1)=\Theta(1)$ and $S(n,n,1)=\Theta(1)$, which completes the proof.
\end{proof}

\begin{lemma}\label{lem:logInt}
    \(\int_{1/n}^{1} \Theta(\log(nq)) \, dq = \Theta(\log n)\) 
\end{lemma}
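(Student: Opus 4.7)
The plan is to reduce the $\Theta$-notation inside the integrand to an explicit elementary integral, and then evaluate it directly.

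First, I would unpack the $\Theta(\log(nq))$ notation. By definition, there exist positive constants $c_1, c_2$ and a threshold $n_0$ such that for all $n \geq n_0$ and all $q \in [1/n, 1]$,
\[
c_1 \log(nq) \;\leq\; h(q) \;\leq\; c_2 \log(nq),
\]
where $h(q)$ denotes the actual integrand being hidden under the $\Theta$. Crucially, $\log(nq) \geq 0$ on $[1/n, 1]$ (with equality only at the left endpoint), so the integrand is non-negative and monotonicity of the integral gives
\[
c_1 \int_{1/n}^{1} \log(nq)\, dq \;\leq\; \int_{1/n}^{1} h(q)\, dq \;\leq\; c_2 \int_{1/n}^{1} \log(nq)\, dq.
\]

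Next, I would compute the benchmark integral $\int_{1/n}^{1} \log(nq)\, dq$ by the substitution $u = nq$, $du = n\, dq$, which sends the interval $[1/n, 1]$ to $[1, n]$. This yields
\[
\int_{1/n}^{1} \log(nq)\, dq \;=\; \frac{1}{n}\int_{1}^{n} \log u \, du \;=\; \frac{1}{n}\bigl[u\log u - u\bigr]_{1}^{n} \;=\; \log n - 1 + \frac{1}{n}.
\]
For $n \geq 2$, this expression is clearly $\Theta(\log n)$, as the $\log n$ term dominates the constant and the vanishing $1/n$.

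Combining the two bounds, the original integral lies between $c_1(\log n - 1 + 1/n)$ and $c_2(\log n - 1 + 1/n)$, which is $\Theta(\log n)$, completing the proof. There is no real obstacle here: the only subtlety worth flagging is checking that $\log(nq) \geq 0$ throughout the interval of integration, so that the $\Theta$-bounds can be pulled through the integral without sign issues; this holds because the lower limit $1/n$ makes $nq \geq 1$.
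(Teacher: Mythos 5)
Your proof is correct and follows essentially the same route as the paper: both reduce the claim to the exact evaluation of $\int_{1/n}^{1}\log(nq)\,dq = \log n - 1 + \frac{1}{n}$ and conclude $\Theta(\log n)$, with your substitution $u = nq$ being a slightly more direct computation than the paper's substitution $u = \log(nq)$ followed by integration by parts. If anything, your handling of the asymptotic notation is a bit more careful than the paper's (which writes ``w.l.o.g.\ $\Theta(\log(nq)) = C\log(nq)$'' with a single constant), since you use two-sided constants $c_1, c_2$ and explicitly verify $\log(nq) \geq 0$ on $[1/n,1]$ so the bounds pass through the integral.
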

\begin{proof}
    W.l.o.g. \(\Theta(\log(nq)) = C \log(nq)\), where \(C\) is a constant. Then we have:
    \[
    \int_{1/n}^{1} C \log(nq) \, dq.
    \]

This integral can be solved using change of variables. Let \(u = \log(nq)\). Then \( du = \frac{1}{q} \, dq \Rightarrow dq = q \, du.\)

When \(q = 1/n\), we have \(u = \log(n \cdot 1/n) = \log(1) = 0\). When \(q = 1\), we have \(u = \log(n \cdot 1) = \log(n)\). Substituting these into the integral, we obtain:
\[
\int_{1/n}^{1} C \log(nq) \, dq = \int_{0}^{\log(n)} C u \cdot \frac{e^u}{n} \, du = \frac{C}{n} \int_{0}^{\log(n)} u e^u \, du.
\]

Then integrate by parts. Let \(v = u\) and \(dw = e^u \, du\), so that \(dv = du\) and \(w = e^u\). Using the integration by parts formula \(\int v \, dw = vw - \int w \, dv\), we get \(\
\int u e^u \, du = u e^u - \int e^u \, du = u e^u - e^u + C.\)

Thus, the definite integral becomes:
\[
\int_{0}^{\log(n)} u e^u \, du = \left[ u e^u - e^u \right]_{0}^{\log(n)}.
\]

Evaluating the boundary terms, we have:
\[
\left( \log(n) \cdot n - n \right) - (0 - 1) = n \log(n) - n + 1.
\]

Substituting this result back, the original integral becomes:
\[
\frac{C}{n} (n \log(n) - n + 1) = C (\log(n) - 1 + \frac{1}{n}).
\]

For sufficiently large \(n\), the term \(\frac{1}{n}\) becomes negligible, leaving the dominant term as 
\(
C \log(n) - C.
\). Therefore, the asymptotic bound of the integral is:
\(
\Theta(\log(n))
\), as desired. This completes the proof.
\end{proof}

\subsection*{Proof of Lemma\ref{lem:bound on 2,1}}
\begin{proof}
    We first prove that $S(2,n,1) = \Theta(\log n).$ From the quantile representation of total effort (Lemma~\ref{lem:QuantileRep}) and analysis on $H_{(2,1)}$ (Lemma~\ref{lem:H21}) we have:
    \[
    \begin{aligned}
        S(n,n,1) & = n\int_0^1|v'(q)|H_{(2,1)}(q)\,dq \\
        & = n\int_0^1|v'(q)| \Theta(\min\{nq^2,\frac{\log(nq)+1}{n}\})\, dq \\
        &= \int_{0}^{\frac{1}{n}}|v'(q)| \Theta(n^2q^2)\, dq+\int_{\frac{1}{n}}^{1}|v'(q)| \Theta(\log(nq)+1)\, dq \\
        &= \Theta(n^2)\int_{0}^{\frac{1}{n}}|v'(q)|q^2\, dq+\int_{\frac{1}{n}}^{1}|v'(q)|\Theta(\log(nq))\, dq.
    \end{aligned}
    \]
    
    Since $|v'(q)|$ is bounded, i.e., $L'\leq|v'(q)|\leq L$, then $\int_{\frac{1}{n}}^{1}|v'(q)|\, dq = \Theta(1+n^{-1}) = \Theta(1)$. Also by Lemma~\ref{lem:logInt}, \(\int_{1/n}^{1} \Theta(\log(nq)) \, dq = \Theta(\log n)\), which gives that $\int_{\frac{1}{n}}^{1}|v'(q)|\Theta(\log(nq))\, dq=\Theta(\log n)$. Therefore we have:
    \[
    S(2,n,1)= \Theta(n^{2})\Theta(n^{-3})+\Theta(\log n)=\Theta(\log n).
    \]

    As for $S^{(1)}(2,n,1)$, by Lemma \ref{lemma:1/qint approximation} we have $H_{l,m}^{(1)}(t)=\Theta(1)\int_0^t\frac{\binom{n-1}{l}(1-q)^{n-l-1}q^{l}}{\sum_{j=1}^{m}\binom{n-1}{j-1}(1-q)^{n-j}q^{j-1}}\min\{1,\frac{1}{nq}\}dq$.

    Therefore, We have $H_{(2,1)}^{(1)}(t)=\Theta(1)\int_0^t\frac{(n-1)(1-q)^{n-2}q}{(1-q)^{n-1}+(n-1)(1-q)^{n-2}q}\min\{1,\frac{1}{nq}\}dq = \Theta(H_{(2,1)}(t))$. The same procedure immediately applies. Then we conclude that, $S^{(1)}(2,n,1)=\Theta(n)$ and $S(2,n,1)=\Theta(n)$, which completes the proof.
\end{proof}

% \begin{corollary}\label{coro:ThetaN}
%     For any given distribution, $S(m^*,n,m^*-1) = \Theta(n)$.
% \end{corollary}

\subsection*{Proof of Lemma~\ref{lem:bound on m,m-1}}
\begin{proof}
Recall from Lemma~\ref{lem:AsyRep} that, as $n\to \infty$:
\[
S(m,n,m-1)=F^{-1}(1-k)+n\int_0^{1-k}F^{-1}(q)\frac{q}{1-q}\frac{k}{1-k}(\frac{1}{q}-\frac{k}{q(1-q)})dq.
\]
Since by Theorem~\ref{thm:OptAsmLinear}, $k^*$ is a constant for arbitrary distributions. Then, asymptotically, $S(m^*,n,m^*-1)$ becomes a linear function of $n$, therefore $S(m^*,n,m^*-1)=\Theta(n)$, which completes the proof.
\end{proof}

\begin{proof}[Proof of Theorem~\ref{thm: 2,1 vs n,1 max effort}]
This follows directly from Lemma~\ref{lem:bound on 2,1} and Lemma~\ref{lem:bound on n,1}.
\end{proof}

\subsection*{Proof of Theorem~\ref{thm:TotalOPTVAN}}
\begin{proof}
    This follows directly from Lemma~\ref{lem:bound on n,1} and \ref{lem:bound on m,m-1} .
\end{proof}

\subsection*{Proof of Proposition~\ref{prop:TotalTWOVAN}}
\begin{proof}
    This follows directly from Lemma \ref{lem:bound on n,1} and \ref{lem:bound on 2,1}.
\end{proof}

\section{Missing Proofs in Section \ref{sec:practicalApp}}
\begin{lemma}\label{lem:IncFracSeq}
    If the following condition holds:
    \[
    \frac{a_1}{b_1} \leq \frac{a_2}{b_2} \leq \ldots \leq \frac{a_n}{b_n}.
    \]
    
    Let $A(x) = \sum_{i=1}^na_ix^{i}$, $B(x)=\sum_{i=1}^n b_ix_i$, then $A(x)/B(x)$ is increasing in $(0,+\infty)$. 
\end{lemma}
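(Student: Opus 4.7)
\bigskip

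\noindent\textbf{Proof proposal for Lemma~\ref{lem:IncFracSeq}.}

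My plan is to differentiate the ratio directly and show the numerator of the derivative is nonnegative as a polynomial in $x$ with nonnegative coefficients. I will implicitly assume $b_i>0$ for all $i$ (otherwise the hypothesis $a_i/b_i$ is not even well-defined as a real-valued increasing sequence), which in particular guarantees $B(x)>0$ on $(0,+\infty)$ so that the quotient is smooth and the sign of its derivative is determined by the sign of $A'(x)B(x)-A(x)B'(x)$.

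First I would write
\[
A'(x)B(x)-A(x)B'(x)=\sum_{i=1}^{n}\sum_{j=1}^{n}(i-j)\,a_i b_j\,x^{i+j-1},
\]
which follows because $A'(x)=\sum_i i a_i x^{i-1}$ and $B'(x)=\sum_j j b_j x^{j-1}$. The diagonal terms $i=j$ contribute $0$, so I would pair up the symmetric off-diagonal terms. Grouping the $(i,j)$ and $(j,i)$ contributions for each pair $i<j$ gives
\[
A'(x)B(x)-A(x)B'(x)=\sum_{1\le i<j\le n}(j-i)\bigl(a_j b_i - a_i b_j\bigr)\,x^{i+j-1}.
\]

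Next I would argue each coefficient is nonnegative. Since $b_i,b_j>0$ and $a_i/b_i\le a_j/b_j$ (by the hypothesis and the fact that the sequence $a_k/b_k$ is nondecreasing), cross-multiplying yields $a_j b_i \ge a_i b_j$. Combined with $j-i>0$ and $x^{i+j-1}>0$ for $x>0$, every summand is nonnegative. Hence $A'(x)B(x)-A(x)B'(x)\ge 0$ on $(0,+\infty)$, so $(A/B)'\ge 0$ there and $A/B$ is nondecreasing.

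There is no serious obstacle: the argument is just the standard ``log-derivative of a polynomial with sorted coefficient ratios'' computation, and the only care needed is the indexing of the double sum and confirming the sign hypothesis on $b_i$. If the paper needs strict monotonicity, I would additionally note that the inequality $a_j b_i\ge a_i b_j$ is strict whenever $a_i/b_i<a_j/b_j$, so unless all ratios coincide (in which case $A/B$ is a positive constant), at least one coefficient in the double sum is strictly positive and $(A/B)'(x)>0$ for every $x>0$.
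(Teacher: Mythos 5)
Your proposal is correct, but there is nothing in the paper to compare it against: Lemma~\ref{lem:IncFracSeq} is stated in the appendix with no proof attached (the next proof environment belongs to Lemma~\ref{lem:FracDesc}, which merely invokes it), so your argument fills a genuine omission rather than paralleling an existing one. The computation is the standard one and checks out: with $b_i>0$,
\[
A'(x)B(x)-A(x)B'(x)=\sum_{1\le i<j\le n}(j-i)\bigl(a_jb_i-a_ib_j\bigr)x^{i+j-1}\ \ge\ 0,
\]
since the hypothesis $a_i/b_i\le a_j/b_j$ cross-multiplies to $a_jb_i\ge a_ib_j$. Two remarks. First, your implicit hypothesis $b_i>0$ is genuinely needed (it both makes the stated ratios meaningful and ensures $B(x)>0$ on $(0,+\infty)$), and it holds in the paper's only application of the lemma, inside the proof of Lemma~\ref{lem:FracDesc}, where $b_j=\binom{n-1}{j-1}>0$ and $a_j/b_j=\frac{j-1}{n-j+1}$ is nondecreasing in $j$ (note $a_1=0$ is permitted by your argument). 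Second, you silently repair two defects in the statement: $B(x)=\sum_i b_i x_i$ should read $\sum_i b_i x^i$, and the fact that both polynomials start at $x^1$ rather than $x^0$ is immaterial since the common factor $x$ cancels in the quotient. Your distinction between nondecreasing in general and strictly increasing unless all ratios coincide is the right level of care, and strictness does hold where the paper uses the lemma, since $\frac{j-1}{n-j+1}$ is strictly increasing in $j$.
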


\begin{lemma}\label{lem:FracDesc}
    $H_{(m)}(q)/H_{(m')}(q)$ decreases with $q$ in $(0,1]$ for all $2 \leq m < m' \leq n$.
\end{lemma}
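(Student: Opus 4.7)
The plan is to reduce, via two applications of the classical integral-ratio monotonicity fact, to a single polynomial-ratio monotonicity statement amenable to Lemma~\ref{lem:IncFracSeq}. The integral-ratio fact states: if $f,g>0$ on $(0,1]$ and $f(t)/g(t)$ is non-increasing in $t$, then $\int_0^q f(t)\,dt / \int_0^q g(t)\,dt$ is non-increasing in $q$, which one verifies by differentiating the ratio and regrouping the numerator as $\int_0^q g(t)g(q)[f(t)/g(t)-f(q)/g(q)]\,dt$. Applying this to $H_{(m)}=\int_0^q G_{(m)}$ reduces the claim to showing that $G_{(m)}(t)/G_{(m')}(t)$ is non-increasing in $t$.

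Since $G_{(m)}/G_{(m')}=\prod_{k=m}^{m'-1}G_{(k)}/G_{(k+1)}$ and a product of positive non-increasing functions is non-increasing, it further suffices to take $m'=m+1$. I would substitute $u=t/(1-t)$ (an increasing change of variable), which gives $\zeta(m,n,t)=P_m(u)/(1+u)^{n-1}$ with $P_m(u):=\sum_{k=0}^{m-1}\binom{n-1}{k}u^k$, and rearranges the complete-simple-contest weight $\binom{n-1}{m-1}(1-t)^{n-m}t^{m-2}$ into $\binom{n-1}{m-1}u^{m-2}/(1+u)^{n-2}$. Direct substitution into $G_{(m)}$ from Lemma~\ref{lem:QuantileRep} yields
\[
\frac{G_{(m)}(t)}{G_{(m+1)}(t)} = \frac{m}{n-m}\cdot\frac{P_{m+1}(u)}{u\,P_m(u)}\cdot\frac{T_m(t)}{T_{m+1}(t)}, \qquad T_m(t):=\int_0^t\zeta(m,n,p)\,dp.
\]

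The middle factor rewrites as the polynomial ratio $\sum_{k=0}^m\binom{n-1}{k}u^k / \sum_{k=1}^m\binom{n-1}{k-1}u^k$. The extra $k=0$ term in the numerator splits off as $1/\sum_{k=1}^m\binom{n-1}{k-1}u^k$, which is manifestly non-increasing in $u$; for the remainder, the coefficient ratios $\binom{n-1}{k}/\binom{n-1}{k-1}=(n-k)/k$ form a non-increasing sequence in $k$, so Lemma~\ref{lem:IncFracSeq} (applied after interchanging the roles of $A$ and $B$) shows the remainder is also non-increasing in $u$. Hence the middle factor is non-increasing in $u$, and therefore in $t$. The last factor $T_m/T_{m+1}$ is non-increasing by another application of the integral-ratio lemma to $\zeta(m,n,p)/\zeta(m+1,n,p)=P_m(u)/P_{m+1}(u)$, which is non-increasing in $p$ by the same polynomial-ratio argument.

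The main obstacle is identifying the correct re-grouping in the last step. A naive factoring peels off $\zeta(m+1,n,t)/\zeta(m,n,t)$ as its own factor, which can be checked to be \emph{increasing} in $t$ (essentially by Lemma~\ref{lem:binomDesc} read in the $t$ direction), so it counteracts the obviously decreasing $(1-t)/t$ term and blocks any termwise monotonicity argument. The fix is to fuse $(1-t)/t$ with the $\zeta$-ratio into the single rational function $P_{m+1}(u)/[u\,P_m(u)]$, whose monotonicity then falls directly under Lemma~\ref{lem:IncFracSeq}. After this fusion all remaining factors move in the same direction and the proof closes.
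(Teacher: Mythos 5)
Your proof is correct, but it takes a genuinely different route from the paper's. The paper attacks $H_{(m)}(q)/H_{(m+1)}(q)$ by direct differentiation: it reduces the claim to showing that $H_{(m)}(q)/G_{(m)}(q)$ decreases in $m$, proves this pointwise in the inner integration variable via the domination $D(q)\ge D(x)$ for $x\le q$, and then grinds through the derivative of $D(q)=\frac{q}{1-q}\cdot\frac{A(m,q)}{I_A(m,q)}\cdot\frac{I_A(m+1,q)}{A(m+1,q)}$, at one point importing the fact that $I_A(m,q)/A(m,q)$ decreases in $m$ from the proof of Proposition~\ref{thm:ConpleteSimpleContest}; the extension from $m+1$ to general $m'$ is then done by a separate product-and-induction step on the $H$-ratios. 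You replace all of this with two applications of the standard integral-ratio monotonicity principle (once to pass from $G$-ratios to $H$-ratios, once to get $T_m/T_{m+1}$ from the $\zeta$-ratios, modulo a harmless sign flip in your regrouping of the derivative numerator) together with the chain decomposition $G_{(m)}/G_{(m')}=\prod_k G_{(k)}/G_{(k+1)}$, which is more modular, avoids the derivative computation entirely, and is self-contained in that it never needs the $h(m,t)$ (equivalently $I_A/A$) monotonicity in $m$. Both arguments converge on the identical kernel: after substituting $u=t/(1-t)$, your fused factor $P_{m+1}(u)/\bigl(u\,P_m(u)\bigr)$ is exactly the reciprocal of the paper's $\frac{q}{1-q}\cdot\frac{A(m,q)}{A(m+1,q)}$, and in both cases monotonicity is settled by Lemma~\ref{lem:IncFracSeq} applied to coefficient ratios $(n-k)/k$ (the paper's $\frac{j-1}{n-j+1}$, increasing, versus yours, decreasing). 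Your diagnosis of why the naive factoring fails is also accurate and mirrors the paper: $\zeta(m+1,n,t)/\zeta(m,n,t)=P_{m+1}(u)/P_m(u)$ is indeed increasing in $t$, which is precisely why both proofs must fuse the $(1-t)/t$ factor with the $\zeta$-ratio before invoking the polynomial-ratio lemma. The trade-off is that the paper's version reuses machinery already built for the total-effort monotonicity result, while yours is shorter, more elementary, and delivers the general $m<m'$ case for free from positivity of each link in the chain.
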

\begin{proof}
    We first prove that $H_{(m)}(q)/H_{(m+1)}(q)$ decreases with $q$. 

    It is suffice to prove that derivative of $H_{(m)}(q)/H_{(m+1)}(q)$ is non-positive:
    \[
    \begin{aligned}
        \frac{H'_{(m)}(q)H_{(m+1)}(q)-H_{(m)}(q)H'_{(m+1)}(q)}{(H_{(m+1)}(q))^2} & \leq 0 \\
        \frac{H_{(m+1)}(q)}{H'_{(m+1)}(q)}-\frac{H_{(m)}(q)}{H'_{(m)}(q)} & \leq 0 \\
        \frac{H_{(m+1)}(q)}{G_{(m+1)}(q)} & \leq \frac{H_{(m)}(q)}{G_{(m)}(q)},
    \end{aligned}
    \]or equivalently, $\frac{H_{(m)}(q)}{G_{(m)}(q)}$ decreases with $m$ for any $q\in (0,1]$.

    We have the expression of $\frac{H_{(m)}(q)}{G_{(m)}(q)}$:
    \begin{multline*}
        \frac{\sum_{j=1}^m\binom{n-1}{j-1}(1-q)^{n-j}q^{j-1}}{\binom{n-1}{m-1}(1-q)^{n-m}q^{m-2}\int_0^q\sum_{j=1}^m\binom{n-1}{j-1}(1-t)^{n-j}t^{j-1}\, dt} \\ \cdot \int_0^q \frac{\binom{n-1}{m-1}(1-x)^{n-m}x^{m-2}\int_0^x\sum_{j=1}^m\binom{n-1}{j-1}(1-t)^{n-j}t^{j-1}\, dt}{\sum_{j=1}^m\binom{n-1}{j-1}(1-x)^{n-j}x^{j-1}}\, dx
    \end{multline*}

    We proceed by prove that $\frac{H_{(m)}(q)}{D_{(m)}(q)} / \frac{H_{(m+1)}(q)}{D_{(m+1)}(q)} \geq 1$. Since the outer integration is hard to handle, we skip it by proving a stronger version of that claim: the inequality holds point-wise for $x$. 

    With some cancellation, this fraction can be simplified to \(D(q)/D(x)\), where:
    \[
        D(q) := \frac{q}{1-q} \frac{\sum_{j=1}^m\binom{n-1}{j-1}(1-q)^{n-j}q^{j-1}}{\int_0^{q}\sum_{j=1}^{m}\binom{n-1}{j-1}(1-t)^{n-j}t^{j-1}\, dt} \frac{\int_0^{q}\sum_{j=1}^{m+1}\binom{n-1}{j-1}(1-t)^{n-j}t^{j-1}\, dt}{\sum_{j=1}^{m+1}\binom{n-1}{j-1}(1-q)^{n-j}q^{j-1}},
    \]
    then it is suffice to prove that $D(q)$ increase with $q$, since $x\leq q$. 

    To simplify expression, we further introduce several notations:
    \[
    \begin{aligned}
        A(m,q) & =\sum_{j=1}^m\binom{n-1}{j-1}(1-q)^{n-j}q^{j-1} = 1-(n-1)\binom{n-2}{m-1} \int_0^{q}(1-t)^{n-m-1}t^{m-1}\,dt, \\
        A'(m,q) & = -(n-1)\binom{n-2}{m-1}(1-q)^{n-m-1}q^{m-1},
    \end{aligned}
    \]$I_A(n,q)=  \int_0^qA(m,t) \, dt$, $C(m,q) = A(m,q)/I_A(m,q)$, $P(q) = \frac{q}{1-q}$, $P'(q)= \frac{1-q+q}{(1-q)^2} = (1-q)^{-2}$ and,
    \[
    C'(m,q) = \frac{A'(m,q)I_A(m,q)-A^2(m,q)}{I^2_A(m,q)}.
    \]

    Then $D(q)=P(q)C(m,q)C^{-1}(m+1,q)$, and the derivative:
    \[
    \begin{aligned}
        D'(q) = & P'(q)C(m,q)C^{-1}(m+1,q)+P(q)C'(m,q)C^{-1}(m+1,q)\\
                & -P(q)C(m,q)C'(m+1,q)C^{-2}(m+1,q).
    \end{aligned}
    \]

    Next, canceling out $C^{-1}(m+1,q)$:
    \[
    P'(q)C(m,q) + P(q)C'(m,q)-P(q)C(m,q)C'(m+1,q)C^{-1}(m+1,q).
    \]

    Substituting the expression for $P$ and $C$, we have:
    \begin{multline*}
        \frac{A(m,q)}{(1-q)^2I_A(m,q)} + \frac{q(A'(m,q)I_A(m,q)-A^2(m,q))}{(1-q)I^2_A(m,q)} \\
        -\frac{qA(m,q)(A'(m+1,q)I_A(m+1,q)-A^2(m+1,q))I_A(m+1,q)}{(1-q)I_A(m,q)I^2_A(m+1,q)A(m+1,q)}
    \end{multline*}

    Canceling out $(1-q)I_A(m,q)$, it becomes:
    \begin{multline*}
        \frac{A(m,q)}{1-q}+\frac{q(A'(m,q)I_A(m,q)-A^2(m,q))}{I_A(m,q)} \\
        - \frac{qA(m,q)(A'(m+1,q)I_A(m+1,q)-A^2(m+1,q))}{I_A(m+1,q)A(m+1,q)}
    \end{multline*}

    Separate the numerators, and move $q$ to denominator, then we get:
    \[
    \frac{A(m,q)}{q(1-q)} + A'(m,q)-\frac{A^2(m,q)}{I_A(m,q)}-\frac{A(m,q)A'(m+1,q)}{A(m+1,q)}+\frac{A(m,q)A(m+1,q)}{I_A(m+1,q)}
    \]

    The third term and the last term can be combined:
    \[
    A(m,q)\left( \frac{A(m+1,q)}{I_A(m+1,q)}-\frac{A(m,q)}{I_A(m,q)}\right) \geq 0,
    \]where $I_A(m,q)/A(m,q)$ decrease with $m$ follows from the proof of Theorem~\ref{thm:ConpleteSimpleContest}.  

    Now, it is suffice to prove the remaining part also non-negative, i.e.:
    \[
    \frac{A(m,q)}{q(1-q)} + A'(m,q) - \frac{A(m,q)A'(m+1,q)}{A(m+1,q)} \geq 0.
    \]

    Re-organizing terms, it is equivalent to show that:
    \[
    \frac{A'(m,q)}{A(m,q)} - \frac{A'(m+1,q)}{A(m+1,q)} + \frac{1}{q(1-q)} \geq 0
    \]

    Since $[\ln q - \ln (1-q)]' = q^{-1}+(1-q)^{-1} = [q(1-q)]^{-1}$, it further becomes:
    \[
    \begin{aligned}
        & [\ln A(m,q)-\ln A(m+1,q)+\ln q-\ln(1-q)]' \\
        = & \left [ \ln \frac{qA(m,q)}{(1-q)A(m+1,q)}\right]'
    \end{aligned}
    \]

    Then the problem finally becomes to show that $\frac{qA(m,q)}{(1-q)A(m+1,q)}$ increase with $q$. 

    Expanding the expressions, we have:
    \[
    \begin{aligned}
        & \frac{\sum_{j=1}^m\binom{n-1}{j-1}(1-q)^{n-j}q^{j-1}}{\sum_{j=1}^{m+1}\binom{n-1}{j-1}(1-q)^{n-j}q^{j-1}} \frac{q}{1-q} = \frac{\sum_{j=1}^m\binom{n-1}{j-1}(1-q)^{n-j-1}q^{j}}{\sum_{j=1}^{m+1}\binom{n-1}{j-1}(1-q)^{n-j}q^{j-1}} \\
    =& \frac{\sum_{j=2}^{m+1}\binom{n-1}{j-2}(1-q)^{n-j}q^{j-1}}{\sum_{j=1}^{m+1}\binom{n-1}{j-1}(1-q)^{n-j}q^{j-1}} = \frac{\sum_{j=2}^{m+1}\frac{j-1}{m-j+1}\binom{n-1}{j-1}(1-q)^{n-j}q^{j-1}}{\sum_{j=1}^{m+1}\binom{n-1}{j-1}(1-q)^{n-j}q^{j-1}} \\
    = & \frac{\sum_{j=1}^{m+1}\frac{j-1}{n-j+1}\binom{n-1}{j-1}(1-q)^{n-j}q^{j-1}}{\sum_{j=1}^{m+1}\binom{n-1}{j-1}(1-q)^{n-j}q^{j-1}} = \frac{\sum_{j=1}^{m+1}\frac{j-1}{n-j+1}\binom{n-1}{j-1}(1-q)^{1-j}q^{j-1}}{\sum_{j=1}^{m+1}\binom{n-1}{j-1}(1-q)^{1-j}q^{j-1}}.
    \end{aligned}
    \]

    Let $t=\frac{q}{1-q}$, it becomes:
    \[
    \frac{\sum_{j=1}^{m+1}\frac{j-1}{n-j+1}\binom{n-1}{j-1}t^{j-1}}{\sum_{j=1}^{m+1}\binom{n-1}{j-1}t^{j-1}}.
    \]

    Since $\frac{j-1}{n-j+1}$ increases with $j$, Lemma~\ref{lem:IncFracSeq} shows this fraction increases with $t$ (also $q$), as desired.

    So far, We have proved that $H_{(m)}(q)/H_{(m+1)}(q)$ decreases with $q$. 
    
    For any two function $f(x), g(x) \geq 0$, if $f'(x), g'(x) \leq 0 $, then $[f(x)g(x)]' = f'(x)g(x)+f(x)g'(x) \leq0$. Since non-negative function $H_{(m)}(q)/H_{(m+1)}(q)$ is differentiable and decreasing, we have $[H_{(m)}(q)/H_{(m+2)}(q)]'=[(H_{(m)}(q)/H_{(m+1)}(q))\cdot (H_{(m+1)}(q)/H_{(m+2)}(q))]'\leq0$. Therefore, it can be prove by induction that $H_{(m)}(q)/H_{(m')}(q)$ for any $m'>m$, which completes the proof.
\end{proof}

\begin{proposition}\label{prop:SupM}
    For the total effort objective, given $n$, there exists an $O(n)$ algorithm that finds the largest shortlist capacity $m$ such that the complete simple contest $S(m,n,m-1)$ is the optimal contest under some ability distribution $F(x)$.    
\end{proposition}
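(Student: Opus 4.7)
The plan is twofold. First, I would reduce the problem to identifying the largest index $\tilde m$ achieving $\max_{m'} H_{(m')}(1)$, where $H_{(m)}(q) := \int_0^q G_{(m,m-1)}(t)\,dt$ is the distribution-free kernel from Lemma~\ref{lem:QuantileRep}. Second, I would describe an $O(n)$ algorithm that computes $\tilde m$.

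For the reduction, Lemma~\ref{lem:QuantileRep} gives $S(m,n,m-1) = n\int_0^1 |v'(q)|\,H_{(m)}(q)\,dq$, so $m$ is optimal for some distribution iff there is a non-negative weight $|v'|$, realizable by some admissible $F$, making $\int_0^1 |v'(q)|\,(H_{(m)}(q) - H_{(m')}(q))\,dq \geq 0$ for every $m'$. To rule out $m > \tilde m$: Lemma~\ref{lem:FracDesc} says $H_{(\tilde m)}(q)/H_{(m)}(q)$ is decreasing on $(0,1]$; since it is at least $1$ at $q=1$ by the definition of $\tilde m$, it is at least $1$ throughout $(0,1]$, so $H_{(\tilde m)}$ dominates $H_{(m)}$ pointwise and $\int|v'|\,H_{(m)}\,dq \leq \int|v'|\,H_{(\tilde m)}\,dq$ for every admissible weight. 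To realize $\tilde m$: take a sequence of distributions whose weight functions $|v'|$ concentrate near $q=1$ (e.g.\ admissible densities that thin sharply toward the infimum of the support, making $|v'|$ blow up near $q=1$). A dominated-convergence argument then yields $\int|v'|\,H_{(m')}\,dq / \int|v'|\,dq \to H_{(m')}(1)$ along the sequence, so the limiting ordering over $m'$ matches that of $H_{(m')}(1)$, and by continuity $\tilde m$ is the strict optimizer sufficiently far along the sequence.

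For the algorithm, define $A(m,t) := \sum_{j=1}^m \binom{n-1}{j-1}\,t^{j-1}(1-t)^{n-j}$. The telescoping identity $A(m,t) - A(m-1,t) = \binom{n-1}{m-1}\,t^{m-1}(1-t)^{n-m}$ lets us rewrite $G_{(m,m-1)}(t) = \frac{A(m,t)-A(m-1,t)}{t\,A(m,t)}\int_0^t A(m,p)\,dp$, so both $A(m,t)$ and its antiderivative admit simple $m$-updates. Evaluating $H_{(m)}(1) = \int_0^1 G_{(m,m-1)}(t)\,dt$ on a fixed quadrature grid via these incremental updates uses only $O(1)$ arithmetic per $m$ after an $O(n)$ precomputation of binomial coefficients. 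A single sweep $m=2,\ldots,n$ that tracks the largest $m$ attaining the running maximum (ties broken toward the larger index) then returns $\tilde m$ in $O(n)$ total time.

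The main obstacle is verifying the $O(1)$-per-$m$ incremental update rigorously: the nested integral structure of $H_{(m)}(1)$ couples $A(m,\cdot)$ with its running antiderivative, so the telescoping must be propagated carefully through both the numerator and the normalizer. I expect that the monotonicity in Lemma~\ref{lem:FracDesc} further implies unimodality of $H_{(m)}(1)$ in $m$, which would allow the sweep to halt at the first decrease and make the tie-breaking at the peak unambiguous.
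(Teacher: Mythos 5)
Your proposal follows essentially the same route as the paper: both reduce the problem to maximizing the distribution-free endpoint value $H_{(m)}(1)$, and both invoke Lemma~\ref{lem:FracDesc} to upgrade the endpoint comparison $H_{(\tilde m)}(1)\geq H_{(m')}(1)$ to pointwise dominance, which rules out every capacity larger than the maximizer for every admissible distribution. The one genuine difference is the realization step. The paper is constructive and finite: for each $m'<\tilde m$, the decreasing ratio yields a unique crossing point $q'$ beyond which $H_{(\tilde m)}$ exceeds $H_{(m')}$, and a single explicit weight with $|v'(q)|=1$ for $q\geq \max\{\vec{q'}\}$ and $|v'(q)|=\epsilon$ elsewhere makes $\tilde m$ strictly optimal. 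You instead take a sequence of admissible weights concentrating at $q=1$ and pass to the limit by dominated convergence; this works, but it produces a witness only in the limit, and the "sufficiently far along the sequence" step silently requires the maximizer of $H_{(m)}(1)$ to be strict. That assumption matters for your tie rule: if $H_{(m')}(1)=H_{(\tilde m)}(1)$ for some $m'<\tilde m$, the concentration limit does not separate the two, and Lemma~\ref{lem:FracDesc} then gives $H_{(m')}(q)\geq H_{(\tilde m)}(q)$ pointwise, so the \emph{larger} index is weakly dominated under every admissible distribution; your tie-break "toward the larger index" would therefore return a capacity that is never strictly optimal. (The paper's proof tacitly assumes a strict maximizer as well, so this is a shared gap, but your explicit tie rule makes the wrong choice in the degenerate case; strictness can be argued from the strict inequality noted in the proof of Lemma~\ref{lem:FracDesc}, or ties should be broken toward the smaller index.) On the algorithmic side, your telescoping identity for $A(m,t)$ is correct, but the claimed $O(1)$ arithmetic per $m$ holds only per quadrature node: with a grid of size $N$ the sweep costs $O(nN)$, and a constant-size grid cannot resolve kernels of the form $t^{m-1}(1-t)^{n-m}$ that sharpen as $n$ grows. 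Your accounting matches the paper's $O(n)$ claim only under the same implicit convention the paper uses, namely treating each evaluation of $H_{(m)}(1)$ as a unit-cost oracle call. The conjectured unimodality of $H_{(m)}(1)$ in $m$, which would permit early stopping, is not established in the paper and is not needed for the result.
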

\begin{proof}
    Since Theorem~\ref{thm:ConpleteSimpleContest} states that optimal contest is a complete simple contest, i.e., $l = m-1$, we therefore omit $l$ in the following discussion.

Recall the quantile representation of total effort:
\[
    S(m,n, l)= n\int_0^1|v'(q)|\int_0^qG_{(m,l)}(t)\,dt\,dq,
    \]

Then total effort becomes the integration of the multiplication of a function $|v'(q)|$ determined by ability distribution, and a function $H_{(m,l)}(q)=\int_0^qG_{(m,l)}(t)\,dt$ that is completely decided by the contest structure.

Let us focus on the distribution-free part. We can plot $H_{m}(q)$ as a function of $q\in[0,1]$ for $m = 2,\ldots,n$. The example of $n=10$ is shown in Figure~\ref{fig:universal-b}. In this case, we can see that for some $m$ (e.g., $m=3$), $H_{(m)}(q)>H_{(m+1)}(q)$ holds point-wise, thus, $S(m,n) > S(m+1,n)$ stands true for arbitrary distributions, indicating that $m+1$ is a strictly dominated choice. 

\begin{figure}[h]

\begin{subfigure}[ht]{0.48\textwidth}
    \centering
    \includegraphics[width=\textwidth]{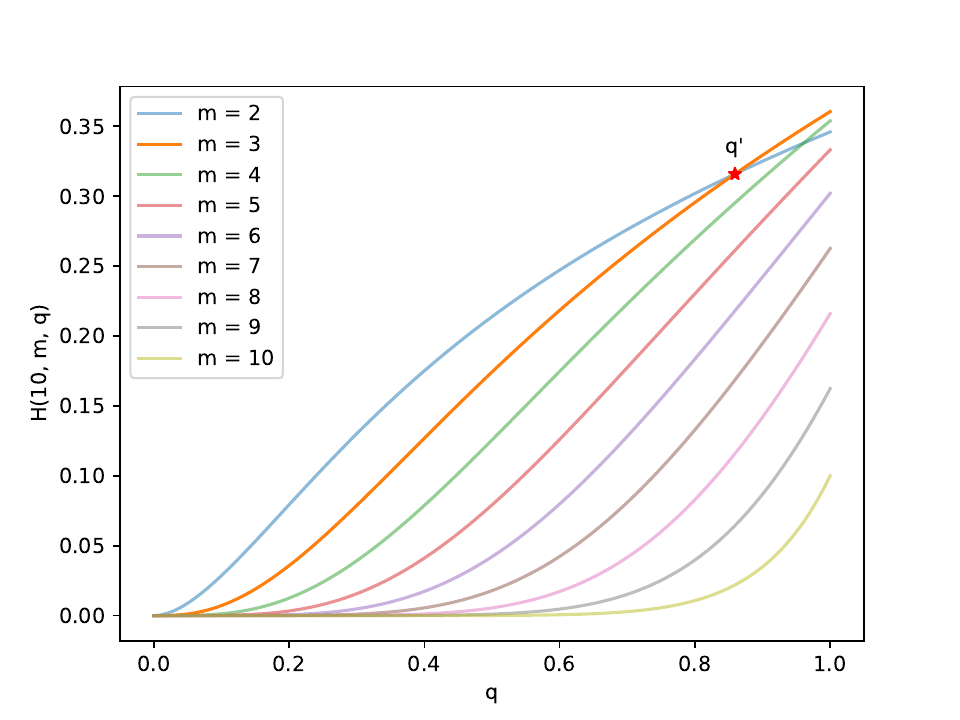}
    \subcaption{$H_{(m)}(q)$ for different $m$}
    \label{fig:universal-b}
    \end{subfigure}
%\hfill
\begin{subfigure}[ht]{0.48\textwidth}
    \centering
    \includegraphics[width=\textwidth]{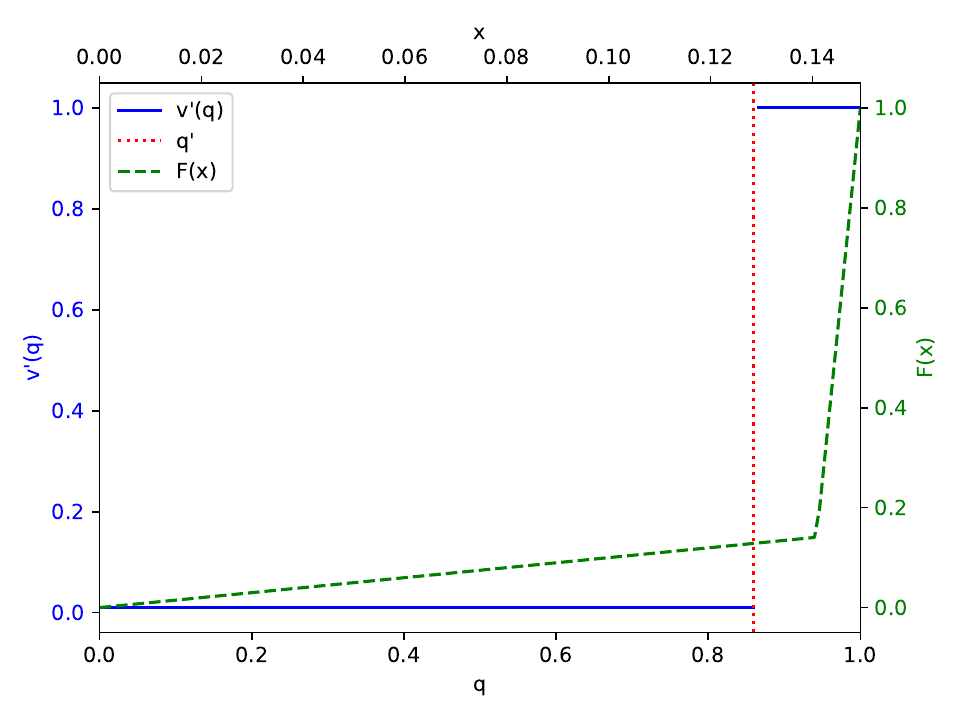}
    \subcaption{$|v'(q)|$ s.t. $S(3,10)>S(2,10)$}
    \label{fig:universal-c}
    \end{subfigure}
%\caption{dont know what to say here}
\label{fig:universal}
\end{figure}

Actually, it can be shown that $H_{(m)}(q)/H_{(m')}(q)$ is decreasing with $q$ for $m'>m$ (Lemma~\ref{lem:FracDesc}), then $H_{(m)}(1) > H_{(m')}(1)$ is a suffice and necessary condition for $H_{(m)}(q)>H_{(m')}(q)$ point-wise. On the other hand, if $H_{(m)}(1) < H_{(m')}(1)$, then there exists unique $q' \in(0,1)$ such that $H_{(m)}(q')=H_{(m')}(q')$ and $H_{(m)}(q)<H_{(m')}(q)$ afterwards (e.g., the $q'$ for $m=2$ and $m'=3$ is marked with asterisk in Figure~\ref{fig:universal-b}). Since $v'(q)$ can be any positive function, we can always construct a distribution that satisfies $v'(q)=1$ when $q\geq q'$ and $v'(q)=\epsilon$ elsewhere such that $S(m,n) < S(m',n)$ (See an example distribution that make $m'=3$ better than $m=2$ in Figure~\ref{fig:universal-c}, where we let $q' \approx 0.859$, $\epsilon=0.01$, and $S(2,10) \approx 0.481 < 0.489 \approx S(3,10)$.).

Therefore, we can find the $m$ that maximize $H_{m}(1)$. For $m'>m$, we have $H_{(m)}(q) > H_{(m')}(q)$ point-wise, so the optimal capacity can not be more than $m$. For $m'<m$, we have $H_{(m')}(1) < H_{(m)}(1)$, then we can still construct a distribution that satisfies $v'(q)=1$ when $q \geq \max\{\vec{q'}\}$ and $v'(q) = \epsilon$ elsewhere such that $S(m,n) > S(m',n)$ for all $m'<m$, hence we find an instance making $m$ the optimal capacity. We then conclude that $m$ is the tight upper bound for optimal capacity for given $n$, as desired. Since we enumerate over the value of $m$, the algorithm is $O(n)$. 
\end{proof}
\begin{remark}
    The insight from the construction of worst case distribution (e.g., Figure~\ref{fig:universal-c}) is, when almost all of the population are concentrated near the strongest end of ability, it tends to need larger shortlist capacity to reach optimality. On the other hand, if highest ability only takes up a little probability mass, or equivalently, $|v'(q)|$ is much higher when $q$ is small, it tends to obtain optimality with fewer contestants. An uniform distribution, i.e., $|v'(q)|=1$, whose probability mass is evenly distributed, is right in the middle, with $k^*\approx15\%$, as shown in Example~\ref{exam:OptimalUniform}. 
\end{remark}

\begin{corollary}\label{coro:shortlistAlways}
    All complete simple contests that admit $kn$ contestants ($k\in (0,1)$) will result in $\Theta(n)$ total effort. Specifically, when $n$ is large, letting admission ratio $k=31.62\%$ produces higher total effort than any $k'>k$, where $k$ is the solution of $(2-k)(k-1)-\ln k=0$.
\end{corollary}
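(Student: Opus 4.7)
The plan is to reduce the corollary to two facts about the limiting function $\phi(k) = \lim_{n\to\infty} S(kn,n)/n$ guaranteed by Lemma~\ref{lem:AsyRep}: positivity on $(0,1)$ and strict monotonicity on $[\bar{k},1)$.

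For the $\Theta(n)$ claim, I would first simplify the integrand of $\phi(k)$ by observing that $\tfrac{q}{1-q}\bigl(\tfrac{1}{q} - \tfrac{k}{q(1-q)}\bigr) = \tfrac{1-k-q}{(1-q)^2}$, yielding
\[
\phi(k) \;=\; \frac{k}{1-k} \int_0^{1-k} F^{-1}(q)\,\frac{1-k-q}{(1-q)^2}\, dq.
\]
On $(0,1-k)$ the weight $(1-k-q)/(1-q)^2$ is strictly positive and $F^{-1}(q) > 0$ by our standing positivity assumption on ability, so $\phi(k)$ is a strictly positive constant depending on $F$ and $k$ but not on $n$. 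Combined with Lemma~\ref{lem:AsyRep}, this shows that $S(kn,n)/n$ converges to $\phi(k) > 0$, which gives $S(kn,n) = \Theta(n)$ for every $k \in (0,1)$.

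For the comparison claim, I would invoke the derivative analysis from the proof of Theorem~\ref{thm:UniversalBound}. Recall that
\[
\phi'(k) \;=\; \frac{1}{(1-k)^2}\int_k^1 F^{-1}(1-q)\,r_k(q)\, dq, \qquad r_k(q)=\frac{1}{q}-(2k-k^2)\frac{1}{q^2},
\]
with $r_k$ changing sign from negative to positive at $q_0 = 2k-k^2$. Bounding $F^{-1}(1-q)$ from above by $F^{-1}(1-q_0)$ on $[q_0,1]$ and from below on $[k,q_0]$ gives $\phi'(k) \leq F^{-1}(1-q_0)\,c(k)/(1-k)^2$ with $c(k)=(2-k)(k-1)-\ln k$. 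The sign analysis of $c'(k)=-(2k-1)(k-1)/k$ already carried out in Theorem~\ref{thm:UniversalBound} shows $c(k)<0$ on $(\bar{k},1)$. Because $F$ has a continuous positive density, $F^{-1}(1-q)$ is not constant on either of $[k,q_0]$ or $[q_0,1]$, so the monotonicity bound used is strict, giving $\phi'(k)<0$ throughout $(\bar{k},1)$. Hence $\phi$ is strictly decreasing on $[\bar{k},1)$, and in particular $\phi(\bar{k}) > \phi(k')$ for every $k'>\bar{k}$.

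To finish, I would upgrade the asymptotic inequality $\phi(\bar{k})>\phi(k')$ to a finite-$n$ comparison $S(\bar{k}n,n) > S(k'n,n)$ by exploiting that the convergence in Lemma~\ref{lem:AsyRep} is stated there with a rate independent of $k$. For any fixed $k'>\bar{k}$, this uniformity lets me pick $N$ so that for $n>N$ the asymptotic gap $\phi(\bar{k})-\phi(k')>0$ exceeds the sum of the two convergence errors, completing the corollary. The main obstacle I anticipate is the uniformity-in-$k$ bookkeeping: the statement is phrased for \emph{any} $k'>\bar{k}$, so one must be careful for $k'$ close to $1$, where the integration domain shrinks and the approximation constants deserve rechecking; the remainder of the argument is a direct repackaging of Lemma~\ref{lem:AsyRep} and Theorem~\ref{thm:UniversalBound}.
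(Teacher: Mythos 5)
Your proposal is correct and follows essentially the same route as the paper's proof: the $\Theta(n)$ claim via the asymptotic expression of Lemma~\ref{lem:AsyRep} (with uniform-in-$k$ convergence), and the comparison claim by reusing the sign analysis of $\phi'(k)$ through $c(k)=(2-k)(k-1)-\ln k$ from the proof of Theorem~\ref{thm:UniversalBound}. Your explicit verification that $\phi(k)=\frac{k}{1-k}\int_0^{1-k}F^{-1}(q)\frac{1-k-q}{(1-q)^2}\,dq>0$ is a small but welcome addition, since the paper asserts linearity without checking that the leading coefficient is nonzero, which is needed for the $\Omega(n)$ half of the $\Theta(n)$ bound.
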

\begin{proof}
    Recall from Lemma~\ref{lem:AsyRep} that, as $n\to \infty$:
\[
S(m,n,m-1)=F^{-1}(1-k)+n\int_0^{1-k}F^{-1}(q)\frac{q}{1-q}\frac{k}{1-k}(\frac{1}{q}-\frac{k}{q(1-q)})dq,\]
and the convergence rate is independent of the choice of $k$, therefore for any selected $k$, total effort is a linear function of $n$ asymptotically, then $S(kn,n,kn-1)=\Theta(n)$, as desired. 

Let $\phi(k):=\lim_{n\to \infty}\frac{S(m,n)}{n}$. It is shown in the proof of Theorem~\ref{thm:UniversalBound} that when $k'>k$, $\phi'(k) < 0$, then $S(n,m)$ is decreasing in $[k,k']$, therefore $S(n,km) > S(n,k'm)$ when $n$ is sufficiently large, which completes the proof.
\end{proof}

\end{document}